\theoremstyle{definition} \newtheorem{definition}{Definition}[section]
\theoremstyle{plain} \newtheorem{theorem}[definition]{Theorem}
\theoremstyle{plain} \newtheorem{assumption}[definition]{Assumption}
\theoremstyle{plain} \newtheorem{proposition}[definition]{Proposition}
\theoremstyle{plain} \newtheorem{lemma}[definition]{Lemma}
\theoremstyle{plain} \newtheorem{corollary}[definition]{Corollary}
\theoremstyle{plain} \newtheorem{remark}[definition]{Remark}
\theoremstyle{definition} 
\numberwithin{equation}{section}
\DeclareMathOperator{\supp}{\mathrm{supp}}
\newcommand{\one}{\mathbbm{1}}
\newcommand{\dd}{\mathrm{d}}
\definecolor{vertfonce}{rgb}{0.20, 0.46, 0.25}
\definecolor{rougefonce}{rgb}{0.64, 0.09, 0.20}
\titleformat{\section}{\centering\Large\bfseries}{\thesection \ --}{0.7em}{\Large\bfseries #1}
\titleformat{\subsection}{\centering\large\bfseries}{\thesubsection \ --}{0.4em}{\large\bfseries #1}
\titleformat{\subsubsection}{\centering\bfseries}{\thesubsubsection \ --}{0.4em}{\bfseries #1}
\author{M. Olivieri\thanks{mo@math.ku.dk, olivieri.math@gmail.com}}
	\title{A Lee-Huang-Yang type expansion for the thermodynamic energy density of a dilute mixture of Bose gases}
	\affil{\small{Department of Mathematical Sciences, Universitetsparken 5\\ DK-2100 Copenhagen\\ Denmark}}
\begin{document}

	\maketitle

\begin{abstract}
We consider a dilute gas in 3D composed of two species of bosons interacting through positive inter-species and intra-species
pairwise potentials. We prove a second order expansion for the energy density in the thermodynamic limit. For the case of compactly supported, integrable potentials, we derive the correct second order of the expansion. If we make the further assumption of having soft potentials, we also derive the correct coefficient of the second order and the resulting formula is coherent with the physics literature. If we let the density and scattering length of one of the species go to zero, we obtain the Lee-Huang-Yang formula for one species of bosons. The paper also contains a proof of BEC for a mixture of bosons in a box with length scale larger than the Gross-Pitaevskii one.
\end{abstract}

\section{Introduction}

Mixtures of Bose gases offer a rich variety of phenomena which are of interest from both a physical and mathematical perspectives, due to the interplay between intra-species and inter-species interactions.
Such models can describe systems which have been studied in experiments involving, for example, Rubidium atoms $\prescript{87}{}{\text{Rb}}$ in different hyperfine states \cite{Rb87-3}, \cite{Rb87-4}, \cite{Rb87-2}, \cite{Rb87} and heteronuclear mixtures of Rubidium and Potassium $\prescript{41}{}{\text{K}}-\prescript{87}{}{\text{Rb}}$ \cite{KRb-1}, $\prescript{41}{}{\text{K}}-\prescript{85}{}{\text{Rb}}$ \cite{KRb-2}, $\prescript{37}{}{\text{K}}-\prescript{85}{}{\text{Rb}}$ \cite{KRb-3}, $\prescript{85}{}{\text{K}}-\prescript{87}{}{\text{Rb}}$ \cite{KRb-4}.
Furthermore, Guyer and Miller \cite{guyermiller}, \cite{miller} consider the study of the bosons-bosons mixture a natural starting point for the analysis of fermions-bosons mixtures, like the case of $\prescript{40}{}{\text{K}}-\prescript{87}{}{\text{Rb}}$ \cite{modugnomodugno}.

The model we use in this paper to describe these systems is the non-relativistic, many-body Hamiltonian for bosons. Specifically, we consider $N$ interacting bosons in a box $\Lambda_L := [-L/2,L/2]^3 \subseteq \mathbb{R}^3$. We assume there are $N_A$ and $N_B$ bosons of species $A$ and $B$, respectively, such that $N = N_A + N_B$ and the Hilbert space associated to two species of bosons in three dimension is 
\begin{equation}
\mathscr{H}_{N_A,N_B} := L^2_s(\Lambda_L^{N_A};\mathrm{d}x) \otimes L^2_s(\Lambda_L^{N_B};\mathrm{d}y) = \mathscr{H}_A \otimes \mathscr{H}_B,
\end{equation}
where $x=(x_1,\ldots,x_{N_A}) \in \Lambda_L^{N_A}, y=(y_1,\ldots, y_{N_B}) \in \Lambda_L^{N_B}$ are the position variables for the bosons of type $A$ and $B$, respectively. Observe that the wave-functions are symmetric separately in the $x$ and $y$ variables, but not in the interspecies exchange. We define the Hamiltonian  
\begin{align}
	 H_{N_A,N_B} =& \sum_{j=1}^{N_A} - \Delta_{x_j} + \sum_{1 \leq i < j \leq N_A} v_A(x_i-x_j)\nonumber \\
	  &+ \sum_{j=1}^{N_B} - \Delta_{y_j} + \sum_{1 \leq i < j \leq N_B} v_B(y_i-y_j) + \sum_{j=1}^{N_A} \sum_{k=1}^{N_B} v_{AB}(x_j-y_k),\label{def:HN}
\end{align}
acting on the space $\mathscr{H}_{N_A,N_B}$, where $v_A$ describes the potential internal at the particles of the subgroup $A$, $v_B$ the one internal at the type $B$, and $v_{AB}$ the inter-species potential between type $A$ and type $B$. The potentials are all assumed to be repulsive and we will denote by $a_A,a_B,a_{AB}$ the scattering lengths of $v_A,v_B,v_{AB}$, respectively (see Definition \ref{def:scattlength}).

We introduce the ground state energy of the system as the bottom of the spectrum of the Hamiltonian $H_{N_A,N_B}$
\begin{equation}
E_{N_A,N_B} := \inf \mathrm{Spec} (H_{N_A,N_B}),
\end{equation}
and the energy density in thermodynamic limit is given by
\begin{equation}\label{eq:gsedensity}
e_{3D}(\rho_A,\rho_B) := \lim_{\substack{N \rightarrow + \infty, \\ \rho_A = \frac{N_A}{L^{3}}, \rho_B = \frac{N_B}{L^{3}} = const}} \frac{1}{L^3} E_{N_A,N_B}.
\end{equation}
Letting $\rho := \rho_A +\rho_B$ denote the total density of the gas, in the dilute regime of small $\rho$, one finds in \cite{olessacha}, \cite{larsen}, \cite{petrov} the second order expansion for the energy density, which in our notation reads
\begin{align}\label{eq:expansMix}
e_{3D}(\rho_A,\rho_B) &\simeq  4\pi \Big((\rho_A^2 a_A +2 \rho_A \rho_B a_{AB} + \rho_B^2 a_B)\nonumber \\
&\quad +\frac{16 \sqrt{2}}{15\sqrt{\pi}}\sum_{\pm}\Big(\rho_A a_A+ \rho_B a_B \pm \sqrt{(\rho_A a_A - \rho_B a_B)^2 + 4 \rho_A\rho_B a_{AB}^2 } \,\Big)^{\frac{5}{2}}\Big),
\end{align}
where we assumed equal masses for the two species $m_A = m_B = \frac{1}{2}$ to simplify the formula. This shows how the expansion is, up to at least its second order of precision, universal, in the sense that it does not depend on the shape of the potentials, but only on their scattering lengths $a_A,a_B,a_{AB}$. Nevertheless, the aforementioned papers offer a derivation of the formula which lacks mathematical rigour.

The aim of this present paper is to provide a rigorous proof of the expansion \eqref{eq:expansMix} in the case of \textit{soft potentials} (see condition \eqref{def:softpotential}) and to obtain the second order (but with the wrong constant) in the case of general integrable potentials.

The expansion \eqref{eq:expansMix} is the analogous of the famous Lee-Huang-Yang (LHY) formula for the energy density in dilute regime of the single-species Bose gas:
\begin{equation}
e_{3D}(\rho) \simeq 4\pi \rho^2 a \Big( 1+\frac{128}{15\sqrt{\pi}}(\rho a^3)^{1/2}\Big).
\end{equation}
The proof of the Lee-Huang-Yang formula for the case of one species of bosons has required several decades since its first non-rigorous derivation introduced in \cite{LHY} using a pseudo-potentials method. The rigorous derivation of the main order has been obtained by Dyson in 1957 \cite{dyson} with an upper bound and in 1998 by Lieb and Yngvason \cite{LY} with a lower bound. The upper bound for the second order was derived in \cite{ESY}, where the authors used a quasi-free trial state. This strategy only gives the right LHY constant in the case of soft potential, and it inspired the proof of the upper bound in the present paper.

The right constant for the LHY expansion has been obtained in cornerstone papers via an upper bound in \cite{BCS}, \cite{YY} and via a lower bound in \cite{FS}, \cite{FS2}, the last paper including the case of the hard-core potential. To obtain an upper bound for the hard-core potential in 3D is still an open problem. The 2D case, though, has been solved in \cite{2DLHY}, where the authors proved a LHY-type expansion for the energy of the dilute Bose gas in 2D, giving upper and lower bounds including both the cases of integrable and hard-core potentials.  

Wu in \cite{Wu} showed how the universality of the expansion of the energy still holds at the third order, and calculated it. In a recent paper \cite{3rdSchlein}, the authors proved the upper bound for the third order expansion in thermodynamic limit. In \cite{wuGp} the third order expansion is proven in Gross-Pitaevskii regime. A rigorous proof for the lower bound in thermodynamic regime is still lacking.

Expansions for the free energy density at low temperature have as well been proven in \cite{freeEnCPHM}, \cite{HHNST} and \cite{haberberger2024upper}. In particular, this last paper introduces a method to combine the renormalization of the potential with the Neumann localization, which has inspired the proof of the lower bound of the present paper.

The mixtures of bosons have received an increasing attention in the Mathematical Physics community. At the best of our knowledge, the most recent paper containing the proof of an expansion for the ground state energy of such systems is \cite{olgiatiMix}. The authors derive the expansion of the ground state energy of a trapped two-components gas in both the Gross-Pitaevskii and mean-field regimes. In the former, the energy converges to the minimum of the Gross-Pitaevskii functional; in the latter, to that of the Hartree functional, with the second-order correction described by the lowest eigenvalue of a Bogoliubov Hamiltonian. However, these results do not address the thermodynamic regime. 
Other recent papers rigorously studying the mixtures of bosons are, for instance, \cite{dirkMix}, \cite{jinyeopDyn}, \cite{MichMix}, \cite{michelangeli_mean-field_2017}, \cite{olgiatiMix3} for the convergence of the dynamics, \cite{falcoFragment}, \cite{jinyeopFragment} for fragmented condensation of identical particles with spin.

In this work, beyond establishing a LHY-type, second order expansion of the energy of a mixture of bosons, we give the the following contributions:
\begin{itemize}
\item We prove BEC (Bose-Einstein Condensation, Proposition \ref{propos:BEC}) for the two components of a mixture of bosons at length scales larger than the healing length $\ell_{\text{GP}} = (\rho \bar{a})^{-1/2}, \bar{a} = \max\{a_A,a_B,a_{AB}\}$, corresponding to the Gross-Pitaevskii regime. The condensation estimate is obtained bounding the number of excited particles $n_+^A,n_+^B$ of type $A$ and $B$, respectively, outside the condensate, for states $\Psi$ at low energy:
\begin{equation}
\frac{\langle n_+^A\rangle_{\Psi}}{N}, \frac{\langle n_+^B\rangle_{\Psi}}{N} \leq \frac{\langle n_+^A + n_+^B\rangle_{\Psi}}{N} \leq (\rho \bar{a}^3)^{\frac{1}{17}-\frac{1}{500}} \ll 1.
\end{equation}
\item We prove the Neumann localization in Appendix \ref{app:localization} allowing to compare the energy of the thermodynamic box with the energy of the system localized in small boxes with larger length scale than $\ell_{\text{GP}}$ for a mixture of bosons.

\item We introduce a two-species Bogoliubov transformation \eqref{def:BogTransf} and rigorously minimize the Bogoliubov functional
\begin{multline*}
\mathscr{F}(\alpha, \gamma) = \int_{\mathbb{R}^3} \big((k^2 + \rho_{A,0} \widehat{g}_A(k)) \gamma^{AA}_k + (k^2 + \rho_{B,0} \widehat{g}_B(k)) \gamma^{BB}_k\big) \mathrm{d} k\\
 + \int_{\mathbb{R}^3} \rho_{A,0} \widehat{g}_A(k) \alpha^{AA}_k + \rho_{B,0} \widehat{g}_B(k) \alpha^{BB}_k + 2 \sqrt{ \rho_{A,0} \rho_{B,0}}\,\widehat{g}_{AB}(k)( \gamma^{AB}_k + \alpha^{AB}_k) \mathrm{d}k,
\end{multline*}
in the proof of the upper bound in Section \ref{sec:uppbound}, also giving the explicit expressions of the minimizers \eqref{eq:alphagammaexplicit}.
\end{itemize}

These are important results in their own right, which we hope will serve as useful tools for future research in the context of bosonic mixtures and related topics.

The expansion \eqref{eq:expansMix} is obtained by proving an upper bound and a lower bound. 
\begin{itemize}
\item In this first section we present the main result in Theorem \ref{thm:main} with its Corollary \ref{cor:mainresult}, where we show the two consequences: for integrable potentials we obtain the LHY-type expansion \eqref{exp:rightorder} with the correct second order, for soft potentials we derive \eqref{exp:rightcnst} with the right constant, namely giving \eqref{eq:expansMix}. The proof of the main theorem is in Section \ref{sec:uppbound} for the upper bound and the remaining Sections \ref{sec:lowerbound}-\ref{sec:rhoclose} for the lower bound.
\item In Section \ref{sec:uppbound}, we construct the trial state as a quasi-free state. To diagonalize the effective quadratic Hamiltonian arising from our calculations, we introduce a Bogoliubov transformation for two-species bosons and derive the expansion by minimizing the previously defined Bogoliubov functional.

\item In Section \ref{sec:lowerbound}, we reduce the proof of the lower bound to the estimate of the localized energy in small boxes in Theorem \ref{thm:lowerloc}, thanks to Appendix \ref{app:localization}, and prove this last theorem by referring to the results in the following sections.

\item A fundamental step for the proof of Theorem \ref{thm:lowerloc} is the renormalization of the potential of Lemma \ref{lem:potential_splitting} presented in Section \ref{sec:renorm}. This technique allows to soften the potentials by approximating them with the relative $g$'s (defined in \eqref{eq:scat_defs}) reabsorbing the errors made in this way in the positive $\mathcal{Q}_4$ terms which can be eliminated in a lower bound.

\item In Section \ref{sec:spgap} we extract the spectral gaps from the kinetic energy (Proposition \ref{prop:Hamgapbound}), which are extremely useful positive terms used to bound many of the error terms obtained from the following calculations. Key ingredients of the proof are the condensation estimate in Proposition \ref{propos:BEC} (proven in Appendix \ref{app:BEC}) and the localization of large matrices (proven in Appendix \ref{app:largematrix}), which allow to restrict the action of the Hamiltonian to states with low momenta excitations $n_+^L \ll N$. 

\item We symmetrize the Hamiltonian (Lemma \ref{lem:boundGapTilde}) to make its Fourier coefficients diagonal in the Neumann basis when we rewrite it in momenta space in the second quantization (Proposition \ref{prop:secondquantHam}). In this Section \ref{sec:sym} there are some important technicalities playing a fundamental role: here is the only point where we need the non-increasing assumption \eqref{cond:nonincreasing} on the potentials to estimate the errors coming from the symmetrization. We also artificially add some negative terms  $\mathcal{G}_{\text{conv}}, \mathcal{G}_{\eta}$ in Corollary \ref{cor:secondquant}, to obtain convexity of the energy functional \eqref{eq:Functionalconvex} (proven in Appendix \ref{app:Fconvexfunctional}) and to estimate some errors coming from the cubic term \eqref{eq:whereGetaisneeded} in the non-soft potential case and for the approximation of the Bogoliubov sum with its integral (Lemma \ref{lem:calcBogInt}).

\item In Section \ref{sec:cnumber} we perform the c-number substitution suggested by Bogoliubov in \cite{Bog} and isolate the effective, quadratic Bogoliubov Hamiltonian $\mathcal{K}^{\text{Bog}}$, which can be diagonalized as shown in Appendix \ref{app:bogint}. The outcome of the diagonalization are a positive Hamiltonian $\mathcal{K}^{\text{diag}}$ and a sum giving the correct LHY correction, as shown in Appendix \ref{app:calcBogint}. 

\item The c-number substitution expands the Hamiltonian on a basis of coherent states depending on the parameter $z \in \mathbb{C}^2$, where $|z|^2$ corresponds to the number of bosons in the condensate. In Section \ref{sec:rhofar} we show how, when $|z|^2$ is far from $N$, there is an excess of positive energy from the spectral gap (kinetic energy) which allows an easier bound on the cubic and other remaining terms. In Section \ref{sec:rhoclose} we treat the case $|z|^2\simeq N$. This requires more careful estimates: Lemma \ref{lem:approxQ3-Q3low} and Lemma \ref{lem:softpairs} allow to extract from the cubic terms $\mathcal{Q}_3$ the contribution from the soft pairs which are fundamental to refine the estimates and derive the right LHY coefficient. In Proposition \ref{prop:killQ3} we bound the energy coming from the interaction of the soft pairs by the excess of quadratic Hamiltonian $Z_2^{\text{ext}}$ and the high momenta part of the diagonal Hamiltonian $\mathcal{K}^{\text{diag}}$. Observe that here, in order to estimate the error \eqref{calc:Tquad4} to make the right LHY coefficient emerge, the soft potential assumption \eqref{def:softpotential} is fundamental.
\end{itemize}

For future perspectives, it would be interesting to refine the estimates in order to obtain the full second order expansion \eqref{eq:expansMix} without the soft-potential assumption \eqref{def:softpotential}. This should be possible for the upper bound by defining a trial state analogous to the one introduced in \cite{BCS} adapted to the two-species case. For the lower bound, improved estimates on the cubic term would be needed. We plan to come back to this problem and its extension to more than two species of bosons in a future work. 

Once the expansion \eqref{eq:expansMix} is established for general integrable potentials, a natural question arising would be the extension for the lower bound to hard-core and singular potentials.  We are confident that the approximation technique of singular potentials via integrable potentials for a lower bound introduced in \cite[Section 3]{freeEnCPHM} may be adapted to the two species case. It is still unclear how to solve the upper bound even for the one species case. The energy expansion for the gas in 2D is as well a possible direction of research, possibly adapting the strategy proposed in \cite{2DLHY} for the one species. 

Energy expansions in thermodynamic limit have been studied as well for the fermionic case \cite{FalGiaPo}, \cite{GIACOMELLI2023110073}, \cite{giacomelli2024optimallowerboundlow}, \cite{LAURITSENopt}, \cite{lauritsen2024groundstateenergydilute}, \cite{LAURITSEN2024110320}, \cite{SolFermi}. We hope the present work could be useful also for the study of the systems composed by fermions-fermions and fermions-bosons mixtures. 

\paragraph{Acknowledgements:}
The author was supported by the European Union’s HORIZON-MSCA-2022-PF-01 grant agreement, project number: 101103304 (UniBoGas). Funded by the European Union. Views and opinions expressed are those of the author only and do not necessarily reflect
those of the European Union or the Research Executive Agency. Neither the European Union nor the granting authority can be held responsible for them.

The author thanks S. Fournais for the interesting discussions and suggestions, which have helped improving the estimate of the $\mathcal{Q}_3$ terms and the extraction of the Bogoliubov functional in the upper bound. The author is also thankful to G. Ciccarello, T. Girardot, L. Junge, L. Morin and J. P. Solovej for the time they dedicated to hear of some of the problems involved in the strategy of the proof of the paper and the suggestions they provided. 

\begin{center}
\includegraphics[scale=0.20]{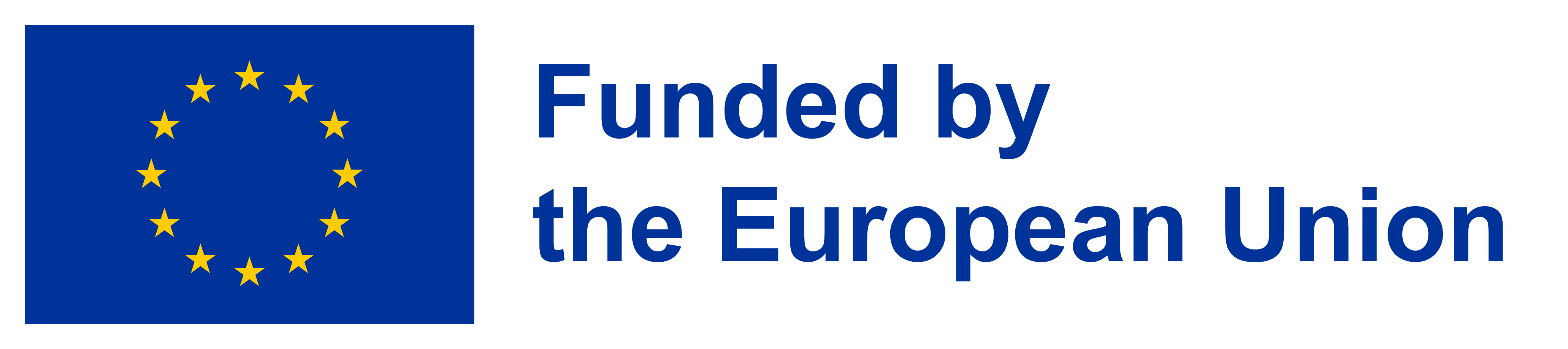}
\end{center}

\bigskip

\subsection{Assumptions and notation}

The interactions in the system are given by pairwise potentials. It is therefore only natural that the energy of the system is tightly connected to the two-body problem of the potentials. In this regard, the \textit{scattering length} is a fundamental quantity which we define following \cite{greenbook}.

\begin{definition}\label{def:scattlength}
Let $v: \mathbb{R}^{3} \to \mathbb{R}^+$ be measurable and radial with support in $B(0,R)$. The scattering length $a=a(v)$ is defined as
	\begin{equation}\label{variational definition of the scattering length}	4\pi a=\inf\Big\{\int_{\mathbb{R}^3}\vert \nabla \varphi\vert^2+\frac{1}{2}v|\varphi|^2 \dd x\, \Big\vert \, \varphi\in \dot{H}^1(\mathbb{R}^3),\quad \lim_{\vert x\vert \rightarrow \infty}\varphi(x)=1\Big\},
	\end{equation}
\end{definition}
	By testing the functional on $\max\{0, 1-\frac{R}{|x|}\}$, we find $a\leq R$. Some standard properties are that $a$ is an increasing function of $v$ and that the minimizer $\varphi$  solves the corresponding Euler-Lagrange equation
\begin{equation} \label{scattering_equation}
	-\Delta \varphi+\frac{1}{2}v\varphi =0,
\end{equation}
in a weak sense. By Newton's theorem
\begin{equation}\label{eq.simple form of phi}
	\varphi(x)=1-\frac{a}{\vert x\vert}, \quad \text{for $\vert x\vert \geq R$},
\end{equation}
and furthermore $\varphi$ is non-decreasing, non-negative and radial. We also introduce the following quantities which we are going to use throughout all the paper
\begin{align}\label{eq:scat_defs}
	\omega := 1- \varphi,
	\qquad g := v \varphi=v(1-\omega).
\end{align}
Clearly,
\begin{align}\label{eq:ScatOmega}
	-\Delta \omega = \frac{1}{2} g, \qquad \text{and} \qquad 
	\widehat{g}(0)=	\int_{\mathbb{R}^3} g \,\dd x = 8\pi a.
\end{align}
By the radial symmetry of $v$ and $g$, we have that $\widehat{v}(k), \widehat{g}(k)$ are real functions.

We list below the possible assumptions we consider for the potentials.

We are going to use the following notation for the total density of the gas
\begin{equation}
\rho := \rho_A + \rho_B.
\end{equation}

\begin{assumption}[Assumptions on the potentials]\label{cond:potentials}

We consider repulsive, spherically symmetric potentials 
\begin{align}
&v_A, v_B, v_{AB} \in L^1(\mathbb{R}^3), \qquad v_A, v_B, v_{AB} \geq 0,\label{cond:integrablepotential}\\
 & v_A,v_B,v_{AB} \text{ non- increasing}, \label{cond:nonincreasing}
\end{align}
with compact supports, and the rays of their supports being $R_A,R_B,R_{AB}>0$ with obvious meaning of the symbols. We denote by $R := \max\{R_A,R_B,R_{AB}\}$.
We assume the \textit{miscibility condition}
\begin{equation}\label{cond:misc}
a_{AB}^2 \leq a_A a_B.
\end{equation}

We denote by 
\begin{equation}
\bar{a} := \max \{a_A,a_B,a_{AB}\}, \qquad \underline{a} := \min \{a_A,a_B,a_{AB}\},
\end{equation}
and make the following assumptions: there exist constants $C_a,C_R >1$, and $C_1>0, \eta\geq 0$ such that 
\begin{equation}\label{cond:aRa}
\bar{a} \leq C_a\underline{a},
\end{equation}
\begin{equation}\label{cond:L1normv}
\|v_A\|_1,\|v_B\|_1, \|v_{AB}\|_1 \leq C_1 \bar{a}.
\end{equation}
\begin{equation}\label{cond:R}
R \leq C_R (\rho \bar{a}^3)^{-\eta} \bar{a}.
\end{equation}
\end{assumption}

Condition \eqref{cond:misc} guarantees that the repulsion between the two species is not effectively stronger than the internal repulsions inside the same species and that the two gases can spatially mix (see also Remark \ref{rem:assmpts}). Condition \eqref{cond:aRa} gives us a comparison bound between the several scattering lengths. Condition \eqref{cond:R} includes both the case of ray of supports $R$ independent of $\rho$, as well as the case of small divergence, as $\rho \rightarrow 0$, needed for the soft potentials introduced below.

Let us introduce the following parameters
\begin{equation}\label{eq:defdeltas}
\delta_A := |\widehat{v}_A(0) - \widehat{g}_A(0)|, \qquad \delta_B := |\widehat{v}_B(0) - \widehat{g}_B(0)|, \qquad \delta_{AB} := |\widehat{v}_{AB}(0) - \widehat{g}_{AB}(0)|,
\end{equation}
and
\begin{equation}\label{eq:defdelta}
\delta:= \max\{\delta_A, \delta_B, \delta_{AB}\}.
\end{equation}

\begin{definition}
\begin{itemize}
\item Let $S=S(\rho_A,\rho_B) \in \mathbb{R}$ and $T = T(\rho_A,\rho_B) \in \mathbb{R}$. We say that 
\begin{equation}
S \ll T \quad \Longleftrightarrow \quad \exists \sigma > 0:  \quad S \leq  (\rho \bar{a}^3)^{\sigma} T.
\end{equation}
\item We call one of the potentials $v_{\#}, \# \in \{A,B,AB\}$ a \textit{soft potential} if 
\begin{equation}\label{def:softpotential}
\delta_{\#} \ll a_{\#}. 
\end{equation}
\end{itemize}
\end{definition}

\subsection{Main results}
We are ready to present the main theorem of the paper. Recall the definition of the ground state energy density \eqref{eq:gsedensity}. We make use of the following parameter, for $\eta\geq 0, \nu >0$,
\begin{equation}
\sigma_{\eta}= \begin{cases}
4\eta + \nu, \quad & \text{if } \eta > 0,\\
0, \quad & \text{if } \eta = 0.
\end{cases}
\end{equation}

\begin{theorem}\label{thm:main}
Let $\eta  \geq 0, \nu >0$ be such that $v_A,v_B,v_{AB}$ are potentials satisfying Assumption \ref{cond:potentials}. There exists a constant $C>0$ such that, for $\rho \bar{a}^{3}\leq C^{-1} $ with
\begin{equation}\label{eq:deltaerrorstatement}
\delta_A,\delta_B\leq C\bar{a} (\rho \bar{a}^3)^{\eta}, \qquad \delta_{AB} \leq C \bar{a} (\rho \bar{a}^3)^{\sigma_{\eta}},
\end{equation}
then
\begin{align}
|e_{3D}(\rho_A, \rho_B) - E_{\text{main}} - E_{\text{LHY}} | \leq  C (\rho \bar{a} )^{5/2}  (\rho \bar{a}^3)^{\eta},\label{LHYformulaMixture}
\end{align}
where 
\begin{align*}
E_{\text{main}} &:= 4\pi (\rho_A^2 a_A +2 \rho_A \rho_B a_{AB} + \rho_B^2 a_B),\\
E_{\text{LHY}} &:= (\rho^2_A a^2_A + 2\rho_A \rho_B a_{AB}^2 +\rho_B^2 a_B^2)^{5/4} I_{AB}(\rho_A,\rho_B),
\end{align*}
with
\begin{align}
I_{AB}(\rho_A,\rho_B) &:=\frac{2\sqrt{2}(8 \pi)^{5/2}}{15\pi^2}(\mu_+^{5/2}(\rho_A,\rho_B) +\mu_-^{5/2}(\rho_A,\rho_B)) = \mathcal{O}(1),\label{def:mu&xi}\\
\mu_{\pm} &:= \frac{1}{2} \big( \sqrt{1+\xi_{AB}} \pm \sqrt{1-\xi_{AB}}\big),\qquad
\xi_{AB} := \frac{2 \rho_A \rho_B (a_A a_B - a_{AB}^2)}{\rho_A^2 a_A^2 + 2\rho_A \rho_B a_{AB}^2+ \rho_B^2 a_B^2}.\label{def:xi}
\end{align}
The sum $E_{\text{main}} + E_{LHY}$ is exactly equal to formula \eqref{eq:expansMix}.
\end{theorem}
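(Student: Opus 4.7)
The plan is to establish both a matching upper bound and lower bound on $e_{3D}(\rho_A,\rho_B)$. The upper bound will be obtained via a two-species variational trial state, while the lower bound requires a more delicate multi-step analysis combining Neumann localization, potential renormalization, and a Bogoliubov diagonalization after a $c$-number substitution.

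For the upper bound, I would construct a quasi-free trial state on the bosonic Fock space built from a two-species Bogoliubov transformation acting on a product of coherent states centered at the condensate amplitudes $\sqrt{\rho_{0,A}},\sqrt{\rho_{0,B}}$. Computing the energy expectation of $H_{N_A,N_B}$ on this state, the terms to leading order reproduce $E_{\text{main}}$ coming from the Hartree-type contractions $\widehat{v}_\#(0)$ (which I must then replace by $\widehat{g}_\#(0)=8\pi a_\#$, paying an error controlled by $\delta$ through \eqref{eq:deltaerrorstatement}), while the quadratic excitations yield the Bogoliubov functional $\mathscr{F}(\alpha,\gamma)$. Minimizing $\mathscr{F}$ over admissible pairs $(\alpha,\gamma)$ reduces, after diagonalization of the $2\times 2$ symbol, to an integral of the form $\int (\mu_+^{5/2}+\mu_-^{5/2})$ which is exactly $E_{\text{LHY}}$. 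The asymptotics of the minimizers (providing the $\mathcal{O}((\rho\bar a^3)^{\eta})$ remainder) are the content of Lemma \ref{lem:estimates_alphagamma}.

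For the lower bound, the first step is the Neumann localization in Appendix \ref{app:localization}, reducing the thermodynamic problem to boxes of side $\ell\gg\ell_{\text{GP}}$; inside each box the BEC estimate of Proposition \ref{propos:BEC} guarantees that the numbers of excited particles $n_+^A,n_+^B$ are much smaller than $N$. I would then soften the potentials $v_\#$ via the renormalization of Lemma \ref{lem:potential_splitting}, replacing them by the scattering-equation tails $g_\#$ and absorbing the quartic excess into the positive operator $\mathcal{Q}_4$. Next, I extract from the kinetic energy enough spectral gap (Proposition \ref{prop:Hamgapbound}) to dominate the cubic and other error terms, using the large-matrix localization of Appendix \ref{app:largematrix}, and symmetrize the interactions (Lemma \ref{lem:boundGapTilde})—this is the step where the monotonicity assumption \eqref{cond:nonincreasing} is needed. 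Rewriting the Hamiltonian in second quantization (Proposition \ref{prop:secondquantHam}) with auxiliary convex correctors $\mathcal{G}_{\text{conv}},\mathcal{G}_\eta$, I perform Bogoliubov's $c$-number substitution, extracting the effective quadratic Hamiltonian $\mathcal{K}^{\text{Bog}}$; diagonalization (Appendices \ref{app:bogint}, \ref{app:calcBogint}) produces a non-negative piece $\mathcal{K}^{\text{diag}}$ plus the integral that yields $E_{\text{LHY}}$.

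The hard part, which I handle last, is the control of the cubic terms $\mathcal{Q}_3$ over the coherent-state parameter $z\in\mathbb{C}^2$. When $|z|^2$ is far from $N$ (Section \ref{sec:rhofar}), the deficit of condensate fraction provides an excess of quadratic energy which dominates the cubic terms trivially. When $|z|^2\simeq N$ (Section \ref{sec:rhoclose}), one must split $\mathcal{Q}_3$ into a soft-pair contribution (Lemma \ref{lem:approxQ3-Q3low} and Lemma \ref{lem:softpairs}) and a remainder, bounding the former by the excess quadratic Hamiltonian $Z_2^{\text{ext}}$ together with the high-momentum part of $\mathcal{K}^{\text{diag}}$ as in Proposition \ref{prop:killQ3}; it is precisely in the estimate \eqref{calc:Tquad4} that the soft-potential assumption \eqref{def:softpotential} (equivalently the constraint \eqref{eq:deltaerrorstatement}) is invoked to avoid polluting the LHY coefficient. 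Combining these estimates, summing over the localization boxes, and matching the result with the upper bound yields \eqref{LHYformulaMixture}, and a direct rearrangement (computing $\mu_\pm^{5/2}$ using $\xi_{AB}$) shows that $E_{\text{main}}+E_{\text{LHY}}$ coincides with the physicist's formula \eqref{eq:expansMix}.
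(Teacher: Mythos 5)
Your proposal is correct and follows essentially the same route as the paper: the upper bound via the Weyl--Bogoliubov quasi-free trial state and minimization of the two-species Bogoliubov functional, and the lower bound via Neumann localization, BEC, renormalization, spectral gaps, symmetrization, $c$-number substitution, and the soft-pair analysis of $\mathcal{Q}_3$, with the soft-potential assumption entering exactly at \eqref{calc:Tquad4}. The only ingredient you leave implicit is the convexity argument of Appendix \ref{sec:Fconvex} (with the chemical potentials chosen so that the functional $F$ is minimized at $(|z_A|^2,|z_B|^2)=(n,m)$), which the paper uses to conclude the lower bound on each box, but this is subsumed in your mention of the correctors $\mathcal{G}_{\text{conv}},\mathcal{G}_\eta$.
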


Observe that condition \eqref{eq:deltaerrorstatement} is always fulfilled for $\eta = 0$. Also, in this case $E_{LHY}$ is of the same order of the error term, as we can see in Corollary \ref{cor:mainresult}.

\begin{proof}
The proof is obtained combining the proofs for the upper and lower bounds in the following sections. 
The bounds on $\delta_A, \delta_B, \delta_{AB}$ fit the assumptions of Theorems \ref{thm:mainupperbound} and \ref{thm:lowerloc}.
\begin{itemize}

\item \textit{Upper bound}: We consider the Hamiltonian $H_{N_A,N_B}$ with periodic boundary conditions on $\Lambda_L$. We can extend its action as an operator $\mathcal{H}$ introduced in \eqref{secondquantHam:upperbound} on the Fock space $\mathscr{F}(\Lambda)$ and we have therefore
\begin{equation}
E_{N_A,N_B} = \inf_{\substack{\Psi  \in \mathcal{D}_{N_A,N_B} \\ \|\Psi\|_{\mathscr{F}} = 1}} \langle \Psi, \mathcal{H}\Psi \rangle,
\end{equation}
where the infimum is taken over 
\begin{align*}
\mathcal{D}_{N_A,N_B} := \{\Psi \in \mathscr{F}(\Lambda)\,&|\,\Psi = \Psi_{N_A} \otimes \Psi_{N_B}, \Psi_{N_j}^{(n)} = \one_{n=N_j}\psi_{N_j}, j= A,B,\\
&\quad \psi_{N_A}\in C_0^{\infty}(\mathbb{R}^{3N_A}), \psi_{N_B} \in C_0^{\infty}(\mathbb{R}^{3N_B})\}.
\end{align*}
By Theorem \ref{thm:mainupperbound}, there exists $\Psi \in \mathscr{F}(\Lambda)$ such that it is non-zero only in the $(N_A,N_B)-$sector and for which \eqref{eq:energyupperboundTheorem} holds. Dividing by the volume $|\Lambda|$ and taking the thermodynamic limit we have an upper bound matching \eqref{LHYformulaMixture}.
\item \textit{Lower bound}: We realize $H_{N_A,N_B}$ as a self-adjoint operator on its domain with Neumann boundary conditions on $\Lambda_L$. The problem is localized in smaller boxes, for $y \in \mathbb{R}^3$
\begin{equation}\label{eq:locboxes}
\Lambda_{\ell}(y) = \prod_{i=1}^3[y_i-\ell/2,y_i +\ell/2], \qquad \ell = K_{\ell}(\rho \bar{a})^{-1/2}\ll L,
\end{equation}
with $K_{\ell} >0$, such that 
\begin{equation}
\Lambda_L = \bigcup_{j\in L \mathbb{Z}^3}^{M}\Lambda_{\ell}(j), \qquad M = L^3\ell^{-3} \in \mathbb{N}.
\end{equation} 
In Appendix \ref{app:localization} there is the proof of how to obtain Theorem \ref{thm:main} from Theorem \ref{thm:lowerloc}, which gives a lower bound on the energy on each box $\Lambda_{\ell}(j)$. Theorem \ref{thm:lowerloc} is proven in Section \ref{sec:lowerbound}.
\end{itemize}
\end{proof}

The theorem has different consequences depending on the choice of $\delta$ \eqref{eq:defdelta} and the size of the boxes $\ell$ (see \eqref{eq:locboxes})
\[\ell = K_{\ell}(\rho \bar{a})^{-1/2},\]
 where the energy is localized in the lower bound, dependent on $\eta \geq 0$ (the power in the error of the theorem). Indeed, we set $K_{\ell} := (1000C)^{-1}(\rho \bar{a}^3)^{-2\eta}$ and, recalling the definition of $C_1$ in \eqref{cond:L1normv}, we have the following cases:
\begin{itemize}
\item \textit{Integrable potentials}: for $\eta =0$,
\begin{equation}\label{assmptdelta:1}
 K_{\ell}= (1000C)^{-1},  \qquad \text{and} \qquad \delta \leq C_1 \bar{a},
\end{equation}
\item \textit{Soft potentials}: for $\eta > 0$,
\begin{equation}\label{assmptdelta:2}
K_{\ell}  \simeq C (\rho \bar{a}^3)^{-\eta}, \qquad \text{and} \qquad \delta_A,\delta_B \leq  C (\rho \bar{a}^3)^{\eta}\bar{a}, \qquad \delta_{AB} \leq C\bar{a} (\rho \bar{a}^3)^{4\eta + \nu}.
\end{equation}
\end{itemize}
\begin{corollary}\label{cor:mainresult}
Under the same assumptions of Theorem \ref{thm:main}, we have
\begin{itemize}
\item If \eqref{assmptdelta:1} holds, then 
\begin{equation}\label{exp:rightorder}
|e_{3D}(\rho_A, \rho_B) -  4\pi (\rho_A^2 a_A +2 \rho_A \rho_B a_{AB} + \rho_B^2 a_B) |\leq  C (\rho \bar{a} )^{5/2};
\end{equation}
\item If \eqref{assmptdelta:2} holds, then
\begin{multline}\label{exp:rightcnst}
\Big|e_{3D}(\rho_A,\rho_B) -  4\pi \Big((\rho_A^2 a_A +2 \rho_A \rho_B a_{AB} + \rho_B^2 a_B) \\
 +\frac{16 \sqrt{2}}{15\sqrt{\pi}}\sum_{\pm}\Big(\rho_A a_A+ \rho_B a_B \pm \sqrt{(\rho_A a_A - \rho_B a_B)^2 + 4 \rho_A\rho_B a_{AB}^2 } \,\Big)^{\frac{5}{2}}\Big)\Big|\\ \leq C (\rho \bar{a})^{5/2}(\rho \bar{a}^3)^{\eta}.
\end{multline}
\end{itemize}
\end{corollary}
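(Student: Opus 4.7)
The plan is to derive both parts of the corollary as direct consequences of Theorem \ref{thm:main}, together with an algebraic identification of $E_{\text{LHY}}$ with the explicit sum appearing in \eqref{exp:rightcnst}.

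First I would establish the algebraic identity
\[
E_{\text{LHY}} \;=\; 4\pi\cdot\frac{16\sqrt{2}}{15\sqrt{\pi}} \sum_{\pm}\bigl(\rho_A a_A + \rho_B a_B \pm \sqrt{(\rho_A a_A - \rho_B a_B)^2 + 4\rho_A\rho_B a_{AB}^2}\bigr)^{5/2}.
\]
Introducing the shorthands $\Sigma := \rho_A^2 a_A^2 + 2\rho_A\rho_B a_{AB}^2 + \rho_B^2 a_B^2$ and $\Delta := 2\rho_A\rho_B(a_A a_B - a_{AB}^2)$, so that $\xi_{AB} = \Delta/\Sigma$, direct computation gives $\Sigma + \Delta = S^2$ and $\Sigma - \Delta = D^2$, where $S := \rho_A a_A + \rho_B a_B$ and $D$ is the square root appearing above. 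Hence $2\Sigma = S^2 + D^2$ and $\Sigma^2 - \Delta^2 = S^2 D^2$, which fed into $\mu_{\pm}^2 = \tfrac{1}{2}(1 \pm \sqrt{1 - \xi_{AB}^2})$ produces $\mu_{\pm}^{5/2}\Sigma^{5/4} = 2^{-5/2}(S\pm D)^{5/2}$. Collecting the powers of $2$ and $\pi$ in $I_{AB}$ yields $E_{\text{LHY}} = \frac{2^{13/2}\sqrt{\pi}}{15}\sum_{\pm}(S\pm D)^{5/2}$, matching the target since $4\pi \cdot 16\sqrt{2}/(15\sqrt{\pi}) = 2^{13/2}\sqrt{\pi}/15$.

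With this identity at hand, I would then treat each case separately. For the integrable case \eqref{assmptdelta:1}, Theorem \ref{thm:main} applies with $\eta = 0$ and the stated choice $\delta = C_1\bar{a}$ trivially satisfies \eqref{eq:deltaerrorstatement}. The miscibility condition $a_{AB}^2 \leq a_A a_B$ forces $\xi_{AB} \in [0,1]$, hence $\mu_{\pm} \in [0,1]$ and $I_{AB} = \mathcal{O}(1)$; together with $\Sigma \leq 4\rho^2\bar{a}^2$ this gives $E_{\text{LHY}} \leq C(\rho\bar{a})^{5/2}$, which is absorbed into the error, yielding \eqref{exp:rightorder}. For the soft-potential case \eqref{assmptdelta:2}, I would first verify that $\delta = (\rho\bar{a}^3)^{4\eta}\bar{a}$ satisfies \eqref{eq:deltaerrorstatement}, which reduces to $(\rho\bar{a}^3)^{4\eta - 21\eta/10} \leq C$, true in the dilute regime since $4\eta \geq 21\eta/10$. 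Theorem \ref{thm:main} then yields the error $C(\rho\bar{a})^{5/2}(\rho\bar{a}^3)^{\eta}$, strictly of lower order than $E_{\text{LHY}} \sim (\rho\bar{a})^{5/2}$, and the algebraic identity above rewrites the conclusion as \eqref{exp:rightcnst}.

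The only real obstacle is careful bookkeeping in the algebraic identification, in particular tracking the exponents of $2$ and $\pi$ through the simplification of $I_{AB}$; once this is done, both statements follow by immediate substitution into Theorem \ref{thm:main}.
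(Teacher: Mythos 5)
Your proposal is correct and follows exactly the route the paper intends: the paper gives no explicit proof of this corollary, merely asserting in Theorem \ref{thm:main} that $E_{\text{main}}+E_{\text{LHY}}$ equals \eqref{eq:expansMix} and remarking that for $\eta=0$ the term $E_{\text{LHY}}$ has the same order as the error. Your algebraic identification (via $\Sigma\pm\Delta=S^2,D^2$, hence $\mu_\pm^{5/2}\Sigma^{5/4}=2^{-5/2}(S\pm D)^{5/2}$ and the constant $2^{13/2}\sqrt{\pi}/15$) and the two order-of-magnitude checks are all verified and supply precisely the details the paper omits.
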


\begin{remark}[Soft potentials for $\eta >0$]
An example of class of potentials satisfying the soft potential condition \eqref{def:softpotential} are those in the form 
\begin{equation}\label{eq:brietzkesoft}
v_R(x) = \frac{1}{R^3} v_1\Big( \frac{|x|}{R}\Big), \qquad \supp v_1 \subseteq [0,1], \qquad v_1 \in L^1\cap L^{\infty}(\mathbb{R}),
\end{equation}
with $R = C a(\rho a^3)^{-\sigma}$, for $\sigma>0$, $a$ being the scattering length of the potential $v_R$.
 As explained in \cite{brietzke}, \cite{brietzkesolovej}, for these potentials there is a Born expansion  
\begin{equation}
8\pi a = \widehat{v}_R(0) + \sum_{k=1}^{\infty} a_k,
\end{equation}
which is convergent and each term $a_k$ is proportional to $R^{-k}$. Therefore we can write
\begin{equation}\label{eq:born}
8\pi a - \widehat{v}(0) = a_1 + a_2 +  \mathcal{O}(a^4 R^{-3}) = \mathcal{O}(a^2 R^{-1}),
\end{equation}
and since $R$ depends on a negative power of $\rho$, then the scattering length can be approximated with a small error by $(8\pi)^{-1}\widehat{v}_R(0)$ and the potential $v_R$ is soft.

\begin{itemize}
\item For the upper bound, as it is clear from the error in Theorem \ref{thm:mainupperbound}, we derive the correct constant for the second order for $\eta>0$ under the condition
\begin{equation}
\delta  \leq C\bar{a} (\rho \bar{a}^3)^{\eta}, 
\end{equation}
that is, when all the three potentials $v_A,v_B,v_{AB}$ are soft with the same rate (less restrictive than \eqref{eq:deltaerrorstatement}). In order to achieve this, we can consider potentials in the form \eqref{eq:brietzkesoft}, for which $\delta = \mathcal{O}(\bar{a}^2R^{-1})$ thanks to \eqref{eq:born}, and then choose $R = \mathcal{O}(\bar{a}(\rho \bar{a}^3)^{-\eta})$.

\item For the lower bound, see Theorem \ref{thm:lowerloc}, in order to derive the correct constant in front of the second order for $\eta > 0$, we do not need to impose any further condition on $v_A,v_B$ other than those in Assumption \ref{cond:potentials}, i.e., the potentials $v_A,v_B$ do not need to be soft.
On the other hand, we need the condition of softness for $v_{AB}$:
\begin{equation}\label{restrictive-condition}
\delta_{AB} \leq C \bar{a}(\rho \bar{a}^3)^{4\eta + \nu}.
\end{equation}

The potentials in the form \eqref{eq:brietzkesoft} for which $\delta_{AB} = \mathcal{O}(\bar{a}R_{AB}^{-1})$ cannot satisfy \eqref{restrictive-condition} and assumption \eqref{cond:R} at the same time.
An example of potentials satisfying both the conditions are given by $v_{AB}^{(\lambda)} = \lambda v_{R}$, i.e., a weak potential, with $v_{R}$ as in \eqref{eq:brietzkesoft} where the parameter $\lambda>0$ can be tuned in order to satisfy \eqref{eq:deltaerrorstatement}. In this way
\begin{equation}
\delta_{AB} = \mathcal{O}(\lambda \bar{a}^2R^{-1}) \leq  C \rho \bar{a}(\rho \bar{a}^3)^{2 \eta + \nu},
\end{equation}
for $R = \mathcal{O} ( \bar{a}(\rho \bar{a}^3)^{-\eta})$ and $\lambda = (\rho \bar{a}^3)^{\eta + \nu}$.

\end{itemize}

\end{remark}

\begin{remark}[Comparison with the 1 species case]
It is clear from the expressions \eqref{def:mu&xi} that in the case of one species if, w.l.o.g., $\rho_B =0 = a_B = a_{AB}$, we have $\mu_{\pm} = 1$ and we recover exactly the classical result conjectured by Lee-Huang-Yang and proven in \cite{BCS}, \cite{FS}, \cite{FS2}, \cite{YY}:
\begin{equation}
\lim_{a_B,a_{AB},\rho_B \rightarrow 0} \Big|e_{3D}(\rho_A,\rho_B) - 4\pi \rho_A^2a_A \Big(1+\frac{128}{15\sqrt{\pi}} (\rho_A a_A^3)^{1/2}\Big)\Big| \leq  C (\rho_A a_A)^{5/2} (\rho_A a_A^3)^{\eta}.
\end{equation}
\end{remark}

\begin{remark}[Comments on the assumptions]\label{rem:assmpts}
The non-increasing condition \eqref{cond:nonincreasing} on the potentials is a technical assumption needed only for the method of the proof we decided to use for the lower bound. In particular, this condition is key for the estimate of the error made when approximating the small-box Hamiltonian with its symmetric version in Lemma \ref{lem:boundGapTilde}. We are confident that the condition may be removed at the cost of a longer and more involved proof, for instance adapting the method used in \cite{2DLHY}, \cite{23DLHY}, \cite{FS}, \cite{FS2}.

The miscibility condition \eqref{cond:misc} has as immediate consequence the bound $\xi_{AB} \in [0,1]$ (recall its definition in \eqref{def:xi}) and it guarantees that the matrix of the scattering lengths 
\begin{equation}\label{scattlength:matrix}
\mathscr{A} = \begin{pmatrix}
a_A & a_{AB}\\
a_{AB} & a_B
\end{pmatrix},
\end{equation}
is semi-definite positive. This fact is used in several parts of the proof and, in particular, it is fundamental to guarantee the convexity of the form $v \cdot \mathscr{A}v$. In \cite{esry}, \cite[Chap. 21]{stringari} it is explained how this condition is key to avoid phase separation between the two components of the gas. In this way, the repulsion between the two species is not stronger than the intra-particles repulsions and the two components of the gas can spatially mix. In \cite{petrov} the author discusses the physical phenomena emerging when such condition is not satisfied. 
\end{remark}

From now on, for a box $\Lambda \subseteq \mathbb{R}^3$, we will denote by 
\begin{equation}
\mathfrak{h}_A := L^2(\Lambda;\mathrm{d}x),\qquad  \mathfrak{h}_B := L^2(\Lambda;\mathrm{d}y),
\end{equation}
the one-boson spaces of type $A$ and $B$, by $\Lambda^*$ the associated space of momenta and by $\{u_k\}_{k \in \Lambda^*}$ and $\{v_h\}_{h \in \Lambda^*}$ two sets of orthonormal bases for $\mathfrak{h}_A$ and $\mathfrak{h}_B$, respectively, and the particular choice will be made case by case to diagonalize the Laplacian with the relative boundary conditions. 
We will then consider the double-component Fock space on $\Lambda$
\begin{equation}\label{eq:Fockspacesmix}
\mathscr{F}(\Lambda) := \mathscr{F}_A(\Lambda)  \otimes \mathscr{F}_B(\Lambda), \qquad \mathscr{F}_j(\Lambda) = \bigoplus_{n =0}^{\infty} \mathfrak{h}_j^{\otimes_s n}, \quad j \in \{A,B\},
\end{equation}
and the creation and annihilation operators for the two types 
\begin{align}
a_k &:= a(u_k)\otimes \one_{\mathscr{F}_B}, \qquad a^*_k :=a^*(u_k) \otimes \one_{\mathscr{F}_B}, \label{eq:creationA}\\
b_h &:= \one_{\mathscr{F}_A} \otimes b(v_h), \qquad b_h^* := \one_{\mathscr{F}_A} \otimes b^*(v_h), \label{eq:creationB}
\end{align}
with the canonical commutation relations (CCR):
\begin{gather}\label{eq:CCR}
[a^*_k,b_h] =[a_k,a_h] = 0 = [a^*_k, a^*_h] = [a_k, b_h], \qquad  [a_k,a_h^*] =\delta_{k,h} \one_{\mathscr{F}_A}, \\
[b^*_k,a_h] =  [b_k,b_h] = 0 = [b^*_k, b^*_h] = [a^*_k, b^*_h],   \qquad [b_k,b_h^*] =\delta_{k,h} \one_{\mathscr{F}_B}.
\end{gather}
We can also introduce the number operators
\begin{align*}
\mathcal{N} &:= \mathcal{N}_A \otimes \one + \one\otimes \mathcal{N}_B, \\
 \mathcal{N}_{\#}\Psi^{(n_{A})} &\otimes \Psi^{(n_B)}  = n_{\#} \Psi^{(n_{A})} \otimes \Psi^{(n_B)}, \qquad \# \in \{A,B\},
 \end{align*}
 and the number of condensated ($n_0^A, n_0^B$) and excited ($n_+^A,n_+^B$) particles per species 
 \begin{align*}
\mathcal{N}_A &:= n_0^A + n_+^A,  \qquad \mathcal{N}_B := n_0^B  + n_+^B,\\
n_0^A &:= a^*_0  a_0 , \qquad n_0^B := b^*_0  b_0, \qquad n_0 = n_0^A + n_0^B,\\ n_+^A &:= \sum_{p \neq 0} a^*_p a_p, \qquad n_+^B := \sum_{p \neq 0} b^*_p b_p, \qquad n_+ = n_+^A + n_+^B.
\end{align*}
It is possible to adapt the quantization rules of 1 species of bosons (see, for example, \cite{Sol_07}) to the two species case. Let $h_A$ and $h_B$ be two symmetric operators on $\mathfrak{h}_A$ and $\mathfrak{h}_B$, respectively, then the following equivalence holds
\begin{equation}\label{secondquant:1body}
\bigoplus_{n=0}^{\infty} \sum_{ j=1 }^n h_A^{(j)} =  \frac{1}{2}\sum_{\alpha, \beta \in \Lambda^*} \langle u_{\alpha} \,, h_A u_{\beta} \rangle a^{*}_{\alpha}  a^{}_{\beta}, \qquad 
\bigoplus_{m=0}^{\infty} \sum_{ j=1 }^m h_B^{(j)} =  \frac{1}{2}\sum_{\alpha, \beta \in \Lambda^*} \langle v_{\alpha} \,, h_B v_{\beta} \rangle b^{*}_{\alpha}  b^{}_{\beta}
\end{equation}
where we used the notation $h_{\cdot}^{(j)} = \one^{\otimes_s (j-1)} \otimes_s h_{\cdot} \otimes_s \one^{\otimes_s (N-j)}$.

Let $W_A, W_B, W_{AB}$ be multiplication operators on $\mathfrak{h}_A \otimes \mathfrak{h}_A$, $\mathfrak{h}_B \otimes \mathfrak{h}_B$, $\mathfrak{h}_A \otimes \mathfrak{h}_B$, respectively, by $2$-variable functions invariant under the exchange of variables. Then 
\begin{align}\label{secondquant:2body}
\bigoplus_{n=0}^{\infty} \sum_{1\leq i<j\leq n } W_A(x_i,x_j) &=  \frac{1}{2}\sum_{\alpha, \beta, \gamma, \delta \in \Lambda^*} \langle u_{\alpha} \otimes u_{\beta}\,, W_A u_{\gamma} \otimes u_{\delta}\rangle a^{*}_{\alpha} a^{*}_{\beta} a^{}_{\gamma} a^{}_{\delta},\\
\bigoplus_{m=0}^{\infty} \sum_{1\leq i<j\leq m } W_B(y_i,y_j) &=  \frac{1}{2}\sum_{\alpha, \beta, \gamma, \delta \in \Lambda^*} \langle v_{\alpha} \otimes v_{\beta}\,, W_B v_{\gamma} \otimes v_{\delta}\rangle b^{*}_{\alpha} b^{*}_{\beta} b^{}_{\gamma} b^{}_{\delta},\\
\bigoplus_{\substack{N=0\\ n+m = N }}^{\infty} \sum_{i=1}^n \sum_{j=1}^m W_{AB}(x_i,y_j) &=  \sum_{\alpha, \beta, \gamma, \delta \in \Lambda^*} \langle u_{\alpha} \otimes v_{\beta}\,, W_{AB} u_{\gamma} \otimes v_{\delta}\rangle a^{*}_{\alpha}  a^{}_{\gamma} b^{*}_{\beta} b^{}_{\delta}.
 \end{align}

Using the previous quantization rules we can therefore extend the $N-$body Hamiltonian to a Hamiltonian $\mathcal{H}$ on $\mathscr{F}(\Lambda)$ which, restricted to $\mathfrak{h}_A^{\otimes_s N_A} \otimes \mathfrak{h}_B^{\otimes_s N_B}$, acts like $H_N$:
\begin{align}\label{eq:secondquantHam}
\mathcal{H} &:= \sum_{h,k \in \Lambda^*}\big( \langle u_k, -\Delta_x u_h\rangle a^*_k a_h + \langle v_k, -\Delta_y v_h\rangle b^*_k b_h \big) \nonumber\\
&\quad +\sum_{h,k,p,q \in \Lambda^*} \Big( \frac{1}{2}V_A^{(h,k,p,q)} a^*_h a^*_k a_p a_q + \frac{1}{2}V_B^{(h,k,p,q)} b^*_h b^*_k b_p b_q + V_{AB}^{(h,k,p,q)} a^*_h b^*_k a_p b_q  \Big),
\end{align}
where
\begin{align*}
V_A^{(h,k,p,q)} := \langle u_{h} \otimes u_{k}\,, &v_A u_{p} \otimes u_{q}\rangle, \qquad V_B^{(h,k,p,q)} := \langle v_{h} \otimes v_{k}\,, v_B v_{p} \otimes v_{q}\rangle,\\
&V_{AB}^{(h,k,p,q)} = \langle u_{h} \otimes v_{k}\,, v_{AB} u_{p} \otimes v_{q}\rangle.
\end{align*}

\section{Upper bound}\label{sec:uppbound}

The strategy to calculate the upper bound is inspired by the one for single species of bosons in \cite{ESY}, where the trial state chosen is a quasi-free state. The estimate of the energy is obtained by minimizing a Bogoliubov functional which represents the effective part of the quadratic form associated to $\mathcal{H}$ calculated in the quasi-free state.

Let us consider the thermodyanic box $\Lambda_L$ with periodic boundary conditions, which in this section we will denote by $\Lambda$ for simplicity, then
\begin{equation}
\Lambda^* := \frac{2\pi}{L} \mathbb{Z}^3.
\end{equation}
We recall that $N_A = \rho_A |\Lambda|, N_B = \rho_B |\Lambda|$, with $N = N_A + N_B$.
For a function $f$ on $\Lambda$ we define the Fourier transform, for $k \in \Lambda^*$,
\begin{equation}
\widehat{f} (k):= \int_{\Lambda} \mathrm{d}x \, e^{-ikx} f(x), \qquad f(x) =\frac{1}{|\Lambda|} \sum_{p \in \Lambda^*} e^{ipx} \widehat{f}(p).
\end{equation}

We state the main theorem of this section below, and dedicate the rest of the section to prove it.

\begin{theorem}\label{thm:mainupperbound}
Let $\eta \in [ 0,\frac{1}{10}]$ such that $v_A,v_B,v_{AB}$ are potentials satisfying assumptions \eqref{cond:integrablepotential}, \eqref{cond:misc}, \eqref{cond:aRa}, \eqref{cond:L1normv} and \eqref{cond:R}. Let $\mathcal{H}$ be the Hamiltonian defined in \eqref{eq:secondquantHam}. There exists a state $\Psi \in \mathscr{F}(\Lambda)$ such that 
\begin{equation}\label{eq:concNANBupper}
\langle \Psi, \mathcal{N}_A \Psi \rangle = N_A, \qquad \langle \Psi, \mathcal{N}_B \Psi\rangle = N_B,
\end{equation}
and a constant $C>0$ such that, for $\rho \bar{a}^{3}\leq C^{-1} $, 
\begin{align}
\langle \Psi, \mathcal{H} \Psi\rangle &\leq 4\pi |\Lambda| (\rho_A^2 a_A + 2 \rho_A \rho_B a_{AB} + \rho_B^2 a_B)\nonumber\\
&\quad +(\rho^2_A a^2_A + 2\rho_A \rho_B a_{AB}^2 +\rho_B^2 a_B^2)^{5/4} I_{AB} + C (\rho \bar{a} )^{5/2}|\Lambda|(\delta \bar{a}^{-1} +(\rho \bar{a}^3)^{\eta}), \label{eq:energyupperboundTheorem}
\end{align}
with $I_{AB}$ defined in \eqref{def:mu&xi} and $C_1$ in \eqref{cond:L1normv}.

\end{theorem}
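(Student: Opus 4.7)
The plan is to construct a Fock-space trial state $\Psi$ of quasi-free type and evaluate $\langle \Psi, \mathcal{H}\Psi\rangle$ directly, following the single-species strategy of \cite{ESY} adapted to two species. I would build $\Psi$ starting from the product coherent state $\exp(\sqrt{N_A}(a_0^* - a_0))\exp(\sqrt{N_B}(b_0^* - b_0))\Omega$, representing full condensation, and then apply a two-species Bogoliubov transformation
\begin{equation*}
T = \exp\Big(\tfrac{1}{2}\sum_{k\neq 0}\big(\alpha^{AA}_k a_k^* a_{-k}^* + \alpha^{BB}_k b_k^* b_{-k}^* + 2\alpha^{AB}_k a_k^* b_{-k}^*\big) - \mathrm{h.c.}\Big),
\end{equation*}
pairing excitations within and across species, with the $\alpha^{\#\#}_k$ as variational parameters to be optimized.

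After the $c$-number substitution $a_0 \to \sqrt{N_A}$, $b_0 \to \sqrt{N_B}$, expanding $\mathcal{H}$ in momentum space produces a pure condensate piece, a quadratic form in the excitations, cubic terms containing one condensate operator, and quartic purely-excited terms. The condensate contribution is $\tfrac{1}{2|\Lambda|}(N_A^2 \widehat{v}_A(0)+2N_A N_B \widehat{v}_{AB}(0)+N_B^2 \widehat{v}_B(0))$, which via the standard second-Born identity $\widehat{v}_\#(0) = \widehat{g}_\#(0) + \int (\widehat{g}_\#(k))^2/(2k^2)\dd k + O(\delta)$ combines with the quartic contributions after diagonalization of the Bogoliubov form to yield $4\pi|\Lambda|(\rho_A^2 a_A + 2\rho_A\rho_B a_{AB} + \rho_B^2 a_B)$ plus remainders compatible with \eqref{eq:energyupperboundTheorem}. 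The quadratic-in-excitations piece, evaluated against the Bogoliubov state, is precisely the functional $\mathscr{F}(\alpha,\gamma)$ displayed in the introduction, and cubic terms have vanishing expectation when $\alpha_k$ is chosen even in $k$.

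The next task is to minimize $\mathscr{F}$, which decouples in $k$. For each $k\neq 0$ the problem reduces to diagonalizing a $4\times 4$ quadratic form on $(a_k, a_{-k}^*, b_k, b_{-k}^*)$, producing two positive Bogoliubov branches
\begin{equation*}
E_\pm(k)^2 = k^4 + k^2\big(\rho_{0,A}\widehat{g}_A+\rho_{0,B}\widehat{g}_B \pm \sqrt{(\rho_{0,A}\widehat{g}_A-\rho_{0,B}\widehat{g}_B)^2 + 4\rho_{0,A}\rho_{0,B}\widehat{g}_{AB}^2}\big).
\end{equation*}
The minimum of $\mathscr{F}$ then equals, up to second-Born corrections, $\tfrac{1}{2}\sum_{k\neq 0}(E_+(k)+E_-(k)-k^2-\rho_{0,A}\widehat{g}_A(k)-\rho_{0,B}\widehat{g}_B(k))$. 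Converting this sum to an integral over $\mathbb{R}^3$, rescaling the momentum variable by $(\rho_A^2 a_A^2+2\rho_A\rho_B a_{AB}^2+\rho_B^2 a_B^2)^{1/4}$, and computing the resulting scalar integral as in Appendix \ref{app:calcBogint}, produces exactly the LHY term $(\rho_A^2 a_A^2+2\rho_A\rho_B a_{AB}^2+\rho_B^2 a_B^2)^{5/4}I_{AB}$, the sum $\mu_+^{5/2}+\mu_-^{5/2}$ arising from the two branches. Control of the sum-to-integral and tail errors uses the asymptotic estimates on the minimizers $\alpha_k,\gamma_k$ in Lemma \ref{lem:estimates_alphagamma}.

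Finally I would repair the number constraint \eqref{eq:concNANBupper}: since $\langle\mathcal{N}_\#\rangle$ in the dressed coherent state differs from $N_\#$ by at most the fluctuation $\sum_k |\alpha^{\#\#}_k|^2 \ll N$, a mild adjustment of the coherent-state amplitudes or a projection onto the $(N_A,N_B)$-sector restores \eqref{eq:concNANBupper} at a cost of lower order than the target. The main obstacle will be to coordinate the three distinct error sources---the softening error $(\rho\bar{a})^{5/2}|\Lambda|\delta/\bar{a}$ from replacing $\widehat{v}_\#$ by $\widehat{g}_\#$, the residual contribution of order $C_1(\rho\bar{a}^3)^{1/2}(\rho\bar{a})^{5/2}|\Lambda|$ coming from the cubic and quartic excitation terms, and the particle-number projection error---so that their sum respects \eqref{eq:energyupperboundTheorem}; this calls for a careful ultraviolet cut-off at scale $|k|\sim R^{-1}$ on $\alpha_k$ and a clean separation of the condensate and Bogoliubov momentum scales.
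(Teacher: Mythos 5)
Your proposal follows essentially the same route as the paper: a Weyl-times-Bogoliubov quasi-free trial state, renormalization $\widehat{v}_{\#}\to\widehat{g}_{\#}$ via the second-Born identity absorbed into the quartic terms, diagonalization of the two-species quadratic form into the two branches $\sqrt{k^4+2\lambda_{\pm}k^2}$, and a sum-to-integral step yielding $\mu_+^{5/2}+\mu_-^{5/2}$, with errors controlled by the decay of $\alpha_k,\gamma_k$. The only cosmetic difference is that the paper fixes the condensate amplitudes $N_0^A,N_0^B$ from the outset via \eqref{choice:N0AN0B} so that \eqref{eq:concNANBupper} holds exactly, rather than repairing the particle number afterwards.
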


We choose the set of bases for $\mathfrak{h}_A$ and $\mathfrak{h}_B$ to be plane waves that diagonalize the Laplacians in the relative position variables:
\begin{equation}
u_k(x) := L^{-3/2} e^{ikx}, \qquad v_h (y) := L^{-3/2} e^{ihy}, \qquad k,h \in \Lambda^*.
\end{equation}
Inserting these expressions in \eqref{eq:secondquantHam} and using that, since the potentials are compactly supported in a ball of radius $R \ll L$,
\begin{align*}
&\langle u_h, -\Delta_x u_k\rangle = \delta_{h,k} k^2 = \langle v_h, -\Delta_y v_k\rangle,\\
&V_A^{(h,k,p,q)} = \frac{1}{L^3}\delta_{p+q,h+k} \widehat{v}_A(q-k), \qquad V_B^{(h,k,p,q)} = \frac{1}{L^3}\delta_{p+q,h+k} \widehat{v}_B(q-k),\\
&V_{AB}^{(h,k,p,q)} = \frac{1}{L^3}\delta_{p+q,h+k} \widehat{v}_{AB}(q-k),
\end{align*}
we can write, in this case, the Hamiltonian $\mathcal{H}$ as
\begin{align}
\mathcal{H} &= \sum_{k \in \Lambda^*} k^2 (a^*_k a_k + b^*_k b_k) + \frac{1}{2|\Lambda|} \sum_{k,p,q \in \Lambda^*} \widehat{v}_A(k) a^*_{p+k} a^*_q a_{q+k} a_p \nonumber\\ 
&\quad+\frac{1}{2|\Lambda|} \sum_{k,p,q \in \Lambda^*} \widehat{v}_B(k) b^*_{p+k} b^*_q b_{q+k} b_p  +\frac{1}{|\Lambda|} \sum_{k,p,q \in \Lambda^*} \widehat{v}_{AB}(k) a^*_{p+k} b^*_q b_{q+k} a_p, \label{secondquantHam:upperbound}
\end{align}
acting on the Fock space $\mathscr{F}(\Lambda)$ as defined in \eqref{eq:Fockspacesmix}.
Let $N_0,N_0^A, N_0^B$ be three real, positive parameters such that $N_0 = N_0^A + N_0^B$, which we will use to indicate the number of bosons of type $A, B$ and in total in the condensate, respectively.
We introduce the two-components vector of annihilation and creation operators
\begin{equation}
c_p^{\#} = \begin{pmatrix}
a_p^{\#}\\b_p^{\#}
\end{pmatrix},
\end{equation}
where the notation $c^{\#}_p \in \{c_p,c_p^{*}\}$ denotes a choice between the relative creation and annihilation operators, and the same for $a^{\#}_p$ and $b^{\#}_p$.

We are ready to define the trial state
\begin{equation}\label{def:trialstateUpper}
\Psi := W_{N_0} T_{S} \Omega, 
\end{equation}
where 
\begin{itemize}
\item $\Omega = \Omega_A\otimes \Omega_B \in \mathscr{F}(\Lambda)$ is the vacuum state on the $2-$components Fock space, $\Omega_j = (1,0,0,\ldots) \in \mathscr{F}_j(\Lambda), j \in \{A,B\}$;

\item $W_{N_0} = W_A \otimes W_B$, with $W_A,W_B$ being the Weyl operators acting on the single tensor factors as
\begin{equation}
W_A= e^{\sqrt{N^A_0} (a_0^*-a_0)}, \qquad W_B= e^{\sqrt{N^B_0} (b_0^*-b_0)}, 
\end{equation}
which satisfy the following property
\begin{equation}\label{eq:weylrules}
W^*_{N_0} a^{\#}_0 W_{N_0} = a_0^{\#} + \sqrt{N_0^A},\qquad W_{N_0}^* b^{\#}_0 W_{N_0} = b_0^{\#} + \sqrt{N_0^B}; 
\end{equation}

\item We call $T_S$ the 2-species Bogoliubov transformation 
\begin{equation}\label{def:BogTransf}
T_S = e^{\frac{1}{2}\sum_{p \neq 0} \big(c^*_p \cdot S_p c^*_{-p} +  c_p \cdot S_p c_{-p} \big)},
\end{equation}
for a real, $2\times 2$ symmetric matrix $S_p $, whose expression is in formula \eqref{eq:choice_Psi}.
The following transformation rules apply (which can be obtained by a direct computation)
\begin{align}\label{eq:translationrulesp}
T^* c_p T &= \tau_p c_p + \sigma_p c^*_{-p},\quad &T^* c_{-p} T = \tau_{-p} c_{-p} - \sigma_{-p} c^*_{p},\nonumber\\  
T^* c_p^* T & = \tau_p c^*_p + \sigma_p c_{-p},\quad &T^* c_{-p}^* T  = \tau_{-p} c^*_{-p} - \sigma_{-p} c_{p},
\end{align}
where $\tau_p := \cosh\Big(\frac{1}{2}S_p\Big)$ and $\sigma_p:= \sinh\Big(\frac{1}{2}S_p\Big)$ are matrices defined by their series expansions (which are convergent thanks to the $S_p$ chosen). Since $[\sigma,\tau]= 0$ we have that, calling
\begin{equation}\label{eq:defgammaalpha}
\gamma_p = \begin{pmatrix}
\gamma^{AA} & \gamma^{AB}\\
\gamma^{BA} & \gamma^{BB}
\end{pmatrix} := 
\sigma^2_p, \qquad \alpha_p = \begin{pmatrix}
\alpha^{AA} & \alpha^{AB}\\
\alpha^{BA} & \alpha^{BB}
\end{pmatrix} :=\tau_p\sigma_p,
\end{equation}
these are real, symmetric $2 \times 2$ matrices as well (implying $\gamma^{AB} = \gamma^{BA}, \alpha^{AB} = \alpha^{BA}$). By plugging in formula \eqref{eq:choice_Psi}, we also obtain the following explicit expressions 
\begin{equation}\label{eq:alphagammaexplicit}
\alpha_p = -(1-\beta_p^2)^{-1}\beta_p, \qquad \gamma_p = (1-\beta_p^2)^{-1} \beta_p^2,
\end{equation}
where $\beta$ is a matrix defined in Lemma \ref{lem:diagBog}.
\end{itemize}

By using the previous properties \eqref{eq:weylrules}, \eqref{eq:translationrulesp}, \eqref{eq:defgammaalpha}, we have the following relations
\begin{align}
&\langle a^*_p a_q \rangle_{\Psi}  = \begin{cases}
\gamma^{AA}_p, &\text{ if } p = q \neq 0,\\
N_0^A, &\text{ if } p = q = 0,\\
0, &\text{ if } p \neq q,
\end{cases} \qquad \langle b^*_p b_q \rangle_{\Psi}  = \begin{cases}
\gamma^{BB}_p, &\text{ if } p  = q\neq 0,\\
N_0^B, &\text{ if } p = q =  0,\\
0, &\text{ if } p \neq q,
\end{cases} \nonumber \\
&\langle b^*_p a_q \rangle_{\Psi}^* = \langle a^*_p b_q \rangle_{\Psi}  = \begin{cases}
\gamma^{AB}_p, &\text{ if } p  = q \neq 0,\\
\sqrt{N_0^A N_0^B}, &\text{ if } p = q = 0,\\
0, &\text{ if } p \neq q,
\end{cases} \label{eq:transf_rules}
\end{align}
and 
\begin{align*}
&\langle a_p a_q\rangle_{\Psi} = \langle a^*_p a^*_q \rangle_{\Psi}  = \begin{cases}
\alpha^{AA}_p, &\text{ if } p = -q \neq 0,\\
N_0^A, &\text{ if } p = q = 0,\\
0, &\text{ if } p \neq 0,
\end{cases} \\
&\langle b_p b_q\rangle_{\Psi}  =\langle b^*_p b^*_q \rangle_{\Psi}  = \begin{cases}
\alpha^{BB}_p, &\text{ if } p  = -q\neq 0,\\
N_0^B, &\text{ if } p = q =  0,\\
0, &\text{ if } p \neq q,
\end{cases} \\
&\langle b_p^* a_q^* \rangle_{\Psi} = \langle b_p a_q \rangle_{\Psi}^* =  \langle a_p b_q \rangle_{\Psi} = \langle a^*_p b^*_q \rangle_{\Psi}^* = \begin{cases}
\alpha^{AB}_p, &\text{ if } p  =- q \neq 0,\\
\sqrt{N_0^A N_0^B}, &\text{ if } p = q = 0,\\
0, &\text{ if } p \neq q.
\end{cases}
\end{align*}
By an abuse of notation, we can express the previous relations by writing that
\begin{align}
&\gamma_p = \langle c_p^* \otimes c_p\rangle_{\Psi}, \qquad \alpha_p = \langle c_p \otimes c_{-p}\rangle_{\Psi},\\
&\langle c_0^* \otimes c_0\rangle_{\Psi} = 
\begin{pmatrix} \sqrt{N_0^A} \\ \sqrt{N_0^B}
\end{pmatrix} \otimes \begin{pmatrix}
\sqrt{N_0^A} \\ \sqrt{N_0^B}
\end{pmatrix} = \langle c_0 \otimes c_{0}\rangle_{\Psi},
\end{align}
where the action as quadratic form on $\Psi$ is intended on every element of the matrix.

We fix $N_0^A$ and $N_0^B$ so that 
\begin{equation}\label{choice:N0AN0B}
N_A = N_0^A + \sum_{p \neq 0} \gamma_p^{AA}, \qquad N_B = N_0^B + \sum_{p \neq 0} \gamma_p^{BB}.
\end{equation}

This choice together with \eqref{eq:weylrules}, \eqref{eq:transf_rules} gives us directly the following lemma.

\begin{lemma}\label{lem:concNumberNANBupper}
\begin{equation}
\langle \Psi, \mathcal{N}_A \Psi\rangle = N_A, \qquad \langle \Psi, \mathcal{N}_B \Psi\rangle = N_B.
\end{equation}
\end{lemma}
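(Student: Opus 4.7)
The plan is to decompose $\mathcal{N}_A = n_0^A + n_+^A$ and handle the zero-momentum and nonzero-momentum contributions separately, using the Weyl shift rules \eqref{eq:weylrules}, the Bogoliubov transformation rules \eqref{eq:translationrulesp}, and the expectation values already compiled in \eqref{eq:transf_rules}. Since the lemma asserts exactly what is implicit in those relations combined with the constraint \eqref{choice:N0AN0B}, the proof reduces to putting these pieces together.

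For the zero mode, the key observation is that the Bogoliubov transformation $T_S$ defined in \eqref{def:BogTransf} only involves creation and annihilation operators with momenta $p \neq 0$, so $T_S$ commutes with $a_0^{\#}$ and $b_0^{\#}$. Applying \eqref{eq:weylrules} then gives
\begin{equation*}
\langle \Psi, n_0^A \Psi\rangle = \langle T_S\Omega, W_{N_0}^* a_0^* a_0 W_{N_0} T_S\Omega\rangle = \langle \Omega, (a_0^* + \sqrt{N_0^A})(a_0 + \sqrt{N_0^A}) \Omega\rangle = N_0^A,
\end{equation*}
since the vacuum kills every term except the constant, consistent with the $p=q=0$ entry in \eqref{eq:transf_rules}.

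For $n_+^A = \sum_{p \neq 0} a_p^* a_p$, I note instead that $W_{N_0}$ acts as the identity on modes $p \neq 0$, so
\begin{equation*}
\langle \Psi, a_p^* a_p \Psi\rangle = \langle \Omega, T_S^* a_p^* a_p T_S \Omega\rangle,
\end{equation*}
and expanding $T_S^* a_p T_S$ via the first component of \eqref{eq:translationrulesp} and taking the vacuum expectation picks out precisely the $(A,A)$-entry of $\sigma_p^2 = \gamma_p$, i.e. $\gamma_p^{AA}$. This is the $p=q\neq 0$ case of \eqref{eq:transf_rules}. Summing,
\begin{equation*}
\langle \Psi, \mathcal{N}_A \Psi\rangle = N_0^A + \sum_{p\neq 0}\gamma_p^{AA} = N_A
\end{equation*}
by the defining choice \eqref{choice:N0AN0B}. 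The statement for $\mathcal{N}_B$ follows verbatim by swapping $a \leftrightarrow b$ and $A \leftrightarrow B$.

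There is no genuine obstacle here: the lemma is essentially bookkeeping on top of the already-established expectation formulas \eqref{eq:transf_rules} and the deliberate choice of $N_0^A, N_0^B$ in \eqref{choice:N0AN0B}. The only point that warrants care is confirming that $T_S$ preserves the vanishing of odd vacuum expectations in nonzero modes, which is automatic from the quasi-free structure encoded in \eqref{eq:translationrulesp}.
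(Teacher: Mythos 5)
Your proof is correct and follows exactly the route the paper intends: the paper simply asserts that the lemma follows "directly" from the choice \eqref{choice:N0AN0B} together with \eqref{eq:weylrules} and \eqref{eq:transf_rules}, and your write-up is just the explicit version of that bookkeeping. Nothing to add.
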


In the lemma below we show how the form of the Hamiltonian on $\Psi$ can be expressed in terms of $\alpha$ and $\gamma$.

\begin{lemma}
Let $\Psi$ be defined as \eqref{def:trialstateUpper}. Calling $\rho_{A,0}= N_0^A |\Lambda|^{-1}, \rho_{B,0} = N_0^B |\Lambda|^{-1}$, 
\begin{align*}
\langle \mathcal{H}\rangle_{\Psi} &= \mathcal{T} + \mathcal{L}_0^{N_0^A,N_0^B}(v_A,v_B,v_{AB})+ \mathcal{L}_2  + \mathcal{L}_4, \\
\mathcal{L}_2&:= \mathcal{L}_2^{(0)} + \mathcal{L}_2^{(1)},\\
\mathcal{L}_4&:= \mathcal{L}_4^{(0)} + \mathcal{L}_4^{(1)},
\end{align*}
where
\begin{gather}
\mathcal{T} = \sum_{p \neq 0} p^2 \big( \gamma^{AA}_p + \gamma^{BB}_p\big),\\
\mathcal{L}_0^{N_0^A,N_0^B}(v_A,v_B,v_{AB}):= \frac{1}{2|\Lambda|}\big((N_0^A)^2 \widehat{v}_A(0) +  2 N^A_0  N^B_0 \widehat{v}_{AB}(0)  + (N_0^B)^2\widehat{v}_B(0) \big),
\end{gather}
\begin{align*}
\mathcal{L}_2^{(0)} &:=  \sum_{p \neq 0} \big( \rho_{A,0}\widehat{v}_A(0) \gamma_p^{AA} +\rho_{B,0} \widehat{v}_B(0)) \gamma^{BB}_p+\widehat{v}_{AB}(0)  \big(\rho_{B,0} \gamma_p^{AA} +  \rho_{A,0} \gamma_p^{BB}\big)\big),\\
\mathcal{L}_2^{(1)}&:=\rho_{A,0}\sum_{p \neq 0} \widehat{v}_A(p)  \gamma_p^{AA} + \rho_{A,0} \sum_{p \neq 0} \widehat{v}_A(p) \alpha^{AA}_p +\rho_{B,0} \sum_{p \neq 0} \widehat{v}_B(p) \gamma_p^{BB}\\
&\quad + \rho_{B,0} \sum_{p \neq 0} \widehat{v}_B(p) \alpha^{BB}_p + \sqrt{\rho_{A,0}\rho_{B,0}} \sum_{p \neq 0} \widehat{v}_{AB}(p) \big(2\gamma^{AB}_p+ 2\alpha^{AB}_p\big),
\end{align*}
\begin{align*}
\mathcal{L}_4^{(0)} &:= \frac{\widehat{v}_A(0)}{2|\Lambda|} \Big(\sum_{p\neq 0 }\gamma^{AA}_p\Big)^2 +  \frac{\widehat{v}_B(0)}{2|\Lambda|} \Big(\sum_{p\neq 0 }\gamma^{BB}_p\Big)^2 + \frac{\widehat{v}_{AB}(0)}{|\Lambda|} \sum_{p,q \neq 0 } \gamma^{AA}_p \gamma^{BB}_q.\\
\mathcal{L}_4^{(1)} &:= \frac{1}{2|\Lambda|} \sum_{p, p+k \neq 0 } \Big(\widehat{v}_{A}(k) (\gamma_p^{AA} \gamma_{p+k}^{AA} + \alpha^{AA}_p \alpha^{AA}_{p+k}) + \widehat{v}_{B}(k)(\gamma_p^{BB} \gamma_{p+k}^{BB} + \alpha^{BB}_p \alpha^{BB}_{p+k})\\
&\quad +2 \widehat{v}_{AB}(k) (\gamma_p^{AB} \gamma_{p+k}^{AB} + \alpha^{AB}_p \alpha^{AB}_{p+k})\Big).
\end{align*}
\end{lemma}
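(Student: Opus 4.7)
The plan is to compute $\langle \Psi, \mathcal{H}\Psi\rangle = \langle \Omega, T_S^{*} W_{N_0}^{*} \mathcal{H} W_{N_0} T_S \Omega\rangle$ by first conjugating $\mathcal{H}$ by the Weyl operator $W_{N_0}$, and then taking the expectation in the quasi-free vector $T_S \Omega$. The rule \eqref{eq:weylrules} replaces each zero-mode factor $a_0^{\#}$ by $a_0^{\#} + \sqrt{N_0^A}$ and each $b_0^{\#}$ by $b_0^{\#} + \sqrt{N_0^B}$ inside $\mathcal{H}$. Because the Bogoliubov transformation \eqref{def:BogTransf} only mixes modes with $p \neq 0$, it commutes with $a_0^{\#}$ and $b_0^{\#}$, and the latter have vanishing expectation in $T_S \Omega$; in effect, one may substitute $a_0 = a_0^{*} = \sqrt{N_0^A}$ and $b_0 = b_0^{*} = \sqrt{N_0^B}$ in $W_{N_0}^{*}\mathcal{H} W_{N_0}$ and compute the expectation of the resulting polynomial in the nonzero modes by Wick's theorem, using the two-point functions \eqref{eq:transf_rules}.

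The kinetic term is immediate: the $k=0$ contribution is killed by the factor $k^2$, and for $p \neq 0$ one reads off $\langle a_p^{*} a_p\rangle_\Psi = \gamma_p^{AA}$ and $\langle b_p^{*} b_p\rangle_\Psi = \gamma_p^{BB}$, giving $\mathcal{T}$. For each quartic interaction, the substitution above produces a polynomial of degree $0, 1, 2, 3, 4$ in the nonzero-mode operators, according to how many of the four slots $(a^{*}_{p+k}, a^{*}_q, a_{q+k}, a_p)$ (and analogously for the $b$ and $AB$ blocks) are pinned to zero momentum. Odd-degree contributions vanish on $T_S \Omega$ by Gaussianity. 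The order-$0$ piece forces all four slots in each block to carry vanishing momentum and collects into $\mathcal{L}_0^{N_0^A, N_0^B}(v_A, v_B, v_{AB})$. The order-$2$ piece is obtained by selecting two of the four slots to be pinned (six choices per interaction): those pairings that force the momentum transfer $k$ to equal $0$ in $\widehat{v}_{\#}(k)$ assemble, with coefficient $\widehat{v}_{\#}(0)$, into $\mathcal{L}_2^{(0)}$, while the remaining pairings leave a surviving momentum $p$ inside $\widehat{v}_{\#}(p)$ and assemble into $\mathcal{L}_2^{(1)}$. Diagonal contractions in \eqref{eq:transf_rules} give the $\gamma$-type factors and anomalous contractions give the $\alpha$-type factors, so summing the six pairings per interaction reproduces the stated expression for $\mathcal{L}_2$; the factor $2$ multiplying the $AB$ contributions tracks back to the prefactor $|\Lambda|^{-1}$ (rather than $(2|\Lambda|)^{-1}$) of the inter-species interaction.

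The order-$4$ piece comes from the full quartic acting on nonzero modes, evaluated by Wick's theorem:
\begin{equation*}
\langle a^{*}_{\alpha} a^{*}_{\beta} a_{\gamma} a_{\delta}\rangle_\Psi = \langle a^{*}_{\alpha} a^{*}_{\beta}\rangle \langle a_{\gamma} a_{\delta}\rangle + \langle a^{*}_{\alpha} a_{\gamma}\rangle \langle a^{*}_{\beta} a_{\delta}\rangle + \langle a^{*}_{\alpha} a_{\delta}\rangle \langle a^{*}_{\beta} a_{\gamma}\rangle,
\end{equation*}
and analogously for the $B$ and $AB$ blocks. The diagonal contractions that force $k = 0$ build $\mathcal{L}_4^{(0)}$, in particular producing the cross term $\gamma_p^{AA}\gamma_q^{BB}$ from the $AB$ interaction, while the remaining diagonal and the anomalous contractions combine into $\mathcal{L}_4^{(1)}$. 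The main obstacle is the combinatorial bookkeeping at the order-$2$ level: identifying, among the six pinnings of two zero indices per interaction, those that collapse $\widehat{v}_{\#}(k)$ to $\widehat{v}_{\#}(0)$ and those that do not, and then using the reality and evenness of $\widehat{v}_{\#}$ together with $\alpha_{-p} = \alpha_{p}$ and $\gamma_{-p} = \gamma_{p}$ (which follow from the chosen symmetry of $S_p$ in $p$) to collect the resulting terms. Once this identification is carried out, the assembly of $\mathcal{L}_2$ and $\mathcal{L}_4$ in the stated form is mechanical.
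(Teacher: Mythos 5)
Your proposal is correct and follows essentially the same route as the paper: the paper likewise sorts the quartic terms by the number of zero-momentum slots (all four giving $\mathcal{L}_0$, odd numbers vanishing, the six two-slot pinnings giving $\mathcal{L}_2$ split according to whether the pairing forces $\widehat{v}_{\#}(0)$, and the fully excited part evaluated by Wick's theorem to give $\mathcal{L}_4$). The Weyl-conjugation framing you use is just a cleaner way of organizing the same computation via the relations \eqref{eq:weylrules} and \eqref{eq:transf_rules}.
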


\begin{proof}
By relations \eqref{eq:transf_rules} we get immediately the expression for $\mathcal{T}$ from the kinetic energy term. For the potential part 
\begin{equation}
\frac{1}{|\Lambda|} \sum_{k,p,q \in \Lambda^*} \Big( \frac{1}{2}\widehat{v}_A(k) a^*_{p+k} a^*_q a_{q+k} a_p  +\frac{1}{2} \widehat{v}_B(k) b^*_{p+k} b^*_q b_{q+k} b_p  +\widehat{v}_{AB}(k) a^*_{p+k} b^*_q b_{q+k} a_p\Big),
\end{equation}
we split it in four terms, according to how many momenta in the annihilation and creation operators are zero.
\begin{itemize}
\item All the momenta are zero: by \eqref{eq:transf_rules} we obtain immediately the expression of $\mathcal{L}_0^{N_0^A,N_0^B}$.

\item There are one or three momenta which are zero: 
we can use a modified version of Wick's Theorem for two species (see \cite[Theorem 10.2]{Sol_07} or by simple application of the calculation rules \eqref{eq:translationrulesp}) we get that these terms contain elements of the form
\begin{equation}
\langle a^{\#}_0 a^{\natural}_p\rangle_{\Psi} = \langle b^{\#}_0 b^{\natural}_p\rangle_{\Psi} = \langle a^{\#}_0 b^{\natural}_p\rangle_{\Psi} =0,
\end{equation} 
which are zero by \eqref{eq:transf_rules}. 

\item Two momenta are zero: for $\cdot, \star \in \{A,B\}$,
\begin{itemize}
\item[$(i)$] $p+k = 0, q=0$: we get terms of the form
\begin{equation}
\widehat{v}_{\cdot\star}(k) \sqrt{N_0^{\cdot}}\sqrt{N_0^{\star}} \alpha^{\cdot \star}_k,
\end{equation}
\item[$(ii)$] $p+k = 0 = q+k$: we get terms
\begin{equation}
\widehat{v}_{\cdot\star}(k) \sqrt{N_0^{\cdot}}\sqrt{N_0^{\star}} \gamma^{\cdot\star}_k,
\end{equation}
\item[$(iii)$] $p+k=0, p=0$: we get terms
\begin{equation}
\widehat{v}_{\cdot\star}(0) \sqrt{N_0^{\cdot}}\sqrt{N_0^{\star}} \gamma^{\cdot\star}_q,
\end{equation}
\item[$(iv)$] $q=0, q+k = 0$: we get terms
\begin{equation}
\widehat{v}_{\cdot\star}(0) \sqrt{N_0^{\cdot}}\sqrt{N_0^{\star}} \gamma^{\cdot\star}_p,
\end{equation}
\item[$(v)$] $q=0=p$: we get terms
\begin{equation}
\widehat{v}_{\cdot\star}(k) \sqrt{N_0^{\cdot}}\sqrt{N_0^{\star}} \gamma^{\cdot\star}_k,
\end{equation}
\item[$(vi)$] $q+k =0 =p$: we get terms
\begin{equation}
\widehat{v}_{\cdot\star}(k) \sqrt{N_0^{\cdot}}\sqrt{N_0^{\star}} \alpha^{\cdot\star}_k,
\end{equation}
\end{itemize}
which all together give the term $\mathcal{L}_2$ in the statement of the lemma.

\item No momentum is zero: we use again Wick's Theorem to get the term $\mathcal{L}_4$ in the statement of the lemma.
\end{itemize}
\end{proof}

This split of the quartic term is useful because we can state that, since 
\begin{equation}
N_{\#}N_{\natural} = \Big( N_0^{\#} + \sum_{p \neq 0} \gamma^{\#\#}_p\Big)\Big(N_0^{\natural} + \sum_{p \neq 0} \gamma^{\natural \natural}_p\Big), \qquad \#,\natural \in \{A,B\},
\end{equation}
then
\begin{equation}
\mathcal{L}_0^{N_0^A,N_0^B}(v_A,v_B,v_{AB}) +\mathcal{L}_2^{(0)} + \mathcal{L}_4^{(0)} = \mathcal{L}_0^{N_A,N_B}(v_A,v_B,v_{AB}).
\end{equation}
We are left with the expression
\begin{equation}
\langle \mathcal{H} \rangle_{\Psi} = \mathcal{T} +  \mathcal{L}_0^{N_A,N_B}(v_A,v_B,v_{AB}) + \mathcal{L}_2^{(1)} + \mathcal{L}_4^{(1)}.
\end{equation}
We use now some algebraic manipulation to rewrite the potential terms:

\begin{itemize}
\item We split $\mathcal{L}_2^{(1)}$ as
\begin{align*}
\mathcal{L}_2^{(1)} &= \mathcal{L}_2^{\alpha}(v_A,v_B,v_{AB}) + \mathcal{L}_2^{\gamma}(v_A,v_B,v_{AB}),\\
\mathcal{L}_2^{\alpha}(v_A,v_B,v_{AB}) &:= \sum_{p \neq 0}\rho_{A,0} \widehat{v}_A(p) \alpha_p^{AA} +\rho_{B,0}\widehat{v}_B(p) \alpha^{BB}_p + 2\sqrt{\rho_{A,0}\rho_{B,0}}\,\widehat{v}_{AB}(p)\alpha^{AB}_p,\\
\mathcal{L}_2^{\gamma}(v_A,v_B,v_{AB}) &=
\sum_{p \neq 0}\rho_{A,0} \widehat{v}_A(p) \gamma_p^{AA} +\rho_{B,0}\widehat{v}_B(p) \gamma^{BB}_p + 2\sqrt{\rho_{A,0}\rho_{B,0}}\,\widehat{v}_{AB}(p)\gamma^{AB}_p.
\end{align*}

\item We rewrite $\mathcal{L}_4^{(1)}$ as
\begin{align*}
\mathcal{L}_4^{(1)} &= \mathcal{D}_{v_A}(\gamma^{AA},\gamma^{AA}) + \mathcal{D}_{v_A}(\alpha^{AA},\alpha^{AA}) +\mathcal{D}_{v_B}(\gamma^{BB},\gamma^{BB})\\
&\quad + \mathcal{D}_{v_B}(\alpha^{BB},\alpha^{BB})+2\mathcal{D}_{v_{AB}}(\gamma^{AB},\gamma^{AB}) + 2\mathcal{D}_{v_{AB}}(\alpha^{AB},\alpha^{AB}),
\end{align*}
where we used the form
\begin{equation}
\mathcal{D}_v(f,g) := \frac{1}{2|\Lambda|}\sum_{p,p+k \neq 0} \widehat{v}(k) f_p g_{p+k}.
\end{equation}

We observe that the following relation holds
\begin{multline}\label{eq:Drenorm}
\mathcal{D}_{v_{AB}} (\alpha^{AB},\alpha^{AB})= \mathcal{D}_{v_{AB}} (\alpha^{AB}+\sqrt{\rho_{A,0}\rho_{B,0}}\,\widehat{\omega}_{AB},\alpha^{AB}+\sqrt{\rho_{A,0}\rho_{B,0}}\, \widehat{\omega}_{AB})\\
  +\frac{\rho_{A,0}\rho_{B,0}}{|\Lambda|}\widehat{(v\omega^2_{AB})}(0) -2 \sqrt{\rho_{A,0}\rho_{B,0}} \mathcal{D}_{v_{AB}} (\alpha^{AB},\widehat{\omega}_{AB}),
\end{multline}
and similar ones for $\alpha^{AA}, \alpha^{BB}$. 
\end{itemize}

We now want to renormalize the main order and the quadratic terms by substituting the $v$'s with the $g$'s:
\begin{itemize}
\item For $\mathcal{L}_2^{\alpha}$ we get
\begin{multline}\label{eq:changealphavg}
\mathcal{L}_2^{\alpha}(v_A,v_B,v_{AB}) -2 \rho_{A,0} \mathcal{D}_{v_A}(\alpha^A,\widehat{\omega}_A) - 2 \rho_{B,0} \mathcal{D}_{v_B}(\alpha^B,\widehat{\omega}_B)\\ -4 \sqrt{\rho_{A,0}\rho_{B,0}} \mathcal{D}_{v_{AB}} (\alpha^{AB},\widehat{\omega}_{AB})
= \mathcal{L}_2^{\alpha}(g_A,g_B,g_{AB}).
\end{multline}
\item For $\mathcal{L}_2^{\gamma}$ we get 
\begin{equation}\label{eq:changegammavg}
\mathcal{L}_2^{\gamma}(v_A,v_B,v_{AB}) = \mathcal{L}_2^{\gamma}(g_A,g_B,g_{AB}) + \mathcal{L}_2^{\gamma}(v_A\omega_A,v_B \omega_B,v_{AB}\omega_{AB}).
\end{equation}
\item For the main order in $\mathcal{L}_0$ we get
\begin{align}
&\mathcal{L}_0^{N_A,N_B}(v_A,v_B,v_{AB})\nonumber\\ 
&= \mathcal{L}_0^{N_A,N_B}(g_A,g_B,g_{AB})  +\mathcal{L}_0^{N_0^A,N_0^B}(g\omega_A,g\omega_B,g\omega_{AB}) 
+ \mathcal{L}_0^{N^A_0,N^B_0}(v\omega^2_A,v\omega^2_B,v\omega^2_{AB})\nonumber \\ 
&+ \frac{|\Lambda|}{2} (\rho_A^2-\rho_{A,0}^2) \widehat{v\omega}_A(0) + \frac{|\Lambda|}{2} (\rho_B^2-\rho_{B,0}^2) \widehat{v\omega}_B(0) +|\Lambda| (\rho_A\rho_B-\rho_{A,0}\rho_{B,0}) \widehat{v\omega}_{AB}(0).\label{eq:change0gv}
\end{align}
and furthermore observe that, by definition, 
\begin{align}\label{eq:change0gv2}
&\mathcal{L}_0^{N^A_0,N^B_0}(v\omega^2_A,v\omega^2_B,v\omega^2_{AB}) \nonumber\\
&=\frac{|\Lambda|}{2}\Big(\rho_{A,0}^2\widehat{(v\omega^2_{A})}(0) +\rho_{B,0}^2\widehat{(v\omega^2_{B})}(0) +2\rho_{A,0}\rho_{B,0}\widehat{(v\omega^2_{AB})}(0)\Big).
\end{align}
\end{itemize}
By using \eqref{eq:Drenorm}, \eqref{eq:changealphavg}, \eqref{eq:changegammavg}, \eqref{eq:change0gv} and \eqref{eq:change0gv2} we proved the following lemma.
\begin{lemma}\label{lem:upperboundKbogE}
Let $\Psi$ be the state defined in \eqref{def:trialstateUpper}, then the following expression holds
\begin{align}
\langle \mathcal{H}\rangle_{\Psi} &=  \mathcal{L}_0^{N_A,N_B}(g_A,g_B,g_{AB}) +\mathcal{L}_2^{\gamma}(v_A\omega_A,v_B\omega_B,v_{AB}\omega_{AB})\nonumber \\
&\quad +\mathcal{L}_0^{N_0^A,N_0^B}(g\omega_A,g\omega_B,g\omega_{AB})+ \mathcal{K}^{\text{Bog}} + \widetilde{\mathcal{E}},
\end{align}
where
\begin{equation}
\mathcal{K}^{\text{Bog}} = \mathcal{T} +\mathcal{L}_2^{\gamma}(g_A,g_B,g_{AB}) + \mathcal{L}_2^{\alpha}(g_A,g_B,g_{AB}),
\end{equation}
and
\begin{multline}
\widetilde{\mathcal{E}} = \mathcal{D}_{v_{A}} (\alpha^{A}+\rho_{A,0}\widehat{\omega}_{A},\alpha^{A}+\rho_{A,0}\widehat{\omega}_{A}) +\mathcal{D}_{v_{B}} (\alpha^{B}+\rho_{B,0}\widehat{\omega}_{B},\alpha^{B}+\rho_{B,0}\widehat{\omega}_{B}) \\+ \mathcal{D}_{v_{AB}} (\alpha^{AB}+\sqrt{\rho_{A,0}\rho_{B,0}}\widehat{\omega}_{AB},\alpha^{AB}+\sqrt{\rho_{A,0}\rho_{B,0}}\widehat{\omega}_{AB})\\
+\mathcal{D}_{v_A}(\gamma^{AA},\gamma^{AA})  +\mathcal{D}_{v_B}(\gamma^{BB},\gamma^{BB}) +2\mathcal{D}_{v_{AB}}(\gamma^{AB},\gamma^{AB})\\
+\frac{|\Lambda|}{2} (\rho_A^2-\rho_{A,0}^2) \widehat{v\omega}_A(0) + \frac{|\Lambda|}{2} (\rho_B^2-\rho_{B,0}^2) \widehat{v\omega}_B(0) + |\Lambda|(\rho_A\rho_B-\rho_{A,0}\rho_{B,0}) \widehat{v\omega}_{AB}(0).
\end{multline}
\end{lemma}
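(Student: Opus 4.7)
The plan is pure algebraic bookkeeping. Starting from the expression derived just before the lemma,
\begin{equation*}
\langle \mathcal{H}\rangle_{\Psi} = \mathcal{T} + \mathcal{L}_0^{N_A,N_B}(v_A,v_B,v_{AB}) + \mathcal{L}_2^{(1)} + \mathcal{L}_4^{(1)},
\end{equation*}
I would substitute in the identities \eqref{eq:Drenorm}--\eqref{eq:change0gv2} one at a time and check that the resulting cross-terms reorganize exactly into the three groups appearing in the conclusion: the ``main terms'' $\mathcal{L}_0^{N_A,N_B}(g_A,g_B,g_{AB}) + \mathcal{L}_2^{\gamma}(v_A\omega_A,v_B\omega_B,v_{AB}\omega_{AB}) + \mathcal{L}_0^{N_0^A,N_0^B}(g\omega_A,g\omega_B,g\omega_{AB})$, the renormalized quadratic $\mathcal{K}^{\text{Bog}}$, and the error $\widetilde{\mathcal{E}}$.

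First, for the quadratic piece, I split $\mathcal{L}_2^{(1)} = \mathcal{L}_2^{\alpha}(v) + \mathcal{L}_2^{\gamma}(v)$ as done in the paragraph above the lemma. Applying \eqref{eq:changegammavg} to $\mathcal{L}_2^{\gamma}(v)$ produces $\mathcal{L}_2^{\gamma}(g_A,g_B,g_{AB})$, which joins $\mathcal{T}$ in assembling $\mathcal{K}^{\text{Bog}}$, plus the explicit $\mathcal{L}_2^{\gamma}(v\omega)$ remainder listed in the statement. Applying \eqref{eq:changealphavg} to $\mathcal{L}_2^{\alpha}(v)$ produces $\mathcal{L}_2^{\alpha}(g_A,g_B,g_{AB})$, completing $\mathcal{K}^{\text{Bog}}$, plus three cross-terms $\rho_{0,A}\mathcal{D}_{v_A}(\alpha^{AA},\omega_A)$, $\rho_{0,B}\mathcal{D}_{v_B}(\alpha^{BB},\omega_B)$ and $2\sqrt{\rho_{0,A}\rho_{0,B}}\mathcal{D}_{v_{AB}}(\alpha^{AB},\omega_{AB})$ that are parked for the moment. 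Next, for the quartic piece, I rewrite $\mathcal{L}_4^{(1)}$ using the bilinear form $\mathcal{D}_v$ and apply the completion-of-square identity \eqref{eq:Drenorm} (together with its $v_A$ and $v_B$ analogues) to each of the three $\alpha$-$\alpha$ blocks. This produces the three completed squares appearing in $\widetilde{\mathcal{E}}$, the compensating cross-terms $-2\rho_{0,\#}\mathcal{D}_{v_{\#}}(\alpha^{\#\#},\omega_{\#})$ that cancel those parked from the quadratic step, and three scalar remainders of the form $\rho_{0,\#}^{\,2}|\Lambda|^{-1}\widehat{(v\omega^2_{\#})}(0)$, which by \eqref{eq:change0gv2} assemble into $\mathcal{L}_0^{N_0^A,N_0^B}(v\omega^2_A,v\omega^2_B,v\omega^2_{AB})$. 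Finally, applying \eqref{eq:change0gv} to $\mathcal{L}_0^{N_A,N_B}(v_A,v_B,v_{AB})$ extracts $\mathcal{L}_0^{N_A,N_B}(g_A,g_B,g_{AB}) + \mathcal{L}_0^{N_0^A,N_0^B}(g\omega)$, cancels the scalar $\mathcal{L}_0^{N_0^A,N_0^B}(v\omega^2)$ produced in the previous step, and leaves the three $\widehat{v\omega}(0)(\rho^2-\rho_0^2)|\Lambda|/2$ remainders that constitute the last line of $\widetilde{\mathcal{E}}$. The quartic $\gamma$-$\gamma$ pieces from $\mathcal{L}_4^{(1)}$ are untouched and form the remaining $\mathcal{D}_{v}(\gamma,\gamma)$ terms of $\widetilde{\mathcal{E}}$.

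The only delicate point — and hence the ``main obstacle'' — is the bookkeeping of the prefactors $\rho_{0,A}$, $\rho_{0,B}$, $\sqrt{\rho_{0,A}\rho_{0,B}}$ and of the combinatorial factor $2$ attached to the $AB$ channel (recall that $v_{AB}$ couples two distinguishable species, whereas $v_A$ and $v_B$ come with the standard $1/2$ in the second quantization). In particular, one must verify that the coefficient in front of each $\mathcal{D}_{v_{\#}}(\alpha^{\#\#},\omega_{\#})$ generated by \eqref{eq:changealphavg} matches exactly the one produced from the completion of squares applied to the corresponding quartic term in $\mathcal{L}_4^{(1)}$, so that the two cancel without residue, and that the scalar $\mathcal{L}_0^{N_0^A,N_0^B}(v\omega^2)$ produced from the quartic step coincides precisely with the one generated from $\mathcal{L}_0^{N_A,N_B}(v)$ via \eqref{eq:change0gv}. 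Once these cancellations are verified channel by channel, the identity in the lemma follows by simple summation with no further analytical input.
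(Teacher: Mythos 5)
Your proposal follows exactly the paper's own argument: the proof of this lemma in the paper consists precisely of applying the identities \eqref{eq:Drenorm}, \eqref{eq:changealphavg}, \eqref{eq:changegammavg}, \eqref{eq:change0gv} and \eqref{eq:change0gv2} to the decomposition $\langle \mathcal{H}\rangle_{\Psi} = \mathcal{T}+\mathcal{L}_0^{N_A,N_B}(v_A,v_B,v_{AB})+\mathcal{L}_2^{(1)}+\mathcal{L}_4^{(1)}$ and collecting terms, which is what you carry out channel by channel. The cancellations you single out as the delicate points — the $\mathcal{D}_{v}(\alpha,\omega)$ cross-terms between the quadratic and quartic steps, and the $\mathcal{L}_0^{N_0^A,N_0^B}(v\omega^2)$ scalars against the one produced by \eqref{eq:change0gv} — are indeed the ones the identity rests on.
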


In the next lemma we rewrite the form associated to the Bogoliubov Hamiltonian by diagonalizing it.

\begin{lemma}\label{lem:upperboundKbogDiag}
The following equivalence holds
\begin{equation}\label{eq:ubKbogdiag}
\mathcal{K}^{\text{Bog}}= \sum_{k \in \Lambda^*}\Big(\frac{1}{2} \Big(\sqrt{k^4  +2\lambda_+(k) k^2} +\sqrt{k^4  +2\lambda_-(k) k^2}  \Big)- k^2 - \frac{1}{2}(\lambda_+(k) + \lambda_-(k))  \Big),
\end{equation}
where $\lambda_{\pm}(k) = [\lambda_{\pm}(\rho_{A,0},\rho_{B,0})](k)$ are defined in \eqref{def:lambdapm}.
\end{lemma}

\begin{proof}

We observe that we can write
\begin{equation}
\mathcal{K}^{\text{Bog}} = \Big\langle\Psi, \sum_{k \neq 0} \Big(c^*_k \cdot \mathcal{A}_k c_k + \frac{1}{2} (c_k\cdot \mathcal{B}_k c_{-k} + c^*_k \cdot\mathcal{B}_k c^*_{-k})\Big)\Psi\Big\rangle,
\end{equation}
with
\begin{equation}
\mathcal{A}_k = k^2 \one_2 + \mathcal{B}_k, \qquad \mathcal{B}_k = \begin{pmatrix}
\rho_{A,0} \widehat{g}_A(k) & \sqrt{\rho_{A,0}\rho_{B,0}} \, \widehat{g}_{AB}(k)\\
\sqrt{\rho_{A,0}\rho_{B,0}}\,\widehat{g}_{AB}(k) & \rho_{B,0}  \widehat{g}_{B}(k)
\end{pmatrix}.
\end{equation}
Using Lemma \ref{lem:diagBog} from Appendix \ref{app:bogint} we can diagonalize the previous form introducing the operators
\begin{equation}
d_k := c_k + \beta_k c_{-k}^*,
\end{equation}
so that we can write
\begin{equation}\label{eq:Ddiag+trace}
\mathcal{K}^{\text{Bog}} = \Big\langle \Psi, \sum_{k \neq 0} d^*_k \cdot \mathcal{D}_k d_k \Psi \Big\rangle  - \mathrm{Tr}(\beta^2_{\text{diag}} \mathcal{D}_{\text{diag}}),
\end{equation}
where $\mathcal{D}_k, \mathcal{D}^{\text{diag}}_k$ and $\beta_k, \beta^{\text{diag}}_k$ are defined in \eqref{eq:Ddiagexpressions}, \eqref{eq:Bdiagexpressions}, and
\begin{equation}
\mathrm{Tr}( \mathcal{D}_{\mathrm{diag}}\beta^2_{\mathrm{diag}}) = \sum_{k \in \Lambda^*}\Big( k^2 + \frac{1}{2}(\lambda_+ + \lambda_-) - \frac{1}{2} \Big(\sqrt{k^4  +2\lambda_+ k^2} +\sqrt{k^4  +2\lambda_- k^2}  \Big)\Big).
\end{equation}
We choose $\Psi$ such that $d_k \Psi = 0$, which, by using the transformation rules \eqref{eq:transf_rules}, corresponds to ask 
\begin{equation}\label{eq:choice_Psi}
T (\sigma_k + \beta_k \tau_k) c^*_{-k}\Omega_A \otimes \Omega_B = 0 \iff \sigma_k + \beta_k \tau_k = 0 \iff S_k = \log((1+\beta_k)^{-1}(1-\beta_k)),
\end{equation}
where the $\log$ is well defined because $\beta \leq \one_2$.
Therefore, with this choice, the first term on the r.h.s. of \eqref{eq:Ddiag+trace} vanishes, and \eqref{eq:ubKbogdiag} is proven.
\end{proof}

The choice made in \eqref{eq:choice_Psi} for $\Psi$ lets us obtain an explicit behavior of the $\alpha,\gamma$ and therefore we can study their asymptotic behavior.

\begin{lemma}\label{lem:estimates_alphagamma}
The elements of the matrices $\alpha$ and $\gamma$ satisfy the following bounds:
\begin{itemize}
\item For $|k| \leq 2\sqrt{\rho \bar{a}}$,
\begin{equation}
|\gamma_k^{(\#)}|, |\alpha_k^{(\natural)}| \leq C \frac{\sqrt{\rho \bar{a}}}{|k|}, \qquad \#,\natural \in \{AA,BB,AB\};
\end{equation}
\item For $|k|> 2 \sqrt{\rho \bar{a}}$
\begin{align*}
&\gamma^{AA}_k \leq \frac{\rho_{A,0}^2 \widehat{g}_A^2(k) + \rho_{A,0}\rho_{B,0} \widehat{g}_{AB}(k)}{4 k^4}, \quad \gamma^{BB}_k \leq \frac{\rho_{B,0}^2 \widehat{g}_B^2(k) + \rho_{A,0}\rho_{B,0} \widehat{g}_{AB}(k)}{4 k^4}, \\
&\gamma^{AB}_k \leq \frac{(\rho_{A,0} \widehat{g}_A + \rho_{B,0} \widehat{g}_B)  \sqrt{\rho_{A,0}\rho_{B,0}}\, \widehat{g}_{AB}(k)}{4 k^4}, \\ 
&\alpha^{AA}_k \leq -\frac{\rho_{A,0}\widehat{g}_A(k)}{2 k^2} - \frac{\rho_{A,0}^2 \widehat{g}_{A}^2(k) + \rho_{A,0}\rho_{B,0}\widehat{g}_{AB}^2(k)}{2k^4}, \\
&\alpha^{BB}_k \leq -\frac{\rho_{B,0}\widehat{g}_B(k)}{2 k^2}- \frac{\rho_{B,0}^2 \widehat{g}_{B}^2(k) + \rho_{A,0}\rho_{B,0}\widehat{g}_{AB}^2(k)}{2k^4}, \\
&\alpha^{AB}_k \leq -\frac{\sqrt{\rho_{A,0}\rho_{B,0}}}{2k^2} - \frac{\sqrt{\rho_{A,0}\rho_{B,0}}\,\widehat{g}_{AB}(k)(\rho_{A,0}\widehat{g}_A (k)+\rho_{B,0}\widehat{g}_{B}(k))}{2k^4}.
\end{align*}
\end{itemize}
This implies the following estimates
\begin{equation}\label{eq:estimateexcitationsupper}
\sum_{k \neq 0 } \gamma_k^{\#} \leq  C |\Lambda| (\rho \bar{a})^{3/2}, \qquad \# \in \{AA,BB,AB\}.
\end{equation}
\end{lemma}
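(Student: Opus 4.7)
The plan is to diagonalize $\mathcal{B}_k$ simultaneously with $\mathcal{A}_k$ and reduce the two-species problem to two independent scalar Bogoliubov problems. Since $\mathcal{B}_k$ is a real symmetric $2\times 2$ matrix, there exists an orthogonal $U_k$ with $U_k^T \mathcal{B}_k U_k = \mathrm{diag}(\lambda_+(k),\lambda_-(k))$, and $\mathcal{A}_k = k^2\one_2 + \mathcal{B}_k$ is diagonalized by the same $U_k$. By construction in Lemma \ref{lem:diagBog}, $\beta_k$ shares these eigenvectors, hence so do $\alpha_k$ and $\gamma_k$ via \eqref{eq:alphagammaexplicit}. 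In the common eigenbasis one recovers the standard scalar Bogoliubov formulas,
\begin{equation*}
\alpha_k^\pm = -\frac{\lambda_\pm(k)}{2\sqrt{k^4+2\lambda_\pm(k)k^2}}, \qquad \gamma_k^\pm = \frac{k^2+\lambda_\pm(k)-\sqrt{k^4+2\lambda_\pm(k)k^2}}{2\sqrt{k^4+2\lambda_\pm(k)k^2}},
\end{equation*}
and then $\alpha_k = U_k\,\mathrm{diag}(\alpha_k^+,\alpha_k^-)\,U_k^T$ and $\gamma_k = U_k\,\mathrm{diag}(\gamma_k^+,\gamma_k^-)\,U_k^T$.

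The key quantitative input is the uniform bound $0 \leq \lambda_\pm(k) \leq \mathrm{Tr}(\mathcal{B}_k) \leq C\rho\bar{a}$, which follows from $|\widehat{g}_\#(k)| \leq \widehat{g}_\#(0) = 8\pi a_\# \leq 8\pi\bar{a}$ together with $\rho_{0,A}+\rho_{0,B}\leq \rho$. For $|k| \leq 2\sqrt{\rho\bar{a}}$, I would use $\sqrt{k^4+2\lambda_\pm k^2} \geq |k|\sqrt{2\lambda_\pm}$ to conclude $|\alpha_k^\pm|,\,\gamma_k^\pm \leq C\sqrt{\rho\bar{a}}/|k|$; since orthogonal conjugation by $U_k$ is an isometry entry-wise up to a dimensional constant, this transfers to the AA, BB and AB matrix elements. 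For $|k|>2\sqrt{\rho\bar{a}}$ one has $2\lambda_\pm/k^2 \leq 1/2$ and Taylor expansion of $(1+2\lambda_\pm/k^2)^{\pm 1/2}$ yields $\alpha_k^\pm = -\lambda_\pm/(2k^2) + O(\lambda_\pm^2/k^4)$ and $\gamma_k^\pm = \lambda_\pm^2/(4k^4) + O(\lambda_\pm^3/k^6)$.

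Rotating back with $U_k$ and using the algebraic identities $U_k\,\mathrm{diag}(\lambda_+,\lambda_-)\,U_k^T = \mathcal{B}_k$ and $U_k\,\mathrm{diag}(\lambda_+^2,\lambda_-^2)\,U_k^T = \mathcal{B}_k^2$, the stated high-momentum bounds reduce to reading off the entries of $\mathcal{B}_k$ and $\mathcal{B}_k^2$: for instance $(\mathcal{B}_k^2)_{AA} = \rho_{0,A}^2\widehat{g}_A^2(k) + \rho_{0,A}\rho_{0,B}\widehat{g}_{AB}^2(k)$ for the $\gamma^{AA}$ bound, and the analogous expressions for $\gamma^{BB}, \gamma^{AB}$ and for the $\alpha$-entries. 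The bound \eqref{eq:estimateexcitationsupper} follows by Riemann-sum comparison after splitting the sum at $|k| = 2\sqrt{\rho\bar{a}}$: the inner ball contributes $\lesssim |\Lambda|\int_{|k|\leq 2\sqrt{\rho\bar{a}}} (\sqrt{\rho\bar{a}}/|k|)\,\dd k \lesssim |\Lambda|(\rho\bar{a})^{3/2}$, and the outer region contributes $\lesssim |\Lambda|\int_{|k|>2\sqrt{\rho\bar{a}}} (\rho\bar{a})^2/k^4\,\dd k \lesssim |\Lambda|(\rho\bar{a})^{3/2}$, the two being of equal order as dictated by the healing-length scale $|k|\sim\sqrt{\rho\bar{a}}$.

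The main obstacle I expect is the algebraic bookkeeping in the rotation-back step: although the scalar formulas in $\lambda_\pm$ are transparent, verifying that the entrywise expressions involve precisely the combinations of $\widehat{g}_A, \widehat{g}_B, \widehat{g}_{AB}$ stated in the lemma requires expanding $U_k\,\mathrm{diag}(\lambda_\pm^n)\,U_k^T$ against $\mathcal{B}_k^n$ with care, and ensuring that the $O$-errors from the Taylor expansion are dominated by the retained $\lambda_\pm^2/(4k^4)$ contribution uniformly on $\{|k|>2\sqrt{\rho\bar{a}}\}$. Once these identities are in place the remainder is routine.
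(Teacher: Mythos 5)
Your proposal is correct and follows essentially the same route as the paper: diagonalize $\mathcal{B}_k$ (hence $\beta_k$, $\alpha_k$, $\gamma_k$) with the orthogonal $U_k$ from Lemma \ref{lem:diagBog}, expand the scalar Bogoliubov expressions in $\lambda_\pm$ on the two momentum regimes, rotate back to identify the leading terms with $\mathcal{B}_k/(2k^2)$ and $\mathcal{B}_k^2/(4k^4)$, and conclude \eqref{eq:estimateexcitationsupper} by splitting the sum at $|k|\sim\sqrt{\rho\bar a}$. If anything, your explicit closed forms for $\alpha_k^{\pm},\gamma_k^{\pm}$ and the Taylor control of the remainders are more detailed than the paper's argument, which only records the asymptotics $\gamma_k\simeq \mathcal{B}_k^2/(4k^4)$, $\alpha_k\simeq \mathcal{B}_k/(2k^2)$ and $\gamma_k,\alpha_k \simeq C\mathcal{B}_k^{1/2}/|k|$.
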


\begin{proof}
By \eqref{eq:alphagammaexplicit} and by using the unitary transformation $U$ from Appendix \ref{app:bogint}, we get
\begin{align*}
\gamma_k = U  \frac{\beta_{\text{diag}}^2}{1-\beta^2_{\text{diag}}} U^*, \qquad \alpha_k = -U \frac{\beta_{\text{diag}}}{1-\beta_{\text{diag}}^2}U^*.
\end{align*}
For $|k| > 2\sqrt{\rho \bar{a}}$,
\begin{equation}
\gamma_k \simeq U \,\mathrm{diag}\Big(\frac{\lambda_+^2}{4k^4}, \frac{\lambda_-^2}{4k^4}\Big) U^* = \frac{1}{4k^4} \mathcal{B}^2_k, \qquad \alpha_k \simeq U\, \mathrm{diag}\Big(\frac{\lambda_+}{2k^2},\frac{\lambda_-}{2k^2} \Big) U^* = \frac{1}{2k^2} \mathcal{B}_k,
\end{equation}
while for $|k| \leq 2\sqrt{\rho \bar{a}}$,
\begin{equation}
\gamma_k \simeq C U \,\mathrm{diag}\Big(\frac{\sqrt{\lambda_+}}{|k|}, \frac{\sqrt{\lambda_-}}{|k|}\Big)U^* = \frac{C}{|k|} \mathcal{B}^{1/2}_k, \quad \alpha_k \simeq C U\, \mathrm{diag}\Big(\frac{\sqrt{\lambda_+}}{|k|}, \frac{\sqrt{\lambda_-}}{|k|}\Big)U^* = \frac{C}{|k|} \mathcal{B}^{1/2}_k,
\end{equation}
and they have the desired asymptotic behavior in the statement of the lemma.
We use the previous bounds to prove the last inequality
\begin{equation}
\sum_{k \neq 0 } \gamma^{\#}_k \leq  C\sum_{|k| \leq 2\sqrt{\rho \bar{a}}} \frac{\sqrt{\rho \bar{a}}}{|k|} + C\sum_{|k| > 2\sqrt{\rho \bar{a}}} \frac{\rho^2 \bar{a}^2}{k^4} \leq  C |\Lambda| (\rho \bar{a})^{3/2}.
\end{equation}
\end{proof}

Thanks to the previous lemma we can estimate the error term $\widetilde{\mathcal{E}}$. We recall the definition of $\delta$ in \eqref{eq:defdelta}.

\begin{lemma}\label{lem:upperboundErrestimate}
There exists a $C>0$ such that
\begin{equation}
|\widetilde{\mathcal{E}}+ \mathcal{L}_2^{\gamma}(v_A\omega_A,v_B\omega_B,v_{AB}\omega_{AB})| \leq C\rho^2|\Lambda| (\rho \bar{a}^3)^{1/2}  \delta + C _1 |\Lambda| (\rho \bar{a})^3.
\end{equation}
\end{lemma}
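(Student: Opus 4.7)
The plan is to estimate every summand in $\widetilde{\mathcal{E}}$, together with the counterterm $\mathcal{L}_2^{\gamma}(v_A\omega_A,v_B\omega_B,v_{AB}\omega_{AB})$, by combining the a priori asymptotics for $\alpha_p$ and $\gamma_p$ from Lemma \ref{lem:estimates_alphagamma} with the elementary observation that, since $v_{\#}\omega_{\#}\geq 0$ for $\#\in\{A,B,AB\}$,
\begin{equation}
\|v_{\#}\omega_{\#}\|_{L^1}=\widehat{v_{\#}\omega_{\#}}(0)=\widehat{v_{\#}}(0)-\widehat{g_{\#}}(0)=\delta_{\#}\leq\delta,
\end{equation}
so that $|\widehat{v_{\#}\omega_{\#}}(p)|\leq\delta$ uniformly in $p\in\Lambda^{*}$.

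First I would treat $\mathcal{L}_2^{\gamma}(v_A\omega_A,v_B\omega_B,v_{AB}\omega_{AB})$: bounding each Fourier coefficient by $\delta$, extracting $\rho_{0,\#}\leq\rho$, and invoking $\sum_{p\neq 0}\gamma_p^{\#}\leq C|\Lambda|(\rho\bar{a})^{3/2}$ from Lemma \ref{lem:estimates_alphagamma} yields a contribution of order $\rho\delta|\Lambda|(\rho\bar{a})^{3/2}=\rho^{2}|\Lambda|(\rho\bar{a}^{3})^{1/2}\delta$, i.e.\ the first term of the statement. The density-discrepancy summands $\tfrac{|\Lambda|}{2}(\rho_{\#}^{2}-\rho_{0,\#}^{2})\widehat{v_{\#}\omega_{\#}}(0)$ inside $\widetilde{\mathcal{E}}$ are of the same nature: using \eqref{choice:N0AN0B} together with Lemma \ref{lem:estimates_alphagamma} one has $|\rho_{\#}-\rho_{0,\#}|=|\Lambda|^{-1}\sum_{p\neq 0}\gamma_p^{\#\#}\leq C(\rho\bar{a})^{3/2}$, so $|\rho_{\#}^{2}-\rho_{0,\#}^{2}|\leq 2\rho(\rho\bar{a})^{3/2}$, and combined with $|\widehat{v_{\#}\omega_{\#}}(0)|=\delta_{\#}\leq\delta$ they are again bounded by $C\rho^{2}|\Lambda|(\rho\bar{a}^{3})^{1/2}\delta$.

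All remaining pieces of $\widetilde{\mathcal{E}}$ are quartic forms $\mathcal{D}_{v_{\#}}(f^{\#},f^{\#})$, and I would estimate them via the elementary convolution bound
\begin{equation}
|\mathcal{D}_{v}(f,f)|\leq\frac{\|\widehat{v}\|_{\infty}}{2|\Lambda|}\Bigl(\sum_{p}|f_p|\Bigr)^{2}\leq\frac{\|v\|_{L^{1}}}{2|\Lambda|}\Bigl(\sum_{p}|f_p|\Bigr)^{2},
\end{equation}
valid because $v\geq 0$ gives $\|\widehat{v}\|_{\infty}=\widehat{v}(0)=\|v\|_{L^{1}}$. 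For $f=\gamma^{\#}$ Lemma \ref{lem:estimates_alphagamma} gives directly $\sum_p|\gamma_p^{\#}|\leq C|\Lambda|(\rho\bar{a})^{3/2}$, producing the desired $C\|v_{\#}\|_{L^{1}}|\Lambda|(\rho\bar{a})^{3}$. The genuinely delicate pieces are $\mathcal{D}_{v_{\#}}(\alpha^{\#}+\rho_{0,\#}\omega_{\#},\,\alpha^{\#}+\rho_{0,\#}\omega_{\#})$ (with $\sqrt{\rho_{0,A}\rho_{0,B}}\,\omega_{AB}$ in the mixed case): using the scattering equation $-\Delta\omega_{\#}=g_{\#}/2$ in the form $\widehat{\omega}_{\#}(p)=\widehat{g}_{\#}(p)/(2p^{2})$, the leading high-momentum piece of $\alpha^{\#}_p$ in Lemma \ref{lem:estimates_alphagamma} cancels exactly $\rho_{0,\#}\widehat{\omega}_{\#}(p)$. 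I would split the sum at $|p|=2\sqrt{\rho\bar{a}}$: for $|p|\leq 2\sqrt{\rho\bar{a}}$ the two summands are bounded separately by $C\sqrt{\rho\bar{a}}/|p|$ and $C\rho\bar{a}/p^{2}$, each summing to $C|\Lambda|(\rho\bar{a})^{3/2}$; for $|p|>2\sqrt{\rho\bar{a}}$ the cancellation promotes the combined decay to $C\rho^{2}\bar{a}^{2}/p^{4}$, which again sums to $C|\Lambda|(\rho\bar{a})^{3/2}$. Squaring, dividing by $|\Lambda|$ and multiplying by $\|v_{\#}\|_{L^{1}}$ produces the bound $C\|v_{\#}\|_{L^{1}}|\Lambda|(\rho\bar{a})^{3}$.

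The main obstacle is precisely this high-momentum cancellation in the $(\alpha+\rho_{0}\omega)$ terms, which is the whole point of the renormalisation of $v$ into $g$ performed in Lemma \ref{lem:upperboundKbogE}: without it $\alpha^{\#}$ alone decays only like $\rho\bar{a}/p^{2}$ and the naive bound $\sum_p|\alpha_p^{\#}|$ would diverge logarithmically at high momenta, defeating the convolution estimate. Once the sharp two-regime asymptotics of Lemma \ref{lem:estimates_alphagamma} are exploited as above, summing the contributions over $\#\in\{A,B,AB\}$ and collecting the estimates from the easier terms yields the stated inequality.
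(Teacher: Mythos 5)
Your proposal is correct and follows essentially the same route as the paper: the same uniform bound $|\widehat{v_{\#}\omega_{\#}}(p)|\leq\widehat{v_{\#}\omega_{\#}}(0)=\delta_{\#}$ for the $\mathcal{L}_2^{\gamma}(v\omega)$ and density-discrepancy terms, the same convolution estimate $|\mathcal{D}_v(f,f)|\leq \widehat{v}(0)|\Lambda|^{-1}(\sum_p|f_p|)^2$ for the quartic forms, and the same high-momentum cancellation between $\alpha^{\#}_p$ and $\rho_{0,\#}\widehat{\omega}_{\#}(p)$ via $p^2\widehat{\omega}_{\#}(p)=\widehat{g}_{\#}(p)$, split at $|p|=2\sqrt{\rho\bar{a}}$ exactly as in the paper's bound on $\|\varphi_{AB}\|_\infty$. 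No gaps.
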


\begin{proof}
Let us estimate term by term. Using that $N_{\#} = N_{0,\#} + \sum_{k \neq 0} \gamma^{\#}_k, \#\in \{A,B\}$, and Lemma \ref{lem:estimates_alphagamma} we get
\begin{equation}
|\Lambda|(\rho_{\#}\rho_{\natural}-\rho_{0,\#}\rho_{0,\natural}) \widehat{v\omega}_{\#\natural}(0) \leq C \rho |\Lambda| \sum_{k \neq 0} \gamma_k^{\#} (\widehat{v}_{\#}(0) - \widehat{g}_{\#}(0))\leq  C \rho^2 |\Lambda|(\rho \bar{a}^3)^{1/2}  \delta,
\end{equation}
for $\#, \natural \in \{A,B\}$. Using again Lemma \ref{lem:estimates_alphagamma} and assumption \eqref{cond:aRa},
\begin{equation}
\mathcal{D}_{v_{\#}}(\gamma^{\#},\gamma^{\#}) \leq \frac{C}{|\Lambda|} \widehat{v}_{\#}(0) \Big(\sum_{k \neq 0} \gamma^{\#}_k\Big)^2 \leq C  \widehat{v}_{\#}(0) |\Lambda| (\rho \bar{a})^3 \leq C_1 \rho^2\bar{a} |\Lambda| \rho \bar{a}^3.
\end{equation}
Now we turn the attention to the $\mathcal{D}$ form in $\alpha$.
Let us write the calculation for the $AB$ component, the others being totally analogous. We denote $\varphi_{AB}(p) := (\alpha^{AB}(p)-\sqrt{\rho_{A,0}\rho_{B,0}}\widehat{\omega}_{AB}(p))$ and, by Lemma \ref{lem:estimates_alphagamma} and using that 
$2 p^2\widehat{\omega}_{AB}(p) = \widehat{g}_{AB}(p)$ by \eqref{eq:ScatOmega}, we have
\begin{equation}
\Big|\sum_{p \in \Lambda^*}\varphi_{AB}(p) \Big|\leq C\rho \bar{a}\sum_{|p| \leq 2 \sqrt{\rho \bar{a}}} \frac{1}{p^2} + C(\rho \bar{a})^2 \sum_{|p| > 2 \sqrt{\rho\bar{a}}} \frac{1}{p^4} \leq C(\rho \bar{a})^{3/2}|\Lambda|,
\end{equation}
which therefore gives
\begin{equation}
|\mathscr{D}(\varphi_{AB},\varphi_{AB})| \leq  \frac{1}{|\Lambda|}\Big|\sum_{p \in \Lambda^*}\varphi_{AB}(p) \Big|^2\widehat{v}_{AB}(0) = C_1 (\rho \bar{a})^3|\Lambda|.
\end{equation}
For the last term, using that $\widehat{v\omega} = \widehat{v} - \widehat{g}$,
\begin{align*}
&|\mathcal{L}_2^{\gamma}(v_A\omega_A,v_B\omega_B,v_{AB}\omega_{AB})| \\
&=
\Big|\sum_{p \neq 0}\rho_{A,0} \widehat{v\omega}_A(p) \gamma_p^{AA} +\rho_{B,0}\widehat{v\omega}_B(p) \gamma^{BB}_p + 2\sqrt{\rho_{A,0}\rho_{B,0}}\widehat{v\omega}_{AB}(p)\gamma^{AB}_p\Big|\\
&\leq \rho \delta \max_{j = AA, BB, AB}\sum_{k \neq 0} |\gamma^{j}_k | \leq C \rho^2|\Lambda| (\rho \bar{a}^3)^{1/2}  \delta.
\end{align*}
\end{proof}

We are ready to prove the main theorem of this section.
\begin{proof}[Proof of Theorem \ref{thm:mainupperbound}]
We first observe that choosing $\Psi$ as in \eqref{def:trialstateUpper}, condition \eqref{eq:concNANBupper} is satisfied thanks to Lemma \ref{lem:concNumberNANBupper}. 
By Lemma \ref{lem:upperboundKbogE}, Lemma \ref{lem:upperboundKbogDiag} and Lemma \ref{lem:upperboundErrestimate} we have 
\begin{align*}
\langle \mathcal{H}\rangle_{\Psi} &\leq \mathcal{L}_0^{N_A,N_B}(g_A,g_B,g_{AB}) +  \mathcal{L}_0^{N_0^A,N_0^B}(g\omega_A,g\omega_B,g\omega_{AB})  +\mathcal{S}\\
&\quad +C\rho^2|\Lambda| (\rho \bar{a}^3)^{1/2}  \delta + C_1 |\Lambda| (\rho \bar{a})^3.
\end{align*}
where 
\begin{equation}
\mathcal{S} := \sum_{k \in \Lambda^*}\Big( \frac{1}{2} \Big(\sqrt{k^4  +2\lambda_+(k) k^2} +\sqrt{k^4  +2\lambda_-(k) k^2}  \Big)-k^2 - \frac{1}{2}(\lambda_+(k) + \lambda_-(k))\Big).
\end{equation}
By introducing 
\begin{equation}
G_{AB} := \rho_{A,0}^2 \widehat{g}_A^2 +2 \rho_{A,0} \rho_{B,0} \widehat{g}_{AB}^2 + \rho_{B,0}^2 \widehat{g}_B^2,
\end{equation}
we have by Lemma \ref{lem:reconstructiongomega} that 
\begin{equation}
\mathcal{L}_0^{N_0^A,N_0^B}(g\omega_A,g\omega_B,g\omega_{AB}) \leq Z_0 + C \rho N \frac{\bar{a}^2}{L}, \qquad Z_0 := \sum_{k \in \Lambda^*}  \frac{G_{AB}(k)}{4k^4}. 
\end{equation}
We see that, since we are in the thermodynamic box, we can approximate the sum $\mathcal{S}+Z_0$ by the associated integral paying a negligible error
\begin{align*}
\mathcal{S}+ Z_0 &\simeq S_{AB},\\
S_{AB} &:=\frac{|\Lambda|}{(2\pi)^3}\int_{\mathbb{R}^3}\mathrm{d}k\, \Big( -k^2 - \frac{1}{2}(\lambda_+(k) + \lambda_-(k)) \\
&\qquad + \frac{1}{2} \Big(\sqrt{k^4  +2\lambda_+(k) k^2} +\sqrt{k^4  +2\lambda_-(k) k^2} \Big) +\frac{G_{AB}(k)}{4k^2} \Big). 
\end{align*}

By Lemma \ref{lem:calcBogInt} we finally have, choosing in the lemma $\rho_{z_A} = \rho_{A,0}, \rho_{z_B} = \rho_{B,0}, \ell = L$,
\begin{equation*}
\mathcal{S} + Z_0 \leq  |\Lambda|(\rho_{A,0}^2 a_A^2 + 2 \rho_{A,0} \rho_{B,0} a_{AB}^2 + \rho_{B,0}^2 a_B^2)^{5/4} I_{AB} (\rho_{A,0}, \rho_{B,0}) + \mathcal{E},
\end{equation*}
where
\begin{align*}
I_{AB} (\rho_{A,0}, \rho_{B,0})= (8\pi)^{5/2}\frac{2\sqrt{2}}{15\pi^2} (\mu_+^{5/2}(\rho_{A,0},\rho_{B,0}) + \mu_-^{5/2}(\rho_{A,0},\rho_{B,0})) = \mathcal{O}(1),
\end{align*}
with $\mu_{\pm}$ as defined in \eqref{def:mu&xi}, and, since $R \leq C_R (\rho \bar{a}^3)^{-\eta}\bar{a}$ and $\eta \leq \frac{1}{10}$ we can estimate
\begin{align*}
|\mathcal{E}| &\leq C(\rho \bar{a})^{5/2} L^3 \Big( \big(\rho \bar{a})^{-1/2}L^{-1}\log(L \bar{a}^{-1}) +  (\rho \bar{a}^3)^{2\eta} \big) \one_{\eta\neq 0} +  \one_{\eta = 0}\Big) \leq C(\rho\bar{a})^{5/2} L^3 (\rho \bar{a}^3)^{\eta},
\end{align*}
where we used that $L^{-1}\log(L\bar{a}^{-1}) \ll (\rho \bar{a}^3)^{1/2 +\eta}$.
Now we observe that, thanks to \eqref{eq:estimateexcitationsupper}, we have $\rho_{\#} - \rho_{\#,0} \leq C (\rho \bar{a})^{3/2}$ and therefore we can substitute the $\rho_{A,0},\rho_{B,0}$ with the $\rho_A,\rho_B$ by a small negligible error, obtaining
\begin{equation*}
\mathcal{S} + Z_0 \leq  |\Lambda|(\rho_A^2 a_A^2 + 2 \rho_A \rho_B a_{AB}^2 + \rho_B^2 a_B^2)^{5/4} I_{AB} + C(\rho\bar{a})^{5/2} L^3 (\rho \bar{a}^3)^{\eta}.
\end{equation*}
This proves the result.

\end{proof}


\section{Lower bound}\label{sec:lowerbound}

An important step in the proof of Theorem \ref{thm:main} is the localization of the problem in small boxes $\Lambda_{\ell} = [-\ell/2, \ell/2]^3$ of size
\begin{equation}
\ell = K_{\ell} (\rho \bar{a})^{-1/2},
\end{equation}
where $K_{\ell} = (1000C)^{-1}(\rho \bar{a}^3)^{-2\eta}$, $\eta \geq 0$.
 We also introduce the parameter
\begin{equation}
K_z = (\rho \bar{a}^3)^{-\nu} \gg 1,
\end{equation}
for $\nu > 0$, which offers a threshold for the control of the number of particles in the condensate.
For any $n,m \in \mathbb{N}$, let $H_{n,m}$ be defined analogously to \eqref{def:HN} and acting on $L^2_s(\Lambda_{\ell}^n;\mathrm{d}x)\otimes L^2_s(\Lambda_{\ell}^m;\mathrm{d}y)$ with Neumann boundary conditions on $\Lambda_{\ell}$. We introduce 
\begin{equation}
E_{n,m} (\ell) := \inf \mathrm{Spec}(H_{n,m}).
\end{equation}
We have the following lower bound for the energy on the small boxes.

\begin{theorem}\label{thm:lowerloc}
Let $\eta \in [0,\frac{1}{2000}]$ and $\nu \in (0,\frac{1}{10000} ]$ such that $v_A,v_B,v_{AB}$ are potentials satisfying Assumption \ref{cond:potentials}. There exists a constant $C>0$ such that, for $\rho \bar{a}^3 \leq C^{-1}$ and for $n, m \leq 100 C_a^2 \rho \ell^3$, and assuming that 
\begin{equation}\label{assmpt:reabsorbSpGap}
K_{\ell}^{2}K_z \delta_{AB}\bar{a}^{-1}\leq (1000 C)^{-1}, \qquad \text{ for } \eta \neq 0,
\end{equation}
the following lower bound holds
\begin{align*}
E_{n,m}(\ell) &\geq \frac{4\pi}{\ell^3} (n^2 a_A + 2  n m a_{AB} + m^2 a_B )\\
&\quad + \ell^3 \Big( \Big(\frac{n}{\ell^3} \Big)^2 a^2_A + 2 \frac{nm}{\ell^6}a_{AB}^2+ \Big(\frac{m}{\ell^3} \Big)^2 a^2_B \Big)^{5/4} I_{AB} - C\ell^3 \big( \rho\bar{a}\big)^{5/2}(\rho \bar{a}^3)^{\eta},
\end{align*}
where $I_{AB} = I_{AB}(n\ell^{-3},m \ell^{-3})$ is defined in \eqref{def:mu&xi}.
\end{theorem}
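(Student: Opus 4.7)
The plan is to massage the small-box Hamiltonian $H_{n,m}$ into a form where an effective two-species Bogoliubov Hamiltonian emerges, whose ground state can be computed explicitly via the integral $I_{AB}$, while all remaining terms are controlled from below by positive residual operators (spectral gap, quartic remainder) or absorbed into the error $C\ell^3(\rho\bar{a})^{5/2}K_\ell^{-1}$. The overall scheme mirrors the one-species LHY proofs, but with the $2\times 2$ matrix structure dictated by $\mathcal{B}_k$ in Lemma \ref{lem:upperboundKbogDiag}, the miscibility condition \eqref{cond:misc} ensuring positivity of the scattering matrix \eqref{scattlength:matrix}, and the two condensate amplitudes $z_A,z_B$ treated jointly.

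First I would apply the renormalization of Lemma \ref{lem:potential_splitting} to replace each $v_\#$ by its softer counterpart $g_\#=v_\#(1-\omega_\#)$, recovering the correct scattering lengths $a_\#$ in the main order $E_{\text{main}}$ while dumping the surplus into nonnegative quartic pieces $\mathcal{Q}_4$ that I can discard in a lower bound. Next, using the condensation estimate of Proposition \ref{propos:BEC} together with the large-matrix localization of Appendix \ref{app:largematrix}, I would invoke Proposition \ref{prop:Hamgapbound} to extract spectral gaps from the Neumann kinetic energy. Then I would symmetrize via Lemma \ref{lem:boundGapTilde}, which is the step requiring the monotonicity \eqref{cond:nonincreasing}, so as to diagonalize the Fourier coefficients in the Neumann basis; this leads to the second-quantized form of Proposition \ref{prop:secondquantHam}, to which I add the artificial negative pieces $\mathcal{G}_{\text{conv}}$ and $\mathcal{G}_\eta$ of Corollary \ref{cor:secondquant} in order to convexify the functional and to prepare the cubic terms for the forthcoming estimates.

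Following Bogoliubov, I would then perform the c-number substitution, replacing $a_0,b_0$ with complex parameters $z=(z_A,z_B)\in\mathbb{C}^2$, and isolate the effective quadratic Hamiltonian $\mathcal{K}^{\text{Bog}}$. Diagonalizing it exactly as in Lemma \ref{lem:upperboundKbogDiag} and Appendix \ref{app:bogint} yields a nonnegative part $\mathcal{K}^{\text{diag}}$ plus a c-number which, via Appendix \ref{app:calcBogint} (Lemma \ref{lem:calcBogInt}), integrates to the LHY sum $\ell^3\bigl(\rho_A^2a_A^2+2\rho_A\rho_B a_{AB}^2+\rho_B^2a_B^2\bigr)^{5/4}I_{AB}$. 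I would then split on the size of $|z|^2$: for $|z|^2$ far from $n+m$ (Section \ref{sec:rhofar}) the excess kinetic/spectral gap dwarfs all remaining nondiagonal pieces and gives the claim directly; for $|z|^2\simeq n+m$ (Section \ref{sec:rhoclose}) I would substitute $\rho_{0,\#}\mapsto n\ell^{-3}, m\ell^{-3}$ in $\mathcal{K}^{\text{Bog}}$, producing $E_{\text{main}}+E_{\text{LHY}}$ up to errors of order $(\rho\bar{a})^{5/2}K_\ell^{-1}\ell^3$ controlled by the assumed condition \eqref{assmpt:reabsorbSpGap}.

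The principal obstacle is the control of the cubic terms $\mathcal{Q}_3$ in the regime $|z|^2\simeq n+m$, where the LHY constant is actually pinned down. The idea is to decompose $\mathcal{Q}_3$ into a low-momentum \emph{soft-pair} contribution (Lemmas \ref{lem:approxQ3-Q3low} and \ref{lem:softpairs}) and a remainder, and to dominate the soft-pair piece by a Cauchy--Schwarz argument against the excess quadratic operator $Z_2^{\text{ext}}$ plus the high-momentum part of $\mathcal{K}^{\text{diag}}$ (Proposition \ref{prop:killQ3}). The delicate point is that only under the soft-potential assumption \eqref{def:softpotential} is the parameter $\delta$ small enough to make the error generated (see \eqref{calc:Tquad4}) of strictly smaller order than $(\rho\bar{a})^{5/2}\ell^3$; in the purely integrable case the same scheme still yields the correct main order $E_{\text{main}}$ but loses the precise LHY coefficient, consistently with Corollary \ref{cor:mainresult}. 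A secondary difficulty is propagating the a priori bounds on the excitation number $n_+^L$ through the c-number substitution while keeping the error in the localization of large matrices compatible with the stated tolerance on $K_\ell$ and $K_z$.
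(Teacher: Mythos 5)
Your proposal follows essentially the same route as the paper's proof: renormalization via Lemma \ref{lem:potential_splitting}, spectral-gap extraction through Proposition \ref{prop:Hamgapbound}, symmetrization and second quantization, the c-number substitution of Lemma \ref{lem:c*substit}, Bogoliubov diagonalization combined with Lemma \ref{lem:calcBogInt}, and the split on $|z|^2$ with the soft-pair analysis of Proposition \ref{prop:killQ3}. The one step you under-articulate is the passage from the $z$-dependent densities $\rho_{z_A},\rho_{z_B}$ to $n\ell^{-3},m\ell^{-3}$, which is not a direct substitution but a minimization: the convexity of the functional \eqref{eq:Functionalconvex} (enabled by $\mathcal{G}_{\text{conv}}$) together with the choice of chemical potentials $\mu_A,\mu_B$ making $\nabla F(n,m)=(0,0)$ forces the minimum of $\mathfrak{I}_<$ at $(|z_A|^2,|z_B|^2)=(n,m)$; you also omit the trivial regime $n+m\leq(\rho\bar{a}^3)^{-1/17}$, which the paper dispatches via $H_{n,m}\geq 0$.
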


\begin{proof}
We choose here the parameters which are going to satisfy all the conditions requested in the propositions and lemmas used in the following. 
Recall the condition \eqref{cond:R}: $R \leq C_R (\rho \bar{a}^3)^{-\eta}\bar{a}.$
We pick $0 \leq \eta \leq \frac{1}{2000}$, $\nu = \frac{1}{10000}, \gamma = \frac{1}{50}, M = 15000$ and choose 
\begin{equation}
K_{\ell} = \frac{(\rho \bar{a}^3)^{-2\eta}}{1000C},\quad  K_H= (\rho \bar{a}^3)^{-\frac{1}{250}-\frac{3}{10000}}, \quad K_z = (\rho \bar{a}^3)^{-\frac{1}{10000}}, \quad \mathcal{M} = \rho \ell^3 (\rho\bar{a}^3)^{\frac{1}{50}}.
\end{equation}
By the assumptions we also have
\begin{equation}
\delta_{AB} \leq (1000C)^{-1} (\rho \bar{a}^3)^{4\eta+\nu} \bar{a}.
\end{equation}

 We split the analysis in two cases. First let us assume that $n+m\leq (\rho \bar{a}^3)^{-\frac{1}{17}}$. In this case we observe that the terms of the expansion we want to prove satisfy
\begin{align*}
 &\frac{4\pi}{\ell^3} (n^2 a_A + 2  n m a_{AB} + m^2 a_B ) \leq C K_{\ell}^{-3} (\rho \bar{a}^3)^{-\frac{2}{17}} \bar{a} (\rho \bar{a})^{\frac{3}{2}} = C\bar{a}^{-2} (\rho \bar{a}^3)^{\frac{3}{2}-\frac{2}{17}+6\eta},\\
 &\ell^3 \Big( \Big(\frac{n}{\ell^3} \Big)^2 a^2_A + 2 \frac{nm}{\ell^6}a_{AB}^2+ \Big(\frac{m}{\ell^3} \Big)^2 a^2_B \Big)^{5/4} I_{AB} \leq  C\bar{a}^{-2} (\rho \bar{a}^3)^{\frac{245}{68}-6\eta},\\
 &C \ell^3(\rho \bar{a})^{5/2}(\rho \bar{a}^3)^{\eta} = CK_{\ell}^3 (\rho \bar{a})^{-3/2} (\rho \bar{a})^{5/2} (\rho \bar{a}^3)^{\eta}= C \bar{a}^{-2}(\rho \bar{a}^3)^{1 - 6 \eta}.
\end{align*}
It is clear that the first two terms are of a smaller order than the negative error term. Therefore the desired bound is
\begin{equation}
E_{n,m}(\ell) \geq -C \ell^3(\rho \bar{a})^{5/2}(\rho \bar{a}^3)^{{\eta}},
\end{equation}
which is automatically satisfied because $H_{n,m} \geq 0$.

Then from now on we assume that $(\rho \bar{a}^3)^{-\frac{1}{17}}\leq  n+m \leq 100 C_a^2\rho \ell^3$.
We consider a state $\Psi \in L^2(\Lambda_{\ell}^{n_A})\otimes L^2(\Lambda_{\ell}^{n_B})$ satisfying the low energy condition
\begin{equation}
\langle \Psi, H_{n,m} \Psi \rangle \leq \frac{4\pi}{\ell^3} (n^2 a_A + 2  n m a_{AB} + m^2 a_B ) + C_{AB}\ell^3 \Big( \Big(\frac{n+m}{\ell^3} \Big) \bar{a} \Big)^{5/2},
\end{equation}
for $C_{AB} > 2I_{AB}$ uniformly in $\rho_A,\rho_B$. If such state doest not exist, then the theorem is proven. By Proposition \ref{propos:BEC} we have the condensation estimate
\begin{equation}
\langle \Psi, (n_+^A + n_+^B ) \Psi \rangle \leq (n+m) K_{\ell}^2 (\rho \bar{a}^3)^{\frac{1}{17}}.
\end{equation}

Since $\gamma < 4\eta + \frac{1}{34}$ and $K_H^3 K_{\ell} \leq C(\rho \bar{a}^3)^{-1/2}$, by Proposition \ref{prop:Hamgapbound}, there exists a sequence $\{\Psi_{(s,t)}\}_{s,t \in \mathbb{Z}}$ such that 
\begin{multline*}
\langle \Psi, H_{n,m} \Psi \rangle \geq \sum_{4|s+t| \leq \mathcal{M}}\langle \Psi_{(s,t)}, \big( H_{n,m}^{\text{gap}} + G \big) \Psi_{(s,t)} \rangle  - C (\rho \bar{a})^{5/2}\ell^3 (\rho \bar{a}^3)^{{\eta}}\\
 + \Big(\frac{4\pi}{\ell^3} (n^2 a_A + 2  n m a_{AB} + m^2 a_B ) + C_{AB}\ell^3 \Big( \Big(\frac{n+m}{\ell^3} \Big) \bar{a} \Big)^{5/2}\Big) \sum_{4|s+t|>\mathcal{M}}\|\Psi_{(s,t)}\|^2, 
\end{multline*}
If $|s+t| \leq \mathcal{M}/2$, then 
\begin{equation}\label{eq:statecontroln+1}
\Psi_{(s,t)}  = \one_{[0,\mathcal{M}/2]^2}(n_+^{AL},n_+^{BL}) \Psi_{(s,t)},
\end{equation}
and if we prove the lower bound for these states, using that $\sum_{(s,t) \in \mathbb{Z}^2}\|\Psi_{(s,t)}\|^2 =1$, we concluded the proof.
Therefore, we assume from now on to be working with a state $\widetilde{\Psi}$ satisfying \eqref{eq:statecontroln+1}, which implies
\begin{equation}
\langle \widetilde{\Psi}, n_+^{AL}\widetilde{\Psi}\rangle\leq \frac{\mathcal{M}}{2}, \qquad \langle \widetilde{\Psi}, n_+^{BL}\widetilde{\Psi}\rangle \leq \frac{\mathcal{M}}{2}.
\end{equation}
Since $K_H \geq C K_{\ell}^4$ and $K_{\ell} K_{H}^3 \leq (\rho \bar{a}^3)^{-\frac{1}{2}}$, we can use Lemma \ref{lem:boundGapTilde}, Proposition \ref{prop:secondquantHam}, Corollary \ref{cor:secondquant} and Lemma \ref{lem:c*substit} to get  
\begin{align}
H_{n,m}^{\text{gap}} +  G &\geq \widetilde{H}_{n,m} +\frac{1}{2}G - C (n+m) \rho \bar{a} \frac{R}{\ell} \nonumber \\
&\geq \frac{4\pi}{\ell^3} (n^2 a_A+ 2 n m a_{AB} +m^2 a_B) +\mu_A n +\mu_B m \nonumber\\
&\quad + \frac{1}{\pi^2}\int_{\mathbb{C}^2} \mathrm{d}z\; \mathcal{L}_{\mu_A,\mu_B}(z)|z_A\otimes z_B\rangle \langle z_A \otimes z_B| - C \ell^3(\rho \bar{a})^{5/2}(\rho \bar{a}^3)^{{\eta}}, \label{lowerbound:HgapG}
\end{align}
for chemical potentials satisfying $0< \mu_A,\mu_B \leq C \ell^{-2}$, where we used that $\ell = K_{\ell} (\rho \bar{a})^{-1/2}$ and $R\leq C_R \bar{a}(\rho \bar{a}^3)^{-\eta}$. We recall here the expression of the resulting Hamiltonian 
\begin{align}
\mathcal{L}_{\mu_A,\mu_B}  &=\mathcal{R}(z)+\mathcal{G}_{\text{gap}}(z)+\mathcal{G}_{\text{conv}}(z)+ \mathcal{G}_{\eta}(z)-\mu_A |z_A|^2 -\mu_B |z_B|^2, \nonumber\\
\mathcal{R}(z) &= \mathcal{K} +Z_0(z)+ Z^{ex}_2(z) + Z_{3,L}(z),
\end{align}
with all the operators being defined in Section \ref{sec:cnumber}. Denoting by $|z\rangle = |z_A \otimes z_B\rangle$, we can split the integral 
\begin{equation}
\frac{1}{\pi^2}\int_{\mathbb{C}^2} \mathrm{d}z\; \mathcal{L}_{\mu_A,\mu_B}(z) |z\rangle \langle z|= \mathfrak{I}_< + \mathfrak{I}_>  
\end{equation}
where $\mathfrak{I}_<, \mathfrak{I}_> $ are the same integrals on the regions $|z|^2\leq K_z(n+m)$ and $|z|^2> K_z (n+m)$, respectively.
\begin{itemize}
\item For $|z|^2 \leq K_z (n+m)$, thanks to assumption \eqref{cond:R} on $R$, since $\eta < \frac{1}{18}$, and 
\begin{align*}
\mathcal{M} &\leq \rho \ell^3 K_H^{-3} K_{\ell}^{-4} K_z^{-3},\quad K_{\ell} K_z^3K_H^2 \ll (\rho \bar{a}^3)^{-1/2}, \quad K_{\ell}^4 K_z^3 \leq CK_H,\\
K_z^5 &\leq C(\rho \bar{a}^3)^{-4\eta}, \qquad 
K_{\ell}^2K_z \delta_{AB}\bar{a}^{-1} \leq (1000C)^{-1} \quad (\text{both for } \eta \neq 0) ,
\end{align*}
by a combination of Corollary \ref{cor:lowerboundrightregion} and Lemma \ref{lem:calcBogInt}, using the assumption \eqref{assmpt:reabsorbSpGap} for the case $\eta \neq 0$, we have
\begin{align}
\mathcal{R}(z)+ \frac{1}{2}\mathcal{G}_{\text{gap}}(z)+ \mathcal{G}_{\eta}(z) \geq  \frac{|\Lambda|}{(2\pi)^3} G_{\rho_{z_A},\rho_{z_B}} ^{5/2} I_{\rho_{z_A},\rho_{z_B}}  -C(\rho \bar{a})^{5/2}(\rho \bar{a}^3)^{{\eta}}\ell^3, \label{eq:boundR<}
\end{align}
with
\begin{align}
G_{r_A,r_B} &= (r_A^2 a_A^2 + 2r_A r_B a_{AB}^2 + r_B a_B^2)^{1/2},\\
I_{r_A,r_B} &=  (8\pi)^{5/2}\frac{2\sqrt{2}}{15\pi^2} (\mu_+^{5/2} (r_A,r_B)+ \mu_-^{5/2}(r_A,r_B)\big),
\end{align}
where $\mu_{\pm}(r_A,r_B)$ are defined in \eqref{def:mu&xi}. We consider the functional 
\begin{align}
F(|z_A|^2,|z_B|^2) &:= \frac{8\pi\bar{a}}{|\Lambda|}(\rho \bar{a}^3)^{1/4}(|z_A|^4+ |z_B|^4) +\frac{|\Lambda|}{(2\pi)^3} G_{\rho_{z_A},\rho_{z_B}} ^{5/2} I_{\rho_{z_A},\rho_{z_B}} \nonumber \\
&\quad  -\mu_A |z_A|^2 -\mu_B |z_B|^2. \label{eq:Functionalconvex}
\end{align}
We observe that this functional is in the form \eqref{app:Fconvexfunctional}. Lemma \ref{lem:Fconvexfunctional} guarantees that $F$ is convex in $(|z_A|^2,|z_B|^2)$, and we can choose $(\mu_A,\mu_B)$ in the expression \eqref{eq:gradF} such that $\nabla F(n,m)= (0,0)$. This gives that $F$ attains its minimum for $(|z_A|^2,|z_B|^2)= (n,m)$, and we can therefore lower bound, using \eqref{eq:boundR<} and recalling the definition of $\mathcal{G}_{\text{conv}}$ in \eqref{eq:Gconv(z)},
\begin{align}
\mathfrak{I}_< \geq &\Big(\frac{|\Lambda|}{(2\pi)^3} G_{\frac{n}{|\Lambda|},\frac{m}{|\Lambda|}}^{5/2} I_{\frac{n}{|\Lambda|},\frac{m}{|\Lambda|}}   -\mu_A n -\mu_B m  -C(\rho \bar{a})^{5/2}(\rho \bar{a}^3)^{{\eta}}\ell^3 \Big) \nonumber\\
&\times \frac{1}{\pi^2} \int_{|z|^2\leq K_z (n+m)}\mathrm{d}z \,|z\rangle \langle z|.\label{bound:mathfrakI<}
\end{align}

\item For $|z|^2 > K_z (n+m)$, since $M\geq 3 +\nu^{-1}(12\eta +1)$, $K_H^3 \ll (\rho \bar{a}^3)^{\frac{1}{2} -\nu(M-2)} K_{\ell}^{-5}$, by Proposition \ref{prop:z>N} and since $\mathcal{G}_{\text{conv}}(z) \geq 0$,
\begin{multline}
\mathcal{R}(z) + \mathcal{G}_{\text{gap}}(z)+\mathcal{G}_{\text{conv}}(z) -\mu_A |z_A|^2 - \mu_B|z_B|^2\\
\geq \frac{1}{2}\rho^2 \ell^3 \bar{a}(\rho \bar{a}^3)^{\frac{1}{2}+3\eta -\nu M}\geq \frac{|\Lambda|}{(2\pi)^3} G_{\frac{n}{|\Lambda|},\frac{m}{|\Lambda|}}^{5/2} I_{\frac{n}{|\Lambda|},\frac{m}{|\Lambda|}},
\end{multline}
giving 
\begin{equation}\label{bound:mathfrakI>}
\mathfrak{I}_> \geq  \frac{|\Lambda|}{(2\pi)^3} G_{\frac{n}{|\Lambda|},\frac{m}{|\Lambda|}}^{5/2} I_{\frac{n}{|\Lambda|},\frac{m}{|\Lambda|}} \frac{1}{\pi^2}\int_{|z|^2> K_z (n+m)}\mathrm{d}z \,|z\rangle \langle z|.
\end{equation}
\end{itemize}
Therefore, plugging the bounds \eqref{bound:mathfrakI<} and \eqref{bound:mathfrakI>} into \eqref{lowerbound:HgapG}, we obtain
\begin{align*}
H_{n,m}^{\text{gap}} +  G &\geq  \frac{4\pi}{\ell^3} (n^2 a_A+ 2 n m a_{AB} +m^2 a_B) + \frac{|\Lambda|}{(2\pi)^3} G_{\frac{n}{|\Lambda|},\frac{m}{|\Lambda|}}^{5/2} I_{\frac{n}{|\Lambda|},\frac{m}{|\Lambda|}}  -C (\rho \bar{a})^{5/2}(\rho \bar{a}^3)^{{\eta}}\ell^3,
\end{align*}
where we used and using that $\pi^2 = \int_{\mathbb{C}^2}\mathrm{d}z \,|z\rangle \langle z|$. This proves the result.

\end{proof}

\section{Splitting of the Potential Energy and Renormalization}\label{sec:renorm}

We choose two parameters $(n,m) \in \mathbb{N}^2$ such that  $n,m\leq 100 C_a^2 \rho \ell^3$. Denoting by $u_0(x) = |\Lambda|^{-1/2}$ the normalized constant function on $\Lambda$, we define the four projectors
\begin{align*}
P^A &= \lvert u_0 \rangle \langle u_0 \rvert \otimes \one_{\mathscr{H}_B}, \qquad Q^A = (1 -P^A) \otimes \one_{\mathscr{H}_B},\\
P^B &= \one_{\mathscr{H}_A} \otimes \lvert u_0 \rangle \langle u_0 \rvert  , \qquad Q^B =   \one_{\mathscr{H}_A}\otimes(1 -P^B),
\end{align*}
such that $P^A + Q^A = 1 \otimes \one_{\mathscr{H}_B}$ and $P^B + Q^B = \one_{\mathscr{H}_A}\otimes 1$. 
These projectors let us introduce the number of particles in the condensate $n_0$ and the excited particles $n_+$ for both species
\begin{align}
n_0^A &:= \sum_{j=1}^{n} P^A_{x_j}, \qquad n_+^A := \sum_{j=1}^{n} Q^A_{x_j} = n_A -n_0^A, \label{eq:defn+A}\\
n_0^B &:= \sum_{j=1}^{m} P^B_{y_j}, \qquad n_+^B := \sum_{j=1}^{m} Q_{y_j}^B = n_B -n_0^B, \label{eq:defn+B}
\end{align}
and denote the total number of condensated and excited particles as $n_0:= n_0^A+n_0^B$, and $n_+ := n_+^A + n_+^B$, respectively. Obviously, we have 
\begin{equation}
n = n_0^A + n_+^A, \qquad m = n_0^B + n_+^B.
\end{equation}
By means of the projectors onto and outside the condensate, we split the potential in a sum of operators. We exploit the relations between the $P$'s and $Q$'s to write 
\begin{align}
	v_A(x_{i}-x_{j})&=(P^A_{x_i}+Q^A_{x_i})(P^A_{x_j}+Q^A_{x_j})v_A(x_{i}-x_{j})(P^A_{x_j}+Q^A_{x_j})(P^B_{x_i}+Q^A_{x_i}),\nonumber\\
		v_B(y_{i}-y_{j})&=(P^B_{y_i}+Q^B_{y_i})(P^B_{y_j}+Q^B_{y_j})v_B(y_{i}-y_{j})(P^B_{y_j}+Q^B_{y_j})(P^B_{y_i}+Q^B_{y_i}),\nonumber\\
			v_{AB}(x_{i}-y_{j})&=(P^A_{x_i}+Q^A_{x_i})(P^B_{y_j}+Q^B_{y_j})v_{AB}(x_{i}-y_{j})(P^A_{x_j}+Q^A_{x_j})(P^B_{y_i}+Q^B_{y_i}),\nonumber
\end{align}
and reorganize it as a sum of $\mathcal{Q}_j$, where in each $\mathcal{Q}_j$, the projector $Q$ is present $j$ times. We use suitable algebraic computations to make the $\mathcal{Q}_j$, for $j=0,1,2,3$, depend only on $g$, and the remaining terms we regroup in the positive $\mathcal{Q}_4$ term.
\begin{lemma}\label{lem:potential_splitting}
The following algebraic identities hold
\begin{equation}
\frac{1}{2}\sum_{i\neq j} v_A(x_i - x_j) = \sum_{j=0}^4 {\mathcal Q}_j^{A}, \qquad \frac{1}{2}\sum_{i\neq j} v_B(x_i - x_j) = \sum_{j=0}^4 {\mathcal Q}_j^{B},
\end{equation}
\begin{equation}
\sum_{i=1}^{n}\sum_{j=1}^{m} v_{AB}(x_i - x_j) = \sum_{j=0}^4 {\mathcal Q}_j^{AB},
\end{equation}
where,
\begin{align}
0 \leq {\mathcal Q}_4^{A}&:=
\frac{1}{2} \sum_{i\neq j} \Pi^{A*}_{ij} v_A(x_i-x_j) \Pi^A_{ij}\nonumber \\
\Pi^A_{ij} &:= Q^A_{x_i} Q^A_{x_j} + \omega_A(x_i-x_j)\left(P^A_{x_i} P^A_{x_j} + P^A_{x_i} Q^A_{x_j} + Q^A_{x_i} P^A_{x_j}\right),\label{eq:SF_DefQ4A}\\
{\mathcal Q}_3^{A}&:=
\sum_{i\neq j} P^A_{x_i} Q^A_{x_j} g_A(x_i-x_j) Q^A_{x_j} Q^A_{x_i} + h.c.,
 \label{eq:SF_DefQ3A}\\
{\mathcal Q}_2^{A}&:=
\sum_{i\neq j} P^A_{x_i} Q^A_{x_j} (g_A+ g_A\omega_A)(x_i-x_j)P^A_{x_j} Q^A_{x_i}    \nonumber\\
&\quad + \sum_{i\neq j} P^A_{x_i} Q^A_{x_j} (g_A+g_A\omega_A)(x_i-x_j)Q^A_{x_j} P^A_{x_i}   \nonumber\\
&\quad+\frac{1}{2}\sum_{i\neq j} P^A_{x_i} P^A_{x_j} g_A(x_i-x_j) Q^A_{x_j} Q^A_{x_i} + h.c.,
 \label{eq:SF_DefQ2A}\\
{\mathcal Q}_1^{A}&:=\sum_{i,j} \big(Q^A_{x_i} P^A_{x_j} (g_A + g_A \omega_A)(x_i-x_j) P^A_{x_j} P^A_{x_i}  + h.c.\big ) = 0,  \label{eq:SF_DefQ1A}
\intertext{and}
{\mathcal Q}_0^{A}&:= \frac{1}{2} \sum_{i \neq j} P^A_{x_i} P^A_{x_j} (g_A+ g_A\omega_A)(x_i-x_j) P^A_{x_j} P^A_{x_i}. \label{eq:SF_DefQ0A}
\end{align}
Analogous expressions hold for the $\mathcal{Q}^B_j$'s. 

For the $AB$ part:
\begin{align}
0 \leq {\mathcal Q}_4^{AB}&:=
 \sum_{i=1}^{n} \sum_{j=1}^{m} \Pi^{AB*}_{ij}  v_{AB}(x_i-y_j)\Pi^{AB}_{ij},\nonumber \\
\Pi^{AB}_{ij} &:=  Q^A_{x_i}Q^B_{y_j} + \omega_{AB}(x_i-y_j) \left(P^B_{y_j} P^A_{x_i} + P^B_{y_j} Q^A_{x_i} + Q^B_{y_j} P^A_{x_i}\right),\label{eq:SF_DefQ4AB}\\
{\mathcal Q}_3^{AB}&:=
\sum_{i=1}^{n}\sum_{j=1}^{m} \left(P^A_{x_i} Q^B_{y_j} + Q^A_{x_i}P^B_{y_j}\right)g_{AB}(x_i-y_j) Q^A_{x_j} Q^B_{y_i} + h.c.,
 \label{eq:SF_DefQ3AB}\\
{\mathcal Q}_2^{AB}&:=
\sum_{i=1}^{n} \sum_{j=1}^{m} \left( P^A_{x_i} Q^B_{y_j} + Q^A_{x_i} P^B_{y_j} \right) (g_{AB}+ g_{AB}\omega_{AB})(x_i-y_j)\left( P^A_{x_i} Q^B_{y_j} + Q^A_{x_i} P^B_{y_j}   \right) \nonumber \\
&\quad+\sum_{i=1}^{n} \sum_{j=1}^{m}P^A_{x_i} P^B_{y_j} g_{AB}(x_i-y_j) Q^B_{y_j} Q^A_{x_i} + h.c.,
 \label{eq:SF_DefQ2AB}\\
0={\mathcal Q}_1^{AB}&:=\sum_{i=1}^{n} \sum_{j=1}^{m} \big(Q^A_{x_i} P^B_{y_j} + P^A_{x_i} Q^B_{y_j} \big)(g_{AB} + g_{AB} \omega_{AB})(x_i-y_j) P^A_{x_i} P^B_{y_j}  + h.c.\big ),  \label{eq:SF_DefQ1AB}
\intertext{and}
{\mathcal Q}_0^{AB}&:= \sum_{i=1}^{n} \sum_{j=1}^{m} P^A_{x_i} P^B_{y_j} (g_{AB}+ g_{AB}\omega_{AB})(x_i-y_j) P^B_{y_j} P^A_{x_i}. \label{eq:SF_DefQ0AB}
\end{align}
\end{lemma}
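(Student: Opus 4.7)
The plan is to verify a purely algebraic identity by bookkeeping after resolving identity on each factor of the potentials.

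First, on each pair interaction I would insert $\one = P + Q$ on every factor. For $v_A(x_i - x_j)$ this gives $v_A(x_i-x_j) = \sum_{a,b,c,d\in\{P,Q\}} a^A_{x_i}\, b^A_{x_j}\, v_A(x_i-x_j)\, c^A_{x_j}\, d^A_{x_i}$, a sum of sixteen operators which I would regroup by the total number of $Q$'s appearing. The inter-species decomposition is analogous, with $P^A, Q^A$ inserted on the $x_i$-slot and $P^B, Q^B$ on the $y_j$-slot.

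Second, using the scattering identity $g = v(1-\omega)$, i.e.\ $v = g + v\omega$, I would substitute $v$ by $g$ in every term carrying at most three $Q$'s, applying the identity twice in the zero-$Q$ and one-$Q$ terms so as to produce a $v\omega^2$ remainder there. After this substitution, the terms containing only $g$ (with the symmetry factor $\tfrac12$ for same-species sums over $i\neq j$) organize exactly into the stated expressions for $\mathcal{Q}_0^A, \mathcal{Q}_1^A, \mathcal{Q}_2^A, \mathcal{Q}_3^A$. The leftover $v\omega$ and $v\omega^2$ pieces, together with the pure four-$Q$ term $Q_iQ_j\, v\, Q_jQ_i$, must assemble into the positive square $\tfrac{1}{2}\sum_{i\neq j}\Pi^{A*}_{ij}\, v_A(x_i-x_j)\,\Pi^A_{ij}$. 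To check this I would expand, using that $v_A$ and $\omega_A$ are real multiplication operators (hence self-adjoint and mutually commuting),
\[
\Pi^{A*}_{ij} v_A \Pi^A_{ij} = Q_iQ_j\, v\, Q_jQ_i + Q_iQ_j\, v\omega\, M + M\, v\omega\, Q_jQ_i + M\, v\omega^2\, M,
\]
with $M := P_iP_j + P_iQ_j + Q_iP_j$, and match the last three terms against the $v\omega, v\omega^2$ residuals generated by the renormalization. Positivity $\mathcal{Q}_4^A \geq 0$ then follows immediately from $v_A \geq 0$ and the factored form, and similarly for $\mathcal{Q}_4^{AB}$.

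The vanishing of $\mathcal{Q}_1^A$ deserves a separate remark: after the $v \to g + v\omega$ substitution, each surviving one-$Q$ candidate has the form $Q^A_{x_i}\, P^A_{x_j}\,(g_A + g_A\omega_A)(x_i-x_j)\, P^A_{x_j}\, P^A_{x_i} + \text{h.c.}$ Because $u_0$ is constant on $\Lambda$, the inner sandwich $P^A_{x_j}\, h(x_i-x_j)\, P^A_{x_j}$ equals $|\Lambda|^{-1}\widehat{h}(0)\, P^A_{x_j}$, which is independent of $x_i$; the outer factor $Q^A_{x_i}\, P^A_{x_i} = 0$ annihilates the whole term. The inter-species case proceeds in exactly the same way, with two mild notational differences: the $AB$-sum is over ordered pairs $(i,j) \in \{1,\dots,n\}\times\{1,\dots,m\}$, so the $\tfrac12$ prefactor of the intra-species sums is absent, and both $P^A Q^B$ and $Q^A P^B$ configurations appear on equal footing inside $\mathcal{Q}_2^{AB}$ and $\mathcal{Q}_3^{AB}$. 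The main obstacle is not conceptual but clerical: carefully tracking the left/right placement of the $\omega$ factors produced by iterated $v = g + v\omega$ substitutions so that every residual piece is matched exactly by a corresponding term in the expansion of $\Pi^* v\, \Pi$.
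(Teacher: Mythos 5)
Your proposal is correct and follows essentially the same route as the paper: resolution of the identity $\one=P+Q$ on all four slots, regrouping by the number of $Q$'s, the substitutions $v=g+v\omega$ (once for the three-$Q$ terms, twice for the zero- and one-$Q$ terms, both for the two-$Q$ terms), collection of the residual $v\omega$, $v\omega^2$ pieces into the square $\Pi^* v\,\Pi$, and the observation that $P_{x_j}h(x_i-x_j)P_{x_j}$ is proportional to $\widehat h(0)P_{x_j}$ so that $Q_{x_i}P_{x_i}=0$ kills $\mathcal{Q}_1$. No further comment is needed.
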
 

\begin{proof}
The lemma is proven by algebraic computations using that $v = g + v\omega$ to have elements composing $\mathcal{Q}_3$, that $v = g+ g\omega + v \omega^2$ for those composing $\mathcal{Q}_0$ and $\mathcal{Q}_1$ and both the relations for $\mathcal{Q}_2$. The remaining part gives $\mathcal{Q}_4$. The $\mathcal{Q}_1$'s are zero because, for any $f \in L^1(\Lambda)$,
\begin{equation*}
Q_i P_j f(x_i-x_j) P_j P_i =\frac{1}{\lvert \Lambda \rvert} \|f\|_{L^1} Q_i P_i = 0. 
\end{equation*}
\end{proof}

\section{Extraction of the spectral gaps}\label{sec:spgap}

The action of the Hamiltonian can be split into two classes of states, \textit{i.e.}, those with low and high energy. Let us recall the definition of \eqref{def:mu&xi} of $I_{AB}$.
\begin{assumption}\label{cond:psi_low_energ}
The normalized state $\Psi \in L^2(\Lambda_{\ell}^{n} )\otimes L^2(\Lambda_{\ell}^m)$ satisfies the \textit{low energy condition} if there exists a constant $C_{AB} >2 I_{AB}$ uniformly in $n,m$, such that 
\begin{align}
\langle \Psi, H_{n,m} \Psi \rangle \leq \frac{4\pi}{\ell^3} (n^2 a_A + 2  n m a_{AB} + m^2 a_B ) + C_{AB}\ell^3 \Big( \Big(\frac{n+m}{\ell^3} \Big) \bar{a} \Big)^{5/2}.
\end{align}
\end{assumption}
The states at high energy are clearly those satisfying the opposite inequality. For the last ones, the energy bound to the Lee-Huang-Yang order is given for free, therefore we focus our attention to states satisfying the low energy condition.

One of the reason why it is key to localize the problem in boxes $\Lambda_{\ell}$, is that in these boxes it is possible to prove Bose-Einstein condensation for the states at low energy, content of the next proposition.

\begin{proposition}\label{propos:BEC}
If there exists a normalized state $\Psi$ satisfying Assumption \ref{cond:psi_low_energ}, then, if $(\rho \bar{a}^3)^{-\frac{1}{17}} \leq n+m \leq C \rho \ell^3$,
\begin{equation}
\langle \Psi, (n_+^A + n_+^B )\Psi \rangle \leq (n+m) K_{\ell}^2(\rho \bar{a}^3)^{\frac{1}{17}}.
\end{equation}
\end{proposition}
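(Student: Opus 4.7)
The plan is to combine the low-energy upper bound from Assumption~\ref{cond:psi_low_energ} with a coercivity lower bound on $H_{n,m}$ that isolates $\langle n_+^A + n_+^B\rangle$ as a positive contribution. I would follow the strategy used for single-species BEC in the Neumann small-box regime, adapting it to two components via Lemma~\ref{lem:potential_splitting}, with the miscibility condition~\eqref{cond:misc} ensuring positivity of the main-order quadratic form.

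First, I would apply Lemma~\ref{lem:potential_splitting} to write $H_{n,m} = T + \sum_{j=0}^{4}(\mathcal{Q}_j^A + \mathcal{Q}_j^B + \mathcal{Q}_j^{AB})$, use $\mathcal{Q}_1 = 0$, and drop the nonnegative $\mathcal{Q}_4^\#$ for a lower bound. Acting on the condensate modes, $\mathcal{Q}_0$ is diagonal and equals
\[
\mathcal{Q}_0 = \frac{1}{\ell^3}\left(\tfrac{n_0(n_0-1)}{2}\widehat{(g_A+g_A\omega_A)}(0) + \tfrac{m_0(m_0-1)}{2}\widehat{(g_B+g_B\omega_B)}(0) + n_0 m_0\,\widehat{(g_{AB}+g_{AB}\omega_{AB})}(0)\right).
\]
Since $\widehat{(g+g\omega)}(0) = 8\pi a + \mathcal{O}(\delta)$ for each species, this yields the main order $\tfrac{4\pi}{\ell^3}(n_0^2 a_A + 2n_0 m_0 a_{AB} + m_0^2 a_B)$ up to a $\delta$-error controlled by~\eqref{eq:deltaerrorstatement}. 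Writing $n_0 = n - n_+^A$, $m_0 = m - n_+^B$ and using positivity of the scattering-length matrix~\eqref{scattlength:matrix}, the deficit
\[
E_{\text{main}}(n,m) - \frac{4\pi}{\ell^3}\bigl(n_0^2 a_A + 2n_0 m_0 a_{AB} + m_0^2 a_B\bigr) \leq C\rho\bar a\,(n_+^A + n_+^B).
\]

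For the off-diagonal terms, $\mathcal{Q}_2$ is controlled by Cauchy--Schwarz against a small fraction of the kinetic energy, with remainder of order $\rho\bar a\,n_+$. The Neumann gap of $-\Delta$ on $\Lambda_\ell$ gives $T \geq (\pi/\ell)^2 (n_+^A + n_+^B) = \pi^2\rho\bar a\,K_\ell^{-2}(n_+^A + n_+^B)$. The principal obstacle is the cubic $\mathcal{Q}_3$: since the Neumann gap $\rho\bar a\,K_\ell^{-2}$ is much smaller than the interaction scale $\rho\bar a$ when $K_\ell\gg 1$, a direct Cauchy--Schwarz $|\mathcal{Q}_3|\leq \varepsilon\mathcal{Q}_4 + \varepsilon^{-1}(\cdots)$ loses by a factor of $K_\ell^2$. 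I would split excitations by momentum: for $|p|^2\gtrsim\rho\bar a$ the kinetic energy alone absorbs $\mathcal{Q}_3$; for low momenta, a three-body estimate that extracts positivity from the discarded $\mathcal{Q}_4$ produces a bound of the form $\rho\bar a\,(n_+)^{1+\alpha}(n+m)^{-\alpha}$, with $\alpha$ optimized against the Cauchy--Schwarz loss. Combining with Assumption~\ref{cond:psi_low_energ}, which yields $\langle H_{n,m}\rangle - E_{\text{main}}(n,m) \leq C\rho\bar a\,(n+m)(\rho\bar a^3)^{1/2}$, and solving for $\langle n_+^A + n_+^B\rangle$ produces the stated inequality, with the exponent $1/17$ emerging from the optimal choice of the momentum-split parameter.
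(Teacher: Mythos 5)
Your proposal takes a genuinely different route from the paper, and it contains a gap that I do not see how to close. The paper does not use the $\mathcal{Q}_j$ decomposition of Lemma \ref{lem:potential_splitting} for this proposition at all; it proves the rough lower bound of Lemma \ref{lem:rough_lowbound} by the classical Lieb--Yngvason strategy: the kinetic energy is split according to whether $x_1$ lies within distance $\sigma=(\rho\bar a)^{-5/17}\bar a$ of another particle, the ``outside'' part yields the spectral gap $\langle n_+\rangle/\ell^2$ via a Poincar\'e inequality, and the ``inside'' part plus the interaction is bounded below by a further localization to sub-boxes of size $\tilde\ell=(\rho\bar a)^{-6/17}\bar a$ combined with Dyson's lemma and Temple's inequality. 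In that scheme the main order $\frac{4\pi}{\ell^3}\binom{n}{m}\cdot\mathscr{A}\binom{n}{m}$ is recovered with the \emph{full} particle numbers up to a relative error $(\rho\bar a^3)^{1/17}$, so the surviving inequality is $\langle n_+\rangle/\ell^2\le C(n+m)\rho\bar a(\rho\bar a^3)^{1/17}$, which is exactly the claim. The exponent $1/17$ comes from optimizing the three parameters $\varepsilon=Y^{1/17}$, $\bar a/\tilde\ell=Y^{6/17}$, $(\sigma^3-\sigma_0^3)/\tilde\ell^3=Y^{3/17}$ in the Temple-inequality error, not from a momentum split.

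The concrete problem with your scheme is quantitative. After writing $n_0=n-n_+^A$, $m_0=m-n_+^B$, your deficit bound leaves an error of size $C\rho\bar a\,n_+$ in the main order, while the only positive term you have available to absorb $n_+$ is the Neumann gap $T\ge \pi^2\ell^{-2}n_+=\pi^2\rho\bar a K_{\ell}^{-2}n_+$, which is smaller by the factor $K_{\ell}^{2}$ you yourself identify. The resulting inequality $\rho\bar a K_{\ell}^{-2}\langle n_+\rangle\le C\rho\bar a\langle n_+\rangle+\dots$ is vacuous. (The linear-in-$n_+$ terms do largely cancel against the direct/exchange parts of $\mathcal{Q}_2$, but you discard $\mathcal{Q}_2$ by Cauchy--Schwarz instead of exploiting this cancellation, and even after it one is left with the off-diagonal quadratic and cubic terms.) The second, related, gap is circularity in the treatment of $\mathcal{Q}_3$: every quantitative control of the cubic term in this paper (Lemma \ref{lem:approxQ3-Q3low}, Lemma \ref{lem:softpairs}, Proposition \ref{prop:killQ3}) already presupposes a bound on $n_+$, i.e.\ the very condensation estimate you are trying to prove; your proposed ``three-body estimate'' yielding $\rho\bar a\,(n_+)^{1+\alpha}(n+m)^{-\alpha}$ is not substantiated and I do not believe such a bound with a favorable $\alpha$ is available without an a priori input. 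This is precisely why an independent, cruder argument (Dyson plus Temple on sub-boxes) is needed to bootstrap the whole analysis.
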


The proof is presented in Appendix \ref{app:BEC}.
Many of the error terms throughout the paper can be expressed in terms of the number of excited particles $n_+^A,n_+^B$. The condensation estimate is fundamental for the control of the excitations by showing that we can restrict the analysis to states with a bounded number of low momenta excitations. The bounds in the following sections can be better expressed in momenta space, in this case represented by the space
\begin{equation}
\Lambda^* :=  \frac{\pi}{\ell} \mathbb{N}_0^3,
\end{equation}
and the spaces, for $K_H \gg 1$,
\begin{equation}
\mathcal{P}_L := \Big\{ p \in \Lambda^*,\, 0 < |p| \leq K_H \ell^{-1}\big\}, \qquad \mathcal{P}_H := \Big\{ p \in \Lambda^*,\, |p| > K_H \ell^{-1}\Big\},
\end{equation}
of low and high momenta, respectively, and the relative projectors
\begin{equation}
Q^{A,L}_{x_j} = \one_{\mathcal{P}_L} ((-\Delta_{x_j})^{1/2}), \qquad Q^{A,H}_{x_j} = \one_{\mathcal{P}_H} ((-\Delta_{x_j})^{1/2}).
\end{equation}
and analogous definition for the $B$ versions, where $-\Delta$ is the Neumann Laplacian on $\Lambda$. We introduce, as well, the number of low and high excitations
\begin{equation}
n_+^{A,L} = \sum_{j=1}^{n} Q^{A,L}_{x_j}, \qquad n_+^{A,H} = \sum_{j=1}^{n} Q^{A,H}_{x_j}.
\end{equation} 
We have clearly that 
\begin{equation}
1 = P^{A} + Q^{A,L} + Q^{A,H}, \qquad n^A_+ = n_+^{A,H} + n_+^{A,L}, 
\end{equation}
and analogous with $B$, with the total high and low excitations being $n_+^{H} = n_+^{A,H} + n_+^{B,H}$, $n_+^{L} = n_+^{A,L} + n_+^{B,L}$, respectively.
We are now ready to extract from the kinetic energy some terms that we will call spectral gaps:
\begin{equation}\label{gap:G}
G := \frac{\pi n_+}{4 \ell^4} + \frac{K_H n_+^H}{2 \ell^2}+ \frac{\pi n_+ n_+^L}{4\mathcal{M}\ell^2} + \frac{K_H n_+^L n_+^H}{2 \mathcal{M}\ell^2},
\end{equation}
for some large constant $\mathcal{M} \gg 1$ to be fixed later. We can therefore introduce the modified kinetic energy 
\begin{align*}
\sum_{j=1}^{N_A} \mathcal{T}_{x_j}^A  &=  \sum_{j=1}^{N_A} -\Delta_{x_j} - \frac{\pi n_+^A}{2 \ell^2} - \frac{K_H }{\ell^2}n_+^{A,H}, \quad  &&\sum_{j=1}^{N_B} \mathcal{T}_{y_j}^B  =  \sum_{j=1}^{N_B} -\Delta_{y_j} - \frac{\pi n_+^B}{2 \ell^2}- \frac{K_H }{\ell^2}n_+^{B,H} , \\
\mathcal{T}_{x_j}^A &:= -\Delta_{x_j} -\frac{\pi}{2\ell^2} Q_{x_j}^A - \frac{K_H}{\ell^2} Q_{x_j}^{A,H}\geq 0 , \quad &&\mathcal{T}_{y_j}^B := -\Delta_{y_j} -\frac{\pi}{2\ell^2} Q_{y_j}^B - \frac{K_H}{\ell^2} Q_{y_j}^{B,H}\geq 0. 
\end{align*}

The extraction of the $\mathcal{M}-$dependent gap is of particular importance because it lets us bound terms dependent on $n_+^2$, showing how the main contribute to the energy is given by particles whose low-momenta excitations are below a threshold $\mathcal{M}$. The following proposition contains the gap extraction useful for this purpose.

\begin{proposition}\label{prop:Hamgapbound}
There exists a constant $C>0$ such that the following holds. Consider a normalized state $\Psi  \in L^2(\Lambda_{\ell}^{n} )\otimes L^2(\Lambda_{\ell}^m)$ satisfying Assumption \ref{cond:psi_low_energ}, then there exists a sequence $\{\Psi_{(s,t)}\}_{s,t \in \mathbb{Z}} \subseteq L^2(\Lambda_{\ell}^{n} )\otimes L^2(\Lambda_{\ell}^m)$ such that $\sum_{s,t \in\mathbb{Z}} \|\Psi_{(s,t)}\|^2 = 1$ and
\begin{equation}
\Psi_{(s,t)}  = \mathbbm{1}_{[0, \frac{\mathcal{M}}{4}+ s]\times [0, \frac{\mathcal{M}}{4}+ t]}(n_+^{AL}, n_+^{BL}) \Psi_{(s,t)},
\end{equation}
and for $(\rho \bar{a}^3)^{-\frac{1}{17}}\leq  n+m \leq 100 C_a^2\rho \ell^3$, and $\mathcal{M} \geq \rho \ell^3 (\rho \bar{a}^3)^{\gamma}$, with $\gamma < 4\eta + \frac{1}{34}$ and $K_H^3 K_{\ell} \leq C (\rho \bar{a}^3)^{-1/2}$, then for $\rho \bar{a}^3 \leq C^{-1}$, we have that
\begin{multline*}
\langle \Psi, H_{n,m} \Psi \rangle \geq \sum_{4|s+t| \leq \mathcal{M}}\langle \Psi_{(s,t)}, \big( H_{n,m}^{\text{gap}} + G \big) \Psi_{(s,t)} \rangle  - C (\rho \bar{a})^{5/2}\ell^3 (\rho \bar{a}^3)^{\eta}\\
 + \Big(\frac{4\pi}{\ell^3} (n^2 a_A + 2  n m a_{AB} + m^2 a_B ) + C_{AB}\ell^3 \Big( \Big(\frac{n+m}{\ell^3} \Big) \bar{a} \Big)^{5/2}\Big) \sum_{4|s+t|>\mathcal{M}}\|\Psi_{(s,t)}\|^2, 
\end{multline*}
where
\begin{align}
	H_{n,m}^{\text{gap}} =& \sum_{j=1}^{n}  \mathcal{T}_{x_j}^A + \sum_{1 \leq i < j \leq n} v_A(x_i-x_j) \nonumber\\
	  &+ \sum_{j=1}^{m}  \mathcal{T}_{y_j}^B + \sum_{1 \leq i < j \leq m} v_B(x_i-x_j) + \sum_{j=1}^{n} \sum_{k=1}^{m} v_{AB}(x_j-y_k).
\end{align}
\end{proposition}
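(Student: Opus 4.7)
The plan is to combine three ingredients: an extraction of one-particle spectral gaps from the Neumann Laplacian, a large-matrix localization in the low-momentum excitation numbers $(n_+^{AL}, n_+^{BL})$, and the BEC estimate of Proposition \ref{propos:BEC}, which will be used to bound the mass of the sectors with many low excitations.

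First I would extract the unconditional gaps from the kinetic energy. Since the smallest nonzero Neumann eigenvalue of $-\Delta$ on $\Lambda_{\ell}$ is $\pi^2/\ell^2$ and $-\Delta Q^{A,H}_{x_j}\geq (K_H/\ell)^2 Q^{A,H}_{x_j}$ by the very definition of $Q^{A,H}_{x_j}$, one checks for $K_H\geq 2$ that
\begin{equation*}
-\Delta_{x_j} \geq \mathcal{T}^A_{x_j} + \frac{\pi}{2\ell^2}Q^A_{x_j} + \frac{K_H}{\ell^2}Q^{A,H}_{x_j}, \qquad \mathcal{T}^A_{x_j}\geq 0,
\end{equation*}
and analogously for species $B$. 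Summing over particles produces $H_{n,m} \geq H_{n,m}^{\text{gap}} + \frac{\pi n_+}{2\ell^2} + \frac{K_H n_+^H}{\ell^2}$, and I would keep half of this extracted gap in reserve to reabsorb the localization errors below.

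Next I would apply the large-matrix localization of Appendix \ref{app:largematrix} to the variables $(n_+^{AL}, n_+^{BL})$, using a smooth partition of unity $\{\chi_s\}_{s\in\mathbb{Z}}$ on $\mathbb{N}_0$ consisting of windows of width $\mathcal{M}$ centered at $\mathcal{M}/2 + s$ with $\|\chi_s'\|_\infty \leq C/\mathcal{M}$, and set $\Psi_{(s,t)} := \chi_s(n_+^{AL})\chi_t(n_+^{BL})\Psi$; the support condition stated in the proposition follows by construction. The IMS formula turns the difference between $\langle \Psi, H^{\text{gap}}_{n,m}\Psi\rangle$ and $\sum_{s,t}\langle \Psi_{(s,t)}, H^{\text{gap}}_{n,m}\Psi_{(s,t)}\rangle$ into the double commutator
\begin{equation*}
\mathcal{E}_{\text{loc}} = \frac{1}{2}\sum_{s,t}\langle \Psi, [\chi_{s,t}, [H_{n,m}^{\text{gap}}, \chi_{s,t}]]\Psi\rangle, \qquad \chi_{s,t} := \chi_s(n_+^{AL})\chi_t(n_+^{BL}),
\end{equation*}
which is bounded by $C/\mathcal{M}^2$ times the off-diagonal (in $n_+^L$) part of the potentials, using the fact that each pairwise interaction in $H^{\text{gap}}_{n,m}$ shifts $(n_+^{AL}, n_+^{BL})$ by a bounded amount. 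Pairing each such off-diagonal term with a kinetic factor via Cauchy--Schwarz and the bounds $\|v_\#\|_1 \leq C_1 \bar a$, one reabsorbs $\mathcal{E}_{\text{loc}}$ into the reserved half of the gap from Step~1 multiplied by $n_+^L/\mathcal{M}$, producing exactly $\frac{\pi n_+ n_+^L}{4\mathcal{M}\ell^2} + \frac{K_H n_+^L n_+^H}{2\mathcal{M}\ell^2}$, i.e.\ the remaining terms of $G$.

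For the good sectors $2|s+t|\leq \mathcal{M}$ the previous two steps yield the first line of the conclusion, up to the harmless error $C(\rho\bar a)^{5/2}\ell^3 K_\ell^{-1}$. For the bad sectors $2|s+t|>\mathcal{M}$, the construction forces $n_+^{AL}+n_+^{BL}$ on the support of $\Psi_{(s,t)}$ to cluster around $\mathcal{M}+s+t$, so Markov's inequality combined with Proposition \ref{propos:BEC} gives
\begin{equation*}
\sum_{2|s+t|>\mathcal{M}}\|\Psi_{(s,t)}\|^2 \leq \frac{C\langle n_+^L\rangle_\Psi}{\mathcal{M}} \leq \frac{C(n+m)K_\ell^2(\rho\bar a^3)^{1/17}}{\mathcal{M}},
\end{equation*}
which, under the hypothesis $\mathcal{M}\geq \rho\ell^3(\rho\bar a^3)^\gamma$ with $\gamma<2\eta+\tfrac{1}{34}$, lets me bound the contribution of the bad sectors by the low-energy upper bound in Assumption \ref{cond:psi_low_energ} times this small mass, giving the second line of the statement. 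The main obstacle is Step~2: one must show that the IMS double commutator can be absorbed \emph{exactly} into $\frac{n_+ n_+^L}{\mathcal{M}\ell^2}$ and $\frac{n_+^L n_+^H}{\mathcal{M}\ell^2}$, which requires a careful bookkeeping of which potential terms shift low- versus high-momentum particles out of the condensate, and the quantitative hypothesis $K_H^3 K_\ell \leq C(\rho\bar a^3)^{-1/2}$ is precisely what is needed to make the residual error fit into $C(\rho\bar a)^{5/2}\ell^3 K_\ell^{-1}$.
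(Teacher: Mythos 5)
Your overall architecture (gap extraction from the Neumann Laplacian, large--matrix localization in $(n_+^{AL},n_+^{BL})$, input from Proposition \ref{propos:BEC}) matches the paper's, but there is a genuine gap in your treatment of the sectors $2|s+t|>\mathcal{M}$. The second line of the conclusion is a \emph{positive} term, $(\text{low-energy threshold})\times\sum_{2|s+t|>\mathcal{M}}\|\Psi_{(s,t)}\|^2$, that must be produced in the lower bound. The paper obtains it by a per-sector contradiction argument: on the support of such a $\Psi_{(s,t)}$ one has $n_+^L\gtrsim\mathcal{M}$, which is incompatible with the conclusion of Proposition \ref{propos:BEC}; hence $\Psi_{(s,t)}/\|\Psi_{(s,t)}\|$ cannot satisfy Assumption \ref{cond:psi_low_energ} and must therefore have energy \emph{at least} the threshold times $\|\Psi_{(s,t)}\|^2$. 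Your route instead shows via Markov that the total bad-sector mass is small and proposes to "bound the contribution" by the threshold times that mass — but smallness of the mass gives no lower bound on the bad sectors' energy, and the alternative of dropping $\sum_{\text{bad}}\langle H_{n,m}\rangle\geq 0$ and absorbing the threshold term into the error fails quantitatively: the mass is only of size $K_\ell^2(\rho\bar a^3)^{1/17-\gamma}\sim(\rho\bar a^3)^{1/34-4\eta}$, so its product with the main-order threshold $\sim\rho^2\bar a\ell^3$ vastly exceeds the allowed error $C(\rho\bar a)^{5/2}\ell^3K_\ell^{-1}\sim\rho^2\bar a\ell^3(\rho\bar a^3)^{1/2+\eta}$.

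A secondary confusion is in your Step 2: the terms $\frac{\pi n_+ n_+^L}{4\mathcal{M}\ell^2}+\frac{K_H n_+^L n_+^H}{2\mathcal{M}\ell^2}$ are not what the IMS error turns into — they are part of the gap $G$ that the proposition must \emph{retain} as a positive output (they are dominated by the reserved half of the extracted kinetic gap precisely because $n_+^L\leq C\mathcal{M}$ on the good sectors). The localization error is a separate object: it is bounded (Lemma \ref{lem:estimdAB}) by $\frac{C}{\mathcal{M}^2}$ times the full potential energy plus $K_H^3\ell^{-3}\|v\|_1(n+m)n_+$, and is then controlled using the low-energy assumption for $\langle H_{n,m}\rangle_\Psi$ and Proposition \ref{propos:BEC} for $\langle n_+\rangle_\Psi$, landing in the explicit $C(\rho\bar a)^{5/2}\ell^3K_\ell^{-1}$ error; this is exactly where the hypotheses $\gamma<2\eta+\tfrac{1}{34}$ and $K_H^3K_\ell\leq C(\rho\bar a^3)^{-1/2}$ are consumed. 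Absorbing it into the $\mathcal{M}$-dependent gap terms would in any case leave you unable to output those terms in $G$. (A minor point: windows of width $\mathcal{M}$ centered at $\mathcal{M}/2+s$ are not contained in $[0,\mathcal{M}/2+s]$; the paper's cutoffs are supported in $|n_+^{AL}-s|\leq\mathcal{M}/4$.)
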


\begin{proof}
By Lemma \ref{lem:locLargeMatrix} we can define $\Psi_{(s,t)} = \tilde{\theta}(n_+^{AL}-s, n_+^{BL}-t)\Psi$ such that 
\begin{align*}
&\langle \Psi, H_{n,m} \Psi\rangle  -  \sum_{4|s +t|\leq \mathcal{M}}\langle \Psi_{(s,t)}, H_{n,m} \Psi_{(s,t)} \rangle \\
&\geq \sum_{4|s+t|\geq \mathcal{M}}\langle \Psi_{(s,t)}, H_{n,m} \Psi_{(s,t)} \rangle
 - \frac{C}{\mathcal{M}^2}\sum_{h,k = 0,1,2} \langle \Psi, d_{h,k}\Psi\rangle.
\end{align*}
By construction, the states $\Psi_{(s,t)}$ for which $4|s+t| > \mathcal{M}$, since $n_+ = n_+^L + n_+^H$, also have that 
\begin{equation}
\langle \Psi_{(s,t)}, n_+\Psi_{(s,t)}\rangle \geq \langle \Psi_{(s,t)}, n_+^L\Psi_{(s,t)}\rangle \geq \frac{\mathcal{M}}{2}  \|\Psi_{(s,t)}\|^2,
\end{equation}
where we used the localization functions $\bar{\theta}(n_+^{AL} - s,n_+^{BL}-t)$ in the definition of $\Psi_{(s,t)}$. Due to the condition $\mathcal{M} \geq \rho \ell^3 (\rho \bar{a}^3)^{\gamma}$ and $\gamma < 4\eta + \frac{1}{34}$, this goes against the result of Proposition \ref{propos:BEC}, therefore those $\Psi_{(s,t)}$'s cannot satisfy the low energy condition in Assumption \ref{cond:psi_low_energ}, that is 
\begin{equation}
\langle \Psi_{(s,t)}, H_{n,m} \Psi_{(s,t)} \rangle  \geq \Big(\frac{4\pi}{\ell^3} (n^2 a_A + 2  n m a_{AB} + m^2 a_B ) + C_{AB}\ell^3 \Big( \Big(\frac{n+m}{\ell^3} \Big) \bar{a} \Big)^{5/2}\Big) \|\Psi_{(s,t)}\|^2.
\end{equation}
By Lemma \ref{lem:estimdAB}, the condensation estimate, the assumptions on $\mathcal{M}$, $\gamma< 4\eta + \frac{1}{34}, K_H^3 K_{\ell} \leq C (\rho \bar{a}^3)^{-1/2}$ and $K_{\ell}= (1000C)^{-1}(\rho \bar{a}^3)^{-2\eta}$, we get 
\begin{align*}
 \frac{1}{\mathcal{M}^2}\sum_{h,k = 0,1,2} \langle \Psi, d_{h,k}\Psi\rangle  &\leq \frac{C}{\mathcal{M}^2} \big( \langle H_{n,m}\rangle_{\Psi}  +C_1 K_H^3 \ell^{-3}(n+m) \langle n_+ \rangle_{\Psi}  \big)\\
&\leq   C K_{\ell}^{-6} (\rho \bar{a}^3)^{1-2\gamma} \rho \ell^3 \bar{a}\big( 1+ K_H^3 K_{\ell}^2 (\rho \bar{a}^3)^{\frac{1}{17}}\big) \leq C (\rho \bar{a})^{5/2} \ell^3 (\rho \bar{a}^3)^{\sigma_{\eta}},
\end{align*}
and this concludes the proof.
\end{proof}

\section{Symmetrization and second quantization}\label{sec:sym}

We need the following lemma to facilitate the emergence of the contribution of the so-called ``soft pairs": in the cubic term a pair of incoming particles with high momenta may interact and turn into a couple of particles, one with low momentum and the other in the condensate. We first deal with the outgoing momenta for technical reasons, postponing the extraction of the incoming momenta to Lemma \ref{lem:softpairs}.

\begin{lemma}\label{lem:approxQ3-Q3low}
For any $\varepsilon >0$, there exists a $C>0$ such that if $\rho \bar{a}^3 \leq C^{-1}$ and $K_H \geq CK_{\ell}^4$, then, for $\xi \in \{A,B,AB\}$,
\begin{equation}
Q_{3}^{\xi} \geq Q_{3}^{\xi,\text{low}} - \varepsilon Q_4^{\xi} - \varepsilon\frac{n_+}{\ell^2} - \varepsilon \frac{K_H n_+^H}{\ell^2}, 
\end{equation}
where 
\begin{equation}
Q_3^{\xi,\text{low}} := \sum_{i \neq j} P_i^{\xi} Q_j^{\xi,L} g(x_i-x_j) Q_i^{\xi} Q_j^{\xi} + h.c.
\end{equation}
\end{lemma}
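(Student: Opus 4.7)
Since $Q_j^\xi = Q_j^{\xi,L} + Q_j^{\xi,H}$, the difference reads $Q_3^\xi - Q_3^{\xi,\text{low}} = T + T^*$ where
\[
T := \sum_{i\neq j} P_i^\xi\, Q_j^{\xi,H}\, g_\xi(x_i-x_j)\, Q_i^\xi Q_j^\xi,
\]
i.e., the ``outgoing high-momentum'' part of the cubic term. To make the $Q_4^\xi$ structure appear, I use the algebraic identity $Q_i^\xi Q_j^\xi = \Pi_{ij}^\xi - W_{ij}^\xi$ (cf.~\eqref{eq:SF_DefQ4A}, \eqref{eq:SF_DefQ4AB}), where $W_{ij}^\xi := \omega_\xi(x_i-x_j)(P_iP_j+P_iQ_j+Q_iP_j)$. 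This splits $T = T^{(1)} - T^{(2)}$ with $T^{(1)}_{ij}=P_i^\xi Q_j^{\xi,H}g_\xi\Pi_{ij}^\xi$ and $T^{(2)}_{ij}=P_i^\xi Q_j^{\xi,H}g_\xi W_{ij}^\xi$.

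For $T^{(1)}+T^{(1)*}$, apply the operator Cauchy--Schwarz $X^*Y+Y^*X\ge -\varepsilon_1 X^*X - \varepsilon_1^{-1}Y^*Y$ with $X^*=P_i^\xi Q_j^{\xi,H}g_\xi^{1/2}$ and $Y=g_\xi^{1/2}\Pi_{ij}^\xi$, then use $g_\xi\le v_\xi$ to identify the sum $\sum_{i\neq j}\Pi_{ij}^{\xi*}v_\xi\Pi_{ij}^\xi$ with a bounded multiple of $Q_4^\xi$ by Lemma~\ref{lem:potential_splitting}:
\[
T^{(1)}+T^{(1)*} \ge -\varepsilon_1\sum_{i\neq j}P_i^\xi Q_j^{\xi,H}g_\xi Q_j^{\xi,H}P_i^\xi \;-\; C\,\varepsilon_1^{-1}\,Q_4^\xi.
\]
The scalar averaging $P_i^\xi g_\xi(x_i-x_j)P_i^\xi = \ell^{-3}\widehat{g_\xi}(0)\,P_i^\xi \le 8\pi\bar{a}\ell^{-3}P_i^\xi$, combined with $\sum_{i\neq j}Q_j^{\xi,H}P_i^\xi = n_+^{\xi,H}n_0^\xi \le C\rho\ell^3\,n_+^H$ and $\rho\bar a\ell^2=K_\ell^2$, bounds the first summand by $C\varepsilon_1 K_\ell^2\,n_+^H/\ell^2$.

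For $T^{(2)}$, decompose $W_{ij}^\xi$ into its three pieces. The crucial observation is that $P_i^\xi(g_\xi\omega_\xi)(x_i-x_j) P_i^\xi$ is the scalar $\ell^{-3}\widehat{g_\xi\omega_\xi}(0)$ times $P_i^\xi$ (compactness of the supports ensures no boundary defect), so the $PP$-piece collapses to $\ell^{-3}\widehat{g_\xi\omega_\xi}(0)\,Q_j^{\xi,H}P_i^\xi P_j^\xi$ which \emph{vanishes identically} by $Q_j^{\xi,H}P_j^\xi=0$; the $PQ$-piece collapses analogously to $\ell^{-3}\widehat{g_\xi\omega_\xi}(0)\,Q_j^{\xi,H}P_i^\xi$, summing to at most $CK_\ell^2 n_+^H/\ell^2$ thanks to $\widehat{g_\xi\omega_\xi}(0)\le\widehat{g_\xi}(0)\le 8\pi\bar a$. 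Only the $QP$-piece requires a second Cauchy--Schwarz with a free parameter $\varepsilon_3$: splitting $g_\xi\omega_\xi = (g_\xi\omega_\xi)^{1/2}\cdot(g_\xi\omega_\xi)^{1/2}$ produces one error of type $\varepsilon_3\,P_i^\xi Q_j^{\xi,H}(g_\xi\omega_\xi) Q_j^{\xi,H}P_i^\xi$ bounded by $C\varepsilon_3 K_\ell^2 n_+^H/\ell^2$, and one error $\varepsilon_3^{-1}P_j^\xi Q_i^\xi(g_\xi\omega_\xi) Q_i^\xi P_j^\xi$ which, by the same averaging on $P_j^\xi$ and the identity $\sum_{i\neq j}Q_i^\xi P_j^\xi\le n_+\,n_0$, is bounded by $C\varepsilon_3^{-1}K_\ell^2 n_+/\ell^2$. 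Choosing $\varepsilon_1=2/\varepsilon$ and $\varepsilon_3$ so that both $\varepsilon_3 K_\ell^2\le\varepsilon K_H$ and $\varepsilon_3^{-1}K_\ell^2\le\varepsilon$ (feasible whenever $K_H\ge K_\ell^4/\varepsilon^2$, a condition implied by $K_H\ge CK_\ell^4$ for $C=C(\varepsilon)$ large), and finally absorbing the remaining residual $CK_\ell^2 n_+^H/\ell^2$ into $\varepsilon K_H n_+^H/\ell^2$ via $K_\ell^2\le\varepsilon K_H$, yields the claimed inequality.

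The main technical obstacle is the $QP$-piece of $T^{(2)}$: on that one, $P_i^\xi$ and $Q_i^\xi$ do not simplify cleanly through the pair-potential via the rank-one averaging trick, forcing the extra Cauchy--Schwarz step and producing the $n_+/\ell^2$ error alongside the expected $K_H n_+^H/\ell^2$ one. All other pieces are either exactly zero or reduce to a single scalar-averaging computation.
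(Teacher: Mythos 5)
Your overall strategy -- writing $Q_i^\xi Q_j^\xi=\Pi_{ij}^\xi-W_{ij}^\xi$, Cauchy--Schwarzing the $\Pi$-part against $\mathcal{Q}_4^\xi$ via $g_\xi\le v_\xi$, and absorbing the kinetic-side errors using $P_i f P_i\le \ell^{-3}\widehat f(0)P_i$ and $\sum_{i\ne j}Q_j^{\xi,H}P_i^\xi\le n\,n_+^{\xi,H}$ -- is exactly the route the paper defers to (it only cites \cite[Lemma 2.4]{freeEnCPHM} with the Cauchy--Schwarz reweighted by $\varepsilon K_H^{-1}$), and your bookkeeping of the parameters $\varepsilon_1,\varepsilon_3$ against the hypothesis $K_H\ge CK_\ell^4$ is consistent.

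There is, however, one step that is not correct as written: the claim that the $PP$-piece of $T^{(2)}$ \emph{vanishes identically} and that the $PQ$-piece \emph{collapses exactly} to $\ell^{-3}\widehat{g_\xi\omega_\xi}(0)\,Q_j^{\xi,H}P_i^\xi$. Both rest on $P_i^\xi(g_\xi\omega_\xi)(x_i-x_j)P_i^\xi$ being the \emph{constant} $\ell^{-3}\widehat{g_\xi\omega_\xi}(0)\,P_i^\xi$. In fact $P_i^\xi(g_\xi\omega_\xi)(x_i-x_j)P_i^\xi=c(x_j)P_i^\xi$ with $c(x_j)=|\Lambda|^{-1}\int_\Lambda(g_\xi\omega_\xi)(x-x_j)\,\dd x$, and $c$ is \emph{not} constant: whenever $\mathrm{dist}(x_j,\partial\Lambda_\ell)<R$ part of the support of $g_\xi\omega_\xi$ leaves the Neumann box, so $c(x_j)=c_0-r(x_j)$ with $c_0=\ell^{-3}\widehat{g_\xi\omega_\xi}(0)$ and $0\le r\le c_0\one_{\{\mathrm{dist}(\cdot,\partial\Lambda_\ell)<R\}}$. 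Your parenthetical ``compactness of the supports ensures no boundary defect'' is backwards -- the lemma is applied \emph{before} symmetrization, so the defect is present. Since $c(x_j)$ does not commute with $Q_j^{\xi,H}$, the $PP$-piece reduces only to $\sum_{i\ne j}Q_j^{\xi,H}\,r(x_j)\,P_j^\xi P_i^\xi+h.c.$ rather than to zero, and the $PQ$-piece acquires an analogous remainder $Q_j^{\xi,H}r(x_j)Q_j^\xi$. These remainders are harmless but must be estimated: a weighted Cauchy--Schwarz together with $P_j^\xi r(x_j)P_j^\xi\le C\bar a R\,\ell^{-4}P_j^\xi$ gives, for the $PP$-defect, errors of the form $\delta\,K_\ell^2 n_+^{H}/\ell^2+\delta^{-1}C(n+m)\rho\bar a\,R/\ell$; the second contribution is of the same nature and order as the $-C(n+m)\rho\bar a R/\ell$ error carried by Lemma \ref{lem:boundGapTilde}, but it is \emph{not} one of the three remainders listed in the statement you are proving, so you must either add it to your conclusion (and check it is absorbed downstream, which it is) or argue separately why it can be hidden in $\varepsilon\,n_+/\ell^2$ (it cannot, being proportional to $n^2$ rather than to $n_+$). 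A crude operator-norm bound on the defect, by contrast, produces $(R/\ell)^{1/2}$ instead of $R/\ell$ and would exceed the LHY precision, so the weighted Cauchy--Schwarz is genuinely needed here.
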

Choosing a small number for $\varepsilon$, for example $\varepsilon = \frac{1}{100}$, is enough to absorb the last two error terms in the spectral gap $G$.

\begin{proof}
The estimates are the same as in \cite[Lemma 2.4]{freeEnCPHM} treating each potential term separately for $\xi \in \{A,B\}$, and we refer to that paper for the proof, weighting the Cauchy-Schwarz by $\varepsilon K_H^{-1}$ instead of $\varepsilon K_{\ell}^{-1}$.
\end{proof}

In order to deal with the Neumann boundary conditions, we introduce the orthonormal bases $\{u_k\}_{k \in \Lambda^*}$ and $\{v_h\}_{h \in \Lambda^*}$ for the Neumann Laplacians $\sum_{j=1}^{N_A}-\Delta_{x_j}$ and $\sum_{j=1}^{N_B} -\Delta_{y_j}$ respectively, defined as
\begin{equation}\label{neu:basis}
u_k(x) = \frac{1}{\sqrt{|\Lambda|}}\prod_{j=1}^3 \sigma_{k_j}\cos(k_j x_j), \qquad v_h(y) = \frac{1}{\sqrt{|\Lambda|}}\prod_{j=1}^3  \sigma_{k_j}\cos(h_j y_j), 
\end{equation}
with
\begin{equation}\label{sigmajdef}
 \sigma_{k_j} = \begin{cases} 1,  &\text{ if } k_j = 0, \\
\sqrt{2}, &\text{ if } k_j \neq 0. \end{cases}
\end{equation}
The Fourier coefficients of the potentials $v_A,v_B,v_{AB}$ are not diagonal in these bases, therefore we need to replace them with their symmetrized versions and then estimate the errors made.
We recall and adapt the symmetrization result from \cite[Section 2.5]{freeEnCPHM} to the case of two types of bosons. The proof is totally analogous treating each of our potentials separately.
For a function $f \in L^1(\Lambda)$, we define its symmetrization by
\begin{equation}
\tilde{f}(x,y) := \sum_{z \in \mathbb{Z}^3} f(p_z(x) - y),
\end{equation} 
where $p_z$ is the mirror transformation defined by 
\begin{equation}
(p_z(x))_i = (-1)^{z_i} \Big( x_i - \frac{\ell}{2}\Big) + \frac{\ell}{2} + \ell z_i, \qquad i = 1,2,3.
\end{equation}
If $f$ is radial, then the Fourier coefficients of $\tilde{f}$ are diagonal in the Neumann basis.

We denote by $\widetilde{\mathcal{Q}}_j^{A},\widetilde{\mathcal{Q}}_j^{B}, \widetilde{\mathcal{Q}}_j^{AB}$ the symmetrized versions of $\mathcal{Q}_j^{A}, \mathcal{Q}_j^{B}, \mathcal{Q}_j^{AB}$ for each $j=0,2,3$, respectively, where, in their definition, the $g$'s have been replaced by $\widetilde{g}$'s, and  
\begin{align}
	\widetilde{H}_{n,m} := \sum_{j=1}^{n}  \mathcal{T}_{x_j}^A + \sum_{j=1}^{m}  \mathcal{T}_{y_j}^B+ \sum_{j=0,2} \widetilde{\mathcal{Q}}_j^A + \sum_{j=0,2}\widetilde{\mathcal{Q}}_j^B + \sum_{j=0,2} \widetilde{\mathcal{Q}}_j^{AB} + \sum_{\xi \in \{A,B,AB\}}\widetilde{\mathcal{Q}}_3^{\xi,L}.
\end{align}

The error made substituting $H_{n,m}$ with its symmetrized version $\widetilde{H}_{n,m}$ can be reabsorbed in a fraction of the spectral gap and by a small error for the required precision.

\begin{lemma}\label{lem:boundGapTilde}
Let $v_A, v_B, v_{AB}$ be non-increasing functions. For any $\varepsilon >0$, there exists a constant $C>0$ such that, if $\rho \bar{a}^{3} \leq C^{-1}$, $K_H \geq C K_{\ell}^4$ and $K_{\ell} K_{H}^3 \leq (\rho \bar{a}^3)^{-\frac{1}{2}}$, then 
\begin{equation}
H_{n,m}^{\text{gap}} \geq \widetilde{H}_{n,m}- C  \rho (n+m)\bar{a} \frac{R}{\ell} - \varepsilon G.
\end{equation}
\end{lemma}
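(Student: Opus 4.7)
The plan is to decompose the difference species-by-species and degree-by-degree,
\[
H_{n,m}^{\mathrm{gap}}-\widetilde H_{n,m}=\sum_{\xi\in\{A,B,AB\}}\Big((\mathcal{Q}_0^{\xi}-\widetilde{\mathcal{Q}}_0^{\xi})+(\mathcal{Q}_2^{\xi}-\widetilde{\mathcal{Q}}_2^{\xi})+(\mathcal{Q}_3^{\xi}-\widetilde{\mathcal{Q}}_3^{\xi,L})+\mathcal{Q}_4^{\xi}\Big),
\]
discard the positive quartic contributions $\mathcal{Q}_4^{\xi}\geq 0$, and bound each of the three remaining types of errors in turn. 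For each pair $\xi$ the argument is a species-wise transcription of \cite[Section 2.5]{freeEnCPHM}, so I will focus on the ingredients in which the two-species structure intervenes and on the role of the hypotheses $K_H\geq CK_\ell^4$ and $K_\ell K_H^3\leq(\rho\bar a^3)^{-1/2}$.

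For $j=0,2$, I compare $g$ (resp.\ $g\omega$) with its mirror-periodization $\widetilde g$ (resp.\ $\widetilde{g\omega}$) in the Neumann basis \eqref{neu:basis}. Because $g, g\omega$ are supported in the ball of radius $R$ and are non-increasing by \eqref{cond:nonincreasing}, a monotonicity/reflection argument gives pointwise control of the off-diagonal Fourier entries in the Neumann basis by a factor $R/\ell$ times the diagonal ones. Substituting this into $\mathcal{Q}_0^{\xi}-\widetilde{\mathcal{Q}}_0^{\xi}$, whose external structure is $P^{\#}P^{\natural}(\cdot)P^{\natural}P^{\#}$ with $P^{\bullet}=|u_0\rangle\langle u_0|$, produces a $c$-number estimate of size $Cn^2\bar a\,R/\ell^4$, and analogously $Cnm\bar a\,R/\ell^4$ and $Cm^2\bar a\,R/\ell^4$, which together yield the boundary term $C\rho(n+m)\bar a\,R/\ell$. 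For $\mathcal{Q}_2^{\xi}-\widetilde{\mathcal{Q}}_2^{\xi}$ the same Fourier estimate, combined with a Cauchy--Schwarz against the number operator, gives a small operator piece bounded by $\varepsilon n_+/\ell^{2}$ (absorbed into $\varepsilon G$ via the first term of \eqref{gap:G}) plus an extra boundary main term of the allowed size.

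For $j=3$ I first apply Lemma \ref{lem:approxQ3-Q3low} with parameter $\varepsilon/6$ to each $\xi\in\{A,B,AB\}$, yielding $\mathcal{Q}_3^{\xi}\geq Q_3^{\xi,\mathrm{low}}-\tfrac{\varepsilon}{6}\mathcal{Q}_4^{\xi}-\tfrac{\varepsilon}{6}(n_+/\ell^2+K_Hn_+^H/\ell^2)$; the quartic error is absorbed in $\mathcal{Q}_4^{\xi}$ (which is then dropped), while the gap errors fit into $\varepsilon G$ via the first two terms of \eqref{gap:G}. It remains to compare $Q_3^{\xi,\mathrm{low}}$ with its symmetrized version $\widetilde{\mathcal{Q}}_3^{\xi,L}$: here the constraint that one outgoing momentum lies in $\mathcal{P}_L$ means that a Cauchy--Schwarz against $\mathcal{Q}_4^{\xi}$ and $K_Hn_+^H/\ell^2$ costs a factor proportional to $K_H^3 (R/\ell)^2\ell^{-3}$, and the hypotheses $K_H\geq CK_\ell^4$ and $K_\ell K_H^3\leq(\rho\bar a^3)^{-1/2}$ are exactly the two conditions needed to bring this residue below $\varepsilon G$ plus $C\rho(n+m)\bar a\,R/\ell$.

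The main technical obstacle is the inter-species cubic term $\mathcal{Q}_3^{AB}$: unlike the intra-species case, it is not symmetric in its two particle labels, and both bases $\{u_k\}$ and $\{v_h\}$ appear simultaneously, so the Cauchy--Schwarz has to mix one $A$- and one $B$-sector excitation and close back against $\mathcal{Q}_4^{AB}=\sum_{i,j}\Pi_{ij}^{AB*}v_{AB}\Pi_{ij}^{AB}$. The non-increasing assumption \eqref{cond:nonincreasing} on $v_{AB}$ is the essential pointwise input that guarantees the mixed-basis off-diagonal coefficients of $\widetilde g_{AB}-g_{AB}$ are dominated by $(R/\ell)$ times their diagonal counterparts, which is precisely what makes the cross-species Cauchy--Schwarz close with the stated boundary error.
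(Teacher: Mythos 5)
Your proposal follows essentially the same route as the paper's proof: first apply Lemma \ref{lem:approxQ3-Q3low} to replace $\mathcal{Q}_3^{\xi}$ by $\mathcal{Q}_3^{\xi,\mathrm{low}}$ at the cost of a fraction of $\mathcal{Q}_4^{\xi}$ and of the gap, drop the remaining positive $\mathcal{Q}_4^{\xi}$, and then invoke the symmetrization estimates of \cite[Theorem 2.6]{freeEnCPHM} separately for each $\xi\in\{A,B,AB\}$, which is exactly where the non-increasing assumption and the parameter conditions enter. Your elaboration of the mirror-reflection Fourier estimates and of the mixed-basis $AB$ cubic term is a correct unpacking of what the cited theorem supplies, so the argument is sound and matches the paper's.
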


We remark that here it is fundamental the assumption of $v_A,v_B,v_{AB}$ descreasing to estimate the errors coming from the substitutions of the potential terms by their symmetrized versions.

\begin{proof}
We first use Lemma \ref{lem:approxQ3-Q3low} to substitute the $\mathcal{Q}_3^{\xi}$ terms with the $\mathcal{Q}_3^{\xi,\text{low}}$ sacrificing a small fraction of the $\mathcal{Q}_4^{\xi}$ terms and of the spectral gap. We then estimate the remaining part of the $\mathcal{Q}_4^{\xi}$'s by zero. We then use \cite[Theorem 2.6]{freeEnCPHM} to substitute the $\mathcal{Q}_j^{\xi}$'s by the $\widetilde{\mathcal{Q}}_j^{\xi}$'s separately for each $\xi \in \{A,B,AB\}$, and obtain the result summing back everything together.
\end{proof}

We continue our analysis in momentum space considering the second quantization of the Hamiltonian. We use the construction introduced in the first section starting from formula \eqref{eq:Fockspacesmix} applied with our choice of the Neumann basis \eqref{neu:basis}. Let us introduce the creation and annihilation operators $a_k, a^*_k, b_k, b^*_k$ of bosons with momentum $k \in \Lambda^* $ of type $A$ and $B$, respectively, as in \eqref{eq:creationA}, \eqref{eq:creationB}. They satisfy the canonical commutation relations (CCR) \eqref{eq:CCR} for $h,k \in \frac{\pi}{\ell}\mathbb{N}_0^3$.
Note that for zero momentum, $a_0^\dagger$ creates the function $u_{0}=1$ and $b_0^{\dagger}$ creates the function $v_0=1$, corresponding to \emph{condensates} of species $A$ and $B$ in $\Lambda$, respectively. 

The operator $\widetilde{H}_{n,m}$ can be written, by abuse of notation, as the restriction on the $(n,m)-$boson space of a second quantized Hamiltonian $\mathcal{H}$ acting on the Fock space 
\begin{equation}
\mathscr{F} = \mathscr{F} _A \otimes \mathscr{F} _B =  \bigoplus_{N=0}^{\infty} \bigoplus_{\substack{n,m\geq 0\\ n+m = N}} L^2_s(\Lambda^n) \otimes  L^2_s(\Lambda^m).
 \end{equation}
 
We can extend the definition of the creation and annihilation operators to momenta $p \in \frac{\pi}{\ell}\mathbb{Z}^3$ as 
\begin{equation}
a_p^{\#} := a^{\#}(u_{(|p_1|,|p_2|, |p_3|)}), \qquad b_p^{\#} := b^{\#}(u_{(|p_1|,|p_2|, |p_3|)}),
\end{equation}
and also introduce the space $\Lambda^*_+ =\Lambda^*\setminus\{0\}$ and the set of generalized low momenta as 
\begin{equation}
\mathcal{P}_{L}^{\mathbb{Z}} := \Big\{p \in \frac{\pi}{\ell} \mathbb{Z}^3\,\Big|\, 0 < |p| < K_H \ell^{-1}\Big\}.
\end{equation}

In the proposition below, we write the explicit expression of the second quantized Hamiltonian $\mathcal{H}$.
\begin{proposition}\label{prop:secondquantHam}
We have the following identities on the $(n,m)-$sector of the Fock space $\mathscr{F}$:
\begin{equation}
\mathcal{H}|_{L^2_s(\Lambda^n)\otimes L^2_s(\Lambda^m)} = \widetilde{H}_{n,m}, \qquad \mathcal{G}|_{L^2_s(\Lambda^n)\otimes L^2_s(\Lambda^m)} = G,
\end{equation}
where $\mathcal{G}$ has the same definition \eqref{gap:G} as $G$, with the proper extension of the number operators, and
\begin{equation}
\mathcal{H} = \mathcal{H}_A + \mathcal{H}_B + \mathcal{H}_{AB},
\end{equation}
with
\begin{align*}
\mathcal{H}_A &:= \mathcal{Z}_0^A + \mathcal{Z}_2^A  +\mathcal{Z}_3^{A,L},\\
\mathcal{H}_B &:= \mathcal{Z}_0^B + \mathcal{Z}_2^B  +\mathcal{Z}_3^{B,L},\\
\mathcal{H}_{AB} &:= \mathcal{Z}_0^{AB} + \mathcal{Z}_2^{AB}  +\mathcal{Z}_3^{AB,L}.   
\end{align*}
The $A$ part reads
\begin{align*}
\mathcal{Z}_0^A &:= \mathcal{Z}_{0g}^A + \mathcal{Z}_{o\omega}^A, \\
\mathcal{Z}_{0g}^A &:=\frac{\widehat{g}_A(0)}{2|\Lambda|} (n(n-1) -n^A_+ (n^A_+-1)), \qquad \mathcal{Z}_{0\omega}^A :=\frac{\widehat{g\omega}_A(0)}{2|\Lambda|} a^*_0a^*_0 a_0a_0 , \\
\mathcal{Z}_2^A &:= \sum_{p \in {\Lambda}^*_+} \Big( \tau(p) a_p^* a_p + \frac{\widehat{g}_A(p)}{|\Lambda |} a_0^* a_p^* a_p a_0 + \frac{ \widehat{g}_A(p) }{2 |\Lambda |}\big( a_0^* a_0^* a_p a_p + h.c. \big) \Big)\\
& \quad + \frac{1}{|\Lambda|} \sum_{p\in {\Lambda}^*_+}(\widehat{g \omega}_A(0)+\widehat{g \omega}_A(p))a_0^* a_p^*a_p a_0 ,\\
\mathcal{Z}_3^{A,L} &:=  \frac{1}{|\Lambda|} \sum_{k \in {\Lambda}^*_+, p \in \mathcal P_L^{\mathbb Z} }\sigma(q,k) \widehat{g}_A(k) \big( a_0^* a_p^* a_{p-k} a_k + h.c. \big) ,
\end{align*}
where 
\begin{equation}\label{eq:tau}
\tau(p) = |p|^2 - \frac{\pi}{2\ell^2}\one_{\lbrace p \neq 0 \rbrace} -\frac{K_H}{\ell^2}\one_{\{p \in \mathcal{P}_H\}}.
\end{equation}
is the symbol of the kinetic energy $T$ and $c(q,s)$ are the normalizing factors given by 
\begin{equation}
\sigma(q,s) := \prod_{i=1}^3 \frac{\sigma_{q_i-s_i}}{\sigma_{q_i} \sigma_{s_i}},
\end{equation}
which is equal to $\frac{1}{8}$ if all the momenta in the product are different from zero.
The $B$ part is totally analogous substituting the $a$'s with the $b$'s and $g_A, \omega_A$ with $g_B, \omega_B$.
For the $AB$ part we have
\begin{align*}
\mathcal{Z}_0^{AB} &:= \mathcal{Z}_{0g}^{AB} + \mathcal{Z}_{0\omega}^{AB},\\
\mathcal{Z}_{0g}^{AB} &:= \frac{\widehat{g}_{AB}(0)}{|\Lambda|} (nm -n_+^A n_+^B), \qquad \mathcal{Z}_{0\omega}^{AB}:=\frac{\widehat{g\omega}_{AB}(0)}{|\Lambda |} a_0^*a_0 b^*_0 b_0, \\
\mathcal{Z}_2^{AB} &:= \frac{1}{|\Lambda|}\sum_{p \in {\Lambda}^*_+} (\widehat{g}_{AB}(p) + \widehat{g\omega}_{AB}(p))\big( b^*_0 a_0 a^*_p b_p + a^*_0 b_0 b^*_p a_p \big)  \\
&\quad  +\frac{1}{|\Lambda|} \sum_{p\in {\Lambda}^*_+}  \widehat{g}_{AB}(p) \big( a_0^* b_0^* a_p b_p + h.c. \big) 	+ \frac{\widehat{g \omega}_{AB}(0)}{|\Lambda|} \sum_{p\in {\Lambda}^*_+} \big( a_0^*  b^*_p b_p a_0  +  b^*_0 a^*_p a_p b_0 \big),\\
\mathcal{Z}_3^{AB,L} &:=  \frac{1}{|\Lambda|} \sum_{	k \in {\Lambda}^*_+,  p \in \mathcal{P}_L^{\mathbb Z}} \sigma(p,k) \widehat{g}_{AB}(k) \big( a_0^* b^*_{p} b_{p-k} a_k   + b^*_0 a^*_p a_{p-k} b_k + h.c. \big). 
\end{align*}
\end{proposition}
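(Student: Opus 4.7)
The plan is a direct computation: we express each piece of $\widetilde H_{n,m}$ in the Neumann basis \eqref{neu:basis} and apply the second quantization rules \eqref{secondquant:1body}--\eqref{secondquant:2body} for two species. The identification $\mathcal{G}|_{L^2_s(\Lambda^n)\otimes L^2_s(\Lambda^m)}=G$ is immediate from the definitions \eqref{eq:defn+A}, \eqref{eq:defn+B} and the extensions of $n_+^A, n_+^B, n_+^H$ to number operators on $\mathscr{F}$, so we focus on $\mathcal{H}$.

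First, I would handle the kinetic part. Since $\{u_k\}$ diagonalizes the Neumann Laplacian with eigenvalue $|k|^2$, and $Q^A = \sum_{k\in\Lambda^*_+}|u_k\rangle\langle u_k|$, $Q^{A,H}=\sum_{k\in\mathcal P_H}|u_k\rangle\langle u_k|$, quantizing $\mathcal T^A_{x_j}=-\Delta_{x_j}-\tfrac{\pi}{2\ell^2}Q^A_{x_j}-\tfrac{K_H}{\ell^2}Q^{A,H}_{x_j}$ by \eqref{secondquant:1body} yields $\sum_{p\in\Lambda^*}\tau(p)\,a_p^*a_p$ with $\tau$ as in \eqref{eq:tau}; the analogous expression holds for the $B$ part.

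Next, the two-body terms. The crucial input from the symmetrization of Section \ref{sec:sym} is that for a radial $f\in L^1(\Lambda)$ the symmetrized kernel $\tilde f(x,y)$ is diagonal in the Neumann basis, i.e.
\begin{equation*}
\langle u_p\otimes u_q,\tilde f(\cdot-\cdot)\,u_r\otimes u_s\rangle = \frac{\sigma(p,q)}{|\Lambda|}\,\widehat f(p-r)\,\delta_{p+q,r+s}\,\delta_{\{|p|,|q|\}=\{|r|,|s|\}}
\end{equation*}
up to sign choices of coordinates, with $\sigma(p,q)$ arising from expanding each $\cos(k_j x_j)$ as $\tfrac12(e^{ik_jx_j}+e^{-ik_jx_j})$ and accounting for the normalizers $\sigma_{k_j}$ in \eqref{sigmajdef}. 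Applying this formula to the definitions \eqref{eq:SF_DefQ0A}--\eqref{eq:SF_DefQ3A} (with $g_A\to\tilde g_A$) and their $B$-counterparts: the term $\widetilde{\mathcal Q}_0^A$ contributes $\mathcal Z_{0g}^A+\mathcal Z_{0\omega}^A$ after using $P^A=|u_0\rangle\langle u_0|$ and the identity $\sum_{i\neq j}P^A_{x_i}P^A_{x_j}=n_0^A(n_0^A-1)=a_0^*a_0^*a_0a_0$ together with $n_0^A(n_0^A-1)=n(n-1)-n_+^A(n_+^A-1)-2n_+^A a_0^*a_0$ on the $n$-sector (the cross term cancels against the diagonal $p=0$ contribution we have excluded). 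The term $\widetilde{\mathcal Q}_2^A$ produces, via the three kinds of $P/Q$ placements in \eqref{eq:SF_DefQ2A}, the three lines of $\mathcal Z_2^A$: the diagonal $p=q\neq 0$ piece (from $P_iQ_jP_jQ_i$) gives $a_0^*a_p^*a_pa_0$ with coefficient $\widehat g_A(p)/|\Lambda|$, the $p=-q$ piece (from $P_iP_jQ_iQ_j$) gives the off-diagonal $a_0^*a_0^*a_pa_p+h.c.$ with coefficient $\widehat g_A(p)/(2|\Lambda|)$, and the remaining mixed $g\omega$ terms collect into $\widehat{g\omega}_A(0)+\widehat{g\omega}_A(p)$. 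The cubic piece $\widetilde{\mathcal Q}_3^{A,\mathrm{low}}$ quantizes directly to $\mathcal Z_3^{A,L}$, where the restriction $p\in\mathcal P_L^{\mathbb Z}$ records the low-momentum projector introduced in Lemma \ref{lem:approxQ3-Q3low} and the $\sigma(q,k)$ factors come from the Neumann matrix element above. The $B$ sector is identical.

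Finally, for the cross terms $\widetilde{\mathcal Q}_j^{AB}$, exactly the same computation applies, the only differences being that $P_iQ_j$ now pairs an $A$- with a $B$-operator so one obtains products such as $a_0^*b_0 b_p^*a_p$, that the combinatorial factor $\tfrac12$ is absent because the $i$- and $j$-sums run independently from $1$ to $n$ and $1$ to $m$, and that $n_0^A n_0^B=nm-n_+^A n_+^B$ modulo terms already absorbed. This yields $\mathcal Z_0^{AB},\mathcal Z_2^{AB},\mathcal Z_3^{AB,L}$ exactly as stated. The main bookkeeping obstacle is matching the combinatorial factors from the four-fold $(P+Q)$-expansion and the $\sigma(q,k)$-weights from the cosine products with the coefficients written in the proposition; once those are tracked carefully the identity follows term by term.
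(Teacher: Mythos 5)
Your proposal is correct and follows essentially the same route as the paper's proof: direct second quantization in the Neumann basis using the diagonality of the symmetrized kernels, with the $\widehat g(0)$ cross terms from $\widetilde{\mathcal Q}_2$ recombined with $a_0^*a_0^*a_0a_0$ (resp.\ $a_0^*a_0b_0^*b_0$) to produce $\mathcal Z_{0g}^{A}$ and $\mathcal Z_{0g}^{AB}$. The only cosmetic difference is that the paper outsources the cosine-product combinatorics and the $\sigma(q,k)$ factors to the one-species computation in the cited reference rather than writing the two-body matrix element explicitly.
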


\begin{proof}
We follow the same strategy as \cite[Lemma 2.8]{freeEnCPHM} using the quantization rules \eqref{secondquant:1body}, \eqref{secondquant:2body}  for the 1- and 2-body operators introduced in the first section of the present paper.  
We also exploit the relation, for a radial integrable function $f:\mathbb{R}^3 \rightarrow \mathbb{R}$ with $\mathrm{supp} f \subseteq B(0,R)$, for $R \leq \ell/2$ and for $p,q \in \frac{\pi}{\ell}\mathbb{N}^3_0$,
\begin{equation}
\int_{\Lambda^2} \mathrm{d}x \, \mathrm{d} y \, u_p(x) \tilde{f}(x,y) u_q(y) = \delta_{p,q} \widehat{f}(p),
\end{equation}
where we recall that $\tilde{f}$ is the symmetrization of $f$. The same applies substituting one of both of the $u$'s by $v$'s, being the same functions. This gives, for example, the second quantization of the term $\widetilde{\mathcal{Q}}_0^A$ to be
\begin{equation}\label{eq:aaaa2ndquantA}
\frac{1}{2|\Lambda|}(\widehat{g}_A(0) + \widehat{g\omega}_A(0)) a^*_0a^*_0 a_0a_0,
\end{equation}
while for the $\widetilde{\mathcal{Q}}_2^A$ the second quantization reads
\begin{equation}\label{eq:Z22ndquantA}
\mathcal{Z}_2^A + \frac{\widehat{g}_A(0)}{|\Lambda|} \sum_{p \in \Lambda^*} a^*_0 a^*_p a_p a_0 = \mathcal{Z}_2^A + \frac{\widehat{g}_A(0)}{2|\Lambda|}(2n^A_+(n-n^A_+)) .
\end{equation}
Now, using that $a^*_0 a^*_0 a_0 a_0 = n(n-1) + (n^A_+)^2-2n n^A_+ +n^A_+$, and adding the two terms
\begin{equation}
\eqref{eq:aaaa2ndquantA} + \eqref{eq:Z22ndquantA}  =\mathcal{Z}_0^A + \mathcal{Z}_2^A.
\end{equation}
An analogous calculation gives the $\mathcal{Z}_0^B + \mathcal{Z}_2^B$.

For the quantization of the $\widetilde{\mathcal{Q}}_0^{AB}$ we get 
\begin{equation}\label{eq:abbBa2ndquantAB}
\frac{1}{|\Lambda |} \big(\widehat{g}_{AB}(0) +\widehat{g\omega}_{AB}(0)\big)  a_0^*a_0 b^*_0 b_0,
\end{equation}
while for the quantization of the $\widetilde{Q}_2^{AB}$ term is
\begin{equation}\label{eq:Z22ndquantAB}
\mathcal{Z}^{AB}_2 + \frac{\widehat{g}_{AB}(0)}{|\Lambda|} \sum_{p \in \Lambda^*} (a^*_0 b^*_p b_p a_0 + b^*_0 a^*_p a_p b_0) = \mathcal{Z}^{AB}_2 + \frac{\widehat{g}_{AB}(0)}{|\Lambda|} (nn_+^B + m n_+^A - 2n_+^A n_+^B).
\end{equation}
We observe that $a_0^* a_0 b^*_0 b_0 = nm + n_+^A n_+^b -n_+^Am -n_+^B n$, therefore 
\begin{equation}
\eqref{eq:abbBa2ndquantAB} + \eqref{eq:Z22ndquantAB} = \mathcal{Z}_0^{AB} + \mathcal{Z}_2^{AB}.
\end{equation}

The expression of the $\mathcal{Z}_3$ terms can be obtained following the lines of the argument for 1 type of boson in \cite[Lemma 2.8]{freeEnCPHM}.
Collecting all the previous equations we obtain the result.
\end{proof}

\begin{remark}
We observe, recalling that $K_{\ell} = (1000C)^{-1}(\rho \bar{a}^3)^{-2\eta}$, that we can add to $\mathcal{H}$ the following terms,
\begin{align*}
\mathcal{G}_{\text{conv}} &:= \frac{8\pi \bar{a}}{|\Lambda|}(\rho \bar{a}^3)^{1/4} ((n_0^A)^2-n^2 + (n_0^B)^2 -m^2),\\
\mathcal{G}_{\eta} &:=  \frac{64\pi \bar{a}}{|\Lambda|} \Big((n_0^A-n +n_0^B-m )n_+  + K_{\ell}^{-1}|\Lambda|^{-1/2} \big((n_0^A+n_0^B\big)^{5/2} - (n+m)^{5/2} \big)\Big)\one_{\eta = 0}, 
\end{align*}
for free for a lower bound because they come from adding  negative terms. Indeed, when restricted to $L^2_s(\Lambda^n)\otimes L^2_s(\Lambda^m)$, we have the bounds $n_0^A \leq n, n_0^B \leq m$.
The presence of $\mathcal{G}_{\text{conv}}$ is useful, a posteriori, to guarantee the convexity of the functional $F$ defined in \eqref{eq:Functionalconvex} which gives that $(|z_A|^2, |z_B|^2) = (n,m)$ is the point of minimum for $\mathcal{L}_{\mu_A,\mu_B}$ thanks to Lemma \ref{lem:Fconvexfunctional} in Appendix \ref{sec:Fconvex}.
The term $\mathcal{G}_{\eta}$, on the other hand, helps to deal with the error $E_{\omega}$ coming from the analysis of the $\mathcal{Q}_3$ term in \eqref{def:Eomega} for the case $\eta = 0$. Indeed, without this term, $E_{\omega} = \mathcal{O}(K_z \delta_{AB} \bar{a}^{-1} n_+\ell^{-2})$ would not be reabsorbed in the spectral gap because we need $K_z \gg 1$ but we allow $\delta_{AB}$ not to be small in this case. Also, it helps to bound the artificial error obtained in the approximation of the series by the Bogoliubov integral in Lemma \ref{lem:calcBogInt}.

 We have therefore the following corollary of Proposition \ref{prop:secondquantHam}.
\end{remark}

\begin{corollary}\label{cor:secondquant}
Under the same assumptions of Proposition \ref{prop:secondquantHam}, we have the following bound
\begin{equation}
\widetilde{H}_{n,m}+ G \geq \big(\mathcal{H}+ \mathcal{G}+ \mathcal{G}_{\text{conv}}+\mathcal{G}_{\eta}\big)|_{L^2_s(\Lambda^n)\otimes L^2_s(\Lambda^m)}. 
\end{equation}
\end{corollary}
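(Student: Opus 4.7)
The proof is essentially immediate once one unpacks what $\mathcal{G}_{\text{conv}}$ and $\mathcal{G}_{\eta}$ contribute on the relevant sector. My plan would be to reduce everything to the identities in Proposition \ref{prop:secondquantHam} together with the elementary sign observations already hinted at in the remark.

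First I would invoke Proposition \ref{prop:secondquantHam} to get the exact identities $\mathcal{H}|_{L^2_s(\Lambda^n)\otimes L^2_s(\Lambda^m)} = \widetilde{H}_{n,m}$ and $\mathcal{G}|_{L^2_s(\Lambda^n)\otimes L^2_s(\Lambda^m)} = G$. So showing the claimed inequality reduces to proving that $(\mathcal{G}_{\text{conv}}+\mathcal{G}_{\eta})|_{L^2_s(\Lambda^n)\otimes L^2_s(\Lambda^m)}\leq 0$ as a quadratic form.

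Next, I would use that on the $(n,m)$-sector the particle number operators satisfy $n_0^A+n_+^A = n$ and $n_0^B+n_+^B = m$, so in particular $0\leq n_0^A\leq n$ and $0\leq n_0^B\leq m$. From this and the definition
\[
\mathcal{G}_{\text{conv}} = \frac{8\pi \bar{a}}{|\Lambda|}(\rho \bar{a}^3)^{1/4}\bigl((n_0^A)^2-n^2 + (n_0^B)^2 - m^2\bigr),
\]
each of the two differences $(n_0^A)^2-n^2$ and $(n_0^B)^2-m^2$ is a non-positive operator on the sector, hence $\mathcal{G}_{\text{conv}}|_{L^2_s(\Lambda^n)\otimes L^2_s(\Lambda^m)}\leq 0$. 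Similarly, in
\[
\mathcal{G}_{\eta} = \frac{64\pi \bar{a}}{|\Lambda|}\Bigl((n_0^A-n+n_0^B-m)n_+ + K_{\ell}^{-1}|\Lambda|^{-1/2}\bigl((n_0^A+n_0^B)^{5/2}-(n+m)^{5/2}\bigr)\Bigr)\one_{\eta=0},
\]
the factor $(n_0^A-n+n_0^B-m)$ is a non-positive operator, $n_+\geq 0$, and the operators $n_0^A, n_0^B, n_+$ commute (they are all diagonal in the particle-number basis), so the first bracket is non-positive. The second bracket is non-positive because $n_0^A+n_0^B\leq n+m$ on the sector and $x\mapsto x^{5/2}$ is monotone. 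Thus $\mathcal{G}_{\eta}|_{L^2_s(\Lambda^n)\otimes L^2_s(\Lambda^m)}\leq 0$.

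Combining the identity from Proposition \ref{prop:secondquantHam} with the two non-positivity statements yields
\[
\widetilde{H}_{n,m}+G = \bigl(\mathcal{H}+\mathcal{G}\bigr)\big|_{L^2_s(\Lambda^n)\otimes L^2_s(\Lambda^m)} \geq \bigl(\mathcal{H}+\mathcal{G}+\mathcal{G}_{\text{conv}}+\mathcal{G}_{\eta}\bigr)\big|_{L^2_s(\Lambda^n)\otimes L^2_s(\Lambda^m)},
\]
which is the claim. There is no genuine obstacle here; the corollary is really a bookkeeping statement that registers the two non-positive operators we are entitled to subtract for free on the fixed particle-number sector, in preparation for using them in the later convexity argument (Appendix \ref{app:Fconvexfunctional}) and the treatment of the cubic error $E_{\omega}$ in the $\eta=0$ case.
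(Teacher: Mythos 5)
Your proposal is correct and matches the paper's own argument: the paper justifies this corollary in the remark preceding it by exactly the same observation, namely that on the $(n,m)$-sector one has $n_0^A\leq n$ and $n_0^B\leq m$, so $\mathcal{G}_{\text{conv}}$ and $\mathcal{G}_{\eta}$ are non-positive and can be added for free to the identities of Proposition \ref{prop:secondquantHam}. No differences worth noting.
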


\section{c-number substitution and Bogoliubov Hamiltonian}\label{sec:cnumber}

We first recall here a useful result, immediate consequence of \cite[Lemma 7.2]{freeEnCPHM}, thanks to which we have that the $\widehat{g\omega}(0)-$terms can be approximated by the following sums.
\begin{lemma}\label{lem:reconstructiongomega} 
We introduce the three sums
\begin{equation}
G_{\omega}^{\#} := \frac{1}{8 |\Lambda|} \sum_{k \in \frac{\pi}{\ell}\mathbb{Z}^3\setminus\{0\}} \frac{\widehat{g}_{\#}^2(k)}{2\tau_k}, \qquad \#\in \{A,B,AB\}. 
\end{equation}
There exists a constant $C>0$ such that the following estimates hold
\begin{align}
\Big|G_{\omega}^{\#}- \widehat{g\omega}_{\#}(0) \Big| &\leq  C \frac{\bar{a}^2}{\ell},\\
\frac{1}{8 |\Lambda|} \sum_{k \in \mathcal{P}_L^{\mathbb{Z}}} \frac{\widehat{g}_{\#}^2(k)}{2k^2} &\leq K_H\frac{\bar{a}^2}{\ell},  \qquad \# \in \{A,B,AB\}. \label{ineq:Gomegalow}
\end{align}
\end{lemma}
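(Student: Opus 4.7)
The first inequality is obtained by applying the single-species result \cite[Lemma 7.2]{freeEnCPHM} separately to each of the three potentials $v_A, v_B, v_{AB}$. Indeed, for each $\# \in \{A,B,AB\}$ the quantity $G_\omega^{\#}$ and the scattering data $\widehat{g\omega}_{\#}(0)$ depend only on the corresponding potential, so the comparison between the discrete Bogoliubov-type sum and its continuum limit reduces to the one-species case. The key analytical ingredient is the scattering identity $-\Delta \omega_{\#} = \tfrac{1}{2} g_{\#}$ from \eqref{eq:ScatOmega}, which gives $\widehat{\omega}_{\#}(k) = \widehat{g}_{\#}(k)/(2|k|^2)$ on $\mathbb{R}^3$ and hence, by Parseval,
\begin{equation*}
\widehat{g\omega}_{\#}(0) = \int_{\mathbb{R}^3} g_{\#}\,\omega_{\#}\,\dd x = \frac{1}{(2\pi)^3}\int_{\mathbb{R}^3} \frac{\widehat{g}_{\#}^2(k)}{2|k|^2}\,\dd k.
\end{equation*}
Since the lattice $\tfrac{\pi}{\ell}\mathbb{Z}^3$ has spacing $\pi/\ell$ and $(2\pi)^3 = 8\pi^3$, the Riemann-sum approximation produces precisely the prefactor $\tfrac{1}{8|\Lambda|}$. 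The error is then controlled by two sources: (i) replacing $\tau_k$ by $|k|^2$, which costs $\pi/(2\ell^2)$ on low momenta and an extra $K_H/\ell^2$ on high momenta (both reabsorbed thanks to the $|k|^{-4}$ decay of the summand away from the origin, since $\widehat{g}_{\#}$ is uniformly bounded by $\|g_{\#}\|_1\leq C\bar{a}$ and smooth by compact support); (ii) the usual Riemann-sum versus integral discrepancy, which by standard arguments in \cite[Lemma 7.2]{freeEnCPHM} contributes at most a term of order $\bar{a}^2/\ell$. This yields the first bound with a constant $C$ depending only on $C_R, C_1$ of Assumption \ref{cond:potentials}.

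For the second inequality, there is no cancellation to track and the bound is a direct calculation. Using $g_{\#} \geq 0$ gives
\begin{equation*}
|\widehat{g}_{\#}(k)| \leq \widehat{g}_{\#}(0) = 8\pi a_{\#} \leq 8\pi\bar{a},
\end{equation*}
so that
\begin{equation*}
\frac{1}{8|\Lambda|} \sum_{k\in\mathcal{P}_L^{\mathbb{Z}}} \frac{\widehat{g}_{\#}^2(k)}{2|k|^2} \leq \frac{C\bar{a}^2}{|\Lambda|}\sum_{0<|k|\leq K_H/\ell} \frac{1}{|k|^2}.
\end{equation*}
Comparing the last sum with the integral $\int_{|k|\leq K_H/\ell}|k|^{-2}\dd k = 4\pi K_H/\ell$ via the lattice volume element $(\pi/\ell)^3 = \pi^3/|\Lambda|$ gives $\sum_{0<|k|\leq K_H/\ell}|k|^{-2} \leq C K_H \ell^2$, and multiplication by $\bar{a}^2/|\Lambda| = \bar{a}^2/\ell^3$ yields the claimed $C K_H \bar{a}^2/\ell$.

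The main (but mild) obstacle is the handling of the singularity at $k = 0$ in the first bound: $|k|^{-2}$ is integrable near the origin in three dimensions, but the Riemann-sum error at the boundary of the excluded point requires some care. This is exactly the technical content of \cite[Lemma 7.2]{freeEnCPHM}, and I would simply invoke it rather than reprove it. All other steps are elementary uniform estimates on $\widehat{g}_{\#}$ using $\|g_{\#}\|_1 \leq C\bar{a}$ and the compact support of $v_{\#}$.
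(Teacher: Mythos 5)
Your proposal is correct and follows essentially the same route as the paper: both inequalities rest on \cite[Lemma 7.2]{freeEnCPHM} applied to each potential separately, with the replacement of $\tau_k$ by $|k|^2$ controlled exactly as you indicate via the $|k|^{-4}$ decay of the summand (the paper bounds this discrepancy by $C\bar{a}^2\ell^{-5}\sum_k k^{-4}\leq C\bar{a}^2\ell^{-1}$). The only cosmetic difference is that you compute the second bound directly while the paper simply cites \cite[Lemma 7.2 (ii)]{freeEnCPHM}; your elementary calculation is valid.
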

\begin{proof}
The second inequality is exactly \cite[Lemma 7.2 (ii)]{freeEnCPHM}. By \cite[Lemma 7.2 (i)]{freeEnCPHM} we have 
\begin{equation}
\Big| \frac{1}{8 |\Lambda|} \sum_{k \in \frac{\pi}{\ell}\mathbb{Z}^3\setminus\{0\}} \frac{\widehat{g}_{\#}^2(k)}{2 k^2}- \widehat{g\omega}_{\#}(0) \Big| \leq  C \frac{\bar{a}^2}{\ell}.
\end{equation}
It remains to estimate \eqref{ineq:Gomegalow},
\begin{align*}
\Big| G_{\omega}^{\#}- \frac{1}{8 |\Lambda|} \sum_{k \in \frac{\pi}{\ell}\mathbb{Z}^3\setminus\{0\}} \frac{\widehat{g}_{\#}^2(k)}{2 k^2}\Big| &\leq \frac{C\bar{a}^2}{\ell^5} \sum_{k \Lambda^*}\frac{1}{k^4}\leq  \bar{a}^2 \ell^{-1},
\end{align*}
which concludes the proof.
\end{proof}

Bogoliubov \cite{Bog} suggests in his paper the following heuristic approximation
\begin{equation}
a^{\#}_0 \simeq \sqrt{n}, \qquad b^{\#}_0 \simeq \sqrt{m},
\end{equation}
that is, almost all the particles should condensate. In order to rigorously perform the aforementioned approximation, we use a technique called c-number substitution, content of the next lemma.

We now split $L^2(\Lambda) = \mathrm{Ran}P_A \oplus \mathrm{Ran}Q_A$ and same for the $B$, which gives the decomposition on the Fock spaces
\begin{equation}
\mathscr{F}_A\otimes \mathscr{F}_B = \mathscr{F}_s(\mathrm{Ran}P_A) \otimes \mathscr{F}_s(\mathrm{Ran}Q_A) \otimes \mathscr{F}_s(\mathrm{Ran}P_B) \otimes \mathscr{F}_s(\mathrm{Ran}Q_B).
\end{equation}
We introduce the following families of coherent states, for $z \in \mathbb{C}^2$  
\begin{equation}
|z_A\rangle  = e^{-\big( \frac{|z_A|^2}{2} + z_A a^*_0\big)}\Omega_A \in \mathscr{F}_s(\mathrm{Ran}P_A), \qquad |z_B\rangle  = e^{-\big( \frac{|z_B|^2}{2} + z_B b^*_0\big)}\Omega_B \in \mathscr{F}_s(\mathrm{Ran}P_B),
\end{equation}
which are eigenvectors of the annihilation operators: $a_0 |z_A\rangle = z_A |z_A\rangle,$ and $b_0 |z_B\rangle = z_B |z_B\rangle$, and for which the following decomposition holds
\begin{equation}\label{coherent:expansion}
\pi^2 = \int_{\mathbb{C}^2} \mathrm{d}z_A\, \mathrm{d} z_B \; |z_A \otimes z_B \rangle \,\langle z_A \otimes z_B|,
\end{equation}
where $\langle z_A \otimes z_B|$ is a partial trace on $\mathscr{F}_s(\mathrm{Ran}P_A) \otimes \mathscr{F}_S(\mathrm{Ran}P_B)$, giving that, for any $\Psi \in \mathscr{F}_A\otimes \mathscr{F}_B$, the state
\begin{equation}
\Phi(z) = \langle z_A\otimes z_B | \Psi\rangle \in \mathscr{F}_s(\mathrm{Ran}Q_A)\otimes \mathscr{F}_s(\mathrm{Ran}Q_B).
\end{equation}
Since expanding for $z \in \mathbb{C}^2$ implies that $|z_A|^2,|z_B|^2$, corresponding to the number of particles of type $A$ and $B$ in the condensate, can take any value, we introduce the chemical potentials $(\mu_A,\mu_B)$ to make sure to have a control on the number of bosons.

\begin{lemma}\label{lem:c*substit}
There exists a $C >0$ and $\varepsilon > 0$ such that for all $n+m \leq 100 C_a^2 \rho \ell^3$, and $C \leq \mathcal{M} < C^{-1} \ell \bar{a}^{-1}$, $K_H \geq C K_{\ell}^4$, $0 <10 \mu_A,10\mu_B\leq \ell^{-1}$ we have, for any $M \in \mathbb{N}, M >3$, indicating by $z=(z_A,z_B) \in \mathbb{C}^2$,
\begin{align}
\mathcal{H} + \frac{1}{2}\mathcal{G} + \mathcal{G}_{\text{conv}} + \mathcal{G}_{\eta}&\geq \frac{4\pi}{|\Lambda|} (n^2 a_A+ 2 n m a_{AB} + m^2 a_B) + \mu_A n + \mu_B m  \nonumber\\
&\quad + \frac{1}{\pi^2}\int_{\mathbb{C}^2} \mathrm{d}z\; \mathcal{L}_{\mu_A,\mu_B}(z) |z_A\otimes z_B\rangle \langle z_A\otimes z_B| - C K_{\ell}\rho \bar{a},\label{eq:cnumberformula}
\end{align}
where
\begin{align*}
\mathcal{L}_{\mu_A,\mu_B}(z) &:= \mathcal{L}_A (z)+ \mathcal{L}_B (z)+ \mathcal{L}_{AB}(z) + \mathcal{G}_{\text{gap}} (z)+ \mathcal{G}_{\mathrm{conv}}(z) + \mathcal{G}_{\eta}(z)-\mu_A |z_A|^2 -\mu_B |z_B|^2,\\
\mathcal{L}_A &:= Z_0^A + Z_2^A  +Z_3^{A,L},\\
\mathcal{L}_B &:= Z_0^B + Z_2^B  +Z_3^{B,L},\\
\mathcal{L}_{AB} &:= Z_0^{AB} + Z_2^{AB}  +Z_3^{AB,L}.   
\end{align*}
Denoting by $\rho_{z_A} := |z_A|^2 \ell^{-3}$, $\rho_{z_B} := |z_B|^2 \ell^{-3}$, the $A$ part reads
\begin{align*}
Z_0^A &:= \frac{\rho_{z_A}}{2} G_{\omega}^A , \\
Z_2^A &:= \sum_{p \in {\Lambda}^*_+} \Big( \tau(p) a_p^* a_p + \rho_{z_A} \widehat{g}_A(p)  a_p^* a_p  + \frac{ \widehat{g}_A(p) }{2 |\Lambda |}\big( \bar{z}_A^2 a_p a_p + h.c. \big) \Big)\\
& \quad + \rho_{z_A} \sum_{p\in {\Lambda}^*_+}(\widehat{g \omega}_A(0)+\widehat{g \omega}_A(p))a_p^*a_p  ,\\
Z_3^{A,L} &:=  \frac{1}{|\Lambda|} \sum_{k \in {\Lambda}^*_+, p \in \mathcal P_L^{\mathbb Z} }\sigma(p,k) \widehat{g}_A(k) \big( \bar{z}_A a_p^* a_{p-k} a_k + h.c. \big) ,
\end{align*}
and similar for $B$. The $AB$ part, on the other hand, is
\begin{align*}
Z_0^{AB} &:=  \rho_{z_A} \rho_{z_B} |\Lambda |G_{\omega}^{AB} , \\
Z_2^{AB} &:= \widehat{g\omega}_{AB}(0)\sum_{p \in {\Lambda}^*_+}  \big(\rho_{z_A} b_p^* b_p + \rho_{z_B} a_p^* a_p  \big)\\
&\quad + \frac{1}{|\Lambda|} \sum_{p\in {\Lambda}^*_+}  (\widehat{g}_{AB}(p)+ \widehat{g\omega}_{AB}(p)) \big( \bar{z}_A z_B a^*_p b_p + h.c. \big) \\
&\quad +  \frac{1}{|\Lambda|} \sum_{p\in {\Lambda}^*_+}  \widehat{g}_{AB}(p) \big( \bar{z}_A \bar{z}_B a_p b_p + h.c. \big),	\\
Z_3^{AB,L} &:=  \frac{1}{|\Lambda|} \sum_{k \in {\Lambda}^*_+, p \in \mathcal P_L^{\mathbb Z} }\sigma (p,k) \widehat{g}_{AB}(k) \big( \bar{z}_A  b^*_{p} b_{p-k} a_k   + \bar{z}_B a^*_p a_{p-k} b_k + h.c. \big), 
\end{align*}
and for the gap, $\eta$ and conv parts we have
\begin{align}
\mathcal{G}_{\text{gap}}(z) &:= \frac{\pi n_+}{8 \ell^4} + \frac{K_H n_+^H}{2 \ell^2}+ \frac{\pi n_+^L n_+}{4\mathcal{M}\ell^2} + \frac{K_H n_+^L n_+^H}{2 \mathcal{M}\ell^2} \nonumber \\
&\quad +  \frac{\rho \bar{a}}{4(n+m)^M} (|z|^{2M} + n_+ |z|^{2M-2} + n_+^2 |z|^{2M-4}),\\
\mathcal{G}_{\text{conv}} (z)&:= \frac{8 \pi \bar{a}}{|\Lambda|}(\rho \bar{a}^3)^{1/4}(|z_A|^4 + |z_B|^4-n^2-m^2), \label{eq:Gconv(z)}\\
\mathcal{G}_{\eta}(z) &:=  \frac{64\pi \bar{a}}{|\Lambda|} \Big((|z_A|^2+|z_B|^2)n_+ + K_{\ell}^{-1}|\Lambda|^{-1/2} \big(|z_A|^2+|z_B|^2\big)^{5/2}\Big) \one_{\eta=0},
\end{align}
with $|z|^2 = |z_A|^2 + |z_B|^2$ the norm in $\mathbb{C}^2$.
\end{lemma}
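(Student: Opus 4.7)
The plan is to apply the Lieb--Seiringer--Yngvason $c$-number substitution, adapted to the two-mode coherent state resolution \eqref{coherent:expansion}, in order to integrate out the condensate operators $a_0, a_0^*, b_0, b_0^*$ simultaneously. The starting observation is that, for any Wick-ordered monomial $M(a_0^*,a_0,b_0^*,b_0)$ acting on $\mathscr{F}_s(\mathrm{Ran}P_A)\otimes\mathscr{F}_s(\mathrm{Ran}P_B)$, the identity
\begin{equation*}
M = \frac{1}{\pi^2}\int_{\mathbb{C}^2} M(\bar z_A,z_A,\bar z_B,z_B)\,|z_A\otimes z_B\rangle\langle z_A\otimes z_B|\,dz
\end{equation*}
realizes the substitution exactly. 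Non-Wick-ordered pieces are first put in Wick form at the price of commutator remainders of relative size $O(|z|^{-2})$ in the physically relevant regime $|z|^2\sim n+m$, all of which will be absorbed in the final $CK_\ell\rho\bar a$ error.

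With this in hand, I would separate out the zero-momentum quartic terms $\mathcal Z_{0g}^{A}+\mathcal Z_{0g}^{B}+\mathcal Z_{0g}^{AB}$: using $n_0^{A}=n-n_+^{A}$, $n_0^{B}=m-n_+^{B}$ on $L^2_s(\Lambda^n)\otimes L^2_s(\Lambda^m)$, peel off the explicit main-order constant $\tfrac{4\pi}{|\Lambda|}(n^2a_A+2nma_{AB}+m^2a_B)$ before substitution and reorganize the residual $\widehat g(0)|z_A|^4/(2|\Lambda|)$ pieces together with the artificial $-\mu_A|z_A|^2-\mu_B|z_B|^2$. The matching positive terms $+\mu_A n+\mu_B m$ in \eqref{eq:cnumberformula} arise because $\tfrac1\pi\int(n-|z_A|^2)|z_A\rangle\langle z_A|dz_A$ acts as zero on the $n$-sector after $c$-number substitution (an instance of particle-number neutrality of the coherent-state representation). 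The zero-momentum $\omega$-parts contained in $\mathcal Z_{0\omega}^{A,B,AB}$ and in the quadratic terms $\mathcal Z_2^{\#}$ are recombined, via Lemma \ref{lem:reconstructiongomega}, into the explicit momentum sums $G_\omega^{\#}$ at an $O(\bar a^2/\ell)$ cost. The Wick-ordered quadratic and cubic operators $\mathcal Z_2^{\#}, \mathcal Z_3^{\#,L}$ substitute directly into $Z_2^{\#}, Z_3^{\#,L}$, and $\mathcal{G}_{\mathrm{conv}}, \mathcal{G}_{\eta}$ substitute into $\mathcal{G}_{\mathrm{conv}}(z), \mathcal{G}_{\eta}(z)$ (with the $5/2$-power handled by operator monotonicity of $x\mapsto x^{5/2}$ and the sub-leading correction absorbed into $\mathcal{G}_{\mathrm{conv}}(z)$).

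The main obstacle is controlling the $c$-number substitution globally over $z\in\mathbb{C}^2$, since the various monomial integrands grow polynomially in $|z|$ and the Berezin--Lieb substitution is only sharp in the bulk regime $|z|^2\lesssim n+m$. To handle the tail $|z|^2\gg n+m$ I would use the high-order regulator
\begin{equation*}
\frac{\rho\bar a}{4(n+m)^M}\bigl(|z|^{2M}+n_+|z|^{2M-2}+n_+^2|z|^{2M-4}\bigr)
\end{equation*}
built into $\mathcal{G}_{\mathrm{gap}}(z)$; on the operator side its analogue is controlled by the elementary bounds $(a_0^*a_0)^k\leq n^k\leq C(\rho\ell^3)^k$ together with the assumption $\mathcal{M}<C^{-1}\ell\bar a^{-1}$ that keeps the fraction $\tfrac12\mathcal G$ retained here much larger than every commutator remainder, provided $K_H\geq CK_\ell^4$ so that the spectral gap dominates the high-momentum corrections. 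Collecting all errors under the stated hypotheses $10\mu_A,10\mu_B\leq\ell^{-1}$ and $C\leq\mathcal M<C^{-1}\ell\bar a^{-1}$ yields \eqref{eq:cnumberformula}.
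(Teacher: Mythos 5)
Your overall route coincides with the paper's: extract the main order from $\mathcal{Z}_{0g}^{A}+\mathcal{Z}_{0g}^{B}+\mathcal{Z}_{0g}^{AB}$, reconstruct the $G_\omega^{\#}$ via Lemma \ref{lem:reconstructiongomega}, perform the coherent-state substitution, and use the inserted $\rho\bar a\,\mathcal{N}^M/(n+m)^M$ term (which equals $\rho\bar a$ on the $(n,m)$-sector) to tame the tail $|z|^2\gg n+m$. However, two of the mechanisms you invoke are stated incorrectly, and they are exactly the points where the lemma could fail.

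First, your substitution identity is backwards. The exact formula $M=\pi^{-2}\int M(\bar z_A,z_A,\bar z_B,z_B)\,|z\rangle\langle z|\,dz$ holds for \emph{anti}-normally ordered monomials, not Wick-ordered ones; a normal-ordered monomial is represented by its upper symbol, a polynomial with lower-order corrections, e.g. $a_0^*a_0=\pi^{-1}\int(|z_A|^2-1)\,|z_A\rangle\langle z_A|\,dz_A$ and $a_0^*a_0^*a_0a_0\mapsto |z_A|^4-4|z_A|^2+2$. These corrections come multiplied by the positive potential coefficients with definite signs, so for a \emph{lower} bound they cannot be discarded: in the bulk $|z|^2\lesssim K_z(n+m)$ they must be shown to be $O(K_\ell\rho\bar a)$, and in the tail they must be dominated by the $|z|^{2M}$ regulator (this is precisely why the paper proves $p_h(z)\geq \tfrac12|z|^{2h}$ for $|z|^2>2(n+m)$). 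Your blanket ``commutator remainders of relative size $O(|z|^{-2})$'' gets the bulk size right but neither identifies which ordering produces them nor checks their sign.

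Second, your explanation of the chemical-potential terms is false. One has $\pi^{-1}\int|z_A|^2\,|z_A\rangle\langle z_A|\,dz_A=a_0a_0^*=n_0^A+1$, which on the $n$-sector equals $n-n_+^A+1$, not $n$; there is no ``particle-number neutrality'' making $\pi^{-1}\int(n-|z_A|^2)\,|z_A\rangle\langle z_A|\,dz_A$ vanish there. The correct bookkeeping is to add and subtract $\mu_A\mathcal{N}_A+\mu_B\mathcal{N}_B$ (which is exactly $\mu_A n+\mu_B m$ on the sector), substitute $-\mu_A n_0^A\mapsto-\mu_A(|z_A|^2-1)$, and absorb the leftover $-\mu_A n_+^A-\mu_B n_+^B$ into the retained half of the spectral gap and the constant $\mu_A+\mu_B$ into the $CK_\ell\rho\bar a$ error; this is where the hypothesis $10\mu_A,10\mu_B\leq\ell^{-1}$ actually enters. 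As written, your argument does not legitimately produce the $+\mu_A n+\mu_B m$ term, so \eqref{eq:cnumberformula} is not established.
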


\begin{proof}
The proof is similar to the one in \cite[Theorem 2.9]{freeEnCPHM}. First, we observe that
\begin{align*}
\mathcal{Z}_{0g}^A + \mathcal{Z}_{0g}^B +\mathcal{Z}_{0g}^{AB} =	\frac{4\pi}{|\Lambda|}& \big(n^2 a_A+ 2 n m a_{AB} + m^2 a_B \\
& -n_+^A(n_+^A - 1)a_A
- n_+^B(n_+^B-1)a_B- n_+^A n_+^Ba_{AB}\big).  
\end{align*}
In this way, we obtained the first term on the r.h.s. of \eqref{eq:cnumberformula}, plus a term which, since $n_+ = n_+^L + n_+^H$, 
\begin{equation}
\frac{4\pi}{|\Lambda|}(n_+^A(n_+^A - 1) + n_+^B(n_+^B-1)+ n_+^A n_+^B ) \leq C K_{\ell}^2\frac{n^H_+}{\ell^2} + C \frac{n_+^L}{\mathcal{M}} \frac{n_+}{\ell^2}\frac{\mathcal{M}}{\ell} \leq \varepsilon \mathcal{G},
\end{equation}
where we used that $\ell = K_{\ell}(\rho \bar{a})^{-1/2}, K_{\ell}^4\leq C K_H, \mathcal{M} < C^{-1} \ell \bar{a}^{-1}$ to reabsorb the error in a small fraction of the spectral gap.
We also use Lemma \ref{lem:reconstructiongomega} to substitute the $\frac{n_0^2}{|\Lambda|}\widehat{g\omega}_{\#}(0)$ with the relative $\frac{n_0^2}{|\Lambda|}G_{\omega}^{\#}$. This gives errors which can be bounded by
\begin{equation}
\frac{(n+m)^2}{|\Lambda|}\frac{\bar{a}^2}{\ell} \leq C K_{\ell}^{-1}(\rho \bar{a})^{5/2}\ell^3,
\end{equation}
coherent with the error in the statement of the lemma. We also observe that, in the term $\mathcal{G}_{\eta}$, the part which reads, for $\eta=0$,
\begin{equation}
\frac{64 \pi \bar{a}}{|\Lambda|^{3/2}}K_{\ell}^{-1} \big(n+m\big)^{5/2}\leq K_{\ell}^{-1} (\rho \bar{a})^{5/2} |\Lambda|,
\end{equation}
which can, as well, be reabsorbed in the error term.
For the remaining part of $\mathcal{G}_{\eta}$, we observe that, using that $K_{\ell} = \frac{1}{1000C}$ when $\eta=0$,
\begin{equation}
- \frac{64 \pi\bar{a}}{|\Lambda|}(n+m)n_+ \one_{\eta=0} \geq  -C \rho \bar{a} n_+ \one_{\eta =0} \geq -\frac{1}{1000} \frac{n_+}{\ell^2},
\end{equation}
which can be reabsorbed in a small fraction of the spectral gap.

Now, we insert a term of the form $\rho\bar{a}\frac{\mathcal{N}^M}{N^M}$, which is $\rho \bar{a}$ in the $(n,m)$-sector of the Fock space, and therefore an error term in the original Hamiltonian.

By \eqref{coherent:expansion} and since $|z_A\rangle, |z_B\rangle$ are eigenvectors of $a_0,b_0$, respectively, we are allowed to replace the operators with the symbols
\begin{equation}
a_0 \mapsto z_A,  \quad 
 a_0^* \mapsto \bar{z}_A, \quad  b_0 \mapsto z_B,\quad b^*_0 \mapsto \bar{z}_B, 
\end{equation}
and the following polynomials, thanks also to the commutation rules,
\begin{align*}
&n_0^A =a^*_0 a_0 \mapsto |z_A|^2 - 1, \qquad &n_0^B = b^*_0 b_0 \mapsto |z_B|^2 - 1,\\
&a^*_0 a^*_0 a_0 a_0 \mapsto |z_A|^4 - 4 |z_A|^2 + 2, \qquad &b^*_0 b^*_0 b_0 b_0 \mapsto |z_B|^4 - 4 |z_B|^2 + 2,\\
&a^*_0 a_0 b^*_0 b_0 \mapsto |z_A|^2|z_B|^2 - |z_A|^2 - |z_B|^2 +1.
\end{align*}

We consider the following lower bound
\begin{equation}\label{eq:firstboundNmnm}
\rho \bar{a} \frac{\mathcal{N}^M}{{(n+m)}^M} \geq \rho \bar{a} \frac{n_0^M + n_0^{M-1} n_+ + n_0^{M-2}n_+^2}{{(n+m)}^M},
\end{equation}
Using the aforementioned substitution rules, the CCR and the fact that $a^{\#}_0$ and $b_0^{\#}$ commute, we have that, for $h \in \mathbb{N}$,
\begin{equation*}
n_0^h = (a_0^*a_0 + b^*_0b_0)^h \quad \mapsto \quad p_h(z) = |z|^{2h} + \text{smaller order terms},
\end{equation*}
recalling that $|z|^2 = |z_A|^2 + |z_B|^2$, where the smaller order terms have constant coefficients (bounded by $n+m$) and can be explicitly calculated. The c-number substitution of the r.h.s. of \eqref{eq:firstboundNmnm} is, therefore, 
\begin{equation}\label{cnumber:Ngap}
\frac{\rho \bar{a}}{(n+m)^M} (p_M(z) + p_{M-1}(z) n_+ + p_{M-2}(z) n_+^2).
\end{equation}
We observe that, for $|z|^2> 2 (n+m)$, 
\begin{align*}
|z|^{2h} \geq 2(n+m)\sum_{k=0}^{h-1} \binom{h-1}{k} |z_A|^{2k} |z_B|^{2(h-1-k)},
\end{align*}
which tells us that a fraction of $|z|^{2h}$ can bound all the smaller order terms in $p_h(z)$, giving the bound
\begin{equation}
p_h(z) \geq \frac{1}{2}|z|^{2h}.
\end{equation}
By inserting it into \eqref{eq:firstboundNmnm}, we get
\begin{equation}\label{eq:emergenceGapTerm_cnumber}
\rho \bar{a}\frac{\mathcal{N}^M}{{(n+m)}^M} \geq \frac{\rho \bar{a}}{2{(n+m)}^M} (|z|^{2M} + |z|^{2M-2}n_+ + |z|^{2M-4} n_+^2) .
\end{equation}
On the other hand, for the $z$'s such that $|z|\leq 2 (n+m)$, substituting \eqref{cnumber:Ngap} with the r.h.s. of \eqref{eq:emergenceGapTerm_cnumber} we get an error of order $C\rho \bar{a}$, which can be reabsorbed in the error term in \eqref{eq:cnumberformula}.

The rest of the terms are transformed following the substitution rules and creating errors which can be either reabsorbed in the gap term \eqref{eq:emergenceGapTerm_cnumber} or are of order $CK_{\ell}\rho \bar{a}$ thanks to the assumptions on the parameters (see \cite{freeEnCPHM} for further details).
\end{proof}

We now introduce the operator that we will call Bogoliubov Hamiltonian for two species
\begin{align*}
\mathcal{K} &:= \sum_{p \in {\Lambda}^*_+} \Big( \tau(p) a_p^* a_p + \rho_{z_A} \widehat{g}_A(p)  a_p^* a_p  + \frac{ \widehat{g}_A(p) }{2 |\Lambda |}\big( \bar{z}_A^2 a_p a_p + h.c. \big) \\
&\quad + \sum_{p \in {\Lambda}^*_+} \Big( \tau(p) b_p^* b_p + \rho_{z_B} \widehat{g}_B(p)  b_p^* b_p  + \frac{ \widehat{g}_B(p) }{2 |\Lambda |}\big( \bar{z}_B^2 b_p b_p + h.c. \big)\\
&\quad +	\sum_{p \in {\Lambda}^*_+} \widehat{g}_{AB}(p) \Big(\frac{\bar{z}_A z_B}{|\Lambda|} a^*_p b_p + \frac{\bar{z}_B z_A}{|\Lambda|} b^*_p a_p +\frac{1}{|\Lambda|} \big( \bar{z}_A \bar{z}_B a_p b_p + h.c. \big) \Big).
\end{align*}

We can rewrite this Hamiltonian using the following matrices and vectors
\begin{equation}
\mathcal{A}(p) = \tau(p) \one_2 + \mathcal{B}(p), \qquad \mathcal{B}(p) = \begin{pmatrix}
\rho_{z_A} \widehat{g}_A(p) & \sqrt{\rho_{z_A}\rho_{z_B}} \,\widehat{g}_{AB}(p)\\
\sqrt{\rho_{z_A}\rho_{z_B}}\, \widehat{g}_{AB}(p) & \rho_{z_B} \widehat{g}_B(p)
\end{pmatrix},
\end{equation}
\begin{equation}
c_p = \left(\begin{array}{c}
\frac{\bar{z}_A}{|z_A|}a_p\\
\frac{\bar{z}_B}{|z_B|}b_p
\end{array}\right),
\end{equation}
so that
\begin{equation}
\mathcal{K}= \sum_{p \in {\Lambda}^*_+}\Big(  c^*_p \cdot \mathcal{A}(p) c_p +\frac{1}{2} \big( c_p \cdot \mathcal{B}(p) c_p + c^*_p \cdot \mathcal{B}(p) c^*_p \big) \Big).
\end{equation}
We further introduce the following operator, sum of the residual quadratic terms
\begin{align}
Z_{2}^{ex}(z) =\sum_{p \in \Lambda^*}  c^*_p \cdot G_{\omega}(p) c_p
\end{align}
with 
\begin{equation}
G_{\omega} = \begin{pmatrix} 
 \rho_{z_A}(\widehat{g\omega}_A(p)+\widehat{g\omega}_A(0))  +\rho_{z_B}\widehat{g\omega}_{AB}(0) & \sqrt{\rho_{z_A}\rho_{z_B}} \widehat{g\omega}_{AB}(p) \\
  \sqrt{\rho_{z_A}\rho_{z_B}}\widehat{g\omega}_{AB}(p) &  \rho_{z_B}(\widehat{g\omega}_B(p) +\widehat{g\omega}_B(0) ) +\rho_{z_A}\widehat{g\omega}_{AB}(0)
\end{pmatrix},
\end{equation}
the sum of the zero terms
\begin{equation}\label{def:Zgomegaterms}
Z_0(z):= Z_0^A+ Z_0^B+Z_0^{AB} = \frac{|\Lambda|}{2}(\rho_{z_A}^2 G_{\omega}^A + 2\rho_{z_A}\rho_{z_B} G_{\omega}^{AB}+\rho_{z_B}^2 G_{\omega}^B )
\end{equation}
 and the sum of cubic terms:
\begin{equation}
Z_{3,L}(z) = Z_{3,L}^A(z) + Z_{3,L}^B(z) + Z_{3,L}^{AB}(z)=\frac{1}{|\Lambda|}\sum_{\substack{k \in \Lambda^*\\ p \in \mathcal{P}_L^{\mathbb{Z}}}} \sigma(p,k) w_{p,k} \cdot   F_k c_k+ h.c.,
\end{equation}
where 
\begin{equation}
 w_{p,k} =  \begin{pmatrix} 
 a^*_p a_{p-k}\\
 b^*_p b_{p-k}
\end{pmatrix}, \qquad F_k = \begin{pmatrix}
|z_A| \widehat{g}_A(k) & |z_B| \widehat{g}_{AB} (k)\\
|z_A| \widehat{g}_{AB} (k)& |z_B| \widehat{g}_B(k)
\end{pmatrix}.
\end{equation} 
In this way, we can rewrite the Hamiltonian 
\begin{align}
\mathcal{L}_{\mu_A,\mu_B}  &=\mathcal{R}(z)+\mathcal{G}_{\text{gap}}(z)+\mathcal{G}_{\text{conv}}(z)+ \mathcal{G}_{\eta}(z)-\mu_A |z_A|^2 -\mu_B |z_B|^2, \nonumber\\
\mathcal{R}(z) &:= \mathcal{K} +Z_0(z)+ Z^{ex}_2(z) + Z_{3,L}(z). \label{eq:newHamiltcnumber}
\end{align}

In the next sections we split the analysis in the two cases when 
$\rho_z := \rho_{z_A} + \rho_{z_B}$ is bigger or smaller than $K_{z}\rho$, where
\begin{equation}
K_z = (\rho \bar{a}^3)^{-\nu} \gg 1,
\end{equation}
for a certain $\nu > 0$. While in the first case the bounds are easier because there is an excess of energy in this ``less favourable" configuration, in the second case more precise estimates are needed.

\section{Case $\rho_z$ far from $\rho$}\label{sec:rhofar}

Let us consider the case $|z|^2 \geq K_{z} (n+m)$: we show how, in this regime, the spectral gap is large enough to absorb several terms which are, on the other hand, relevant in the other regime. 

\begin{proposition}\label{prop:z>N}
If $|z|^2 \geq K_{z} (n+m)$ and $M\geq 3 +\nu^{-1}(12\eta +1)$, $0 <\mu_A,\mu_B \leq C \ell^{-2}$, $K_H^3 \ll K_{\ell}^{-5}(\rho \bar{a}^3)^{\frac{1}{2} -\nu(M-2)} $, then
\begin{equation}
\mathcal{R}(z) +  \mathcal{G}_{\text{gap}}(z)-\mu_A |z_A|^2 - \mu_B |z_B|^2 \geq \frac{1}{2}\rho^2 \ell^3 (\rho \bar{a}^3)^{\frac{1}{2}+3 \eta -\nu M}.
\end{equation}
\end{proposition}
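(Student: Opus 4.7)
The driving observation is that, on the region $|z|^2\geq K_z(n+m)$, the top-order piece of $\mathcal G_{\text{gap}}(z)$,
\[
\frac{\rho\bar a}{4(n+m)^M}\,|z|^{2M}\;\geq\;\frac{\rho\bar a}{4}\,K_z^{M}\;=\;\frac{\rho\bar a}{4}(\rho\bar a^3)^{-\nu M},
\]
is already of the same order as the target bound $\tfrac12\rho^2\ell^3(\rho\bar a^3)^{\frac12+3\eta-\nu M}$, as one checks using $\ell=K_\ell(\rho\bar a)^{-1/2}$ and $K_\ell\simeq(\rho\bar a^3)^{-\eta}$. Hence it suffices to show that every remaining contribution to $\mathcal R(z)-\mu_A|z_A|^2-\mu_B|z_B|^2$ is dominated by a fixed fraction of this top gap term, using the lower-order positive pieces of $\mathcal G_{\text{gap}}(z)$ (namely $\pi n_+/(8\ell^4)$, $K_H n_+^H/(2\ell^2)$ and the mixed term $\rho\bar a(n+m)^{-M}n_+|z|^{2M-2}$) as additional reservoirs for sub-leading error terms.

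For the Bogoliubov quadratic operator I will diagonalize $\mathcal K$ via the matrix procedure of Appendix \ref{app:bogint} (the same one used in the proof of Lemma \ref{lem:upperboundKbogDiag}), writing $\mathcal K=\mathcal K^{\mathrm{diag}}-\tr(\beta^2_{\mathrm{diag}}\mathcal D_{\mathrm{diag}})$ with $\mathcal K^{\mathrm{diag}}\geq 0$. Using Lemma \ref{lem:reconstructiongomega} the c-number correction is matched, up to an error of size $\mathcal O(|\Lambda|(\rho_z\bar a)^{5/2})$, by the positive term $Z_0(z)$, producing the clean bound $\mathcal K+Z_0(z)\geq -C|\Lambda|(\rho_z\bar a)^{5/2}=-C|z|^{5}\bar a^{5/2}|\Lambda|^{-3/2}$; a direct arithmetic comparison then shows this is a small fraction of the top gap term exactly when $M\geq 3+\nu^{-1}(6\eta+1)$. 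The residual quadratic $Z_2^{ex}(z)$ is estimated by $|Z_2^{ex}(z)|\leq C\rho_z\bar a\,n_+$ (from the pointwise bound $|\widehat{g\omega}_{\#}(p)|\leq\widehat{g\omega}_{\#}(0)\leq C\bar a$) and is reabsorbed in the sub-leading gap piece $\rho\bar a(n+m)^{-M}n_+|z|^{2M-2}$ thanks to the inequality $\nu(M-2)\geq 3\eta+\tfrac12$ implied by the same bound on $M$.

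The cubic term $Z_{3,L}(z)$ is the delicate step. Splitting the sum $\sum_{k\in\Lambda^*,\,p\in\mathcal P_L^{\mathbb Z}}$ according to whether $k\in\mathcal P_L$ or $k\in\mathcal P_H$, and applying Cauchy--Schwarz with weights $\tau(p)$ in the low-$k$ branch (tuned to the kinetic part of $\mathcal K^{\mathrm{diag}}$) and $K_H/\ell^2$ in the high-$k$ branch (tuned to $K_H n_+^H/(2\ell^2)$), each contribution is bounded by half the corresponding positive term plus a quartic remainder weighted by $|z|^2\,\widehat g_\xi^{\,2}(k)\,|\Lambda|^{-2}\,\tau(p)^{-1}$. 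Summing these remainders by Lemma \ref{lem:reconstructiongomega} produces a bound of order $K_H^3 K_\ell^{5}(\rho\bar a^3)^{\nu(M-2)-1/2}$ times the top gap, so that the hypothesis $K_H^3\ll K_\ell^{-5}(\rho\bar a^3)^{\frac12-\nu(M-2)}$ is precisely what makes this a small fraction of it. Finally $\mu_A|z_A|^2+\mu_B|z_B|^2\leq C\ell^{-2}|z|^2$ is absorbed into a tiny fraction of the $|z|^{2M}$-gap by Young's inequality applied to $|z|^2$ versus $|z|^{2M}$, with a negligible residue. Collecting these estimates yields the claimed bound. The principal obstacle lies in the cubic Cauchy--Schwarz: the weights must be chosen so that the quartic remainders feed only into the spectral-gap pieces and not into $\mathcal K^{\mathrm{diag}}$ itself, and the high/low-$k$ splitting must be compatible with the restriction $p\in\mathcal P_L^{\mathbb Z}$; it is precisely at this step that the hypothesis on $K_H^3$ enters.
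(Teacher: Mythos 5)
Your overall architecture is the right one and matches the paper's: the $|z|^{2M}$ piece of $\mathcal{G}_{\text{gap}}$ supplies the claimed lower bound on the region $|z|^2\geq K_z(n+m)$ (note the target as printed is missing a factor $\bar a$; with $\rho^2\ell^3\bar a(\rho\bar a^3)^{1/2+3\eta-\nu M}$, which is how the bound is actually used in the proof of Theorem \ref{thm:lowerloc}, your order comparison is exact), and your treatment of $Z_2^{ex}$, of the chemical potentials, and of $\mathcal{K}+Z_0$ is sound. On the last point you take a genuinely different route from the paper: you invoke the full diagonalization of Appendix \ref{app:bogint} and the crude bound $|\mathcal S|\leq C|\Lambda|(\rho_z\bar a)^{5/2}$, whereas the paper deliberately avoids this and instead completes the square with the elementary substitution $d_p=c_p+p^{-2}\mathcal{B}_pc_{-p}^*$, bounding the resulting remainder $\sum_p p^{-2}c_p\cdot\mathcal B_p^*\mathcal B_pc_p^*$ by $C|z|^4|\Lambda|^{-1}\bar a(n_++1)$ and feeding it to the gap. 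Both work for this term, but the paper's choice is not cosmetic: it keeps the quadratic form in the original $c$-variables until after the cubic term has been dealt with.

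That is where your proof has a genuine gap. The cubic term $Z_{3,L}$ can only be controlled by Cauchy--Schwarz if the lone operator $c_k$ carrying the interaction momentum $k$ is weighted against the kinetic energy $\tau(k)c_k^*c_k$ \emph{at that momentum}: the quartic remainder then carries the weight $\widehat g^{\,2}(k)/\tau(k)$, whose sum over $k\in\Lambda^*$ is finite by Lemma \ref{lem:reconstructiongomega} (this is exactly the paper's step \eqref{eq:estimatewithKinlargez}, with $\varepsilon\sim K_H^{-3}$ to pay for the $p$-sum). Your proposed weights do not achieve this: weighting by $\tau(p)$ at the \emph{low} momentum $p$, or by the gap coefficient $K_H/\ell^2$ in the high-$k$ branch, leaves a remainder proportional to $\sum_k\widehat g^{\,2}(k)$ with no decaying factor in $k$, which diverges for a general $L^1$ potential (and your subsequent appeal to Lemma \ref{lem:reconstructiongomega} is inconsistent with these weights, since that lemma requires the $\tau(k)^{-1}$ factor). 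Moreover, having already diagonalized $\mathcal K$ and reduced $\mathcal K+Z_0$ to a c-number bound, you no longer have $\sum_k\tau(k)c_k^*c_k$ available: $\mathcal K^{\mathrm{diag}}=\sum_kd_k^*\cdot\mathcal D_kd_k$ does not dominate the kinetic energy in the $c$-variables, and converting $Z_{3,L}$ to the $d$-variables generates the additional $\mathcal T_\beta$-type terms that require the full machinery of Proposition \ref{prop:killQ3}. To repair the argument you should either postpone the diagonalization until after the cubic estimate (as the paper does), or explicitly reserve half of $\sum_k\tau(k)c_k^*c_k$ before completing the square, and in either case place the Cauchy--Schwarz weight on the hard momentum $k$.
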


\begin{proof}
We first observe that
\begin{equation}
\sum_{p \in \Lambda^*} c^*_p \cdot\big(G_{\omega}(p) + \mathcal{B}(p) \big) c_p \leq C \frac{|z|^2}{|\Lambda|}\bar{a} n_+\leq  C \frac{|z|^2}{(n+m)} \rho \bar{a} n_+.
\end{equation}
Since $|z|^2 \geq K_{z} (n+m)$, this last term can be absorbed by a fraction of the spectral gap $\mathcal{G}_{\text{gap}}$.
By using this, that $\mathcal{A}_p = \tau_p \one_2 + \mathcal{B}_p$ and dropping some non-negative terms in \eqref{eq:newHamiltcnumber} we are left with 
\begin{equation}
\mathcal{R}(z)+ \frac{1}{2}\mathcal{G}_{\text{gap}}(z)\geq  Z_0(z) + \sum_{p \in \Lambda^*} \Big(\tau_p c^*_p  c_p + \frac{1}{2}\big( c_p \cdot \mathcal{B}_p c_p + c^*_p\cdot \mathcal{B}_p c_p\big)\Big)  + Z_{3,L}(z).
\end{equation}
By a Cauchy-Schwarz inequality, for $\varepsilon>0$, we have 
\begin{align}
 Z_{3,L}(z) &= \frac{1}{|\Lambda|}\sum_{\substack{k \in \Lambda^*\\ p \in \mathcal{P}_L^{\mathbb{Z}}}} \sigma(p,k) w_{p,k} \cdot   F_k c_k+ h.c. \nonumber \\
 &\leq C \varepsilon |\mathcal{P}_L| \sum_{k \in \Lambda^*} k^2 c^*_k c_k + \frac{\varepsilon^{-1}}{|\Lambda|^2} \sum_{\substack{k \in \Lambda^*\\p \in \mathcal{P}_L^{\mathbb{Z}}}} \frac{1}{k^2} w_{p,k} \cdot F_k F_k^*  w_{p,k}^*.\label{eq:estimatewithKinlargez}
 \end{align}
 Let us analyse in more detail the second term on the r.h.s.:
 \begin{align*}
 &\sum_{\substack{k \in \Lambda^*\\p \in \mathcal{P}_L^{\mathbb{Z}}}} \frac{1}{k^2} w_{p,k} \cdot F_k F_k^*  w_{p,k}^*  = S_{\text{diag}} + S_{\text{off}},\\
  &S_{\text{diag}}= \sum_{\substack{k \in \Lambda^*\\p \in \mathcal{P}_L^{\mathbb{Z}}}} \frac{1}{k^2}\Big(\big(  |z_A|^2 \widehat{g}_A^2(k)  + |z_B|^2 \widehat{g}_{AB}^2(k) \big) a^*_p a_{p-k} a^*_{p-k} a_p \\
  &\qquad \qquad \qquad + \big(  |z_B|^2 \widehat{g}_B^2(k)  + |z_A|^2 \widehat{g}_{AB}^2(k) \big) b^*_p b_{p-k} b_{p-k}^* b_p\Big),\\
  &S_{\text{off}} = \sum_{\substack{k \in \Lambda^*\\p \in \mathcal{P}_L^{\mathbb{Z}}}} \frac{\widehat{g}_{AB}(k)}{k^2}  \big(|z_A|^2\widehat{g}_A(k) + |z_B|^2\widehat{g}_B (k)\big)\big(  b^*_p b_{p-k} a^*_{p-k}a_p + a^*_p a_{p-k} b_{p-k}^*b_p\big).
 \end{align*}
Starting from the off-diagonal term, we see that by a Cauchy-Schwarz inequality and using that $|k| \geq C\ell^{-1}$, we can bound
\begin{equation}\label{eq:intermediatestep1Sdiag}
S_{\text{off}} \leq C\ell^{2} |z|^2  \bar{a}^2 \sum_{\substack{k \in \Lambda^*\\p \in \mathcal{P}_L^{\mathbb{Z}}}} \big(b^*_p a^*_{p-k}a_{p-k} b_p + a^*_p b^*_{p-k} b_{p-k} a_p\big)\leq C \ell^2 |z|^2\bar{a}^2 n_+^2.
\end{equation} 
 For the diagonal part we use the CCR, again that $|k| \geq C\ell^{-1}$, the reconstruction of $\widehat{g\omega}(0)$'s by Lemma \ref{lem:reconstructiongomega}
and that $|\widehat{g\omega}_{\#}(0)|\leq C \bar{a}$, to write 
\begin{align}
S_{\text{diag}} &\leq \sum_{k \in \Lambda^*} \frac{|z|^2}{k^2}\big(\big(   \widehat{g}_A^2(k)  +  \widehat{g}_{AB}^2 (k)\big) n_+^A+ \big( \widehat{g}_B^2 (k) + \widehat{g}_{AB}^2 (k)\big) n_+^B\big)  + C|z|^2\ell^2 \bar{a}^2 n_+^2 \nonumber\\
&\leq  C |z|^2 \ell^3\bar{a}\big(  n_+ + \bar{a}\ell^{-1}  n_+^2\big).\label{eq:intermediatestep2Sdiag}
\end{align}
Inserting \eqref{eq:intermediatestep1Sdiag} and \eqref{eq:intermediatestep2Sdiag} back in \eqref{eq:estimatewithKinlargez} and choosing $\varepsilon = \frac{1}{2}C^{-1} |\mathcal{P}_L|^{-1}$, with $|\mathcal{P}_L| =K_H^3$, we get
\begin{equation}
Z_{3,L} \leq \frac{1}{2} \sum_{k \in \Lambda^*} \tau_k c^*_k c_k +  \frac{K_H^3}{|\Lambda|} |z|^2\bar{a} (n_+ + \bar{az}\ell^{-1} n_+^2).
\end{equation}
Using the assumptions on $M$ and since $K_H^3 \ll (\rho \bar{a}^3)^{\frac{1}{2} -\nu(M-2)}K_{\ell}^{-5}$, we can absorb the last term on the r.h.s. in a fraction of $\mathcal{G}_{\text{gap}}$. 
We now use that we can replace $\tau_p$ by $p^2$ by absorbing the negative spectral gap terms, and introducing a new creation and annihilation operator
\begin{equation}
d_p := c_p + \frac{1}{p^2}\mathcal{B}_p c^*_p,
\end{equation}
we, get by completing the square,
\begin{equation}
\frac{1}{2}\sum_{p \in \Lambda^*}\big(  p^2 c^*_p c_p + c_p \cdot \mathcal{B}_p c_p + c^*_p \cdot \mathcal{B}_p c_p^*\big) = \frac{1}{2}\sum_{p \in \Lambda^*} \big(p^2 d^*_p d_p -  \frac{1}{p^2} c_p \cdot \mathcal{B}^*_p \mathcal{B}_p c_p^*\big).
\end{equation}
Similarly to above we can use the CCR and the reconstruction of the $\widehat{g\omega}_{\#}(0)$ to obtain
\begin{equation}
\sum_{p \in \Lambda^*} \frac{1}{p^2} c_p \cdot\mathcal{B}^*_p  \mathcal{B}_p c_p^* \leq \frac{|z|^4}{|\Lambda|} \bar{a} (n_+ +1),
\end{equation}
and reabsorb the last term in a fraction of the spectral gap because $M > 3+\nu^{-1}(12\eta+1)$, while we bound by zero the positive term $d^*_p d_p \geq 0$. Finally, since $\mu_A,\mu_B \leq C\ell^{-2}$,
\begin{equation}
-\mu_A|z_A|^2-\mu_B |z_B|^2 \geq -\frac{C}{\ell^2(n+m)} \frac{|z|^4}{(n+m)},
\end{equation}
which can be reabsorbed in a small fraction of the spectral gap, provided that $M \geq 2+\nu^{-1}(2\eta + 1/2)$, which is satisfied by our assumptions.
\end{proof}

\section{Case $\rho_z$ close to $\rho$}\label{sec:rhoclose}

In the Appendix \ref{app:bogint}, Lemma \ref{lem:diagBog} lets us diagonalize the Bogoliubov Hamiltonian by introducing the new creation and annihilation operators
\begin{equation}
d_k = c_k + \beta_k \cdot c^*_k, \qquad d_k^* = c_k^* + \beta_k \cdot c_k,
\end{equation}
giving us
\begin{equation}\label{eq:KKdiag+S}
\mathcal{K} =  \mathcal{K}^{\text{diag}}+ \mathcal{S}, \qquad \mathcal{K}^{\text{diag}}:=\sum_{k \in \Lambda^*}d^*_k \cdot \mathcal{D}_k d_k \geq 0,
\end{equation}
where $\mathcal{D}_k = \mathcal{D}_k(\rho_{z_A},\rho_{z_B})$ and $\beta_k= \beta_k (\rho_{z_A},\rho_{z_B})$ are defined in \eqref{eq:Ddiagexpressions}, \eqref{eq:Bdiagexpressions},
\begin{align}
\mathcal{S} = \sum_{k \in\Lambda^*} \Big( \frac{1}{2} \Big(\sqrt{\tau_k^2 +2\lambda_+ (k)\tau_k} +\sqrt{\tau_k^2  +2\lambda_-(k) \tau_k}  \Big)- \tau_k - \frac{1}{2}(\lambda_+(k) + \lambda_-(k))\Big),
\end{align}
and $\lambda_{\pm} = \lambda_{\pm}(\rho_{z_A},\rho_{z_B})$ in \eqref{def:lambdapm} are the eigenvalues of $\mathcal{B}_k$.
The sum of this last term with the $Z_0$ defined in \eqref{def:Zgomegaterms} gives the Lee-Huang-Yang-type term correction. Indeed, by Lemma \ref{lem:calcBogInt} we have 
\begin{align}\label{eq:SZ0sum}
\mathcal{S}+ Z_0(z) &\geq (8\pi)^{5/2}|\Lambda|(\rho_{z_A}^2 a_A^2 + 2 \rho_{z_A} \rho_{z_B} a_{AB}^2 + \rho_{z_B}^2 a_B^2)^{5/4}  \frac{2\sqrt{2}}{15\pi^2} (\mu_+^{5/2} + \mu_-^{5/2}) - \mathcal{E},\\
|\mathcal{E}|&\leq  C (\rho \bar{a})^{5/2}\ell^3(R^2 K_z^{7/2} K_{\ell}^3 \rho \bar{a}  + K_{z}^{5/2}K_{\ell}^{-1})\one_{\eta \neq 0} + \frac{1}{2}\mathcal{G}_{\eta},
\end{align}
where $\mu_{\pm}= \mu_{\pm}(\rho_{z_A},\rho_{z_B})$ are defined in \eqref{appdef:mupm}.

As anticipated in the introduction, the correction to renormalize the Bogoliubov integral produces the $\mathcal{Q}_2^{\mathrm{ex}}$ terms. We show in this section how the contribution of the soft pairs from the $\mathcal{Q}_3^{\mathrm{ren}}$ terms, together with a small fraction of the positive, diagonal operator from the diagonalization of the Bogoliubov Hamiltonian, delete the errors inherited from the $\mathcal{Q}_2^{\mathrm{ex}}$ and so renormalize the Bogoliubov functional.
In the following lemma we extract the contribution of the soft pairs from the $Z_3^L$-terms.

\begin{lemma}\label{lem:softpairs}
For all $\varepsilon>0$ there exists a $C>0$ such that, if $\rho \bar{a}^3 \leq C^{-1}$, then for $|z|^2 \leq K_z (n+m)\leq C_R^2 K_z\rho \ell^3$ and $\mathcal{M} \leq  C \rho \ell^3 K_H^{-3}K_{\ell}^{-4} K_z^{-3}$, we have
\begin{equation}
Z_3^{\xi,L} -Z_3^{\xi,\text{soft}} \geq \varepsilon \mathcal{G}_{\text{gap}}, \qquad \xi \in \{A,B,AB\},
\end{equation}
where 
\begin{align*}
Z_3^{A,\text{soft}} &= \frac{1}{|\Lambda|} \sum_{k \in \mathcal{P}_H, p \in \mathcal P_L^{\mathbb Z} }\sigma(p,k) \widehat{g}_A(k) \big( \bar{z}_A a_p^* a_{p-k} a_k + h.c. \big),\\
Z_3^{B,\text{soft}} &= \frac{1}{|\Lambda|} \sum_{k \in \mathcal{P}_H, p \in \mathcal P_L^{\mathbb Z} }	\sigma(p,k) \widehat{g}_B(k) \big( \bar{z}_B b_p^* b_{p-k} b_k + h.c. \big),\\
Z_3^{AB,\text{soft}} &= \frac{1}{|\Lambda|} \sum_{k \in \mathcal{P}_H, p \in \mathcal P_L^{\mathbb Z} }\sigma(p,k) \widehat{g}_{AB}(k) \big( \bar{z}_A b^*_{p} b_{p-k} a_k   + \bar{z}_B a^*_p a_{p-k} b_k + h.c. \big).
\end{align*}
\end{lemma}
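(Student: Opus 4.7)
The plan is to estimate the difference $Z_3^{\xi,L} - Z_3^{\xi,\text{soft}}$, which is precisely the cubic contribution restricted to incoming momenta $k \in \mathcal{P}_L$ (the statement's $\geq$ should of course be read as an error bound $\geq -\varepsilon\mathcal{G}_{\text{gap}}$, in line with all similar bounds in the paper such as Proposition \ref{prop:Hamgapbound}). The key structural observation is that $|\mathcal{P}_L| \leq C K_H^3$ and $|\widehat{g}_\xi(k)| \leq C \bar{a}$ uniformly in $k$, so both the cardinality of the $k$-sum and the size of the Fourier coefficients are under tight control; this makes the restricted cubic term a genuine small perturbation rather than a sum over all momenta.

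For each $\xi \in \{A,B,AB\}$, I would apply a weighted Cauchy--Schwarz of the form $U^*V + V^*U \leq \beta U^*U + \beta^{-1} V^*V$ to each summand. For the $A$-case, a natural choice is $U = a_p$ and $V = \bar{z}_A a_{p-k} a_k$: the first piece then has the form $\beta\, \sigma(p,k)\widehat{g}_A(k)|\Lambda|^{-1} a_p^* a_p$, which summed over $(k,p) \in \mathcal{P}_L \times \mathcal{P}_L^{\mathbb{Z}}$ is bounded by $C\beta \bar{a} K_H^3 n_+^{A,L}/|\Lambda|$. The second piece, after normal-ordering $a_k^* a_{p-k}^* a_{p-k} a_k$ (which contributes only commutators of lower order since $k \neq 0$ throughout) and using $\sum_{p \in \mathcal{P}_L^{\mathbb{Z}}} n_{p-k} \leq \mathcal{N}_A$, is bounded by $C\beta^{-1} |z_A|^2 \bar{a}\cdot n \cdot n_+^{A,L}/|\Lambda|$. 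Invoking the hypotheses $|z|^2 \leq K_z(n+m) \leq C K_z \rho \ell^3$ and $n \leq C \rho \ell^3$, both quantities reduce to multiples of $n_+^{A,L}$ with polynomial factors in $K_H, K_z, K_\ell$; choosing $\beta$ to balance them and invoking the standing constraints $K_\ell K_H^3 \leq (\rho \bar{a}^3)^{-1/2}$ together with the stated upper bound $\mathcal{M} \leq C \rho \ell^3 K_H^{-3} K_\ell^{-4} K_z^{-3}$ absorbs them into a small fraction of $\mathcal{G}_{\text{gap}}$, specifically of its $\tfrac{\pi n_+}{8\ell^4}$ and $\tfrac{\pi n_+^L n_+}{4\mathcal{M}\ell^2}$ components.

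The $B$-case is identical by symmetry between the two species. For the $AB$-case, I would apply the same Cauchy--Schwarz separately to each of the cubic operators $\bar{z}_A b_p^* b_{p-k} a_k$ and $\bar{z}_B a_p^* a_{p-k} b_k$; since the $a$- and $b$-operators commute, no additional normal-ordering corrections arise between the species, and the same absorption into $\mathcal{G}_{\text{gap}}$ applies using $n_+^{A,L} + n_+^{B,L} = n_+^L$ and $n_+^A + n_+^B = n_+$, so that the final bound holds uniformly in the choice of $\xi$.

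The main obstacle is the delicate tuning of the balancing parameter $\beta$ so that both pieces land simultaneously inside $\varepsilon \mathcal{G}_{\text{gap}}$; this is precisely what forces the peculiar numerology in the upper bound on $\mathcal{M}$, and is the step whose quantitative form has to be checked most carefully. Conceptually, however, the argument mirrors its one-species analogue proved in \cite{freeEnCPHM}, and I expect the only genuine novelty to be the routine bookkeeping of the inter-species couplings in the $AB$ contribution.
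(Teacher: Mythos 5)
Your overall strategy coincides with the paper's: split off the $k\in\mathcal{P}_L$ part of the cubic term, apply a weighted Cauchy--Schwarz to each summand, and absorb both resulting pieces into $\mathcal{G}_{\text{gap}}$ under the stated constraint on $\mathcal{M}$. Which factor carries $\bar z_\xi$ in the splitting is immaterial (the Cauchy--Schwarz product is invariant under moving scalars between $U$ and $V$; your $\beta$ and the paper's weight are related by a factor $|z_A|^2$), and your treatment of the $AB$ term and of the inequality's sign is fine.

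There is, however, a genuine quantitative gap in your bound on the quartic piece: you estimate $\sum_{p\in\mathcal{P}_L^{\mathbb{Z}}} a_{p-k}^*a_{p-k}\leq \mathcal{N}_A\leq n$, arriving at $C\beta^{-1}|z_A|^2\bar a\, n\, n_+^{A,L}/|\Lambda|$. This term is \emph{linear} in the excitation numbers, with the enormous constant $n\sim\rho\ell^3$ in place of $n_+^A$, so it cannot be absorbed into the $\tfrac{\pi n_+^L n_+}{4\mathcal{M}\ell^2}$ component of the gap as you claim (that component is quadratic in excitations), and absorbing it into the linear component forces, together with your condition on the first piece, the incompatible requirement $K_zK_H^3K_\ell^4\leq \varepsilon^2$ with $K_z,K_H,K_\ell\gg1$: no admissible $\beta$ exists. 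The fix is to use that the modes $p-k$ occurring in $Z_3^{\xi,L}$ are excitation modes, so $\sum_p a^*_{p-k}a_{p-k}\leq C\, n_+^A$; the quartic side is then controlled by $C\beta^{-1}|z|^2\bar a\, n_+\, n_+^L/|\Lambda|$, which is exactly the quadratic structure the gap term $\tfrac{\pi n_+^L n_+}{4\mathcal{M}\ell^2}$ is built to dominate, and the balancing then closes precisely under $\mathcal{M}\leq C\rho\ell^3K_H^{-3}K_\ell^{-4}K_z^{-1}$. This is the route the paper takes (compare the bound $-C\varepsilon^{-1}\rho\bar a\,\tfrac{\mathcal{M}}{\rho\ell^3}\tfrac{n_+^L}{\mathcal{M}}n_+$ in its treatment of $Z_3^{AB,L}-Z_3^{AB,\text{soft}}$), and it is the one essential point your write-up misses.
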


\begin{proof}
The proof follows the lines of \cite[Lemma 7.1]{freeEnCPHM} applied to each $Z_3$ separately, and we write here only the slightly different part, $Z_3^{AB}$. We apply a Cauchy-Schwarz inequality, for $\varepsilon >0$, and using that $|z|^2 \leq K_z \rho |\Lambda|$,
\begin{align*}
Z_3^{AB,L} -Z_3^{AB,\text{soft}} &=  \frac{1}{|\Lambda|} \sum_{k \in \mathcal{P}_L, p \in \mathcal P_L^{\mathbb Z} }\sigma(p,k) \widehat{g}_{AB}(k) \big(\bar{z}_A b^*_{p} b_{p-k} a_k   + \bar{z}_B a^*_p a_{p-k} b_k + h.c. \big)\\
&\geq - C \frac{\widehat{g}_{AB}(0)}{|\Lambda|}  \sum_{k \in \mathcal{P}_H, p \in \mathcal P_L^{\mathbb Z} } \big( \varepsilon (|z_A|^2  b^*_p b_p  + |z_B|^2 a^*_p a_p)\\
&\qquad +  \varepsilon^{-1} \big(b^*_{p-k} a^*_k a_k b_{p-k} + a^*_{p-k} b^*_k b_k a_{p-k}\big) \big)\\
&\geq - C  \varepsilon K_z \rho \bar{a} K_H^3 n_+- C \varepsilon^{-1}\rho\bar{a}\frac{\mathcal{M}}{\rho\ell^3}  \frac{n_+^L}{\mathcal{M}}n_+,
\end{align*}
where from the first to the second line, we bounded $c(p,k)$ by a constant and we used the canonial commutation relations, and to obtain the last line we used that $\sum_{p \in \mathcal{P}_L}1 = K^3_H$. We focus the attention on the last line: for the first term, we use that $\rho_{z_A} + \rho_{z_B} \leq C K_{z} \rho$, and choosing $\delta =\frac{1}{3}C^{-1} \varepsilon K_z^{-1} K_{\ell}^{-2} K_H^{-3}$, it can be absorbed in a fraction of the spectral gap $\mathcal{G}_{\text{gap}}$. For the second term, we use that $\mathcal{M} \leq  C \rho \ell^3 K_H^{-3}K_{\ell}^{-4} K_z^{-1}$ and thanks to the condition we assumed, it can be absorbed in $\mathcal{G}_{\text{gap}}$ as well.

\end{proof}

We introduce the following operators, regrouping the \textit{soft} pairs, cubic terms:
\begin{equation}
Z_{3}^{\text{soft}}(z) = Z_{3}^{A,\text{soft}}(z) + Z_{3}^{B,\text{soft}}(z) + Z_{3}^{AB,\text{soft}}(z)=\frac{1}{|\Lambda|}\sum_{\substack{k \in \mathcal{P}_H\\ p \in \mathcal{P}_L^{\mathbb{Z}}}} \sigma(p,k) w_{p,k} \cdot   F_k c_k+ h.c.
\end{equation}
We only need part of the diagonalized Bogoliubov Hamiltonian to control the errors from the $Z_3^{\text{soft}}$ and $Z_2^{\text{ext}}$ terms. We introduce the following split:
\begin{equation}
\mathcal{K}^{\text{diag}} = \mathcal{K}_H^{\text{diag}} + \mathcal{K}_L^{\text{diag}},
\end{equation}
where
\begin{equation}
 \mathcal{K}_H^{\text{diag}}  := \sum_{ k \in \mathcal{P}_H} d^*_k \cdot \mathcal{D}_k d_k, \qquad \mathcal{K}_L^{\text{diag}}  := \sum_{ k \in \mathcal{P}_L} d^*_k \cdot \mathcal{D}_k d_k.
\end{equation}

\begin{proposition}\label{prop:killQ3}
There exists a $C>0$ such that, if $\rho \bar{a}^3 \leq C^{-1}$, then for $|z|^2 \leq K_z (n+m)\leq C_a^2 K_z\rho \ell^3$, if we assume that $\mathcal{M}\leq \rho \ell^3 K_H^{-3}K_{\ell}^{-4}K_z^{-3}$, $K_{\ell}K_z^3 K_H^2 \ll (\rho \bar{a}^3)^{-1/2}$, $K_{\ell}^4 K_z^3\leq C K_H$, and furthermore
\begin{equation}\label{assmpt:deltaKz}
K_{\ell}^2 K_z \delta_{AB} \bar{a}^{-1} \leq \frac{1}{1000 C}, \qquad \text{ for } \eta \neq 0,
\end{equation}
we have the following inequality:
\begin{equation*}
\mathcal{K}_H^{\mathrm{diag}} + Z_2^{\mathrm{ext}} + Z_3^{\mathrm{soft}} + \varepsilon \mathcal{G}_{\text{gap}} + \frac{1}{2}\mathcal{G}_{\eta}\geq  -C(\rho \bar{a})^{5/2}(\rho \bar{a}^3)^{{\eta}}\ell^3. 
\end{equation*}
\end{proposition}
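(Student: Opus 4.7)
The key idea is to apply a matrix-weighted Cauchy--Schwarz inequality to the sign-indefinite cubic operator $Z_3^{\mathrm{soft}}$: the ``kinetic'' half of the estimate is absorbed by $\mathcal{K}_H^{\mathrm{diag}}$, while the ``dual'' half is a quadratic operator that should essentially cancel $Z_2^{\mathrm{ext}}$ up to residuals proportional to $\delta$ (handled by the soft-potential hypothesis \eqref{assmpt:deltaKz}) or to $\mathcal{G}_\eta$ when $\eta=0$. I would first rewrite $Z_3^{\mathrm{soft}}=\sum_{k\in\mathcal{P}_H}(V_k^{*}c_k+c_k^{*}V_k)$ with $V_k := |\Lambda|^{-1} F_k^{*}\sum_{p\in\mathcal{P}_L^{\mathbb{Z}}}\sigma(p,k)\,w_{p,k}$ a two-component operator-valued vector; then Cauchy--Schwarz at each $k$ gives
$$V_k^{*}c_k + c_k^{*}V_k \geq -\tfrac{1}{2}\,c_k^{*}\mathcal{D}_k c_k - 2\,V_k^{*}\mathcal{D}_k^{-1}V_k,$$
producing two summands to control separately.

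\textbf{Absorption and reconstruction.} For the first summand, since $\beta_k=\mathcal{O}(\rho\bar{a}\,\tau_k^{-1})$ at high momentum by \eqref{eq:Bdiagexpressions}, the substitution $c_k \mapsto d_k - \beta_k c_k^{*}$ gives $c_k^{*}\mathcal{D}_k c_k = d_k^{*}\mathcal{D}_k d_k + \mathcal{O}(\rho\bar{a}\,\tau_k^{-1})\cdot(\text{quadratic in }c^{\#})$, and the correction (after summing over $k\in\mathcal{P}_H$ and using the CCR) is absorbed by $\varepsilon\mathcal{G}_{\mathrm{gap}}$ under the hypothesis $K_\ell^4 K_z^3 \leq C K_H$; this piece is thus dominated by $\tfrac{1}{2}\mathcal{K}_H^{\mathrm{diag}}$ plus a small gap contribution. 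For the second summand, approximate $\mathcal{D}_k^{-1}\approx\tau_k^{-1}\one_2$ and expand the double sum over $p,p'$; the off-diagonal $p\neq p'$ contribution is subleading in $K_H^{-3}$ and absorbable by the spectral gap thanks to $K_\ell K_z^3 K_H^2 \ll (\rho\bar{a}^3)^{-1/2}$, while the diagonal piece is reducible by CCR to number operators weighted by $\tau_k^{-1}(F_k F_k^{*})_{ij}$. Using Lemma \ref{lem:reconstructiongomega} together with the Parseval identity $|\Lambda|^{-1}\sum_k \widehat{g}_X(k)\widehat{g}_Y(k)/\tau_k = 16\,\widehat{g_X\omega_Y}(0) + \mathcal{O}(\bar{a}^2/\ell)$ (symmetric in $X,Y$ via \eqref{eq:ScatOmega}), one reconstructs a quadratic form whose leading coefficients match those of $Z_2^{\mathrm{ext}}$, with discrepancy $\mathcal{O}(\delta\,\rho_z\, n_+)$ coming from replacing $\widehat{v}$ by $\widehat{g}$ and a low-momentum remainder of order $K_H\bar{a}^2/\ell$ from restricting the $k$-sum to $\mathcal{P}_H$ (controlled by \eqref{ineq:Gomegalow}).

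\textbf{Residual and main obstacle.} In the final step the residual is compared to the remaining gap terms: for $\eta\neq 0$ the hypothesis \eqref{assmpt:deltaKz} makes $K_\ell^2 K_z\,\delta_{AB}/\bar{a}$ small enough that $C\rho_z\,\delta\, n_+$ is absorbed into $\varepsilon\mathcal{G}_{\mathrm{gap}}$; for $\eta=0$ the summand $\frac{64\pi\bar{a}}{|\Lambda|}(|z_A|^2+|z_B|^2)n_+$ of $\mathcal{G}_\eta$ is of exactly the right size (since $|z|^2\leq K_z(n+m)$) to absorb the $\delta\sim\bar{a}$ contribution, what remains being of order $K_\ell^{-1}(\rho\bar{a})^{5/2}\ell^3$ as claimed. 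The main obstacle lies in the matching of the reconstructed off-diagonal coefficient: $F_k F_k^{*}$ has off-diagonal $|z_A|^2\widehat{g}_A\widehat{g}_{AB}+|z_B|^2\widehat{g}_B\widehat{g}_{AB}$, which reconstructs to $|z_A|^2\widehat{g_A\omega_{AB}}(0)+|z_B|^2\widehat{g_B\omega_{AB}}(0)$, whereas the off-diagonal of $Z_2^{\mathrm{ext}}$ scales as $\sqrt{\rho_{z_A}\rho_{z_B}}\,\widehat{g\omega}_{AB}$. Reconciling these requires performing the Cauchy--Schwarz in the full matrix form (treating all three $Z_3^{\xi,\mathrm{soft}}$ jointly rather than species-by-species) so that $V_k$ is expressed in a basis aligned with $\mathcal{D}_k$, and then invoking the miscibility condition \eqref{cond:misc} to guarantee that the leftover off-diagonal quadratic form is positive semidefinite and may be discarded in the lower bound.
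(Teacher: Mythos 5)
Your overall architecture is the right one and matches the paper's: Cauchy--Schwarz the soft cubic term against the high-momentum diagonal Hamiltonian, normal-order the resulting quartic, reconstruct $\widehat{g\omega}(0)$-coefficients from the $k$-sum via Lemma \ref{lem:reconstructiongomega}, cancel against $Z_2^{\mathrm{ext}}$, and absorb the rest in $\varepsilon\mathcal{G}_{\text{gap}}$. (The paper splits $c_k=(1-\beta_k^2)^{-1}(d_k-\beta_kd_k^*)$ \emph{before} the Cauchy--Schwarz, treating the $d_k^*$-part $\mathcal{T}_\beta$ with a separate weight $K_H^{-1}\mathcal{K}_H^{\mathrm{diag}}$; your order of operations leaves cross terms of the type $d_k^*\mathcal{D}_k\beta_kc_k^*$ that you gloss over, but this is repairable.)

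The genuine gap is in your ``main obstacle'' paragraph: both the diagnosis and the proposed cure are wrong. The off-diagonal entries of $F_kF_k^*$ never produce a quadratic term, because they multiply products like $a_p^*a_{p-k}b_{s-k}^*b_s$ in which the $a$- and $b$-operators commute; those entries survive only in the quartic normal-ordered remainder $\mathcal{T}_{\text{norm}}$, which is absorbed by the spectral gap (using $\mathcal{M}\leq\rho\ell^3K_H^{-3}K_\ell^{-4}K_z^{-3}$), not matched against $Z_2^{\mathrm{ext}}$. The true mismatch sits in the \emph{diagonal} entries: the commutator piece reconstructs to $F_\omega$ whose $A$-diagonal carries $2\rho_{z_B}\widehat{g\omega}_{AB}(0)$ (and symmetrically for $B$), whereas $Z_2^{\mathrm{ext}}$ supplies only $\rho_{z_B}\widehat{g\omega}_{AB}(0)$ together with the off-diagonal $\sqrt{\rho_{z_A}\rho_{z_B}}\,\widehat{g\omega}_{AB}$. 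The leftover $E_\omega$ in \eqref{def:Eomega} is therefore \emph{negative} semidefinite (eigenvalues $0$ and $-(\rho_{z_A}+\rho_{z_B})\widehat{g\omega}_{AB}(0)$), so it cannot be ``discarded in the lower bound'' as you propose, and the miscibility condition \eqref{cond:misc} plays no role here. It is exactly this term that forces the hypothesis \eqref{assmpt:deltaKz}: for $\eta\neq0$ one bounds $\widehat{g\omega}_{AB}(0)\leq\widehat{v\omega}_{AB}(0)=\delta_{AB}$ and absorbs $E_\omega$ into the gap as in \eqref{calc:Tquad4}, while for $\eta=0$ the first summand of $\mathcal{G}_\eta$ cancels it as in \eqref{eq:whereGetaisneeded}. (Your attribution of the $\delta$-residual to ``replacing $\widehat{v}$ by $\widehat{g}$'' in the reconstruction is likewise off: everything is already expressed in terms of $g$; the $\delta_{AB}$ enters only through this bound on $\widehat{g\omega}_{AB}(0)$.) A secondary omission: in the Neumann basis $a_p=a_{(|p_1|,|p_2|,|p_3|)}$, so $[a_{p-k},a_{s-k}^*]$ does not vanish for all $s\neq p$; the extra commutators over the set $\mathcal{P}_{p,k}$ produce the term $\mathcal{T}_{\text{rest}}$, which must be estimated separately and is not covered by your ``off-diagonal $p\neq p'$ is subleading'' remark.
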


\begin{proof}

We change variables in order to compare $Z_3^{\mathrm{soft}}$ with $\mathcal{K}^{\mathrm{diag}} = \sum d^*_k \cdot \mathcal{D}_k d_k$: 
\begin{equation}
c_k = (1-\beta_k^2)^{-1} (d_k -\beta_k d^*_k), 
\end{equation} 
and we can consider the following splitting
\begin{align*}
Z_3^{\mathrm{soft}} &= \mathcal{T}_1  - \mathcal{T}_{\beta}, \\
\mathcal{T}_1 &= \frac{1}{|\Lambda|}\sum_{k \in \mathcal{P}_H, p \in \mathcal{P}_L^\mathbb{Z}} \sigma(p,k) w_{p,k} \cdot F_k (1-\beta_k^2)^{-1} d_k + h.c.,\\
\mathcal{T}_{\beta} &= \frac{1}{|\Lambda|}\sum_{k \in \mathcal{P}_H, p \in \mathcal{P}_L^\mathbb{Z}} \sigma(p,k) w_{p,k} \cdot  F_k  (1-\beta_k^2)^{-1} \beta_k d_k^*+ h.c.
\end{align*} 
We prove the following inequalities, which combined give the proof of the lemma:
\begin{align}
\mathcal{T}_1 + (1-K_H^{-1}) \mathcal{K}^{\mathrm{diag}}_H + Z_2^{\mathrm{ext}} +\varepsilon \mathcal{G}_{\text{gap}} &\geq  - C (K_{\ell}^{2}K_z \delta_{AB}\bar{a}^{-1}) \frac{n_+}{\ell^2},\label{eq:estimateT1}\\
\mathcal{T}_{\beta} + K_H^{-1} \mathcal{K}_H^{\text{diag}} + \varepsilon \mathcal{G}_{\text{gap}}&\geq -C (\rho \bar{a})^{5/2}(\rho \bar{a}^3)^{{\eta}}\ell^3. \label{eq:estimateTalpha}
\end{align}
Let us start from proving \eqref{eq:estimateTalpha}. 
For any $\varepsilon >0$, by a Cauchy-Schwarz inequality we have
\begin{equation}\label{eq:Talphacalculations}
\mathcal{T}_{\beta} \leq C\varepsilon^{-1} \sum_{k \in \mathcal{P}_H, p \in \mathcal{P}_L^\mathbb{Z}} w_{p,k} \cdot w_{p,k}^* + \frac{C\varepsilon K_H^3 }{|\Lambda|^2}\sum_{k \in \mathcal{P}_H} |F_k  (1-\beta_k^2)^{-1} \beta_k d_k^*|^2
\end{equation}
where we estimated the coefficients $\sigma(p,k)$ by a constant and the sum in $p$ in the second term by $K_H^3$. We now estimate the operator norm of the matrices by a constant times the maximum of the components of the matrices, obtaining $\|F_k\|^2 \leq C |z|^2\bar{a}^2 \leq C K_{z} \rho \ell^{3} \bar{a}^2$ and, for $k \in \mathcal{P}_H$, by \eqref{eq:Bdiagexpressions},
\begin{equation}
\|(1-\beta_k^2)^{-1} \beta_k\|^2 \leq C \max\{ \rho_{z_A},\rho_{z_B}\}^2 \bar{a}^2 k^{-4} \leq C K_{z}^2 \rho^2 \bar{a}^2 k^{-4} .
\end{equation}
We also observe that the first term on the r.h.s. of \eqref{eq:Talphacalculations} is a sum of objects of the form $\eta_p^* \eta_{p-k} \xi_{p-k}^*\xi_p$, for $\eta,\xi\in \{a,b\}$, and by the commutation relations and summing in $p \in \mathcal{P}_L^{\mathbb{Z}}$ and $k$, they can all be estimated by $Cn_+^L(n_+ + 1)$, giving
\begin{equation}\label{Tbeta:intermediate}
\mathcal{T}_{\beta} \leq C \varepsilon^{-1} n_+^L(n_+ + 1) +  \frac{C}{\ell^3}\varepsilon K_H^3 K_z^3 (\rho \bar{a})^3\bar{a}\sum_{k \in \mathcal{P}_H} \frac{1}{k^4} (d^*_k d_k +1).
\end{equation}
We choose $\varepsilon = \ell^3 K_H^{-2} K_{z}^{-3}K_{\ell}^{-2} \bar{a}^{-1}$ which, \textit{a posteriori}, is going to be a suitable choice to have small errors, and the first term on the r.h.s. of \eqref{Tbeta:intermediate} is bounded by
\begin{equation}
\frac{C}{\ell} K_H^{2} K_z^{3} K_{\ell}^2\bar{a} \Big(\rho \ell^3\frac{\mathcal{M}}{\rho \ell^3} \frac{n_+^L}{\mathcal{M}} \frac{n_+}{\ell^2} + \frac{n_+}{\ell^2} \Big),
\end{equation}
which, since $\mathcal{M}\leq \rho \ell^3 K_H^{-2} K_z^{-3} K_{\ell}^{-4}$ and $K_H^2 K_z^3 K_{\ell} \ll (\rho\bar{a}^3)^{-1/2}$, can be reabsorbed in the spectral gap.

We split the second term in \eqref{Tbeta:intermediate} in two parts corresponding to the addends inside the parentheses. For the one with $d^*_k d_k$, we use that $k^6 \geq K_H^6 \ell^{-6}$ to bound it by
\begin{equation}\label{Tbeta:Ddiag}
CK_H^{-5} K_{\ell}^{4} \sum_{k \in \mathcal{P}_H} k^2 d^*_k d_k \leq C K_H^{-5} K_{\ell}^{4} \sum_{k \in \mathcal{P}_H} d^*_k \cdot \mathcal{D}_k d_k,
\end{equation}
where we used that, for $k \in \mathcal{P}_H$, $\|\mathcal{D}_k\|^{-1} \leq C k^2$. By the assumptions we have that $C K_{H}^{-5} K_{\ell}^{6}  \ll K_H^{-1}$ which makes it possible to reabsorb the term on the r.h.s. in $K_H^{-1} \mathcal{K}^{\text{diag}}_H$.
For the remaining term, we see that the sum converges and it is of order $K_H^{-1}\ell^4$, and thus can be bounded by
\begin{equation}\label{Tbeta:error}
C(\rho \bar{a})^{5/2}\ell^3 (\rho \bar{a}^3)^{\eta}. 
\end{equation} 
By \eqref{Tbeta:intermediate}, \eqref{Tbeta:Ddiag} and \eqref{Tbeta:error}, we get \eqref{eq:estimateTalpha}.

We now prove \eqref{eq:estimateT1}. By a Cauchy-Schwarz inequality we obtain (denoting by $\cdot^T$ the transposed matrix)
\begin{equation}\label{eq:T1boundT2}
\mathcal{T}_1 = \sum_{\substack{k \in \mathcal{P}_H,\\ p \in \mathcal{P}_L^\mathbb{Z}}} \frac{\sigma(p,k)}{|\Lambda|} w_{p,k} \cdot F_k (1-\beta_k^2)^{-1} d_k + h.c.\geq -\mathcal{T}_2   -  (1-K_H^{-1})\sum_{k \in \mathcal{P}_H}  d^*_k \cdot\mathcal{D}_k d_k.
\end{equation}
with 
\begin{equation}
\mathcal{T}_2 = \frac{(1+CK_H^{-2}K_{\ell}^2)}{1-K_H^{-1}}\sum_{\substack{k \in \mathcal{P}_H,\\ p,s \in \mathcal{P}_L^\mathbb{Z}}} \frac{\sigma(p,k) \sigma(s,k)}{|\Lambda|^2 k^2} F_k^T w_{p,k} \cdot F_k^T w^*_{s,k} 
\end{equation}
where we used 
\begin{equation}
(1-\alpha^2_k)^{-1} \mathcal{D}_k^{-1}(1-\alpha^2_k)^{-1} \leq (1 + CK_H^{-2}K_{\ell}^2) k^{-2}\one_2, \qquad \text{for } k \in \mathcal{P}_H,
\end{equation}
obtained by a form bound of expressions \eqref{eq:Ddiagexpressions}, \eqref{eq:Bdiagexpressions}.
The second term on the r.h.s. of \eqref{eq:T1boundT2} is exactly $-(1-K_H^{-1})\mathcal{K}_H^{\mathrm{diag}}$.
Let us observe that 
\begin{equation}
F_k^Tw_{p,k}  =  \begin{pmatrix}
|z_A| \big( \widehat{g}_A(k) a^*_p a_{p-k} + \widehat{g}_{AB}(k) b^*_{p} b_{p-k}\big)\\
|z_B| \big( \widehat{g}_{AB}(k) a^*_p a_{p-k} + \widehat{g}_B(k) b^*_p b_{p-k}\big),
\end{pmatrix}
\end{equation}
which gives
\begin{align}
F_k^T w_{p,k} \cdot \frac{1}{k^2} F_k^T w^*_{s,k} =   \frac{1}{k^2} &\big( (|z_A|^2 \widehat{g}_A^2(k) + |z_B|^2 \widehat{g}_{AB}^2(k)) a^*_p a_{p-k} a^*_{s-k} a_s \label{eq:aaaa} \\
&+(|z_A|^2 \widehat{g}_A(k)  + |z_B|^2 \widehat{g}_B(k))\widehat{g}_{AB}(k) a^*_p a_{p-k} b^*_{s-k} b_s \label{eq:aabb}\\
&+ (|z_A|^2 \widehat{g}_A(k) + |z_B|^2 \widehat{g}_B(k))\widehat{g}_{AB}(k) b^*_p b_{p-k}a^*_{s-k} a_s \label{eq:bbaa}\\
&+(|z_A|^2 \widehat{g}_{AB}^2(k) +|z_B|^2 \widehat{g}_B^2(k)) b_p^* b_{p-k} b^*_{s-k} b_s \big) \label{eq:bbbb}.
\end{align}
We now rearrange the operators to put them in normal ordering, but \eqref{eq:aaaa} and \eqref{eq:bbbb} give some commutator terms 
\begin{equation}
\mathcal{T}_{\text{norm}} := \mathcal{T}_2 - \mathcal{T}_{\text{comm}},
\end{equation}
where, denoting by $C_H = (1+C K_H^{-2}K_{\ell}^2)(1-K_H^{-1})^{-1}$ for readability,
\begin{align*}
\mathcal{T}_{\text{comm}} =C_H \sum_{\substack{k \in \mathcal{P}_H\\ p,s \in \mathcal{P}_L^{\mathbb{Z}}}}\frac{\sigma(p,k) \sigma(s,k)}{k^2|\Lambda|^2} \big( &(|z_A|^2 \widehat{g}_A^2 + |z_B|^2 \widehat{g}_{AB}^2 ) a^*_p [a_{p-k}, a^*_{s-k}] a_s\\
+&(|z_A|^2 \widehat{g}_{AB}^2 +|z_B|^2 \widehat{g}_B^2) b_p^* [b_{p-k} ,b^*_{s-k}] b_s \big).
\end{align*}
We would like to contract indices by using the CCR: $[a_{p},a_{q}^*] = \delta_{p,q} =  [b_{p}, b_{q}]$ and obtain quadratic terms in creation and annihilation operators, but they only hold for $p,q \in \frac{\pi}{\ell} \mathbb{N}_0^3$, which does not hold in this case. Nevertheless, we observe that 
\begin{equation}
a_p = a_{(|p_1|,|p_2|,|p_3|)}, \qquad b_p = b_{(|p_1|,|p_2|,|p_3|)},
\end{equation}
which implies that $[a_{p-k}, a^*_{s-k}] \neq 0$ if and only if $s=p$ or $s \in \mathcal{P}_{p,k}$, where
\begin{equation}
\mathcal{P}_{p,k}:= \{s\in \mathcal{P}_L^{\mathbb{Z}} \setminus \{p\}\,|\, p_j = s_j \text{ or } 2k_j = p_j + s_j \text{ for any } j \in \{1,2,3\}\}.
\end{equation}
For this latter case, since $p,s \in \mathcal{P}_L^{\mathbb{Z}}$, this also implies that $|k_j|\leq K_H \ell^{-1}$. Therefore, we can write 
\begin{equation}
\mathcal{T}_{\text{comm}} = \mathcal{T}_{\text{quad}} + \mathcal{T}_{\text{rest}},
\end{equation}
where,
\begin{align*}
\mathcal{T}_{\text{quad}} := \sum_{\substack{k \in \mathcal{P}_H\\ p \in \mathcal{P}_L^{\mathbb{Z}}}}\frac{C_H|\sigma(p,k)|^2}{k^2|\Lambda|^2} \big( (|z_A|^2 \widehat{g}_A^2 + |z_B|^2 \widehat{g}_{AB}^2 ) a^*_p  a_p+(|z_A|^2 \widehat{g}_{AB}^2 +|z_B|^2 \widehat{g}_B^2) b_p^*  b_p \big),
\end{align*}
\begin{align*}
\mathcal{T}_{\text{rest}} :=\sum_{\substack{k \in \mathcal{P}_H, p \in \mathcal{P}_L^{\mathbb{Z}},\\  s \in \mathcal{P}_{p,k}}}C_H\theta_H(k)\frac{\sigma(p,k) \sigma(s,k) }{k^2|\Lambda|^2} &\big( (|z_A|^2 \widehat{g}_A^2 + |z_B|^2 \widehat{g}_{AB}^2 ) a^*_p  a_{s}\\
&+ (|z_A|^2 \widehat{g}_{AB}^2 +|z_B|^2 \widehat{g}_B^2) b_p^*  b_{s} \big),
\end{align*}
where $\theta_H(k) =1$ if $|k_j| \leq K_H \ell^{-1}$ for every $j=1,2,3$, zero otherwise. Recalling that $K_{\ell}^4 \leq CK_H$, we bound $\mathcal{T}_{\text{rest}}$ by a Cauchy-Schwarz as 
\begin{align*}
\mathcal{T}_{\text{rest}} \leq C \sum_{k \in \mathcal{P}_H} \frac{\one_{|k_1|\leq K_H\ell^{-1}}}{k^2|\Lambda|^2} (|z_A|^2 \widehat{g}_A^2 +  (|z_A|^2+|z_B|^2) \widehat{g}_{AB}^2  +|z_B|^2 \widehat{g}_B^2) n_+.
\end{align*}
By using term by term the estimate from \cite[Lemma 7.3]{freeEnCPHM}, that we recall here below for the reader's convenience,
\begin{equation}
\frac{1}{|\Lambda|}\sum_{k \in \mathcal{P}_H} \frac{\rho \widehat{g}_{\#}^2(k)}{k^2} \one_{|k_1|\leq K_H \ell^{-1}} \leq C \ell^{-2} (\rho \bar{a}^3)^{1/2} K_H^2, \qquad \# \in \{A,B,AB\},
\end{equation}
and $|z|^2 \leq C K_z \rho \ell^3$, we obtain the bound
\begin{equation}\label{eq:controlTrest}
\mathcal{T}_{\text{rest}} \leq    K_z K_H^2 (\rho \bar{a}^3)^{1/2} \frac{n_+}{\ell^2},
\end{equation}
Now, we rewrite the $\mathcal{T}_{\text{quad}}$ term as
\begin{align*}
&\mathcal{T}_{\text{quad}}  = \sum_{\substack{k \in \mathcal{P}_H\\ p \in \mathcal{P}_L^{\mathbb{Z}}}}\frac{C_H|\sigma(p,k)|^2}{k^2|\Lambda|^2} c^*_p\cdot F_{AB}(z) c_p, \\
&F_{AB}(z) = \begin{pmatrix}
 |z_A|^2 \widehat{g}_A^2 + |z_B|^2 \widehat{g}_{AB}^2 & 0\\
0 &|z_A|^2 \widehat{g}_{AB}^2 +|z_B|^2 \widehat{g}_B^2
\end{pmatrix}.
\end{align*}
We now observe that $\sigma(p,k) = \sigma_{p-k} \sigma_p^{-1} \sigma_k^{-1}$ where $\sigma_k := \sigma_{k_1}\sigma_{k_2} \sigma_{k_3}$ (see \eqref{sigmajdef} to recall the definitions). 
We can write
\begin{equation}\label{eq:fromZton+}
\sum_{p \in \mathcal{P}_L^{\mathbb{Z}}} \frac{1}{\sigma_p^2} c^*_p \cdot F_{AB}(z) c_p = \sum_{p \in \mathcal{P}_L}  c^*_p \cdot F_{AB}(z) c_p,
\end{equation}
and, analogously we can reconstruct the sum of $k \in \mathcal{P}_H^{\mathbb{Z}}$ from $k \in \mathcal{P}_H$ at the price of an additional  coefficient $\sigma_k^2$ in the denominator. By these considerations, from \eqref{eq:fromZton+} and bounding $\sigma_{p-k}^2 \leq 8$, we have that
\begin{equation}
\mathcal{T}_{\text{quad}} \leq \sum_{k \in \mathcal{P}_H^{\mathbb{Z}}} \frac{8C_H}{\sigma_k^4  k^2 |\Lambda|^2} \sum_{p \in \mathcal{P}_L} c^*_p\cdot F_{AB}(z) c_p .
\end{equation}
We now distinguish between two cases: $\sigma_k^4= 64 $ or $\sigma_k^4 \neq 64 $. The latter happens when at least one of the components $k_j=0$, and therefore this case can be treated like in the bound \eqref{eq:controlTrest}. We therefore obtain, also using the relations between the parameters,
\begin{equation}\label{calc:Tquad1}
\mathcal{T}_{\text{quad}} \leq (1+CK_H^{-1})\sum_{k \in \mathcal{P}_H^{\mathbb{Z}}} \frac{1}{8  k^2 |\Lambda|^2} \sum_{p \in \mathcal{P}_L} c^*_p \cdot F_{AB}(z) c_p +C K_z K_H^2 (\rho \bar{a}^3)^{1/2} \frac{n_+}{\ell^2}.
\end{equation}
By Lemma \ref{lem:reconstructiongomega}, we can reconstruct a matrix with $g\omega$ entries:
\begin{equation}\label{calc:Tquad2}
\sum_{k \in \mathcal{P}_H^{\mathbb{Z}}}\frac{1+CK_H^{-1}}{8k^2|\Lambda|^2} \sum_{p \in \mathcal{P}_L}c^*_p \cdot F_{AB}(z) c_p  \leq  \sum_{p \in \Lambda^*}c^*_p \cdot F_{\omega}(z) c_p   + C K_{z}K_{\ell}^2(K_H^{-1} + K_H\bar{a}\ell^{-1} ) \frac{n_+}{\ell^2},
\end{equation}
where $F_{\omega} := L_{\omega} + S_{\omega}$, with
\begin{equation}
L_{\omega} := \begin{pmatrix}
 2\rho_{z_A}\widehat{g\omega}_A(0) & 0\\
0 &2\rho_{z_B} \widehat{g\omega}_B(0)
\end{pmatrix}, \qquad S_{\omega} :=2\widehat{g\omega}_{AB}(0)\begin{pmatrix}
\rho_{z_B} & 0\\
 0 & \rho_{z_A} 
 \end{pmatrix}.
\end{equation}
Turning our attention on $Z_2^{\text{ex}}$, we split it as $Z_2^{\text{ext}} = L_{\omega} + T_{\omega} + \mathcal{E}_2$, where $\mathcal{E}_2$ is the error made substituting $Z_2^{\text{ext}}$ with the sum $L_{\omega} + T_{\omega}$, and 
\begin{equation}
T_{\omega} := \widehat{g\omega}_{AB}(0)  \begin{pmatrix}
\rho_{z_B} & \sqrt{\rho_{z_A}\rho_{z_B}} \\
\sqrt{\rho_{z_A}\rho_{z_B}} & \rho_{z_A}
\end{pmatrix}.
\end{equation}
Now, using that for low momenta $p \in \mathcal{P}_L$ we can bound $|\widehat{g\omega}(p)-\widehat{g\omega}(0)| \leq C p^2 R^2 \bar{a} \leq C K_H^2 \ell^{-2}R^2 \bar{a}$, we can estimate
\begin{align}
|\mathcal{E}_2| &\leq \rho K_z\sum_{p \in \Lambda^*} c^*_p \cdot \begin{pmatrix}
|\widehat{g\omega}_A(p)-\widehat{g\omega}_A(0)| & 0 \\
0 & |\widehat{g\omega}_B(p)-\widehat{g\omega}_B(0)|
\end{pmatrix} c_p\nonumber \\
&\leq C K_z R^2\rho \bar{a}\sum_{p \in \mathcal{P}_L}p^2 c^*_p c_p + C K_z \rho \bar{a} \sum_{p \in \mathcal{P}_H} c^*_p c_p\\
&\leq C K_z K_H^2 R^2 \rho \bar{a} \frac{n_+}{\ell^2}  + K_z K_{\ell}^2 \frac{n_+^H}{\ell^2},\label{calc:Tquad3}
 \end{align}
which, thanks to the assumptions on the parameters, can be reabsorbed in a small fraction of the spectral gap.
We have then that 
\begin{multline}\label{def:Eomega}
Z_2^{\text{ext}} - \mathcal{T}_{\text{quad}} + \frac{1}{100} \mathcal{G}_{\text{gap}} \geq \sum_{p \in \mathcal{P}_L}c^*_p \cdot (T_{\omega} - S_{\omega}) c_p \\
=\widehat{g\omega}_{AB}(0) \sum_{p \in \mathcal{P}_L} c^*_p \cdot\begin{pmatrix}
-\rho_{z_B}  & \sqrt{\rho_{z_A} \rho_{z_B}} \\
\sqrt{\rho_{z_A} \rho_{z_B}}  & -\rho_{z_A}
\end{pmatrix} c_p =: E_{\omega}.
\end{multline}
We observe that $E_{\omega}$ is semidefinite negative. 
\begin{itemize}

\item If $\eta=0$, then using that $\widehat{g\omega}_{AB} (0)\leq \widehat{g}_{AB}(0) \leq 8\pi \bar{a}$,
\begin{equation}\label{eq:whereGetaisneeded}
E_{\omega} + \frac{1}{2}\mathcal{G}_{\eta} \geq - \frac{32 \pi\bar{a}}{|\Lambda|} |z|^2 n_+ + \frac{32\pi \bar{a}}{|\Lambda|} |z|^2 n_+ \geq 0.
\end{equation}

\item If $\eta \neq 0$, recalling the expression of $\delta_{AB}$ from \eqref{eq:defdeltas}, we can bound $E_{\omega}$ as 
\begin{equation}\label{calc:Tquad4}
E_{\omega}  \geq -C \rho K_z  \widehat{v \omega}_{AB}(0) n_+ \geq - C \rho K_z\delta_{AB} n_+ = - C (K_{\ell}^{2}K_z \delta_{AB}\bar{a}^{-1}) \frac{n_+}{\ell^2},
\end{equation}
which, thanks to the assumption \eqref{assmpt:deltaKz}, can be reabsorbed in a small fraction of the spectral gap.
\end{itemize}
We conclude, using assumption \eqref{cond:aRa} and joining together \eqref{eq:controlTrest} , \eqref{calc:Tquad1}, \eqref{calc:Tquad2}, \eqref{calc:Tquad3}, \eqref{calc:Tquad4}, that 
\begin{equation}\label{Tcomm:Z2}
Z_2^{\text{ex}} - \mathcal{T}_{\text{comm}} +\varepsilon \mathcal{G}_{\text{gap}}\geq -C K_z K_{\ell}( K_H^{-1} +   K_H^2 (\rho \bar{a}^3)^{1/2}) )\frac{n_+}{\ell^2} - CK_zK_{\ell}^2 \frac{n_+^H}{\ell^2}.
\end{equation}
The terms on the r.h.s., since $K_{\ell}^4K_z^2\leq C K_H$,  and since $K_{\ell} K_z K_H \ll (\rho \bar{a}^3)^{-1/2}$, can be reabsorbed in a small fraction of the spectral gap.
For what concerns the remaining normal ordered term
\begin{align*}
\mathcal{T}_{\text{norm}} = &\sum_{\substack{k \in \mathcal{P}_H\\ p,s \in \mathcal{P}_L^{\mathbb{Z}}}}\frac{C_H\sigma(p,k) \sigma(s,k)}{k^2|\Lambda|^2} \\  &\times \big( (|z_A|^2 \widehat{g}_A^2 + |z_B|^2 \widehat{g}_{AB}^2 ) a^*_p a^*_{s-k} a_{p-k}  a_s +(|z_A|^2 \widehat{g}_A  + |z_B|^2 \widehat{g}_B)\widehat{g}_{AB} a^*_p b^*_{s-k} a_{p-k}  b_s \\
&+ (|z_A|^2 \widehat{g}_A + |z_B|^2 \widehat{g}_B)\widehat{g}_{AB} b^*_p a^*_{s-k} b_{p-k} a_s +(|z_A|^2 \widehat{g}_{AB}^2 +|z_B|^2 \widehat{g}_B^2) b_p^* b^*_{s-k} b_{p-k}  b_s \big),
\end{align*}
we bound it by a Cauchy-Schwarz
\begin{align*}
\mathcal{T}_{\text{norm}} \leq \frac{C}{|\Lambda|} (\rho_{z_A} + \rho_{z_B}) \bar{a}^2 \sum_{k \in \mathcal{P}_H} \frac{1}{k^2}  \sum_{p,s \in \mathcal{P}_L^{\mathbb{Z}}} &(a^*_p a^*_{s-k} a_{s-k} a_p + a^*_p b^*_{s-k} b_{s-k} a_p\\
&+b^*_p a^*_{s-k} a_{s-k} b_p + b^*_p b^*_{s-k} b_{s-k} b_p),
\end{align*}
and using that $|k| \geq K_H \ell^{-1}$ and $\rho_z \leq K_z\rho$, and that $|\mathcal{P}_L^{\mathbb{Z}}| = K_H^3$, 
\begin{equation}\label{Tnorm:final}
\mathcal{T}_{\text{norm}} \leq \frac{C}{|\Lambda|} \rho \bar{a}^2 K_z K_H \ell^{2} n_+ n_+^L =C K_z K_H K_{\ell}^2 \frac{n_+}{\ell^2} \frac{n_+^L}{\mathcal{M}} \frac{\mathcal{M}}{\rho \ell^3},
\end{equation}
which can be reabsorbed in the spectral gap thanks to the condition $\mathcal{M}\leq \rho \ell^3 K_H^{-2}K_{\ell}^{-4}K_z^{-3}$. The inequalities \eqref{Tcomm:Z2} and \eqref{Tnorm:final} give us \eqref{eq:estimateT1}, which concludes the proof.
\end{proof}

The last proposition together with \eqref{eq:KKdiag+S}, \eqref{eq:SZ0sum} let us then give a bound on the Hamiltonian $\mathcal{R}(z) = \mathcal{K} +Z_0(z)+ Z^{ex}_2(z) + Z_{3,L}(z)$ in the region $|z|^2 \leq K_z (n+m).$ Assuming the conditions on the parameters from Proposition \ref{prop:killQ3}, and that $K_z^{5} \ll K_{\ell}^2$ for $\eta \neq  0$, we have 
\begin{equation}
|\mathcal{E}| \leq C (\rho \bar{a})^{5/2}\ell^3 (\rho \bar{a}^3)^{\eta}+\frac{1}{2}\mathcal{G}_{\eta}.
\end{equation} 

\begin{corollary}\label{cor:lowerboundrightregion}
Under the assumptions of Proposition \ref{prop:killQ3} and assuming $K_z^{5} \ll K_{\ell}^2$ for $\eta \neq  0$, we have that, for $|z|^2 \leq K_z (n+m)$,
\begin{equation}
\mathcal{R}(z)+ \varepsilon\mathcal{G}_{\text{gap}}(z)+ \mathcal{G}_{\eta}(z) \geq |\Lambda| G_{AB}^{5/2} I_{AB}  - C |\Lambda| (\rho \bar{a})^{5/2}K_{\ell}^{-1}.
\end{equation}
where
\begin{align}
G_{AB}(\rho_{z_A},\rho_{z_B}) &= (\rho_{z_A}^2 a_A^2 + 2 \rho_{z_A} \rho_{z_B} a_{AB}^2 + \rho_{z_B}^2 a_B^2)^{5/4},\\
I_{AB}(\rho_{z_A},\rho_{z_B}) &= (8\pi)^{5/2}\frac{2\sqrt{2}}{15\pi^2} (\mu_+^{5/2}(\rho_{z_A},\rho_{z_B}) + \mu_-^{5/2}(\rho_{z_A},\rho_{z_B})).
\end{align}
\end{corollary}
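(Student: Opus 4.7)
The plan is to assemble the three structural ingredients that have been put in place in Section \ref{sec:rhoclose}: the Bogoliubov diagonalization \eqref{eq:KKdiag+S}, the identification of the Lee--Huang--Yang integral via \eqref{eq:SZ0sum}, and the compensation of the cubic ``soft pair'' contribution by the positive high-momentum part of the diagonalized Bogoliubov Hamiltonian (Proposition \ref{prop:killQ3}). The hypotheses of Proposition \ref{prop:killQ3} subsume those required by Lemma \ref{lem:softpairs} and Lemma \ref{lem:calcBogInt}, so all three ingredients are available simultaneously on the region $|z|^{2}\leq K_{z}(n+m)$.

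First I would rewrite
\begin{equation*}
\mathcal{R}(z)=\mathcal{K}+Z_{0}(z)+Z_{2}^{\mathrm{ex}}(z)+Z_{3,L}(z)
\end{equation*}
and use Lemma \ref{lem:softpairs}, applied separately to each $\xi\in\{A,B,AB\}$, to replace $Z_{3,L}$ by the soft-pair piece $Z_{3}^{\mathrm{soft}}$, at the cost of a small fraction of $\mathcal{G}_{\text{gap}}$. Next, I would split the Bogoliubov Hamiltonian according to \eqref{eq:KKdiag+S} as $\mathcal{K}=\mathcal{K}_{H}^{\mathrm{diag}}+\mathcal{K}_{L}^{\mathrm{diag}}+\mathcal{S}$, discard the nonnegative low-momentum piece $\mathcal{K}_{L}^{\mathrm{diag}}\geq 0$, and regroup the remainder into two blocks: the ``LHY block'' $\mathcal{S}+Z_{0}(z)$ and the ``renormalization block'' $\mathcal{K}_{H}^{\mathrm{diag}}+Z_{2}^{\mathrm{ex}}(z)+Z_{3}^{\mathrm{soft}}(z)$, keeping in reserve a fraction $\varepsilon\mathcal{G}_{\text{gap}}(z)$ and splitting $\mathcal{G}_{\eta}$ as $\tfrac{1}{2}\mathcal{G}_{\eta}+\tfrac{1}{2}\mathcal{G}_{\eta}$ for the two blocks.

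For the LHY block, the inequality \eqref{eq:SZ0sum} together with the definitions of $G_{AB}$ and $I_{AB}$ gives
\begin{equation*}
\mathcal{S}+Z_{0}(z)+\tfrac{1}{2}\mathcal{G}_{\eta}(z)\geq |\Lambda|\,G_{AB}^{5/2}I_{AB}-C(\rho\bar{a})^{5/2}\ell^{3}\bigl(K_{\ell}^{-1/2}K_{z}^{2}\mathbbm{1}_{\eta\neq 0}+K_{z}^{3}(\rho\bar{a}^{3})^{1/2}\bigr),
\end{equation*}
and the parameter choices $\nu=\frac{1}{10000}$ and $\eta\leq\frac{1}{1000}$ made in the proof of Theorem \ref{thm:lowerloc} ensure that the right-hand remainder is bounded by $CK_{\ell}^{-1}(\rho\bar{a})^{5/2}\ell^{3}$. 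For the renormalization block, Proposition \ref{prop:killQ3} directly yields
\begin{equation*}
\mathcal{K}_{H}^{\mathrm{diag}}+Z_{2}^{\mathrm{ex}}(z)+Z_{3}^{\mathrm{soft}}(z)+\varepsilon\mathcal{G}_{\text{gap}}(z)+\tfrac{1}{2}\mathcal{G}_{\eta}(z)\geq -CK_{\ell}^{-1}(\rho\bar{a})^{5/2}\ell^{3}.
\end{equation*}

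Finally I would collect all error terms. The $-\varepsilon\mathcal{G}_{\text{gap}}$ losses coming from applying Lemma \ref{lem:softpairs} to each of the three species (say with $\varepsilon$ chosen as $\varepsilon/3$) are absorbed into the single $\varepsilon\mathcal{G}_{\text{gap}}(z)$ present on the left-hand side of the target inequality, the two contributions of order $K_{\ell}^{-1}(\rho\bar{a})^{5/2}\ell^{3}=K_{\ell}^{-1}(\rho\bar{a})^{5/2}|\Lambda|$ merge into the single error in the statement, and $\mathcal{K}_{L}^{\mathrm{diag}}\geq 0$ is simply dropped. The main obstacle is really just bookkeeping the parameter constraints so that \emph{all} of $\mathcal{K}_{H}^{\mathrm{diag}}$, the $\mathcal{G}_{\eta}$ splitting, and the remainders of \eqref{eq:SZ0sum} fit simultaneously within the $K_{\ell}^{-1}(\rho\bar{a})^{5/2}|\Lambda|$ budget; this is exactly what the choices $K_{H}=(\rho\bar{a}^{3})^{-1/250-1/10000}$, $K_{z}=(\rho\bar{a}^{3})^{-1/10000}$ fixed at the beginning of Theorem \ref{thm:lowerloc} were engineered to accomplish.
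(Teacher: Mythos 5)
Your proposal is correct and follows essentially the same route as the paper, which proves the corollary implicitly in the paragraph preceding it by combining \eqref{eq:KKdiag+S}, Lemma \ref{lem:softpairs}, the bound \eqref{eq:SZ0sum} from Lemma \ref{lem:calcBogInt}, and Proposition \ref{prop:killQ3}, exactly as you do (including dropping $\mathcal{K}_L^{\mathrm{diag}}\geq 0$ and splitting $\mathcal{G}_{\eta}$ between the two blocks). Your version is in fact more explicit than the paper's about the bookkeeping of the $\varepsilon\mathcal{G}_{\text{gap}}$ fractions and the halves of $\mathcal{G}_{\eta}$.
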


\appendix

\section{Diagonalization of the Bogoliubov Hamiltonian}\label{app:bogint}

In this appendix we diagonalize the Bogoliubov Hamiltonian, a result which is going to be used both in the upper and lower bounds.

\begin{lemma}\label{lem:diagBog}
Let $\tau_k \in \{ k^2, k^2-\frac{\pi}{2\ell^2} -\frac{K_H}{\ell^2}\one_{k \in \mathcal{P}_H}\}$ and $\rho_A,\rho_B >0$ two positive parameters. Let 
\begin{equation}
\mathcal{A}_k = \tau_k \one_2 + \mathcal{B}_k, \qquad \mathcal{B}_k = \begin{pmatrix} 
\rho_A \widehat{g}_A(k) & \sqrt{\rho_A \rho_B}\, \widehat{g}_{AB}(k)\\
\sqrt{\rho_A\rho_B}\, \widehat{g}_{AB}(k) & \rho_B \widehat{g}_B(k)
\end{pmatrix}.
\end{equation}
Then the following equivalence holds, introducing $d_k = c_k + \beta_k c^*_k$,
\begin{equation}
\sum_{k \in {\Lambda}^*_+}\Big(  c^*_k\cdot \mathcal{A}(k) c_k +\frac{1}{2} \big( c_k \cdot\mathcal{B}(k) c_k + c^*_k \cdot\mathcal{B}(k) c^*_k \big) \Big) =  \mathcal{K}^{\text{diag}}+ \mathcal{S}, \qquad \mathcal{K}^{\text{diag}}:=\sum_{k \in \Lambda^*}d^*_k \cdot \mathcal{D}_k d_k \geq 0,
\end{equation}
where $\mathcal{D}_k = U_k^* \mathcal{D}_k^{\text{diag}} U_k$ and $\beta_k = U^*_k \beta^{\text{diag}}_k U_k$, with $U_k \in \mathcal{O}(2)$ for any $k \in \Lambda^*$ defined in \eqref{unit:diagonal} and such that 
\begin{align}
\mathcal{D}^{\text{diag}}_k &= \mathrm{diag}\left(\frac{1}{2}\Big(\tau_k + \lambda_{\pm}(k) + \sqrt{\tau_k^2 + 2\lambda_{\pm}(k)\tau_k}\Big)\right)_{\pm}, \label{eq:Ddiagexpressions}\\
\beta^{\mathrm{diag}}_k &= \mathrm{diag}\left(
\frac{\lambda_{\pm}(k) }{\tau_k + \lambda_{\pm}(k) + \sqrt{\tau_k^2 + 2\lambda_{\pm}(k) \tau_k}}\right )_{\pm},\label{eq:Bdiagexpressions}
\end{align}
and $\mathcal{S}$ defined as
\begin{equation}\label{def:SBog}
\mathcal{S} := \sum_{k \in\Lambda^*} \Big( \frac{1}{2} \Big(\sqrt{\tau^2  +2\lambda_+(k) \tau_k} +\sqrt{\tau_k^2  +2\lambda_-(k) \tau_k}  \Big)- \tau_k - \frac{1}{2}(\lambda_+(k) + \lambda_-(k))\Big), 
\end{equation}
where
\begin{align}
&\lambda_{\pm}(k) = [\lambda_{\pm}(\rho_A,\rho_B)](k) \nonumber\\
&= \frac{1}{2}(\rho_{A}\widehat{g}_A(k) + \rho_{B}\widehat{g}_{B}(k)) \pm \frac{1}{2} \sqrt{ \rho_{A}^2 \widehat{g}_A^2(k) +\rho_{B}^2 \widehat{g}_B^2(k)   + 2 \rho_{A}\rho_{B}(2\widehat{g}_{AB}^2(k) - \widehat{g}_A(k)\widehat{g}_B(k))}.\label{def:lambdapm}
\end{align}
In particular $\mathcal{K} \geq \mathcal{S}$.
\end{lemma}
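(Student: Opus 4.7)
The plan is to reduce the two-species Bogoliubov Hamiltonian to a pair of decoupled single-mode problems via a species-mixing rotation, diagonalize each by the classical one-mode Bogoliubov transformation, and reassemble. The first move is to diagonalize the real symmetric matrix $\mathcal{B}_k$ by an orthogonal $U_k\in O(2)$, with $\mathcal{B}_k=U_k^{*}B_k^{\mathrm{diag}}U_k$ and $B_k^{\mathrm{diag}}=\mathrm{diag}(\lambda_+(k),\lambda_-(k))$; computing the trace and determinant of $\mathcal{B}_k$ identifies the eigenvalues with the explicit formula \eqref{def:lambdapm}. I then introduce rotated operators $\tilde{c}_k:=U_k c_k$. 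Because $U_k$ is \emph{real} orthogonal, $[\tilde{c}_{k,i},\tilde{c}_{k,j}^{*}]=(U_kU_k^{T})_{ij}=\delta_{ij}$, so the rotated operators still obey the CCR and furnish two independent bosonic modes labelled by $\pm$. Since $\tau_k\mathbf{1}_2$ is invariant under this rotation and $\mathcal{B}_k$ becomes diagonal, the cross-species terms vanish and the Hamiltonian splits as a direct sum over $\pm$ of one-mode quadratic forms $(\tau_k+\lambda_\pm)\tilde{c}^{*}_{k,\pm}\tilde{c}_{k,\pm}+\tfrac{\lambda_\pm}{2}\bigl(\tilde{c}_{k,\pm}^{2}+(\tilde{c}^{*}_{k,\pm})^{2}\bigr)$.

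For each sign I then apply the standard one-mode Bogoliubov ansatz $\tilde{d}_{k,\pm}=\tilde{c}_{k,\pm}+\beta^{\mathrm{diag}}_\pm\tilde{c}^{*}_{k,\pm}$ and seek $D_\pm>0$ such that the quadratic form equals $D_\pm \tilde{d}^{*}_{k,\pm}\tilde{d}_{k,\pm}$ modulo a $c$-number. Expanding $\tilde{d}^{*}\tilde{d}$ via the CCR and matching the coefficients of $\tilde{c}^{*}\tilde{c}$ and of $\tilde{c}^{2}+\tilde{c}^{*2}$ produces the system $D_\pm(1+(\beta^{\mathrm{diag}}_\pm)^{2})=\tau_k+\lambda_\pm$ and $D_\pm\beta^{\mathrm{diag}}_\pm=\lambda_\pm/2$. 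Eliminating $D_\pm$ yields a quadratic in $\beta^{\mathrm{diag}}_\pm$ whose root with $|\beta^{\mathrm{diag}}_\pm|<1$ (the one compatible with preserving the CCR under the Bogoliubov substitution) reads $\beta^{\mathrm{diag}}_\pm=\lambda_\pm/(\tau_k+\lambda_\pm+\sqrt{\tau_k^{2}+2\lambda_\pm\tau_k})$; substituting back gives $D_\pm=\tfrac{1}{2}(\tau_k+\lambda_\pm+\sqrt{\tau_k^{2}+2\lambda_\pm\tau_k})$. These are exactly \eqref{eq:Ddiagexpressions}--\eqref{eq:Bdiagexpressions}.

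To conclude, transforming back via $d_k=U_k^{*}\tilde{d}_k$ I recover $d_k=c_k+\beta_k c^{*}_k$ with $\beta_k=U_k^{*}\beta^{\mathrm{diag}}_k U_k$ and $\mathcal{D}_k=U_k^{*}\mathcal{D}^{\mathrm{diag}}_k U_k$, as stated. The $c$-number shift accumulated in the diagonalization is $-\sum_{k,\pm}D_\pm(\beta^{\mathrm{diag}}_\pm)^{2}=-\sum_{k,\pm}\tfrac{1}{2}\lambda_\pm\beta^{\mathrm{diag}}_\pm$, and rationalizing by means of $(\tau_k+\lambda_\pm)^{2}-(\tau_k^{2}+2\lambda_\pm\tau_k)=\lambda_\pm^{2}$ rewrites this as $\sum_{k,\pm}\tfrac{1}{2}\bigl(\sqrt{\tau_k^{2}+2\lambda_\pm\tau_k}-\tau_k-\lambda_\pm\bigr)$, precisely $\mathcal{S}$ as in \eqref{def:SBog}. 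The non-negativity $\mathcal{K}^{\mathrm{diag}}\geq 0$ follows because $d^{*}_k\cdot\mathcal{D}_k d_k=\sum_{\pm}D_\pm\tilde{d}^{*}_{k,\pm}\tilde{d}_{k,\pm}$ is a sum of positive multiples of operators of the form $X^{*}X$. The main points requiring care rather than difficulty are the well-posedness of the $\sqrt{\cdot}$ (which needs $\tau_k>0$ for $k\neq 0$, guaranteed by the choice of $\tau_k$ under the standing assumptions, and $\lambda_\pm(k)\geq 0$, which follows from positive semi-definiteness of $\mathcal{B}_k$) and the bookkeeping to check that the same $U_k$ simultaneously diagonalizes $\mathcal{B}_k$, $\mathcal{D}_k$ and $\beta_k$; both are immediate once one observes that $\mathcal{D}_k$ and $\beta_k$ are constructed as scalar functions of $\mathcal{B}_k$ plus a multiple of $\mathbf{1}_2$, hence commute with any spectral projector of $\mathcal{B}_k$.
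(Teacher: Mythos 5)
Your proposal is correct and follows essentially the same route as the paper: both hinge on the observation that the single orthogonal matrix $U_k$ diagonalizing $\mathcal{B}_k$ simultaneously diagonalizes $\mathcal{A}_k$, $\mathcal{D}_k$ and $\beta_k$, reducing everything to the standard one-mode Bogoliubov matching equations $2D_\pm\beta^{\mathrm{diag}}_\pm=\lambda_\pm$ and $D_\pm(1+(\beta^{\mathrm{diag}}_\pm)^2)=\tau_k+\lambda_\pm$, with the trace/c-number term rationalizing to $\mathcal{S}$. The only difference is order of operations — you rotate the operators first and then solve scalar equations, while the paper writes the matrix conditions $2\mathcal{D}\beta=\mathcal{B}$, $\mathcal{D}+\beta\mathcal{D}\beta=\mathcal{A}$ first and then diagonalizes them — which is purely organizational.
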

\begin{remark}\label{rem:extensionMspecies}
The values $\lambda_{\pm}$ represent the eigenvalues of the matrix $\mathcal{B}$, as explained below in the proof. One of the problems related to the extension of the results of this paper to $M>2$ species of bosons is the possible difficulty coming from the solution of the eigenvalues problem for a matrix of $M\times M$ entries. If it is impossible to express the eigenvalues in a closed form, a possible solution may be to simply state their existence and trying to estimate their asymptotic behavior, using it for the approximation of the constant in front of the second order of the energy expansion.
\end{remark}
\begin{proof}
By their definition, we see that by diagonalizing $\mathcal{B}$, we diagonalize $\mathcal{A}$ as well. The matrix $\mathcal{B}$ is real symmetric, therefore there exists an orthogonal matrix $U \in \mathcal{O}(2)$ such that 
\begin{equation}
U \mathcal{B} U^* = \begin{pmatrix}
\lambda_+ & 0 \\
0 & \lambda_-
\end{pmatrix} = \mathcal{B}_{\mathrm{diag}}, \qquad U \mathcal{A} U^* = \begin{pmatrix}
\tau_k + \lambda_+ & 0 \\
0 & \tau_k +\lambda_-
\end{pmatrix} = \mathcal{A}_{\mathrm{diag}},
\end{equation}
where $\lambda_{\pm}$ have been defined in the lemma. From the diagonalization algorithm, we can also find the expression of the $U$
\begin{equation}\label{unit:diagonal}
U = \frac{1}{\sqrt{\lambda_+-\rho_{z_B}\widehat{g}_B + \rho_{z_A}\rho_{z_B} \widehat{g}_{AB}^2}} \begin{pmatrix}
\lambda_+ -\rho_{z_B} \widehat{g}_B & \sqrt{\rho_{z_A}\rho_{z_B}} \widehat{g}_{AB}\\
\sqrt{\rho_{z_A}\rho_{z_B}} \widehat{g}_{AB} & \lambda_- -\rho_{z_A}\widehat{g}_A\end{pmatrix}.
\end{equation}
Recalling that $c_k = (a_k, b_k)=:(c^{(1)}_k, c_k^{(2)})$, we introduce new vectors of creation and annihilation operators and the operators $\mathcal{D}$ and $\beta$ to be determined later such that 
\begin{equation}
d_p = c_p + \beta_p \cdot c^*_p,
\end{equation}
and 
\begin{align}
d^* \cdot \mathcal{D} d  &= c^* \cdot \mathcal{D} c + c^* \cdot \beta \mathcal{D} c^* + c \cdot\mathcal{D} \beta c + c \cdot \beta \mathcal{D} \beta c^* \nonumber \\
&= c^* \cdot(\mathcal{D} + \beta \mathcal{D} \beta) c + c^* \cdot \beta \mathcal{D} c^* + c \cdot \mathcal{D} \beta c + \mathrm{Tr}(\beta \mathcal{D}\beta)\nonumber \\
&= c^*\cdot \mathcal{A} c  + \frac{1}{2} \Big(c^*\cdot \mathcal{B} c^* + c \cdot \mathcal{B} c\Big)  + \mathrm{Tr}(\beta \mathcal{D}\beta), \label{eq:ABDalpha}
\end{align}
where between the first and second line we used the canonical commutation relations of $a,b$ such that $[c^{(j)}_p , c^{(k)*}_q] = \delta_{j,k}\delta_{p,q}$ for $j,k \in \{A,B\}$.

The condition \eqref{eq:ABDalpha} imposes the following equations
\begin{equation}\label{condition:ABDalpha}
2 \mathcal{D}\beta = \mathcal{B}, \qquad \mathcal{D} + \beta \mathcal{D} \beta = \mathcal{A}.  
\end{equation}
From the first equation we can substitute $\beta = \frac{1}{2} \mathcal{B}\mathcal{D}^{-1}$ in the second equation, assuming $\mathcal{D}$ is a positive definite matrix,
\begin{equation}
4 \mathcal{D}^3 - 4 \mathcal{D} \mathcal{A} \mathcal{D} + \mathcal{B} \mathcal{D} \mathcal{B} = 0. 
\end{equation}
Applying the orthogonal transformation $U$ we obtain an equation for diagonal matrices with incognita $\mathcal{D}_{diag} := U \mathcal{D} U^*$, which gives the solution \eqref{eq:Ddiagexpressions}.

Conditions \eqref{condition:ABDalpha} implies also that $\beta$ is diagonalizable by the same transformation: $\beta_{\mathrm{diag}} := U \beta U^*$ which gives expression \eqref{eq:Bdiagexpressions}.
This characterizes the solutions of \eqref{eq:ABDalpha}, and inserting them in the equation we obtain
\begin{equation}
c^*\cdot  \mathcal{A} c  + \frac{1}{2} \Big(c^* \cdot \mathcal{B} c^* + c\cdot \mathcal{B} c\Big)   = d^* \cdot \mathcal{D} d  -\mathrm{Tr}( \mathcal{D}_{\mathrm{diag}}\beta^2_{\mathrm{diag}}),
\end{equation} 
where we used the ciclicity of the trace by the action of $U$.
By the fact that 
\begin{equation}
\mathrm{Tr}( \mathcal{D}_{\mathrm{diag}}\beta^2_{\mathrm{diag}}) = \sum_{k \in \Lambda^*}\Big( \tau_k + \frac{1}{2}(\lambda_+(k) + \lambda_-(k)) - \frac{1}{2} \Big(\sqrt{\tau_k^2  +2\lambda_+(k) \tau_k} +\sqrt{\tau_k^2  +2\lambda_-(k) \tau_k}  \Big)\Big)
\end{equation}
and that $\mathcal{D}$ is positive definite, we obtain the result.
\end{proof}

\section{Calculation of the Bogoliubov integral}\label{app:calcBogint}

In this section we show how the sum obtained in the diagonalization of the Bogoliubov Hamiltonian gives the Lee-Huang-Yang type integral which corresponds to the desired second order correction to the energy. Here $\rho_{z_A},\rho_{z_B}>0$ are positive parameters such that $\rho_z:= \rho_{z_A}+\rho_{z_B} \leq K_z \rho$, and 
\[K_z=(\rho \bar{a}^3)^{-\nu}, \qquad K_{\ell}= (1000C)^{-1}(\rho \bar{a}^3)^{-2\eta},\] 
for $\nu, \eta >0$. We introduce $\tau_k \in \{k^2, k^2-\frac{\pi}{2\ell^2}-\frac{K_H}{\ell^2}\one_{k \in \mathcal{P}_H}\}$ and
\begin{align}
\mathcal{S}_0 &= \mathcal{S} + Z_0,\\
\mathcal{S} &= \frac{1}{2}\sum_{k \in\Lambda^*} \Big(\sqrt{\tau_k^2  +2\lambda_+(k) \tau_k} +\sqrt{\tau_k^2  +2\lambda_-(k) \tau_k}   - 2 \tau_k -\lambda_+(k) - \lambda_-(k) \Big) ,\\
Z_0 &= \sum_{k \in \Lambda^*}\frac{\rho_{z_A}^2 \widehat{g}_A^2(k)+ 2 \rho_{z_A}\rho_{z_B} \widehat{g}_{AB}^2(k) + \rho_{z_B}^2 \widehat{g}_B^2(k)}{4 \tau_k} = \sum_{k \in \Lambda^*}\frac{\lambda_+^2(k)+\lambda_-^2(k)}{4\tau_k},
\end{align}
where $\lambda_{\pm}$ are defined in \eqref{def:lambdapm}.

\begin{lemma}\label{lem:calcBogInt}
There exists a constant $C>0$ such that the following equivalence holds:
\begin{align*}
&\mathcal{S}_0 =  |\Lambda|(\rho_{z_A}^2 a_A^2 + 2 \rho_{z_A} \rho_{z_B} a_{AB}^2 + \rho_{z_B}^2 a_B^2)^{5/4} I_{AB}(\rho_{z_A},\rho_{z_B})+\mathcal{E},\\
&I_{AB}(\rho_{z_A},\rho_{z_B}) = (8\pi)^{5/2}\frac{2\sqrt{2}}{15\pi^2} (\mu_+^{5/2} (\rho_{z_A},\rho_{z_B})+ \mu_-^{5/2}(\rho_{z_A},\rho_{z_B})),
\end{align*}
with $\mu_{\pm} =\mu_{\pm}(\rho_{z_A},\rho_{z_B})$ as defined in \eqref{appdef:mupm}, and we estimate in two different ways the error term $\mathcal{E}$, for the upper bound:
\begin{align*}
|\mathcal{E}| \leq C(\rho_z \bar{a})^{5/2} \ell^3 &\Big(\big(\ell^{-1} (\rho_z \bar{a})^{-1/2}\log(\ell \bar{a}^{-1}) +  (\rho_z \bar{a}^3)^{2\eta} \\
&\quad + R^2(\rho_z \bar{a})(\rho_z \bar{a}^3)^{-6\eta}\big) \one_{\eta\neq 0} +  \one_{\eta = 0}\Big),
\end{align*}
and for the lower bound:
\begin{equation*}
|\mathcal{E}| \leq 
C (\rho \bar{a})^{5/2}|\Lambda| \big( R^2 K_z^{7/2} K_{\ell}^3 \rho \bar{a} + K_z^{5/2}K_{\ell}^{-1}\big)\one_{\eta\neq 0}  + \frac{1}{2}\mathcal{G}_{\eta},
\end{equation*}
the expression for $\mathcal{G}_{\eta}$ being \eqref{eq:Gconv(z)}.
\end{lemma}

\begin{proof}
We introduce the function 
\begin{equation}
G(x,y) = \sqrt{x^2+2xy} -x-y  + \frac{y^2}{2x}, \qquad x>0,\quad y \geq -\frac{1}{2}x,
\end{equation}
and observe that 
\begin{equation}
\mathcal{S}_0 = \frac{1}{2}\sum_{p \in \Lambda^*}  \Big( G(\tau_p, \lambda_+(p)) +  G(\tau_p, \lambda_-(p))\Big),
\end{equation}
and we want to approximate it by 
\begin{equation}
\mathcal{S}_{AB}:= \frac{\ell^3}{2(2\pi)^3}\int_{\mathbb{R}^3}\,\Big(G(k^2, \lambda_+(k)) +  G(k^2, \lambda_-(k))\Big)\mathrm{d}k.
\end{equation}
We distinguish the two cases $\eta =0$ and $\eta >0$. If $\eta = 0$, we observe that both $|\mathcal{S}_0|, |\mathcal{S}_{AB}|\leq C (\rho_z \bar{a})^{5/2}\ell^3$, and therefore
\begin{equation}
|\mathcal{S}_0 - \mathcal{S}_{AB}| \leq |\mathcal{S}_0| + |\mathcal{S}_{AB}| \leq C(\rho_z\bar{a})^{5/2} \ell^3 \leq K_{\ell}^{-1} (\rho_z\bar{a})^{5/2} \ell^3,
\end{equation}
which can be bounded by $ \frac{1}{2} \mathcal{G}_{\eta}$.
For the case $\eta > 0$, we have the bound
\begin{equation}
|\partial_x G(x,y)| \leq \begin{cases}
C \sqrt{\frac{|y|}{x}}, \quad &\text{ if } x \leq 2|y|,\\
C\frac{y^2}{x^2}, \quad &\text{ if } x> 2|y|.
\end{cases}
\end{equation}
We split the integral in boxes $B_p$ centered at $p$ and of size $\frac{\pi}{\ell}$,
\begin{align*}
|\mathcal{S}_0 - \mathcal{S}_{AB}| &\leq C|\Lambda|\sum_{\pm}\sum_{p \in \Lambda^*} \int_{B_p} \mathrm{d}k\, |G(\tau_p,\lambda_{\pm}) - G(k^2,\lambda_{\pm})|\\
&\leq C \sum_{\pm}\sum_{p \in \Lambda^*} |\partial_x G(p^2,\lambda_{\pm}) | \frac{|p|}{\ell},
\end{align*}
where we used that $|\tau_p - k^2| \leq C \frac{|p|}{\ell}$. Splitting the sum in momenta on the intervals $|p| \leq 2 \sqrt{\rho_z \bar{a}}, 2\sqrt{\rho_z \bar{a}} < |p| < \bar{a}^{-1}, |p| \geq a^{-1}$, we have the following bounds
\begin{itemize}
\item $|p|\leq 2\sqrt{\rho_z \bar{a}}$:
\begin{equation}
 C \sum_{\pm}\sum_{|p|\leq 2 \sqrt{\rho_z \bar{a}}} |\partial_x G(p^2,\lambda_{\pm}) | \frac{|p|}{\ell} \leq \frac{C}{\ell}|\Lambda| (\rho_z \bar{a})^2, 
\end{equation}
\item $2\sqrt{\rho_z \bar{a} }< |p| < \bar{a}^{-1}$,
\begin{equation}
 C \sum_{\pm}\sum_{2\sqrt{\rho_z \bar{a}} < |p| < \bar{a}^{-1}} |\partial_x G(p^2,\lambda_{\pm}) | \frac{|p|}{\ell} \leq \frac{C}{\ell}|\Lambda| (\rho_z \bar{a})^2 \log(\ell \bar{a}^{-1}), 
\end{equation}
\item $|p| \geq \bar{a}^{-1}$: we use that $\lambda_+^2(p) + \lambda_-^2(p)=\frac{1}{2}(\rho_{z_A}^2\widehat{g}_A^2(p) + 2\rho_{z_A}\rho_{z_B} \widehat{g}_{AB}^2(p) +\rho_{z_B}^2\widehat{g}_B^2(p)) $ and Lemma \ref{lem:reconstructiongomega} to reconstruct the $\widehat{g\omega}$'s to bound
\begin{equation}
C \sum_{\pm}\sum_{|p| \geq  \bar{a}^{-1}} |\partial_x G(p^2,\lambda_{\pm}) | \frac{|p|}{\ell} \leq  \frac{C}{\ell} |\Lambda|(\rho_z \bar{a})^2.
\end{equation}
\end{itemize}

All together they finally give, for $\eta >0$,
\begin{equation}
|\mathcal{S}_0 - \mathcal{S}_{AB}| \leq \frac{C}{\ell}|\Lambda| (\rho_z \bar{a})^2 \log(\ell \bar{a}^{-1}),
\end{equation}
which, by the definition of $K_z$, is smaller than the error in the statement of the lemma.
Now we want to approximate $\mathcal{S}_{AB}$ with $\mathcal{T}_0$:
\begin{equation}\label{def:T0}
\mathcal{T}_{0} = \frac{|\Lambda|}{2(2\pi)^3} \int_{\mathbb{R}^3} \mathrm{d}k\;\Big(G(k^2,\lambda_+(0))+G(k^2,\lambda_-(0))\Big).
\end{equation}
By a similar strategy we split the integration into boxes and use the bound
\begin{equation}
|\partial_y G(x,y)| \leq \begin{cases}
C \frac{|y|}{x}, \quad &\text{ if } x \leq 2|y|,\\
C\frac{y^2}{x^2}, \quad &\text{ if } x> 2|y|,
\end{cases}
\end{equation}
and $|g_{\#}(k)-g_{\#}(0)|\leq CR^2\widehat{g}_{\#}(0)k^2$ for $\# \in \{A,B,AB\}$, giving
\begin{align*}
|\mathcal{T}_{0}-\mathcal{S}_{AB}| &\leq C|\Lambda|\sum_{\pm}\sum_{p \in \Lambda^*} \int_{B_p} \mathrm{d}k\, |\partial_y G(k^2,\lambda_{\pm}(0))| |\lambda_{\pm}(k) - \lambda_{\pm}(0)| \\
&\leq C\sum_{\substack{|p| \leq K_{\ell}\sqrt{\rho_z\bar{a}}\\\pm}} \frac{\rho_z^2 \bar{a}}{|p|^2} |\lambda_{\pm}(p)- \lambda_{\pm}(0) | + C\sum_{|p| > K_{\ell}\sqrt{\rho_z\bar{a}}} \frac{(\rho_z \bar{a})^2}{|p|^4} \rho_z \bar{a} \\
&\leq C  R^2  K_{\ell}^3 (\rho_z \bar{a})^{7/2}|\Lambda| + C (\rho_z \bar{a})^{5/2} K_{\ell}^{-1} |\Lambda|.
\end{align*}

We are left with 
\begin{align}
\mathcal{T}_{0} = \frac{|\Lambda|}{2(2\pi)^3} \int_{\mathbb{R}^3} \mathrm{d}k\;&\Big(\sqrt{k^4  +2\lambda_+(0) k^2} +\sqrt{k^4  +2\lambda_-(0) k^2}  \nonumber\\
&\quad - 2 k^2 -\lambda_+(0) - \lambda_-(0)  + \frac{\lambda_+^2(0) + \lambda_-^2(0)}{2 k^2}\Big).\label{def:Tab}
\end{align}
We now change variable $k \mapsto  (\lambda_+^2(0)+\lambda_-^2(0))^{-1/2}k$ in the integral, introducing
\begin{align}
\mu_{\pm}&:=\frac{\lambda_{\pm}(0)}{\lambda_+^2(0)+\lambda_-^2(0)} =  \frac{1}{2} \big( \sqrt{1+\xi_{AB}} \pm \sqrt{1-\xi_{AB}}\big), \label{appdef:mupm}\\
\xi_{AB} &:= \frac{2 \rho_A \rho_B (a_A a_B - a_{AB}^2)}{\rho_A^2 a_A^2 + 2\rho_A \rho_B a_{AB}+ \rho_B^2 a_B^2},\label{appdef:xiab}
\end{align}
we get $\mathcal{T}_{0} =(\rho_A^2 a_A^2 + 2 \rho_A \rho_B a_{AB}^2 + \rho_B^2 a_B^2)^{5/4} I_{AB}$, with $I_{AB}$ defined below, and using that $1 = \mu_+^2+\mu_-^2$,
\begin{align}
I_{AB}&:=  \frac{ (8\pi)^{5/2}}{2(2\pi)^3} \int_{\mathbb{R}^3} \mathrm{d} k \,  \Big(\sqrt{k^4  +2\mu_+ k^2} +\sqrt{k^4  +2\mu_- k^2} + \frac{1}{2k^2}- 2k^2 - (\mu_+^2 + \mu_-^2)\Big)\nonumber\\
&= \frac{(8\pi)^{5/2}2\sqrt{2}}{15\pi^2} (\mu_+^{5/2} + \mu_-^{5/2}).\label{eq:intcalcmupm}
\end{align}
\end{proof}

\section{Localization of large matrices}\label{app:largematrix}
We isolate the parts of the Hamiltonian $d_{i,j}$ which change the action of $n_+^{AL}$ and $n_+^{BL}$ on the states by $i,j \in \{1,2\}$, respectively:
\begin{align*}
d_{1,0} &:= \frac{1}{2} \sum_{i \neq j } Q_i^{AL} (1-Q_j^{AL}) v_{A}(x_i-x_j) [Q_i^{AL} Q_j^{AL} + (1-Q_i^{AL}) (1-Q_j^{AL})] + h.c.\\
&\quad  + \sum_{i=1}^{n} \sum_{j=1}^{m} Q_i^{AL} (1-Q_j^{BL})v_{AB}(x_i-y_j) (1-Q_i^{AL}) (1-Q_j^{BL})+ h.c.,\\
d_{2,0} &:= \frac{1}{2}\sum_{i \neq j} Q_i^{AL} Q_j^{AL} v_A(x_i-x_j) (1-Q_i^{AL}) (1-Q_j^{AL}) + h.c.,\\
d_{0,1} &:= d_{1,0}[A \leftrightarrow B]; \quad d_{0,2} := d_{2,0}[A \leftrightarrow B]; \quad d_{1,2} = 0 = d_{2,1};\\
d_{1,1} &:=  \sum_{i=1}^{n} \sum_{j=1}^{m} Q_i^{AL} Q_j^{BL}v_{AB}(x_i-y_j) (1-Q_i^{AL}) (1-Q_j^{BL})+ h.c. \\
&\quad +\sum_{i=1}^{n} \sum_{j=1}^{m} Q_i^{AL} (1-Q_j^{BL}) v_{AB}(x_i-y_j) (1-Q_i^{AL})Q_j^{BL}  + h.c. ,
\end{align*}
where we denoted by $d_{i,j}[A \leftrightarrow B]$ the term $d_{i,j}$, where we swapped the $A$'s with the $B$'s.

The next lemma shows that we can restrict the action of the Hamiltonian to states with a bounded number of low-momenta excitations of type $A$ and $B$, with an error dependent on the $d_{i,j}$.

\begin{lemma}\label{lem:locLargeMatrix}
Let $\theta: \mathbb{R}^2 \rightarrow [0,1]$ be a smooth, compactly supported function such that $\theta(s,t) = 1$, if $|s|,|t|\leq \frac{1}{8}$, and $\theta(s,t) = 0$, if $|s| > \frac{1}{4}$ or $|t| > \frac{1}{4}$, and define, for $c,\mathcal{M} >0$,
\begin{equation}\label{def:thetatilde}
\tilde{\theta}(s,t) = c\, \theta \Big( \frac{s}{\mathcal{M}}, \frac{t}{\mathcal{M}}\Big), \qquad \sum_{(s,t) \in \mathbb{Z}^2} \tilde{\theta}(s,t)^2 = 1.
\end{equation}
Then, there exists a constant $C >0$, depending only on $\theta$, such that, for any normalized $\Psi$,
\begin{equation}
\langle \Psi, H_{n,m} \Psi\rangle  \geq   \sum_{(s,t) \in \mathbb{Z}^2}\langle \Psi_{(s,t)}, H_{n,m} \Psi_{(s,t)} \rangle -  \frac{C}{\mathcal{M}^2}\sum_{h,k = 0,1,2} \langle \Psi, d_{h,k}\Psi\rangle,
\end{equation}
where $\Psi_{(s,t)} = \tilde{\theta}(n_+^{AL}-s, n_+^{BL}-t)\Psi$.
\end{lemma}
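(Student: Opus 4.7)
The plan is to implement an IMS-type localization argument, treating the two commuting number operators $n_+^{AL}, n_+^{BL}$ simultaneously. Since $\sum_{(s,t)} \tilde\theta(s,t)^2 = 1$ by construction, the operators $\pi_{s,t} := \tilde\theta(n_+^{AL}-s, n_+^{BL}-t)$, being functions of commuting number operators, form a mutually commuting partition of unity: $\sum_{(s,t)} \pi_{s,t}^2 = \mathrm{Id}$. From the elementary identity $\pi_{s,t} H_{n,m} \pi_{s,t} = \frac{1}{2}(\pi_{s,t}^2 H_{n,m} + H_{n,m} \pi_{s,t}^2) - \frac{1}{2}[\pi_{s,t},[\pi_{s,t}, H_{n,m}]]$, summed over $(s,t)$, I get
\[
\langle \Psi, H_{n,m}\Psi\rangle = \sum_{(s,t)}\langle \Psi_{(s,t)}, H_{n,m}\Psi_{(s,t)}\rangle + \tfrac{1}{2}\sum_{(s,t)}\langle \Psi, [\pi_{s,t},[\pi_{s,t}, H_{n,m}]]\Psi\rangle,
\]
so it suffices to bound the remainder double-commutator sum from below by $-\frac{C}{\mathcal{M}^2}\sum_{h,k}\langle\Psi, d_{h,k}\Psi\rangle$.

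The next step is to decompose $H_{n,m}$ by its action on $(n_+^{AL}, n_+^{BL})$. The kinetic operator is diagonal in the Neumann momentum basis used to define $Q^{AL}, Q^{BL}$, so it commutes with both number operators and drops out of the double commutator. For the interaction part, I insert $1 = P^A + Q^{AL} + Q^{AH}$ on each $A$-position and $1 = P^B + Q^{BL} + Q^{BH}$ on each $B$-position, on both sides of each pairwise potential. Each resulting monomial has a definite shift $\sigma = (h,k) \in \{-2,\dots,2\}^2$ of $(n_+^{AL}, n_+^{BL})$, since each pair potential involves only two particles. Denoting such a monomial by $A_\sigma$, the fact that $\pi_{s,t}$ is a function of $n_+^{AL}, n_+^{BL}$ yields
\[
[\pi_{s,t},[\pi_{s,t}, A_\sigma]] = \bigl(\tilde\theta(n_+^{AL}+h-s, n_+^{BL}+k-t) - \tilde\theta(n_+^{AL}-s, n_+^{BL}-t)\bigr)^2 A_\sigma,
\]
because the first commutator produces a multiplicative function of $n_+^{AL}, n_+^{BL}$ that commutes with the second $\pi_{s,t}$.

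Summing over $(s,t) \in \mathbb{Z}^2$ and shifting $(u,v) = (n_+^{AL}-s, n_+^{BL}-t)$, the prefactor collapses by translation invariance of the sum over the full lattice to the scalar constant
\[
M_\sigma = \sum_{(u,v) \in \mathbb{Z}^2}\bigl(\tilde\theta(u+h, v+k) - \tilde\theta(u,v)\bigr)^2.
\]
The two-dimensional normalization $\sum_{(s,t)}\tilde\theta(s,t)^2 = 1$ forces $c^2 \sim \mathcal{M}^{-2}$, so combining the uniform estimate $|\tilde\theta(u+h, v+k) - \tilde\theta(u,v)| \leq c(|h|+|k|)\|\nabla\theta\|_\infty/\mathcal{M}$ with the $O(\mathcal{M}^2)$ bound on the support of the difference yields $0 \leq M_\sigma \leq C(h^2+k^2)/\mathcal{M}^2$.

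Finally, I pair each shift-$\sigma$ monomial with its Hermitian conjugate $A_\sigma^* = A_{-\sigma}$. By the same change of variables $M_\sigma = M_{-\sigma}$, so the combined contribution is the scalar $M_\sigma$ times the self-adjoint combination $A_\sigma + A_\sigma^*$; regrouping all projection monomials with the same magnitudes $(|h|,|k|) \in \{0,1,2\}^2 \setminus \{(0,0)\}$ produces exactly the self-adjoint operators $d_{h,k}$ from the excerpt, with a nonnegative scalar $\lambda_{h,k} \leq C/\mathcal{M}^2$ in front. Taking expectations and applying this uniform bound delivers the claimed lower bound. The main obstacle is the final combinatorial check: one must verify that the many projection-labelled pieces of $v_A$, $v_B$ and $v_{AB}$ surviving the double commutator reassemble into precisely the self-adjoint operators $d_{h,k}$ with the correct $\tfrac{1}{2}$'s and the correct split between intra- and inter-species contributions, so that the uniform bound on $M_\sigma$ translates to the stated bound in terms of the $d_{h,k}$; the key simplifying fact is that, after summing $(s,t)$ over all of $\mathbb{Z}^2$, the error multiplier $M_\sigma$ is a true scalar rather than an operator and therefore commutes with everything.
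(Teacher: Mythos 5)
Your proposal is correct and is essentially the paper's own argument in IMS clothing: the double commutator $\tfrac12\sum_{s,t}[\pi_{s,t},[\pi_{s,t},A_\sigma]]$ collapses, by translation invariance of the $\mathbb{Z}^2$-sum, to exactly the scalar $-\delta_{(h,k)}=\tfrac12\sum_{u,v}(\tilde\theta(u,v)-\tilde\theta(u-h,v-k))^2$ that the paper extracts by directly expanding $H_{n,m}=\sum_{h,k}\mathcal{H}^{(h,k)}$ and computing $\sum_{s,t}\tilde\theta(s,t)\tilde\theta(s-h,t-k)$. The regrouping into the $d_{h,k}$ and the bound $|\delta_{(h,k)}|\leq C\mathcal{M}^{-2}$ via $c\leq C\mathcal{M}^{-1}$ and the gradient of $\theta$ also match the paper's proof.
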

\begin{proof}
We follow the strategy in \cite[Lemma 4.3]{freeEnCPHM} adapting it to the mixture of two types of bosons.
We can write the Hamiltonian as $H_{n,m} = \sum_{h,k \in \mathbb{Z},|h|,|k| \leq 2} \mathcal{H}^{(h,k)}$, where $\mathcal{H}^{(h,k)}$ is composed by the terms which change $(n_+^{AL},n_+^{BL})$ by $(h,k)$, that is,
\begin{equation}\label{eq:propn+Hk}
\mathcal{H}^{(h,k)} n_+^{AL} = (n_+^{AL} + h) \mathcal{H}^{(h,k)}, \qquad \mathcal{H}^{(h,k)} n_+^{BL} = (n_+^{BL} + k) \mathcal{H}^{(h,k)},  
\end{equation}
It is easy to see that
\begin{equation}\label{eq:sumd=Hk}
d_{h,k} = \sum_{\sigma, \tau \in \{\pm 1\} } \mathcal{H}^{(\sigma h, \tau k)}.
\end{equation}

Furthermore, by property \eqref{eq:propn+Hk}, we have that
\begin{align}
&\sum_{(s,t) \in \mathbb{Z}^2} \langle \Psi_{(s,t)}, \mathcal{H}^{(h,k)} \Psi_{(s,t)}\rangle  \nonumber \\
&=  \Big\langle \Psi, \sum_{(s,t) \in \mathbb{Z}^2} \tilde{\theta}(n_+^{AL}-s,n_+^{BL}-t) \tilde{\theta}(n_+^{AL}-s+ h,n_+^{BL}-t+ k)\mathcal{H}^{(h,k)}\Psi\Big\rangle \nonumber\\
& =  \sum_{(s,t) \in \mathbb{Z}^2} \tilde{\theta}(s,t) \tilde{\theta}(s - h,t- k) \langle \Psi, \mathcal{H}^{(h,k)} \Psi \rangle, \label{eq:thetaHhk}
\end{align}
where we obtained the last line since $\sum_{(s,t) \in \mathbb{Z}^2} \tilde{\theta}(z-s,w-t) \tilde{\theta}(z-s+h,w-t+k)$ is constant for any $z,w \in \mathbb{Z}$. Now we use the following equivalence 
\begin{align}
\delta_{(h,k)} :&= \sum_{(s,t) \in \mathbb{Z}^2}\big( \tilde{\theta}(s,t) \tilde{\theta}(s-h,t-k) - \tilde{\theta}(s,t)^2 \big)\nonumber\\
&= - \frac{1}{2}\sum_{(s,t) \in \mathbb{Z}^2} \big(\tilde{\theta}(s,t) - \tilde{\theta}(s-h,t-k)\big)^2, \label{def:deltahk}
\end{align}
and that $\sum_{(s,t)\in \mathbb{Z}^2} \tilde{\theta}(s,t)^2 =1$, together with \eqref{eq:thetaHhk} to obtain
\begin{align*}
\langle \Psi, H_{n,m} \Psi\rangle  &= \sum_{(s,t) \in \mathbb{Z}^2}\langle \Psi_{(s,t)}, H_{n,m} \Psi_{(s,t)} \rangle - \sum_{h,k=0,\pm 1,\pm 2} \langle\Psi, \delta_{(h,k)} \mathcal{H}^{(h,k)}  \Psi \rangle\\
&= \sum_{(s,t) \in \mathbb{Z}^2}\langle \Psi_{(s,t)}, H_{n,m} \Psi_{(s,t)} \rangle  -  \sum_{h,k =0,1,2} \delta_{(h,k)} \langle \Psi, d_{h,k} \Psi\rangle,
\end{align*} 
where we used \eqref{eq:sumd=Hk} to obtain the last line. It remains to bound $\delta_{(h,k)}$. We observe that the sum in \eqref{def:deltahk} can be restricted to $|s|,|t| \leq \mathcal{M}/2$ due to the compact support of $\theta$. By the intermediate value theorem, and the construction of $\tilde{\theta}$, we get, since $|h|,|k|\leq 2$,
\begin{equation*}
|\delta_{(h,k)}| \leq \frac{c^2}{2} \sum_{|s|,|t|\leq \mathcal{M}/2} \Big(\frac{|h|}{\mathcal{M}} \|\partial_s\theta\|_{\infty} +\frac{|k|}{\mathcal{M}} \|\partial_t \theta\|_{\infty}\Big)^2 \leq  \frac{C}{\mathcal{M}^2} ,
\end{equation*}
where by \eqref{def:thetatilde}, we used the bound $c \leq C\mathcal{M}^{-1}$.
\end{proof}

The lemma below gives a bound on the $d_{h,k}$ terms.
\begin{lemma}\label{lem:estimdAB}
There exists a constant $C>0$ such that the following inequality holds
\begin{align*}
\sum_{h,k\in \{0,1,2\}}d_{h,k} &\leq \sum_{i \neq j} \Big(v_{A}(x_i-x_j)+v_B(x_i-x_j) + v_{AB}(x_i-x_j)\Big) \\
&\quad + C \frac{K_H^3}{\ell^3} \Big(\lVert v_A\rVert_1 n n_+^A + \lVert v_B\rVert_1 m n_+^B + \lVert v_{AB}\rVert_1 \big(n n_+^B +m n_+^A  \big)\Big).
\end{align*}
\end{lemma}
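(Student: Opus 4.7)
The plan is to bound $\sum_{h,k} d_{h,k}$ by an explicit comparison with the full two-body potential, plus an error controlled via the Schur test for the low-momentum projectors.

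First, I would expand each pair interaction $v_A,v_B,v_{AB}$ using the resolution $1=Q^{\bullet L}+(1-Q^{\bullet L})$ on each particle to produce a decomposition into pieces $LvR$, labelled by the change they induce in $(n_+^{AL},n_+^{BL})$, denoted $\mathcal{H}^{(h,k)}|_v$. A direct but careful bookkeeping of the definitions shows
\[
\sum_{h,k\in\{0,1,2\}} d_{h,k}
 \;=\; \tfrac{1}{2}\!\sum_{i\neq j}v_A \;+\; \tfrac{1}{2}\!\sum_{i\neq j} v_B \;+\; \sum_{i,j} v_{AB} \;-\; X,
\]
where $X$ collects the specific off-diagonal contributions to $\mathcal{H}^{(\pm 1,0)}|_{v_{AB}}$ and $\mathcal{H}^{(0,\pm 1)}|_{v_{AB}}$ in which the ``passive'' particle sits in $Q^{\bullet L}$ instead of $1-Q^{\bullet L}$; explicitly,
\[
X = \sum_{i,j}\bigl( Q_i^{AL}Q_j^{BL}\,v_{AB}\,(1-Q_i^{AL})Q_j^{BL} + Q_i^{AL}Q_j^{BL}\,v_{AB}\,Q_i^{AL}(1-Q_j^{BL})\bigr) + \mathrm{h.c.}
\]
The prefactors $\tfrac12$ are compatible with the RHS of the lemma, which uses the (larger by a factor two) symmetric $\sum_{i\neq j}$.

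Second, I would control $-X$ via the operator Cauchy--Schwarz inequality $AB+B^*A^*\leq\varepsilon AA^*+\varepsilon^{-1}B^*B$, applied with $A=Q_i^{AL}Q_j^{BL}v_{AB}^{1/2}$ and $B=v_{AB}^{1/2}(1-Q_i^{AL})Q_j^{BL}$ (and symmetrically for the second term). The $AA^*$ piece equals $Q_i^{AL}Q_j^{BL}\,v_{AB}\,Q_i^{AL}Q_j^{BL}$, which lies in $\mathcal{H}^{(0,0)}|_{v_{AB}}\leq\sum_{i,j}v_{AB}$ and is absorbed into the RHS for a suitable constant $\varepsilon$ of order one. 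The $B^*B$ piece becomes the error.

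Third, the error is estimated by a Schur-type bound in the Neumann basis. Since $Q^{BL}$ projects onto $|\mathcal{P}_L|=K_H^3$ Neumann eigenfunctions $\{v_q\}_{q\in\mathcal{P}_L}$ and the matrix elements $\langle v_q,v_{AB}(x-\cdot)v_{q'}\rangle$ are bounded by $\|v_{AB}\|_{L^1}/|\Lambda|$ uniformly in $x$, the Schur test yields $\|Q^{BL}v_{AB}(x-\cdot)Q^{BL}\|_{\mathrm{op}}\leq K_H^3\|v_{AB}\|_1/\ell^3$. Using that $\sum_j L_j\leq \|L\|_{\mathrm{op}}\cdot n_+^{BL}$ for any positive one-body $L$ supported on $\mathrm{Ran}(Q^{BL})$ and $\sum_i(1-Q_i^{AL})\leq n$, one obtains
\[
\sum_{i,j}(1-Q_i^{AL})Q_j^{BL}\,v_{AB}(x_i-y_j)\,(1-Q_i^{AL})Q_j^{BL}\;\leq\;\frac{K_H^3}{\ell^3}\|v_{AB}\|_1\,n\,n_+^B,
\]
and symmetrically $\leq\tfrac{K_H^3}{\ell^3}\|v_{AB}\|_1\,m\,n_+^A$ for the other piece of $X$. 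The $\|v_A\|_1 n n_+^A$ and $\|v_B\|_1 m n_+^B$ terms arise by the same Cauchy--Schwarz-plus-Schur procedure applied to the remaining off-diagonal pieces of $d_{1,0}|_{v_A}$ and $d_{0,1}|_{v_B}$ left after the factor-$\tfrac12$ symmetrization, whose diagonal part has matrix elements again bounded by $\|v_\bullet\|_1/|\Lambda|$ times $|\mathcal{P}_L|=K_H^3$.

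The main obstacle is the bookkeeping in the first step: the asymmetric structure of the $A$--$B$ interaction, combined with the $\tfrac12$ normalization for same-species pair sums, forces a careful accounting of which off-diagonal pieces are missing from the $d_{h,k}$. A second delicate point is the Schur bound in the Neumann (rather than plane-wave) basis, where the matrix elements $\langle v_q | v | v_{q'}\rangle$ must be controlled by $\|v\|_1/|\Lambda|$ up to combinatorial factors arising from the $\sigma_k$-normalization; this is why the error scales precisely as $K_H^3/\ell^3$ times $\|v\|_1$ times a product of the particle and excited-particle numbers, and no larger.
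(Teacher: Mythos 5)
Your strategy rests on the same two ingredients as the paper's proof --- the operator Cauchy--Schwarz inequality $AB+B^*A^*\le AA^*+B^*B$ to diagonalize the cross terms, and the rank/Schur bound $\sum_j Q_j^{\bullet L}\,v(x-\cdot)\,Q_j^{\bullet L}\le C K_H^3\ell^{-3}\|v\|_1\, n_+^{\bullet}$ coming from $|\mathcal P_L|=K_H^3$ and $|\langle u_q,v(x-\cdot)u_{q'}\rangle|\le C\|v\|_1/|\Lambda|$ --- but you organize them around an algebraic identity, whereas the paper simply bounds each $d_{h,k}$ from above termwise and never needs an identity. Your route is workable, but two steps need repair. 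First, the identity you state is not exact: besides the off-diagonal $v_{AB}$-pieces you collect in $X$, the sum $\sum_{h,k}d_{h,k}$ is also missing the entire diagonal sector $\mathcal H^{(0,0)}$ (all pieces $L\,v\,R$ with the same number of $Q^{\bullet L}$'s on the left and on the right, for all three potentials), since no $d_{0,0}$ is defined. The correct statement is, up to harmless multiplicative factors, $\sum_{h,k}d_{h,k}\asymp \tfrac12\sum v_A+\tfrac12\sum v_B+\sum v_{AB}-\mathcal H^{(0,0)}-X$. This is reparable because $\mathcal H^{(0,0)}=\sum_k\Pi_k v\,\Pi_k\ge 0$ (it decomposes into $\Pi^* v\,\Pi$ blocks with $\Pi\in\{P'P',\,QP'+P'Q,\,QQ\}$ and $v\ge0$), so dropping $-\mathcal H^{(0,0)}$ only strengthens the upper bound --- but you must say this; as written the ``careful bookkeeping'' claim is false.

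Second, and more seriously, your absorption of the $AA^*$ piece relies on the inequality $\sum_{i,j}Q_i^{AL}Q_j^{BL}\,v_{AB}\,Q_i^{AL}Q_j^{BL}\le \mathcal H^{(0,0)}|_{v_{AB}}\le\sum_{i,j}v_{AB}$. Neither inequality holds in the operator order: for a projection $\Pi$ and a positive (multiplication) operator $v$ one does \emph{not} have $\Pi v\,\Pi\le v$ (pinching a positive operator is not monotone decreasing; a rank-one counterexample already fails it), and likewise the diagonal sector of $v$ need not be $\le v$. The step fails as written. The fix is immediate with the tool you already use: treat this piece by the same Schur bound, $\sum_{i,j}Q_i^{AL}Q_j^{BL}\,v_{AB}\,Q_i^{AL}Q_j^{BL}\le CK_H^3\ell^{-3}\|v_{AB}\|_1\,n_+^A n_+^B\le CK_H^3\ell^{-3}\|v_{AB}\|_1\,n\,n_+^B$, which lands in the stated error term; this is exactly how the paper handles the analogous terms $Q_i^{AL}v\,Q_i^{AL}$ and $Q_i^{AL}Q_j^{AL}v\,Q_i^{AL}Q_j^{AL}$ via \eqref{eq:boundsQvQ}. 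With these two corrections your argument closes and yields the lemma (with a harmless constant in front of the $\sum_{i\neq j}v_\bullet$ term, as in the paper's own proof).
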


\begin{proof}
Let us start by observing that the following bound holds, expanding on a Neumann basis,
\begin{align}
&\sum_{j=1}^{n}Q_{x_j}^{AL} v_{\xi}(x_j-y) Q_{x_j}^{AL} \leq C \frac{K_H^3}{\ell^3} \lVert v_{\xi}\rVert_1 n_+^A, \qquad \xi \in \{A,AB\}, \nonumber\\
&\sum_{j=1}^m Q_{x_j}^{BL} v_{\xi}(x_j-y) Q_{x_j}^{BL} \leq C \frac{K_H^3}{\ell^3} \lVert v_{\xi}\rVert_1 n_+^B, \qquad \xi \in \{B,AB\}. \label{eq:boundsQvQ}
\end{align}

By a Cauchy-Schwarz inequality and the previous bounds we can estimate
\begin{align*}
d_{1,0} + d_{2,0} &\leq C \sum_{i \neq j}( v_A(x_i-x_j) + Q_i^{AL} v_{A}(x_i-x_j) Q_i^{AL} + Q_i^{AL}Q_j^{AL} v_A(x_i-x_j) Q_i^{AL}Q_j^{AL})\\
&\quad + C \sum_{i \neq j} ( v_{AB}(x_i-x_j) + Q_i^{AL} v_{AB}(x_i-x_j) Q_i^{AL})\\
&\leq C \sum_{i \neq j} (v_A(x_i-x_j) + 
v_{AB}(x_i-x_j) ) +C \frac{K_H^3}{\ell^3} n_+^{AL} ( n\lVert v_A\rVert_1 +  m \lVert v_{AB}\rVert_1).
\end{align*}
By analogous inequalities we can bound the remaining terms.
\end{proof}

\section{Condensation estimate in the small box}\label{app:BEC}

In this section we prove Proposition \ref{propos:BEC}. The proof is inspired by \cite{LY} and \cite{greenbook}. It is a direct consequence of the following lemma, where we derive a rough lower bound on the Hamiltonian after having singled out the spectral gaps. 

We recall that $n_+ := n_+^A + n_+^B$ and the scattering lengths matrix $\mathscr{A} = \begin{pmatrix}
a_A & a_{AB}\\ a_{AB} & a_B
\end{pmatrix}$.

\begin{lemma}\label{lem:rough_lowbound}
There exists a constant $C>0$ such that the following lower bound holds: if $(\rho \bar{a}^3)^{-\frac{1}{17}} \leq n+m \leq C \rho \ell^3$, and for any state $\Psi \in L^2_s(\Lambda^n) \otimes L^2_s(\Lambda^m)$ satisfying Assumption \ref{cond:psi_low_energ}, then
\begin{equation}
\langle H_{n,m} \rangle_{\Psi}\geq \frac{\langle n_+\rangle_{\Psi}}{\ell^2} + \frac{4 \pi }{\ell^3} \left(\begin{smallmatrix}
n\\m
\end{smallmatrix}\right) \cdot \mathscr{A} \left( \begin{smallmatrix} n \\ m \end{smallmatrix}\right)(1- C (\rho \bar{a}^3)^{\frac{1}{17}}).
\end{equation}
\end{lemma}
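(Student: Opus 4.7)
The plan is to adapt the Lieb--Yngvason strategy of \cite{LY,greenbook} from a single species of bosons to the two-species setting. The three key ingredients are: (i) a splitting of the kinetic energy into a small piece that extracts the spectral gap contribution $\langle n_+\rangle_\Psi/\ell^2$, and three larger pieces, each combined with one of the potentials $v_A,v_B,v_{AB}$ via Dyson's lemma; (ii) the Neumann spectral gap on $\Lambda_\ell$, which gives $-\Delta \ge (\pi/\ell)^2$ on the orthogonal complement of the condensate mode; (iii) the miscibility condition \eqref{cond:misc}, which makes the scattering-length matrix $\mathscr{A}$ positive semi-definite and allows the three Dyson outputs to combine coherently into the quadratic form $\binom{n}{m}\cdot \mathscr{A}\binom{n}{m}$.

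First I would write the total kinetic energy as $T=\varepsilon T + (1-\varepsilon)T$ for a small $\varepsilon$ to be optimized. On the range of $Q^A$ (respectively $Q^B$), the Neumann bound $-\Delta \ge (\pi/\ell)^2$ yields $\varepsilon T \ge \varepsilon(\pi/\ell)^2 n_+$, which, combined with the leading term, will produce the $\langle n_+\rangle_\Psi/\ell^2$ contribution in the statement. Next I would partition $(1-\varepsilon)T = \kappa_A T + \kappa_B T + \kappa_{AB}T$ with positive weights summing to $1-\varepsilon$, and apply Dyson's lemma on an intermediate length scale $\ell_0$ with $\bar a \ll \ell_0 \ll \ell$. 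For each pair of particles of types $\#\in\{A,B,AB\}$ this yields
\begin{equation*}
\kappa_\# T + \tfrac12\, v_\#(x_i-x_j) \;\ge\; 4\pi a_\#\, U_{\ell_0}^{(\#)}(x_i-x_j)\,(1+o(1)),
\end{equation*}
where $U_{\ell_0}^{(\#)}$ is a soft potential of integral $|B(0,\ell_0)|^{-1}$. Summing over pairs of the appropriate types and replacing each $U_{\ell_0}^{(\#)}$ by its expectation in the constant mode reconstructs
\begin{equation*}
\frac{4\pi}{\ell^3}\bigl(n(n-1)a_A + 2nm\, a_{AB} + m(m-1)a_B\bigr) \;=\; \frac{4\pi}{\ell^3}\binom{n}{m}\!\cdot\!\mathscr{A}\binom{n}{m} + O\!\Bigl(\tfrac{(n+m)\bar a}{\ell^3}\Bigr).
\end{equation*}

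The final step is error control. Errors arise from the Dyson prefactors $\kappa_\#^{-1}$, from the replacement of each soft potential by its spatial average (which produces a remainder quadratic in $n_+$, controllable via the low-energy hypothesis and the hypothesis on $n+m$), and from the $n\mapsto n-1$ discrepancy. With the a priori bounds $(\rho\bar a^3)^{-1/17}\le n+m \le C\rho\ell^3$ and the low energy condition of Assumption \ref{cond:psi_low_energ}, the standard Lieb--Yngvason optimisation of $\varepsilon$ and of the ratio $\ell_0/\ell$ yields the explicit exponent $1/17$ in the factor $1-C(\rho\bar a^3)^{1/17}$. The main obstacle will be choosing the weights $\kappa_A,\kappa_B,\kappa_{AB}$ so that the three separate Dyson bounds actually combine into the matrix quadratic form rather than an arithmetic lower bound that could be vacuous when $a_{AB}$ is comparable to $\sqrt{a_A a_B}$; this is precisely where the miscibility condition $a_{AB}^2 \le a_A a_B$ enters, through a Cauchy--Schwarz argument equivalent to the positive semi-definiteness of $\mathscr{A}$, ensuring that the bound is non-trivial.
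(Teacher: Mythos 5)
Your outline follows the right general strategy (Dyson's lemma plus a spectral-gap extraction, as in Lieb--Yngvason), but two of your key steps would fail as stated. First, the multiplicative splitting $T=\varepsilon T+(1-\varepsilon)T$ cannot deliver the lemma: the Neumann gap gives $\varepsilon T\ge \varepsilon\,\pi^2\langle n_+\rangle_\Psi/\ell^2$, so to obtain the coefficient $1$ on $\langle n_+\rangle_\Psi/\ell^2$ you need $\varepsilon\ge \pi^{-2}$, an order-one constant; but then the Dyson part retains only $(1-\varepsilon)$ of the kinetic energy and the main term comes out with a relative error of order $\varepsilon$, not $O((\rho\bar a^3)^{1/17})$. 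The paper instead splits the kinetic energy of each particle \emph{spatially}, into the part supported near the other particles ($T^{in}$, which is all that Dyson's lemma consumes) and the part supported on $\Omega_\sigma$ away from them ($T^{out}$), and runs the Poincar\'e inequality on the full $T^{out}$; only the tiny fraction $\varepsilon=(\rho\bar a^3)^{1/17}$ of $T$ is reserved separately, and that is for Temple's inequality, not for the gap. Second, your partition $(1-\varepsilon)T=\kappa_AT+\kappa_BT+\kappa_{AB}T$ destroys the constant in the main term: with only a fraction $\kappa_\#$ of the kinetic energy, Dyson's lemma yields $\kappa_\#\,a(v_\#/\kappa_\#)$, which for hard-core-like potentials equals $\kappa_\# a_\#$, not $a_\#(1+o(1))$. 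No such partition is needed: the balls $B_\sigma(z_j)$ around the different scatterers (of either species) are distinct regions for the variable $x_1$, so each application of Dyson's lemma uses the kinetic energy localized in its own ball, and the three pair types combine additively at full strength.

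A further missing ingredient is the justification of ``replacing each $U^{(\#)}_{\ell_0}$ by its expectation in the constant mode.'' In the actual state $\Psi$ this replacement error is controlled by exactly the quantities ($\langle n_+\rangle_\Psi$, correlations with excited modes) that the lemma is designed to bound, so invoking the low-energy hypothesis here is circular. The paper avoids this by a second localization to boxes of size $\tilde\ell=(\rho\bar a)^{-6/17}\bar a$ followed by Temple's inequality with the constant trial function in each small box; the joint optimization of $\varepsilon$, $\tilde\ell$ and the Dyson range $\sigma$ is precisely what produces the exponent $1/17$, which your proposal cannot derive without these scales. Finally, the miscibility condition does not enter in combining the Dyson bounds (those are simply additive over pairs); it is needed for the convexity of $w\mapsto w\cdot\mathscr{A}w$ when the $n+m$ particles are redistributed over the sub-boxes and one minimizes over the occupation numbers.
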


\begin{proof}[Proof of Proposition \ref{propos:BEC}]
Low energy states satisfy Assumption \ref{cond:psi_low_energ} and by combining this with Lemma \ref{lem:rough_lowbound} we obtain that 
\begin{equation}
\frac{\langle n_+ \rangle_{\Psi}}{\ell^2} \leq C (n+m) \rho \bar{a} (\rho \bar{a}^3)^{\frac{1}{17}},
\end{equation}
and we conclude by recalling that $\ell = K_{\ell} (\rho \bar{a})^{-1/2}$.
\end{proof}

The rest of the appendix is dedicated to the proof of the rough lower bound in Lemma \ref{lem:rough_lowbound}. 
We start by observing that, for any $\Psi \in L^2_s(\Lambda^n) \otimes L^2_s(\Lambda^m)$, by using the symmetry in exchanging position variables of type $A$ and type $B$ separately, 
\begin{align}
\langle \Psi, H_{n,m} \Psi\rangle &= T +  I \nonumber \\
&= T^{in}_A + T^{in}_B + T^{out}_A + T^{out}_B + I, \label{eq:inout_splitting}
\end{align}
where $\Omega_{\sigma,x} :=\{x \in \Lambda: \min_{2 \leq j \leq n} |x  - x_j| \geq \sigma \}$, $\sigma = (\rho \bar{a})^{-\frac{5}{17}}\bar{a}$ (and analogous definition for $\Omega_{\sigma,y}$) and
\begin{align}
T^{in}_A &= n \int_{\Lambda^{n-1} \times \Lambda^m} \mathrm{d} X_{n\setminus 1}\mathrm{d} Y_m \int_{\Omega^c_{\sigma,x_1}} \mathrm{d} x_1 \, |\nabla_{x_1} \Psi(X_n,Y_m)|^2, \\
 T^{in}_B &= m \int_{\Lambda^{n} \times \Lambda^{m-1}} \mathrm{d} X_{n}  \mathrm{d} Y_{m \setminus 1} \int_{\Omega^c_{\sigma,y_1}} \mathrm{d}y_1 \, |\nabla_{y_1} \Psi(X_n,Y_m)|^2, \\
 I &= \int_{\Lambda^{n} \times \Lambda^{m}} \mathrm{d} X_{n}  \mathrm{d} Y_{m} \Big( \frac{n}{2} \sum_{j=2}^n v_A(x_1-x_j) \nonumber\\
 &\quad + \frac{m}{2} \sum_{k=2}^m v_B(y_1-y_k) + nm v_{AB}(x_1-y_1)\Big) |\Psi(X_n,Y_m)|^2
\end{align}
and $T^{out}_A, T^{out}_B$ having a similar definition but with the integration in $x_1,y_1$, respectively, in $\Omega_{\sigma}$. We indicated by $X_n = (x_1, \ldots x_n )$, and $X_{n \setminus j} = (x_1, \ldots \hat{x}_j, \ldots x_n )$ where the $j-$th variable is missing. Using the Poincarè inequality \cite[Lemma 4.1]{greenbook} and denoting by $\langle \psi \rangle_{x_1} = 
\frac{1}{\ell^3} \int_{\Lambda}\mathrm{d}x_1 \psi(X_n, Y_m)$, 
\begin{align}
T^{out}_A + T^{out}_B &\geq \frac{Cn}{\ell^2} \int_{\Lambda^n\times \Lambda^m}\mathrm{d}X_n \mathrm{d}Y_m \,|\Psi (X_n,Y_m) - \langle \Psi(\,\cdot\,,X_{n\setminus 1},Y_m)\rangle_{x_1}|^2 \nonumber\\
&\quad + \frac{Cm}{\ell^2} \int_{\Lambda^n\times \Lambda^m}\mathrm{d}X_n \mathrm{d}Y_m \, |\Psi (X_n,Y_m)- \langle \Psi(X_{n},\,\cdot\,,Y_{m\setminus1})\rangle_{y_1}|^2 \nonumber\\
&\quad - \frac{|\Omega_{\sigma,x_1}^c|^{2/3}n}{\ell^2} \int_{\Lambda^n\times \Lambda^m} \mathrm{d}X_n \mathrm{d}Y_m\, |\nabla_{x_1}\Psi(X_n,Y_m)|^2 \nonumber\\
&\quad -\frac{|\Omega_{\sigma,y_1}^c|^{2/3}m}{\ell^2} \int_{\Lambda^n\times \Lambda^m} \mathrm{d}X_n \mathrm{d}Y_m \,|\nabla_{y_1}\Psi(X_n,Y_m)|^2 \nonumber \\
&\geq \frac{\langle n_+^A \rangle_{\Psi}}{\ell^2} + \frac{\langle n_+^B \rangle_{\Psi}}{\ell^2} - C \rho\bar{a} (n+m)(\rho \bar{a}^3)^{2/17}. \label{eq:singleout_spgap}
 \end{align}
We used the definitions \eqref{eq:defn+A}, \eqref{eq:defn+B} of $n_+^A,n_+^B$ to treat the elements in the first line, the fact that $|\Omega_{\sigma,x_1}^c| \leq C n \sigma^3, |\Omega_{\sigma,y_1}^c| \leq C m \sigma^3$, $n+m \leq C\rho \ell^3$, that $\Psi$ satisfies Assumption \ref{cond:psi_low_energ}, and therefore the upper bound on the Hamiltonian is inherited by the gradient terms because the potentials are positive. 

Using then the following inequality, choosing $\varepsilon = (\rho \bar{a}^3)^{\frac{1}{17}}$,
\begin{equation}\label{eq:estimate_via_Tinout}
\langle \Psi , H_{n,m} \Psi \rangle \geq \varepsilon T + (1- \varepsilon)(T^{in}_A + T^{in}_B + I) + (1-\varepsilon) (T^{out}_A + T^{out}_B),
\end{equation}
and by \eqref{eq:singleout_spgap}, the proof of Lemma \ref{lem:rough_lowbound} is reduced to prove a lower bound for 
\begin{equation}\label{def:boxenergy_partial}
\mathscr{E}_{n,m,\ell}:= \varepsilon T + (1- \varepsilon)(T^{in}_A + T^{in}_B + I).
\end{equation}
 We have first to further localize the problem in smaller boxes $\widetilde{\Lambda}:=\Lambda_{\tilde{\ell}} = [-\tilde{\ell}/2,\tilde{\ell}/2]^3$ of size $\tilde{\ell} = (\rho \bar{a})^{- \frac{6}{17}} \bar{a}$ and prove a lower bound there. 

\begin{lemma}\label{lem:lowerbound_smallerbox}
If $h,k\in \mathbb{N}$ are such that $h+ k \leq 2p$, where $p := 4 \frac{n+m}{\ell^3}\tilde{\ell}^3 \leq C \rho \widetilde{\ell}^3$, then 
\begin{equation}\label{eq:lowerbound_smallerbox}
\mathcal{E}_{h,k,\tilde{\ell}} \geq \frac{4\pi}{\tilde{\ell}^3} \begin{psmallmatrix} h\\k\end{psmallmatrix} \cdot \mathscr{A} \begin{psmallmatrix} h-1 \\ k-1 \end{psmallmatrix} (1- C (\rho \bar{a}^3)^{\frac{1}{17}}).
\end{equation}
\end{lemma}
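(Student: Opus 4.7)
My plan is to adapt the classical Lieb--Yngvason lower bound \cite{LY}, \cite{greenbook} to the two-species setting. The core technical tool is a two-species generalization of Dyson's lemma, which converts the ``in'' kinetic energy together with the pair interactions into a scattering-length contribution, one pair at a time.

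For the Dyson step, I would single out a particle, say $x_1$ of type $A$, and let $d_A = \min_{j\geq 2}|x_1-x_j|$, $d_B = \min_j |x_1-y_j|$ be the distances to its nearest same- and cross-species neighbors. Passing to spherical coordinates around $x_1$ and applying the variational characterization \eqref{variational definition of the scattering length} on two disjoint annular regions -- one containing the nearest type-$A$ neighbor and the other the nearest type-$B$ neighbor -- gives a pointwise bound of the form
\begin{equation*}
|\nabla_{x_1}\psi|^2\mathbf{1}_{\Omega^c_{\sigma,x_1}} + \tfrac12\sum_{j\geq 2} v_A(x_1-x_j)|\psi|^2 + \sum_{j} v_{AB}(x_1-y_j)|\psi|^2 \geq \bigl(4\pi a_A U_A + 4\pi a_{AB} U_{AB}\bigr)|\psi|^2,
\end{equation*}
with suitable normalized radial weights $U_A, U_{AB}$ of effective range $\sigma$. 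A symmetric statement holds for each type-$B$ particle, producing contributions from $a_{AB}$ and $a_B$.

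Summing over all choices of singled-out particles and using the permutation symmetry in each species yields every ordered pair once: same-species $A$-$A$ pairs bring $4\pi a_A\cdot h(h-1)/\tilde{\ell}^3$, same-species $B$-$B$ pairs bring $4\pi a_B\cdot k(k-1)/\tilde{\ell}^3$, and cross pairs contribute $4\pi a_{AB}\cdot[h(k-1)+k(h-1)]/\tilde{\ell}^3$, which together reassemble exactly into the matrix expression
\[
\frac{4\pi}{\tilde{\ell}^3}\,\binom{h}{k}\cdot \mathscr{A}\binom{h-1}{k-1} = \frac{4\pi}{\tilde{\ell}^3}\bigl(h(h-1)a_A + (2hk-h-k)a_{AB} + k(k-1)a_B\bigr).
\]
At this point the local weights $U_*$ integrated against the empirical measures must be replaced by their uniform averages $1/\tilde{\ell}^3$, and this is where the reserved kinetic energy $\varepsilon T$ in \eqref{def:boxenergy_partial} becomes essential: it supplies a Poincar\'e-type spectral gap of order $\tilde{\ell}^{-2}$ on $\widetilde{\Lambda}$, which absorbs the fluctuations between the empirical and the uniform densities with quantitative control.

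The main obstacle is making the two-species Dyson lemma rigorous: one must choose the annular regions around each particle so that (a) the zones dedicated to the nearest type-$A$ and the nearest type-$B$ neighbor are disjoint (so the kinetic energy is not charged twice), and (b) they are large enough to contain those neighbors with high probability given the diluteness bound $p \leq C\rho\tilde{\ell}^3$. The miscibility condition \eqref{cond:misc} enters through the positivity of $\mathscr{A}$, ensuring that the quadratic form $\binom{h}{k}\cdot\mathscr{A}\binom{h-1}{k-1}$ dominates the collected errors rather than being cancelled by them. A careful balance of $\sigma = (\rho\bar{a})^{-5/17}\bar{a}$ against $\tilde{\ell} = (\rho\bar{a})^{-6/17}\bar{a}$ and the particle-number bound $p$ then produces the relative error of order $(\rho\bar{a}^3)^{1/17}$ stated in the lemma.
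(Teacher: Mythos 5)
Your overall route --- soften the interactions via a Dyson-lemma substitution at scale $\sigma$, reassemble the pair counts into the matrix form $\begin{psmallmatrix} h\\k\end{psmallmatrix}\cdot\mathscr{A}\begin{psmallmatrix} h-1\\k-1\end{psmallmatrix}$, and then use the reserved kinetic energy $\varepsilon T$ to pass to uniform averages --- is the same as the paper's, and the Dyson step (annular weights $U$ of range $\sigma$, one scattering length per ordered pair) is essentially identical to the construction of $W=\sum_{i\neq j}a_{i,j}U(z_i-z_j)$ in the paper.

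The genuine gap is in your final step. Saying that the Poincar\'e spectral gap of order $\tilde{\ell}^{-2}$ ``absorbs the fluctuations between the empirical and the uniform densities'' is not an argument: what must be bounded from below is $\inf\mathrm{spec}\,(\varepsilon T+(1-\varepsilon)W)$, and if you decompose a general low-energy state as $\Psi=c\,\Psi_0+\Psi_\perp$ with $\Psi_0$ the constant, the dangerous term is the cross term $\langle \Psi_0, W\Psi_\perp\rangle$, which by Cauchy--Schwarz costs $\langle W^2\rangle_{\Psi_0}^{1/2}\,\lVert\Psi_\perp\rVert$. In other words, you are forced into second-order perturbation theory, i.e.\ Temple's inequality, and the decisive input --- entirely absent from your proposal --- is the second-moment estimate $\langle W^2\rangle_{\Psi_0}\leq \frac{3\bar{a}(h+k)}{\sigma^3-\sigma_0^3}\langle W\rangle_{\Psi_0}$. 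It is this bound, balanced against the Neumann gap $\varepsilon\pi/\tilde{\ell}^2$, that fixes the annulus thickness $\sigma^3-\sigma_0^3\sim(\rho\bar{a}^3)^{3/17}\tilde{\ell}^3$ and yields the relative error $S_{h,k}=\mathcal{O}((\rho\bar{a}^3)^{1/17})$; without it your ``careful balance of $\sigma$ against $\tilde{\ell}$'' has no quantity to balance and the stated error rate cannot be derived. A secondary inaccuracy: the miscibility condition \eqref{cond:misc} plays no role at this stage (all entries of $\mathscr{A}$ are nonnegative, so the main term is trivially nonnegative, and the error in the lemma is multiplicative); positivity of $\mathscr{A}$ is only needed later, in the convexity/minimization step over the distribution of particles among boxes.
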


\begin{proof}
We introduce the sets $A=\{1,\ldots,h\}, B=\{h+1, \ldots, h+k\}$ and use the notation $z_j = x_j$ for $j = 1, \ldots, h$, and $z_{h+j} = y_j$, for $j= 1, \ldots, k$ and introduce the following potentials
\begin{equation}
U(r) = \begin{cases}
3 (\sigma^3 - \sigma_0^3)^{-1}, \quad &\text{ if } \sigma_0 \leq r \leq \sigma,\\
0, \quad &\text{ otherwise},
\end{cases}
\end{equation}
for some $\sigma_0 < \sigma$, and 
\begin{equation}
W(z_1, \ldots, z_{h+k}) = \sum_{i \neq j}^{h+k} a_{i,j} U(z_i-z_{j}), 
\end{equation}
and $a_{i,j} = a_A$ if $i,j \in A$, $a_{i,j} =a_B$ if $i,j \in B$, and $a_{i,j} = a_{AB}$ if $i\in A$ and $j \in B$ and viceversa, respectively.
We observe that $\Omega_{\sigma,x}^c  = \bigcup_{2\leq j \leq h} B_{\sigma}(x_j)$ and $\Omega_{\sigma,y}^c  = \bigcup_{2\leq j \leq k} B_{\sigma}(y_j)$, with the ball $B_{\sigma}(z) = \{w \in \Lambda\, : \, |z-w|< \sigma\}$. We start by using Dyson's Lemma \cite{dyson} for the variable $z_i$ and use the contribution from the kinetic energy to substitute the potentials by the soft version $U$ in the star-shaped domains $B_{\sigma}(z_j)$, to obtain, for any $\Psi$  differentiable on $\widetilde{\Lambda}^{h+k}$,
\begin{equation}
\int_{B_{\sigma}(z_j)} \mathrm{d}z_i \Big(|\nabla_{z_i}\Psi|^2 +  \frac{1}{2} v_{ij} (z_i-z_j)|\Psi|^2\Big) \geq a_{i,j}\int_{B_{\sigma}(z_j)} \mathrm{d}z_i \, U(z_i-z_j)|\Psi|^2,
\end{equation}
where $v_{ij} = v_A,v_B,v_{AB}$ according to the choice of indices as for $a_{i,j}$. We apply this inequality for $z_1 = x_1$ and $z_{h+1} = y_1$, integrate over the other variables and sum in $j$ to obtain
\begin{align}
&T^{in}_A + T^{in}_B  + I \nonumber \\
&\geq h a_A \sum_{2\leq j \leq h}  \int_{\widetilde{\Lambda}^k}\mathrm{d}Y_k\int_{\widetilde{\Lambda}^{h-1}} \mathrm{d}X_{h \setminus 1} \int_{B_{\sigma}(x_j)} \mathrm{d}x_1\;U(x_1-x_j)|\Psi|^2 \nonumber\\
&\quad + k a_{B} \sum_{2\leq j \leq k} \int_{\widetilde{\Lambda}^h}\mathrm{d}X_h \int_{ \widetilde{\Lambda}^{k-1}} \mathrm{d}Y_{k\setminus 1} \int_{B_{\sigma}(y_{j})} \mathrm{d}y_1\;U(y_1-y_j)|\Psi|^2 \nonumber\\
&\quad + hk a_{AB}\int_{\widetilde{\Lambda}^{h+k-2}}\mathrm{d}X_{h \setminus 1} \mathrm{d} Y_{k \setminus 1} \Big[\int_{B_{\sigma}(y_{1})\times\widetilde{\Lambda } } +\int_{\widetilde{\Lambda} \times B_{\sigma}(x_{1})} \Big] \mathrm{d}x_1\mathrm{d}y_1 U(x_1-y_1) |\Psi|^2 \nonumber\\
&= \langle \Psi, W \Psi \rangle, \label{eq:softpotentialsubstitution}
\end{align}
where, in order to reconstruct $W$, we used separate symmetry in the variables $x_j$ and $y_j$ and the definition of $U$ (its support is smaller than the domains of integration). 

Choosing $\Psi = \tilde{\ell}^{-3h} \otimes \tilde{\ell}^{-3k}$ and calculating explicitly the integrals we obtain
\begin{equation}\label{eq:mainorderW}
\langle \Psi, W \Psi\rangle = \frac{4\pi}{\tilde{\ell}^3} (a_A h(h-1) + a_{AB} (h(k-1) + k(h-1)) + a_B k(k-1)).
\end{equation}
For later purpose, we also estimate
\begin{equation}\label{eq:estimateW2}
\langle W^2\rangle_{\Psi} \leq \frac{3 \bar{a}(h+k)}{ \sigma^3-\sigma_0^3} \langle W\rangle_{\Psi}.
\end{equation}
Plugging \eqref{eq:softpotentialsubstitution} in \eqref{def:boxenergy_partial} and using Temple's inequality \cite{temple}, \cite[(2.51)]{greenbook}, since the chosen $\Psi$ is the ground state of the Laplacian operator in $T$ we get
\begin{align*}
\mathcal{E}_{h,k,\tilde{\ell}} &\geq \varepsilon T + (1-\varepsilon) \langle \Psi , W \Psi\rangle\\
&\geq (1-\varepsilon)\langle W\rangle_{\Psi} - \frac{(1-\varepsilon)^2 \langle W^2\rangle_{\Psi}}{\varepsilon \frac{\pi}{\tilde{\ell}^2} - (1-\varepsilon) \langle W\rangle_{\Psi}}\\
&\geq (1-\varepsilon) \langle W\rangle_{\Psi} \Big( 1- S_{n,m}\Big),
\end{align*}
where we estimated by zero the term $(1-\varepsilon)^2 \langle W\rangle_{\Psi}$ at the numerator and we used \eqref{eq:estimateW2} between the second and third line, with 
\begin{equation}
S_{h,k} = \frac{3 \bar{a}(h+k)}{ \sigma^3-\sigma_0^3} \frac{1}{\varepsilon \frac{\pi}{\tilde{\ell}^2} - \langle W \rangle_{\Psi}}.
\end{equation}
We denote by $Y = \rho \bar{a}^3$ and using that $h+k \leq C \rho \tilde{\ell}^3$, we impose
\begin{equation}
\varepsilon = Y^{\alpha}, \qquad   \frac{\bar{a}}{\tilde{\ell}} = Y^{\beta} , \qquad \frac{\sigma^3-\sigma_0^3}{\tilde{\ell}^3} = Y^{\gamma},
\end{equation}
for some $\alpha, \beta, \gamma >0$. Since $\langle W\rangle_{\Psi} \geq Y^{2-3\beta} \bar{a}^{-2}$, we have then that $S_{h,k} \geq C Y^{1-\gamma-2\beta-\alpha}$, provided that $\alpha + 5\beta <2$. In order for $S_{h,k} = O(Y^{\alpha})$, we further require $2\alpha \leq 1-\gamma-2\beta$. The aforementioned conditions are satisfied with the choices
\begin{equation}
\alpha = \frac{1}{17}, \qquad \beta = \frac{6}{17}, \qquad \gamma = \frac{3}{17},
\end{equation}
which is also coherent with the choice we made for the size of the boxes $\tilde{\ell} = (\rho \bar{a})^{-\frac{6}{17}}\bar{a}$. With this choices and formula \eqref{eq:mainorderW}, we concluded the proof.
\end{proof}

We are ready to derive a lower bound for \eqref{def:boxenergy_partial} from the estimates on the small boxes of size $\widetilde{\ell}$.
For readability purposes, we introduce the following densities 
\begin{equation}
\rho_n := n \ell^{-3}, \qquad \rho_m := m \ell^{-3}, \qquad \rho_{\text{tot}} := \rho_n + \rho_m.
\end{equation}

We follow the same strategy of Appendix \ref{app:localization} with the suitable modifications to obtain the following inequality
\begin{equation}\label{eq:ineq_smallerboxloc_beginning}
\ell^{-3} \mathcal{E}_{n,m,\ell} \geq \tilde{\ell}^{-3} \inf_{\{c_{h,k}\}}\sum_{h=1}^n\sum_{k=1}^m c_{h,k}\mathcal{E}_{h,k,\tilde{\ell}}, 
\end{equation}
with the constraints
\begin{equation}
\sum_{h,k} c_{h,k}  = 1,\qquad \sum_{h,k} h c_{h,k} = \rho_n \tilde{\ell}^{3}, \qquad  \sum_{h,k} k c_{h,k} = \rho_m \tilde{\ell}^3.
\end{equation}
We fix the threshold $p = 8\rho_{\text{tot}} \tilde{\ell}^3$ and then introduce the following quantities:
\begin{equation}
s:= \sum_{h+k \leq 2 p} h \, c_{h,k} \in [1,\rho_n \tilde{\ell}^3], \qquad t:= \sum_{h+k \leq 2p} k \, c_{h,k} \in [1,\rho_m \tilde{\ell}^3].
\end{equation}
We split the analysis in two cases:
\begin{itemize}
\item Case $h+k \leq 2p$: By a Cauchy-Schwarz inequality we have the following bound
\begin{equation}
\begin{psmallmatrix} s\\ t \end{psmallmatrix} \cdot \mathscr{A} \begin{psmallmatrix} s-1\\t-1 \end{psmallmatrix} = \sum_{\substack{h+k \leq 2p\\ \tilde{h}+\tilde{k}\leq 2p}} c_{h,k} c_{\tilde{h},\tilde{k}} \begin{psmallmatrix} h \\ k \end{psmallmatrix} \cdot \mathscr{A} \begin{psmallmatrix} \tilde{h}-1\\ \tilde{k}-1\end{psmallmatrix} \leq \sum_{h+k \leq 2p} c_{h,k} \begin{psmallmatrix} h\\k \end{psmallmatrix}
\cdot \mathscr{A} \begin{psmallmatrix} h-1 \\ k-1 \end{psmallmatrix},
\end{equation}
where we used that $\sum_{h+k \leq 2p}c_{h,k} \leq 1$. We can then apply Lemma \ref{lem:lowerbound_smallerbox} to get inequality \eqref{eq:lowerbound_smallerbox} allowing us to write
\begin{equation}\label{eq:estimate_ts_sum}
\sum_{h+k \leq 2p} c_{h,k} \mathscr{E}_{h,k,\tilde{\ell}} \geq \frac{4\pi}{\tilde{\ell}^3} \begin{psmallmatrix} s\\ t \end{psmallmatrix} \cdot \mathscr{A} \begin{psmallmatrix} s-1\\t-1 \end{psmallmatrix}(1-C (\rho \bar{a}^3)^{\frac{1}{17}}).
\end{equation}

\item Case $h+k >2p$: by super-additivity of the energy we have
\begin{equation}
\mathscr{E}_{h,k,\tilde{\ell}} \geq \frac{h+k}{2p} \mathscr{E}_{p,p,\tilde{\ell}} \geq \frac{h+k}{2} \frac{4\pi}{\tilde{\ell}^3}\begin{psmallmatrix} 1\\1\end{psmallmatrix} \cdot \mathscr{A} \begin{psmallmatrix} p-1\\p-1\end{psmallmatrix} (1-C (\rho \bar{a}^3)^{\frac{1}{17}}),
\end{equation}
where in the last inequality we used Lemma \ref{lem:lowerbound_smallerbox}. Hence we get
\begin{equation}\label{eq:h+k>2p_ineq}
\sum_{h+k > 2p} c_{h,k} \mathscr{E}_{h,k,\tilde{\ell}} \geq \frac{1}{2} (\rho \tilde{\ell}^3 - s-t) \frac{4\pi}{\tilde{\ell}^3}\begin{psmallmatrix}1\\1\end{psmallmatrix} \cdot \mathscr{A} \begin{psmallmatrix} p-1\\p-1\end{psmallmatrix} (1-C (\rho \bar{a}^3)^{\frac{1}{17}}).
\end{equation}
\end{itemize}

We plug inequalities \eqref{eq:estimate_ts_sum} and \eqref{eq:h+k>2p_ineq} in \eqref{eq:ineq_smallerboxloc_beginning}, and we want  therefore to minimize the form
\begin{equation}\label{eq:paraboloid}
q_{t,s}= \begin{psmallmatrix} s\\ t \end{psmallmatrix} \cdot \mathscr{A} \begin{psmallmatrix} s-1\\t-1 \end{psmallmatrix}+ \frac{1}{2} (\rho \tilde{\ell}^3 - s-t)\begin{psmallmatrix} 1\\1\end{psmallmatrix} \cdot \mathscr{A} \begin{psmallmatrix} p-1\\p-1\end{psmallmatrix}
\end{equation}
for $1 \leq s \leq \rho_n \tilde{\ell}^3$ and $1\leq t \leq \rho_m \tilde{\ell}^3$. The matrix $\mathscr{A}$ is semidefinite positive and therefore the expression \eqref{eq:paraboloid} represents a paraboloid in the variables $(s,t)$. Since $p = 8\rho_{\text{tot}} \tilde{\ell}^3$, the minimal point of $q_{t,s}$ can be found in the area $\{s \geq \rho_n \tilde{\ell}^3\} \cap \{t \geq \rho_m \tilde{\ell}^3\}$. Due to the conditions given by the domain of $q_{s,t}$, we have then that $q_{s,t}$ is minimized for $(s,t) = (\rho_n\tilde{\ell}^3, \rho_m \tilde{\ell}^3)$.
As direct consequence, this implies that 
\begin{align}
\sum_{h,k}c_{h,k}\mathcal{E}_{h,k,\tilde{\ell}} &\geq 4\pi \tilde{\ell}^3 \begin{psmallmatrix} \rho_n\\ \rho_m \end{psmallmatrix} \cdot \mathscr{A} \begin{psmallmatrix} \rho_n-\tilde{\ell}^{-3}\\ \rho_m-\tilde{\ell}^{-3} \end{psmallmatrix}(1-C (\rho \bar{a}^3)^{\frac{1}{17}})\nonumber \\
&\geq 4\pi \tilde{\ell}^3\begin{psmallmatrix} \rho_n\\ \rho_m \end{psmallmatrix} \cdot \mathscr{A} \begin{psmallmatrix} \rho_n\\ \rho_m \end{psmallmatrix}(1-C (\rho \bar{a}^3)^{\frac{1}{17}}),
\end{align}
where the $\tilde{\ell}^{-3}$ terms have been reabsorbed in the error. Inserting this final inequality in \eqref{eq:ineq_smallerboxloc_beginning} and joining this with \eqref{eq:singleout_spgap}, \eqref{eq:estimate_via_Tinout} and \eqref{def:boxenergy_partial} concludes the proof of Lemma \ref{lem:rough_lowbound}.

\section{Localization on small boxes}\label{app:localization}

The thermodynamic limit of $ E_{N_A,N_B}/N$ does not depend on the sequence for which $N \rightarrow + \infty$, therefore we may assume that $\Omega$ can be split as $\Omega = \bigcup_{j=1}^M\Lambda_{j}$, with $M$ an integer such that $M= L^3/\ell^3$.
Let us introduce the quantities, for $n,m \in \mathbb{N}$,
\begin{align*}
&E_{n,m}^{\text{Bog}} := \frac{4\pi}{\ell^3} (n^2 a_A + 2  n m a_{AB} + m^2 a_B ) + \ell^3 \big( \big(\frac{n}{\ell^3}\big)^2 a_A^2 + 2 \frac{nm}{\ell^6}a_{AB}^2 + \big(\frac{m}{\ell^3}\big)^2 a^2_B \big)^{5/4} I_{n,m},\\
&E_{\rho_A \ell^3,\rho_B \ell^3}^{\text{Bog}} := 4\pi \ell^3(\rho_A^2 a_A + 2  \rho_A \rho_B a_{AB} + \rho_B^2 a_B ) + \ell^3 \big( \rho_A^2 a_A^2 + 2 \rho_A \rho_B a_{AB}^2 + \rho_B^2  a^2_B \big)^{5/4} I_{A,B}
\end{align*}
where
\begin{align*}
I_{n,m} &:=\frac{2\sqrt{2}(8 \pi)^{5/2}}{15\pi^2}\Big(\mu_+^{5/2}\Big(\frac{n}{\ell^3},\frac{m}{\ell^3}\Big) +\mu_-^{5/2}\Big(\frac{n}{\ell^3},\frac{m}{\ell^3}\Big)\Big), \\
 I_{AB} &:=\frac{2\sqrt{2}(8 \pi)^{5/2}}{15\pi^2}(\mu_+^{5/2}(\rho_A,\rho_B) +\mu_-^{5/2}(\rho_A,\rho_B)),
\end{align*}
and $\mu_{\pm}$ are defined in \eqref{appdef:mupm}.
\begin{proof}[Proof of Theorem \ref{thm:main} by Theorem \ref{thm:lowerloc}]
By using that $v_A,v_B,v_{AB} \geq 0$, we have that the energy is superadditive because we can neglect the interactions between particles belonging to different small boxes. Taking into account all the possible ways the $N_A+N_B$ particles can be distributed in the boxes, we have the following lower bound
\begin{equation}
L^{-3} E_{N_A,N_B} \geq \ell^{-3} \inf_{\{c_{n,m}\}} \sum_{n=1}^{N_A}\sum_{m =1}^{N_B} c_{n,m} E_{n,m} 
\end{equation}
where the infimum is over all the possible choices of $\{c_{n,m}\}$, with $c_{n,m}$ being the fraction of the $M$ boxes containing exactly $n$ particles of type $A$ and $m$ of type $B$. The coefficients satisfy the following constraints
\begin{equation}\label{eq:summation_rule_cnm}
\sum_{n,m} c_{n,m} = 1, \qquad \sum_{n,m} n \,c_{n,m} = \rho_A \ell^3, \qquad \sum_{n,m} m\, c_{n,m} = \rho_B \ell^3.
\end{equation}
To better control the distribution of the number of particles inside the boxes, we introduce two chemical potentials $\mu_A,\mu_B \geq 0$, leading to the bound
\begin{align}
L^{-3} E_{N_A,N_B} &\geq \ell^{-3} \inf_{\{c_{n,m}\}} \sum_{n=1}^{N_A}\sum_{m =1}^{N_B} c_{n,m}\widetilde{E}_{n,m} + 8 \pi (\rho_A^2 a_A + 2 \rho_A \rho_B a_{AB} +\rho_B^2 a_B), \label{eq:estim_cnm}\\
\widetilde{E}_{n,m} &= E_{n,m} -  8 \pi (\rho_A (a_A n+  a_{AB} m )+  \rho_B (a_B m + a_{AB} n)). \nonumber
\end{align}
We introduce a thresholds for the number of particles, $p= 100 C_a^2 ( \rho_A + \rho_B) \ell^3$ (recall the definition of $C_a$ from \eqref{cond:aRa}) and split the sum in two parts. 

\begin{description}
\item[Case $n + m \leq p:$] Since clearly $n,m \leq p$, we are within the assumptions of Theorem \ref{thm:lowerloc}, therefore 
\begin{equation}\label{eq:expansion_n<p_m<q}
E_{n,m} \geq E_{n,m}^{\text{Bog}}  - C \rho^2\ell^3 \bar{a}\big(\rho \bar{a}^3\big)^{1/2}(\rho \bar{a}^3)^{\eta}.
\end{equation}
We want to approximate this quantity with $
E_{\rho_A\ell^3, \rho_B \ell^3}^{\text{Bog}}$
and estimate the error done in this way. 
Introducing the vectors $w= (n,m)$ and $ v = (\rho_A \ell^3,  \rho_B \ell^3)$ and the matrices 
\begin{equation}
\mathscr{A} = \begin{pmatrix}
a_A & a_{AB}\\ a_{AB} & a_B
\end{pmatrix}, \qquad \mathscr{A}_2 = \mathscr{A}^{\odot 2} =\begin{pmatrix}
a_A^2 & a_{AB}^2\\ a_{AB}^2 & a_B^2
\end{pmatrix},  
\end{equation}  
we can write 
\begin{align}
E_{n,m}^{\text{Bog}} &:= \frac{4\pi}{\ell^3}  w \cdot \mathscr{A} w +\frac{1}{\ell^{9/2}}( w \cdot \mathscr{A}_2 w)^{5/4}I_{n,m}, \label{eq:vec_expressBognm}\\
E_{\rho_A \ell^3,\rho_B \ell^3}^{\text{Bog}} &:=  \frac{4\pi}{\ell^3} v \cdot \mathscr{A} v +  \frac{1}{\ell^{9/2}} (v\cdot \mathscr{A}_2 v)^{5/4} I_{AB},\\
 8 \pi (\rho_A (a_A n+  a_{AB} m )&+  \rho_B (a_B m + a_{AB} n)) = 2 \frac{4\pi}{\ell^3} w \cdot \mathscr{A} v. \label{eq:vec_expressBogrho}
\end{align}

Using \eqref{eq:vec_expressBognm} and \eqref{eq:vec_expressBogrho}, we complete the square observing that, since $\mathscr{A}$ is symmetric, 
\begin{equation}\label{eq:completesq_nm}
w \cdot \mathscr{A} w - 2 w \cdot \mathscr{A} v = (w-v) \cdot \mathscr{A} (w-v) - v \cdot \mathscr{A} v.
\end{equation}
Now we estimate the difference of the Bogoliubov integral parts. Using that $I_{n,m} = \mathcal{O}(1)$,
\begin{align*}
&\frac{1}{\ell^{9/2}} |(w\cdot \mathscr{A}_2 w)^{5/4} I_{n,m} - (v \cdot \mathscr{A}_2 v)^{5/4}I_{AB}| \nonumber \\ &\leq \frac{C}{\ell^{9/2}} \Big(|(w\cdot \mathscr{A}_2 w)^{5/4}- (v \cdot \mathscr{A}_2 v)^{5/4}|+ |I_{n,m}-I_{AB}| (v \cdot \mathscr{A}_2 v)^{5/4}\Big) = (I) + (II). 
\end{align*}
We estimate the term $(I)$ by a Cauchy-Schwarz inequality, for a $\varepsilon>0$, 
\begin{align*}
(I) &\leq \frac{C}{\ell^{9/2}} (v \cdot \mathscr{A}_2 v)^{1/4}|(w\cdot \mathscr{A}_2 w)- (v \cdot \mathscr{A}_2 v)|\\
&\leq \frac{C}{\ell^{9/2}} (\rho \ell^3 \bar{a})^{1/2} |\mathscr{A}^{-1/2}\mathscr{A}_2 (v+w)| | \mathscr{A}^{1/2}(v-w)| \\
&\leq \frac{C\varepsilon^{-1}}{\ell^6} \rho \ell^3 \bar{a} (\rho^2 \ell^6 \bar{a}^3) + \frac{\varepsilon}{\ell^3} (v-w)\cdot \mathscr{A}(v-w).
\end{align*}
For the second term, using that $\mu_{\pm} = \mathcal{O}(1)$, again by a Cauchy-Schwarz for $\varepsilon>0$,
\begin{align*}
(II) &\leq \frac{C}{\ell^{9/2}} (\rho \ell^3 \bar{a})^{5/2}|I_{n,m}-I_{AB}| \\
&\leq  \frac{C}{\ell^{9/2}} (\rho \ell^3 \bar{a})^{5/2} \sum_{\pm} |\mathscr{A}^{-1/2}\nabla \mu_{\pm} | \Big|\mathscr{A}^{1/2}\frac{(v-w)}{\ell^3}\Big|\\
&\leq  \frac{C\varepsilon^{-1}}{\ell^{12}\rho^2 \bar{a}}(\rho \ell^3 \bar{a})^{5} +  \frac{\varepsilon}{\ell^3}(v-w)\cdot \mathscr{A}(v-w) .
\end{align*}
Collecting the estimates for $(I)$ and $(II)$ and choosing $\varepsilon = \pi$ we get
\begin{equation}\label{eq:estimateGvGw}
\frac{1}{\ell^{9/2}} |(w\cdot \mathscr{A}_2 w)^{5/4} I_{n,m} - (v \cdot \mathscr{A}_2 v)^{5/4}I_{AB}| \leq \frac{2\pi}{\ell^3}(v-w)\cdot \mathscr{A}(v-w)  - C \ell^3 (\rho \bar{a})^{5/2}(\rho \bar{a}^3)^{\eta}.
\end{equation}

Collecting the estimates \eqref{eq:expansion_n<p_m<q}, \eqref{eq:completesq_nm}, \eqref{eq:estimateGvGw} we obtain, for $n + m \leq p$,
\begin{equation}\label{eq:finalest_nm<p}
\widetilde{E}_{n,m} \geq E_{\rho_a\ell^3, \rho_B \ell^3}^{\text{Bog}} + \frac{2 \pi}{\ell^3} (w-v) \cdot \mathscr{A} (w-v) - \frac{8\pi}{\ell^3}v \cdot \mathscr{A} v - C\ell^3(\rho \bar{a})^{5/2}(\rho \bar{a}^3)^{\eta}.
\end{equation}

\item[Case $n + m > p$:] We split the $n+m$ particles in $\left\lfloor{\frac{n+m}{p}}\right \rfloor$ subgroups of $p=s+t$ particles, with $s,t$ the number of particles of type $A,B$ in the subgroup, respectively. By super-additivity of the energy
\begin{equation}\label{eq:expansion_n+m>q}
\widetilde{E}_{n,m} \geq \left \lfloor{\frac{n+m}{p}}\right \rfloor E_{s,t}-  \frac{8\pi}{\ell^3} w \cdot \mathscr{A} v + E_{\sigma, \tau}, \qquad s+t =p, \quad \sigma + \tau < p.
\end{equation}

For $E_{\sigma,\tau}$, we use again Theorem \ref{thm:lowerloc} and have the following estimate, bounding the positive terms from below by zero:
\begin{equation}\label{eq:estimate_r}
E_{\sigma,\tau} \geq  - C \ell^3 \big(\rho \bar{a}\big)^{5/2}(\rho \bar{a}^3)^{\eta}.
\end{equation}

We introduce now the vector $\widetilde{w} = (s,t) = (s, p-s)$. Since $s,t \leq p$, we can reason like the previous case \eqref{eq:finalest_nm<p} to obtain
\begin{align}
E_{s,t} &\geq E_{\rho_A\ell^3, \rho_B \ell^3}^{\text{Bog}} + \frac{2 \pi}{\ell^3} (\widetilde{w}-v) \cdot \mathscr{A} (\widetilde{w}-v) - \frac{8\pi}{\ell^3}v \cdot \mathscr{A} v \nonumber \\
&\quad + \frac{8 \pi }{\ell^3}\widetilde{w} \cdot \mathscr{A} v - C\ell^3(\rho \bar{a})^{5/2}(\rho \bar{a}^3)^{\eta}. \label{eq:estinitial_nm>p}
\end{align}
Observe this time the presence of the chemical-potential-type term $\widetilde{w} \cdot A v$. 

W.l.o.g. we can assume $s \leq p/2$, which implies $p/2 \leq t \leq p = 100 C_a^2 (v_1 + v_2)$, the opposite case being treated similarly. This lets us derive the following estimate
\begin{align}
(\widetilde{w}-v) \cdot \mathscr{A} (\widetilde{w}-v) &\geq a_B(t-v_2)^2- 2a_{AB} v_1 (t-v_2) \nonumber \\
&\geq C_a^4 \underline{a} (2300 v_1^2 + 4600 v_1v_2 + 2400 v_2^2) \geq 2000 C_a^4 \underline{a} (v_1+v_2)^2, \label{eq:estimatePositivewAw}
\end{align}
where we used that, from condition \eqref{cond:aRa}, $a_{AB} \leq  C_a \underline{a}$. Using again assumption \eqref{cond:aRa}, we can bound
\begin{equation}\label{eq:boundchemical}
- \frac{8 \pi}{\ell^3} w \cdot \mathscr{A} v \geq -C_a \underline{a}\frac{8 \pi }{\ell^3} \max\{n,m\} \max\{v_1,v_2\}.
\end{equation}

Since 
\begin{equation}
\left\lfloor{\frac{n+m}{p}}\right \rfloor \geq \frac{n+m}{2p},
\end{equation}
using \eqref{eq:estinitial_nm>p} we obtain the following estimate for $\widetilde{E}_{n,m}$:
\begin{align*}
\widetilde{E}_{n,m} &\geq \frac{n+m}{2p} \Big(  E_{\rho_A\ell^3, \rho_B \ell^3}^{\text{Bog}} + \frac{ \pi}{\ell^3} (\widetilde{w}-v) \cdot \mathscr{A} (\widetilde{w}-v)\nonumber \\
&\quad - \frac{8\pi}{\ell^3}v \cdot \mathscr{A} v  + \frac{8 \pi }{\ell^3}\widetilde{w} \cdot \mathscr{A} v\Big) - C \ell^3(\rho \bar{a})^{5/2}(\rho \bar{a}^3)^{\eta}\\
&\quad +\frac{n+m}{2p} \frac{ \pi}{\ell^3} (\widetilde{w}-v) \cdot \mathscr{A} (\widetilde{w}-v) - \frac{8 \pi}{\ell^3} w \cdot \mathscr{A} v + E_{\sigma,\tau},
\end{align*}
where we brought half of the $(\widetilde{w}-v)\cdot \mathscr{A}(\widetilde{w}-v)$ term outside the parentheses. The term of the first two lines inside the parentheses is positive, therefore we can estimate the coefficient $(n+m)/2p$ by $1$. We then use \eqref{eq:estimatePositivewAw}, \eqref{eq:boundchemical} and \eqref{eq:estimate_r}, observing that, by \eqref{cond:aRa}, recalling that $C_a >1,$
\begin{align*}
&1000\frac{n+m}{p}\frac{\pi}{\ell^3}C_a^4 \underline{a}  (v_1 + v_2)^2 - 8 C_a \underline{a}\frac{ \pi }{\ell^3} \max\{n,m\} \max\{v_1,v_2\} \\
&\geq  10 C_a^2 \underline{a}\frac{\pi}{\ell^3}(n+m)  (v_1+v_2) -8 C_a \underline{a}\frac{ \pi }{\ell^3} \max\{n,m\} \max\{v_1,v_2\} \geq 0,
\end{align*}
we finally get
\begin{equation}	\label{eq:finalest_nm>p}
\widetilde{E}_{n,m} \geq E_{\rho_A\ell^3, \rho_B \ell^3}^{\text{Bog}}- \frac{8\pi}{\ell^3}v \cdot \mathscr{A} v - C \ell^3 \big(\rho \bar{a}\big)^{5/2}(\rho \bar{a}^3)^{\eta}.
\end{equation}
\end{description}

Collecting the previous inequalities \eqref{eq:finalest_nm<p} and \eqref{eq:finalest_nm>p} into \eqref{eq:estim_cnm}, we observe that the terms of the expansions are independent of $n$ and $m$. We use therefore \eqref{eq:summation_rule_cnm} to obtain
\begin{align*}
&L^{-3} E_{N_A,N_B} \\
&\geq \ell^{-3} \Big( E_{\rho_A\ell^3, \rho_B \ell^3}^{\text{Bog}}- \frac{8\pi}{\ell^3}v \cdot \mathscr{A} v\Big) - C \ell^3 \big(\rho \bar{a}\big)^{5/2}(\rho \bar{a}^3)^{\eta} + 8 \pi (\rho_A^2 a_A + 2 \rho_A \rho_B a_{AB} +\rho_B^2 a_B)\\
&= \ell^{-3}  E_{\rho_A\ell^3, \rho_B \ell^3}^{\text{Bog}} - C  \big(\rho \bar{a}\big)^{5/2}(\rho \bar{a}^3)^{\eta},
\end{align*}
which, by calculating the thermodynamic limit, gives the desired result.

\end{proof}

\section{Convexity of the energy functional}\label{sec:Fconvex}

In this appendix we study the convexity of the energy functional 
\begin{equation}\label{app:Fconvexfunctional}
F(r_A,r_B) = \frac{8 \pi \bar{a}}{|\Lambda|} (\rho \bar{a}^3)^{1/4}(r_A^2 + r_B^2) + |\Lambda|^{-3/2} G^{5/4} I(r_A,r_B) -\mu_A r_A -\mu_B r_B,
\end{equation}
where 
\begin{align}
G (r_A,r_B)&= r_A^2 a_A^2 + 2r_A r_B a_{AB}^2 + r_B^2 a_B^2,\\
I(r_A,r_B) &=  (8\pi)^{5/2}\frac{2\sqrt{2}}{15\pi^2} (\mu_+^{5/2} + \mu_-^{5/2}), \label{eq:intsphericalFconvex}
\end{align}
with 
\begin{equation}\label{eq:muxiABexpressions}
\mu_{\pm} = \sqrt{1+\xi_{AB}} \pm \sqrt{1-\xi_{AB}}, \qquad \xi_{AB} = \frac{2 r_Ar_B (a_A a_B - a_{AB}^2)}{r_A^2 a_A^2 +2 r_Ar_B a_{AB}^2 + r_B a_B^2}.
\end{equation}

\begin{lemma}\label{lem:Fconvexfunctional}
Let $r_A,r_B>0$ be two positive parameters satisfying $r_A + r_B \leq C K_z \rho |\Lambda|$. Then for $\rho \bar{a}^3$ small enough, the functional $F(r_A,r_B)$ is convex in $(r_A,r_B)$.
\end{lemma}
\begin{proof}
We calculate the gradient of $F$ in $r_A,r_B$:
\begin{equation}\label{eq:gradF}
\nabla F =\begin{pmatrix}
 \frac{32 \pi \bar{a}(\rho \bar{a}^3)^{1/4}}{|\Lambda|} r_A + \frac{5G^{1/4}}{2|\Lambda|^{3/2}}  (r_A a_A^2 + r_B a^2_{AB})I + \frac{G^{5/4}}{|\Lambda|^{3/2}} \partial_{r_A}I - \mu_A\\
 \frac{32 \pi \bar{a}(\rho \bar{a}^3)^{1/4}}{|\Lambda|}  r_B + \frac{5G^{1/4}}{2|\Lambda|^{3/2}}  (r_B a_B^2 + r_A a^2_{AB})I + \frac{G^{5/4}}{|\Lambda|^{3/2}} \partial_{r_B}I - \mu_B
 \end{pmatrix}
\end{equation}
and the Hessian
\begin{align*}
\partial_{r_A,r_A}^2 F &= \frac{32 \pi \bar{a}(\rho \bar{a}^3)^{1/4}}{|\Lambda|} + \frac{5I}{4|\Lambda|^{3/2}G^{3/4}} (r_A a_A^2 + r_B a_{AB}^2)^2 + \frac{5G^{1/4}a_A^2I}{2|\Lambda|^{3/2}}\\
&\quad +\frac{5 G^{1/4}}{|\Lambda|^{3/2}} (r_A a_A^2 + r_B a_{AB}^2)\partial_{r_A}I + \frac{G^{5/4}}{|\Lambda|^{3/2}} \partial^2_{r_A}I,\\
\partial^2_{r_A,r_B} F &=  \frac{5I}{4|\Lambda|^{3/2}G^{3/4}} (r_A a_A^2 + r_B a_{AB}^2)(r_B a_B^2 + r_A a_{AB}^2) + \frac{5G^{1/4}a_{AB}^2I}{2|\Lambda|^{3/2}} \\
&\quad + \frac{5 G^{1/4}}{|\Lambda|^{3/2}} (r_A a_A^2 + r_B a_{AB}^2)\partial_{r_B}I + \frac{5 G^{1/4}}{2|\Lambda|^{3/2}} (r_B a_B^2 + r_A a_{AB}^2)\partial_{r_A}I + \frac{G^{5/4}}{|\Lambda|^{3/2}} \partial_{r_B,r_A}^2 I
\end{align*}
with $\partial_{r_B, r_B}^2 F, \partial_{r_A,r_B}^2 F$ begin their symmetric versions. 
By straightforward calculations, using the expressions \eqref{eq:muxiABexpressions}, we can bound 
\begin{equation}
|\partial_{r_A} I|\leq \frac{C\bar{a}^4r_B^3}{G^2}, \qquad |\partial_{r_B} I|\leq \frac{C\bar{a}^4r_A^3}{G^2},
\end{equation}
and 
\begin{equation*}
|\partial_{r_A,r_B}^2 I|\leq \frac{C\bar{a}^2}{G^3} r_B^2(G + r_B(r_A a_{AB}^2 + r_B a_B^2)), \qquad |\partial_{r_A}^2 I| \leq \frac{C \bar{a}^2}{G^3} r_B^3 a_B^3 (r_Aa_A^2 + r_B a_{AB}^2),
\end{equation*}
and symmetric versions for $\partial_{r_B,r_A}^2 I, \partial_{r_B, r_B}^2 I$.
Therefore, in the region $r_A + r_B \leq C K_z \rho |\Lambda|$, we observe that the Hessian elements can be bounded as 
\begin{align*}
\Big|\partial_{r_A,r_A}^2 F -  \frac{32 \pi \bar{a}(\rho \bar{a}^3)^{1/4}}{|\Lambda|} \Big| &\leq \frac{C}{|\Lambda|} K_z^{1/2} \rho^{1/2}a^{5/2}, \qquad |\partial_{r_B,r_A}^2 F| \leq \frac{C}{|\Lambda|} K_z^{1/2} \rho^{1/2}a^{5/2},\\
\Big|\partial_{r_B,r_B}^2 F -  \frac{32 \pi \bar{a}(\rho \bar{a}^3)^{1/4}}{|\Lambda|} \Big| &\leq \frac{C}{|\Lambda|} K_z^{1/2} \rho^{1/2}a^{5/2}, \qquad |\partial_{r_A,r_B}^2 F| \leq \frac{C}{|\Lambda|} K_z^{1/2} \rho^{1/2}a^{5/2},
\end{align*}
and, since $K_z^{1/2}\rho^{1/2}$ is subdominant w.r.t. $\rho^{1/4}$, the following bound holds
\begin{equation}
\mathrm{Hess}F = \frac{32 \pi \bar{a}(\rho \bar{a}^3)^{1/4}}{|\Lambda|} \one_2  +  \mathcal{O}(  (K_z\rho a^5)^{1/2}|\Lambda|^{-1}) \geq 0,
\end{equation}
which gives that $F$ is convex in $(r_A,r_B)$ for $\rho \bar{a}^3 $ small enough.

\end{proof}

\bibliographystyle{plain}
\bibliography{bosmix}

\end{document}